\newcommand{\bas}[1]{{\bf #1}}
\def\LB{{\lbrack\!\lbrack}}
\def\RB{{\rbrack\!\rbrack}}
\def\bracket#1{\LB#1\RB}
\def\of#1#2{{#1}\bracket{#2}}
\def\morph#1{{\cal {#1}}}
\newcommand{\BA}{{\bf A}}
\newcommand{\BB}{{\bf B}}
\newcommand{\bA}{{\bf A}}
\def\bAof#1{\of\bA{#1}}
\newcommand{\F}{\morph{F}}
\newcommand{\subd}{\lhd}
\newcommand{\sub}{\subseteq}
\newcommand{\dom}[1]{{\cal #1}}
\newcommand{\doma}{{\cal A}}
\newcommand{\domb}{{\cal B}}
\newcommand{\domd}{{\cal D}}
\newcommand{\dome}{{\cal E}}
\newcommand{\domu}{{\cal U}}
\newcommand{\domv}{{\cal V}}
\newcommand{\domf}[1]{{\cal #1}^0}
\newcommand\cK{{\sf{K}}}
\newcommand\cP{{\sf{P}}}
\newcommand\cMap{{\sf{Map}}}
\newcommand\cFun{{\sf{Fun}}}
\newcommand{\cat}[1]{\bf #1}
\newcommand{\pair}[2]{\langle #1, #2 \rangle}
\newcommand{\sigstar}{\Sigma ^\star}
\newcommand{\appx}{\sqsubseteq}
\newcommand{\below}{\sqsubseteq}
\newcommand{\appxneq}{\sqsubset}
\newcommand{\lub}{\sqcup}		
\newcommand{\Lub}{\bigsqcup}
\newcommand{\glb}{\sqcap}
\newcommand{\Glb}{\mbox{\Large $\sqcap$}}
\newcommand{\union}{\cup}
\newcommand{\Union}{\bigcup}
\newcommand{\iso}{\approx}
\newcommand{\ident}{{\sf I}}
\newcommand{\ideal}{{\cal I}}
\newcommand{\mand}{\wedge}
\newcommand{\setof}[1]{{\{{#1}\}}}
\newcommand{\map}{\rightarrow}		
\newcommand{\amap}{\Rightarrow}		
\newcommand{\imp}{\:\Rightarrow\:}	
\newcommand{\eop}{$\:\:\Box$}
\newcommand{\such}{\:\vert\:}
\newcommand{\fa}{\forall}
\newcommand{\te}{\exists}
\newcommand{\cpo}{{\it cpo}}
\newcommand{\mknod}[3]{\put(#1,#2){\oval(20,10)}\put(#1,#2){\makebox(0,0){#3}}}
\newcommand{\nln}[5]{\put(#1,#2){\line(#4,#5){#3}}}
\newcommand{\proof}{\noindent{\bf Proof~~}}
\newcommand{\nat}{{\Bbb{N}}}
\newcommand{\fun}{{\Bbb{F}}}
\newcommand{\domn}{{\cal N}}
\newcommand{\mx}{{\sf max}}
\newcommand{\mn}{{\sf min}}
\newcommand{\Exp}{{\cal Ex}}
\def\apply{{\sf apply}}
\def\Nat{{\Bbb N}}
\newcounter{exercnt}[section]
\renewcommand{\theexercnt}{\thesection.\arabic{exercnt}}
\newenvironment{exer}{\begin{trivlist}\refstepcounter{exercnt}
\item[]{\bf Exercise~\theexercnt:} }{\end{trivlist}}
\newenvironment{mdef}[1]{\begin{trivlist}\refstepcounter{exercnt}
\item[]{\bf Definition~\theexercnt:  [#1]} }{\end{trivlist}}
\newenvironment{thm}{\begin{trivlist}\refstepcounter{exercnt}
\item[]{\bf Theorem~\theexercnt:} }{\end{trivlist}}
\newenvironment{clm}{\begin{trivlist}\refstepcounter{exercnt}
\item[]{\bf Claim~\theexercnt:} }{\end{trivlist}}
\newenvironment{lem}{\begin{trivlist}\refstepcounter{exercnt}
\item[]{\bf Lemma~\theexercnt:} }{\end{trivlist}}
\newenvironment{cor}{\begin{trivlist}\refstepcounter{exercnt}
\item[]{\bf Corollary~\theexercnt:} }{\end{trivlist}}
\newenvironment{exm}{\begin{trivlist}\refstepcounter{exercnt}
\item[]{\bf Example~\theexercnt:} }{\end{trivlist}}
\newenvironment{rem}{\begin{trivlist}\refstepcounter{exercnt}
\item[]{\bf Remark~\theexercnt:} }{\end{trivlist}}
\begin{document}
\begin{titlepage}
\vspace{2in}
\begin{center}
{\LARGE \sc
~\\
~\\
~\\
Domain Theory:\\
\vspace{.3cm}
An Introduction
}
\end{center}
\vspace{.5in}
\begin{center}
{\Large 
\begin{tabular}{cc}
Robert Cartwright&Rebecca Parsons\\
{\it Rice University, USA}&{\it ThoughtWorks Inc., USA}
\end{tabular}\\
\vspace{.5cm}
Moez AbdelGawad\\
\vspace{.1cm}
{\it Hunan University, China}\\\vspace{.2cm}{\it SRTA-City, Egypt}}
\end{center}
\vspace{.5in}
\begin{quote}
This monograph is an ongoing revision of the
``Lectures On A
Mathematical Theory of Computation'' by Dana Scott~\cite{ns}. 
Scott's monograph uses a formulation of domains called {\it
neighborhood systems} in which finite elements are selected
subsets of a master set
of objects called ``tokens''.  Since tokens have little intuitive
significance, Scott has discarded neighborhood systems in favor of an
equivalent formulation of domains called {\it information systems} ~\cite{is}.
Unfortunately, he has not rewritten his monograph to reflect this
change.

We have rewritten Scott's monograph in terms of
{\it finitary bases} (see
Cartwright~\cite{tt})
instead of information systems.
A finitary basis is an information system
that is closed under least upper bounds on finite consistent subsets.
This convention ensures that every finite data value is represented
by a single basis object instead of a set of objects.
\end{quote}

\end{titlepage}
\tableofcontents
\newpage
\section{The Rudiments of Domain Theory}
\subsection{Motivation}
Computer programs perform computations
by repeatedly applying primitive operations
to data values.  The set of primitive operations and data values
depends on the particular programming language.  Nearly all programming languages
support a rich collection of data values including {\it atomic}
objects, such as
booleans, integers, characters, and floating point numbers,
and
{\it composite} objects, such as arrays,
records, sequences, tuples,
and infinite streams.  More advanced languages also support functions
and procedures as data values.
To define the meaning of programs in a given language,
we must first define the building blocks---the primitive data values and
operations---from which computations in the language
are constructed.


{\it Domain theory\/} is a comprehensive mathematical framework for
defining the data values and primitive operations of a
programming language.  A critical feature
of domain theory (and expressive programming
languages like Scheme, ML and Haskell) is the fact that {\it program
operations are also data values}; in domain theory
both operations and data values that the operations operate on are elements of computational
{\it domains\/}.  

In a language implementation,
every data value and operation is represented by a finite
configuration of symbols ({\it e.g.}, a bitstring).  However, the
choice of how data is represented should {\it not} affect the observable
behavior of programs.
Otherwise, a programmer
cannot reason about the behavior of
programs independent of their implementations.


To achieve this goal,
we must abandon finite representations for some data values.
The abstract meaning of a procedure, for example, is typically defined
as a mathematical function from an infinite domain to an infinite codomain.
Although the graph of this function is recursively enumerable, it
does not have an effective {\it finite}
canonical representation; otherwise we could decide
the equality of recursively enumerable sets by generating their
canonical descriptions then comparing these descriptions.

Finite data values have canonical representations. Data values that do not have finite canonical representations 
are called {\it infinite} data values.
Some common examples of infinite data values are
functions over an infinite domain, infinite streams
and infinite trees (sometimes also called ``lazy" lists and trees).  
 A data domain that contains infinite data values is called a 
 {\it higher order} data domain.
To describe an infinite data value, 
we must use an {\it infinite sequence}
of progressively better finite approximations, each obviously having a canonical representation.

We can interpret
each finite approximation as a proposition
asserting that a certain property
is true of the approximated value.  By stating enough
different properties (a countably infinite number of them in general), every
infinite data value can be uniquely identified.  

Higher order data domains can also contain ordinary finite data values.  As such, in a higher order data domain there are {\it two} separate kinds of finite values.
\begin{itemize}
\item
First, the finite elements used to approximate infinite values are
legitimate data values themselves.  Even though these approximating
elements are only ``partially defined,'' they can be produced
as the {\it final} results of computations.
For example, a tree of the
form $cons(\alpha,\beta)$, where $\alpha$ and $\beta$ are arbitrary
data values, is a data value
in its own right, because a computation yielding
$cons(\alpha,\beta)$ may subsequently diverge without producing any
information about the values $\alpha$ and $\beta$.
\item
Second, higher order domains may contain ``maximal''
finite elements that do not
approximate any other values.  These ``maximal''
values correspond to conventional finite data values.
For example, in
the domain of potentially infinite
binary trees of integers, the leaf
carrying the integer 42 does not approximate any value
other than itself.  
\end{itemize}

In summary, a framework for defining computational data values and operations
must accommodate infinite elements, partially-defined elements,
and finite maximal elements.
In addition, the framework
should support the construction of more complex values from simpler
values, and it should support a notion of computation on these objects.
This monograph describes a framework---{\it domain
theory}---satisfying all of these properties.
\subsection{Notation}
The following notation is used throughout the monograph:
\begin{quote}
\begin{tabular}{ccl}
$\imp$&means& logical implication\\
$\iff$&means& if and only if (used in mathematical formulas)\\
iff&means& if and only if (used in text)\\
$\appx$&means& approximation ordering\\
$\lub$&means& least upper bound\\
$\nat$&means& the natural numbers\\
\end{tabular}
\end{quote}
\subsection{Basic Definitions}
To support the idea of describing
data values by generating ``better and better'' approximations,
we need to specify an (``information content") ordering relation among the finite
approximations to data values.
The following definitions describe the structure
of the sets of finite approximations used to build domains;
these sets of finite approximations are
called
{\it finitary bases\/}.

\begin{mdef}{Partial Order}
A {\it partial order\/} {\bf B} is a pair $\langle B,\appx\rangle $ 
consisting of $(i)$ 
a (non-empty) set $B$, called the {\it universe} of {\bf B}, and
$(ii)$ a
binary relation $\appx$ on the
set $B$, called the {\it approximation ordering},
that is
\begin{itemize}
\item
{\it reflexive:} $\forall x \in B [x \appx x]$,
\item
{\it antisymmetric:} $\forall x,y \in B [x \appx y] \mbox{ and } 
[y \appx x] \mbox{ implies } x = y$, and
\item
{\it transitive:} $\forall x,y,z \in B [x \appx y] \mbox{ and } 
[y \appx z] \mbox{ implies } x \appx z$.
\end{itemize}
\end{mdef}
\begin{mdef}{Upper Bounds, Lower Bounds, Consistency}
Let $S$ be a subset of (the universe of) a partial order {\bf B}.
An element $b \in B$ is an {\it upper bound} 
of $S$
iff $\fa s \in S \: s \appx b$.
An element $b \in B$ is a {\it lower bound} 
of $S$
iff
$\fa s \in S \: b \appx s$.
$S$ is {\it consistent} (sometimes also called {\it bounded}) iff $S$ has an upper 
bound. An upper bound $b$ of $S$ is
the {\it least upper bound} of $S$ (denoted $\Lub S$)
iff $b$ approximates all upper bounds of $S$.
A lower bound $b$ of $S$ is
the {\it greatest lower bound} of $S$ (denoted $\Glb S$)
iff every lower bound of $S$ approximates
$b$.
\end{mdef}

\begin{rem} In domain theory upper bounds are much more important 
than lower bounds.
\end{rem}

\begin{mdef}{Directed Set, Progressive Set, Chain}
A subset $S$ of a partial order {\bf B}
is {\it directed} iff
every finite subset of $S$ is consistent ({\it i.e.}, has an upper bound) in $S$.
A directed subset $S$ of $\bas{B}$ is {\it progressive} iff
$S$ does not contain a maximum element: $\nexists b \in S[\fa s \in S s\appx b]$.
A directed subset $S$ of $\bas{B}$ is a {\it chain} iff $S$ is
{\it totally ordered}: $\fa a,b \in S a \appx b$
or $b \appx a$.
\end{mdef}
\begin{clm}
The empty set is directed.
\end{clm}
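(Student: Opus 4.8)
The plan is simply to unwind the two relevant definitions. By the Directed Set definition, $\emptyset$ is directed iff every finite subset of $\emptyset$ is consistent. The only subset of $\emptyset$---finite or otherwise---is $\emptyset$ itself, so the entire claim reduces to the single assertion that $\emptyset$ is consistent.

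Next I would appeal to the Consistency definition: $\emptyset$ is consistent iff $\emptyset$ has an upper bound. Recall that $b \in B$ is an upper bound of $\emptyset$ iff $\fa s \in \emptyset \: s \appx b$. Since there are no elements $s \in \emptyset$, this universally quantified condition is vacuously satisfied, so \emph{every} element of $B$ is an upper bound of $\emptyset$.

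The one step that is not a pure triviality is arguing that an upper bound actually \emph{exists}. Vacuous truth tells us that any candidate $b$ would qualify, but consistency demands the existence of at least one such $b$. Here I would invoke the Partial Order definition, which stipulates that the universe $B$ is non-empty. Choosing any $b \in B$ then yields an upper bound of $\emptyset$, establishing that $\emptyset$ is consistent and hence directed.

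I expect this last point---the reliance on the non-emptiness of $B$---to be the only genuine subtlety. Were $B$ permitted to be empty, $\emptyset$ would possess no upper bound and would therefore fail to be consistent, and the claim would break down. Thus the non-emptiness clause in the Partial Order definition is exactly what the proof hinges on, and everything else is a matter of correctly handling the vacuous quantification over the empty index set.
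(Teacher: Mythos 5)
Your proof is correct, and since the paper states this claim without any proof at all, your unwinding of the definitions is exactly the argument the authors must have in mind: the only finite subset of $\emptyset$ is $\emptyset$ itself, every $b \in B$ is vacuously an upper bound of $\emptyset$, and the non-emptiness of $B$ (built into the Partial Order definition) supplies at least one such $b$. You are also right that this non-emptiness clause is the only load-bearing hypothesis. One small point worth flagging: the paper's definition of directed reads ``every finite subset of $S$ is consistent ({\it i.e.}, has an upper bound) \emph{in} $S$.'' If the phrase ``in $S$'' were taken to require the upper bound to lie in $S$ itself (the more common textbook convention), then $\emptyset$ would have no upper bound in $\emptyset$ and the claim would be false; indeed that is precisely why many authors require directed sets to be non-empty. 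Your proof silently adopts the other reading---upper bound taken in $B$, per the earlier Consistency definition---which is the reading the paper needs both here and for the subsequent claim that a cpo has a least element (obtained as $\Lub \emptyset$). It would strengthen your write-up to note explicitly that you are resolving this ambiguity in the definition, but the mathematics as you present it is sound.
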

\begin{mdef}{Complete Partial Order}
A {\it complete partial order\/}, abbreviated {\it cpo\/} (or sometimes {\it dcpo}), is a
partial order $\langle B,\appx\rangle$ such
that every directed subset has a least upper bound in $B$.
\end{mdef}
\begin{clm}
A \cpo\ has a least element.
\end{clm}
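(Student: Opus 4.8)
The plan is to exploit the immediately preceding claim that the empty set is directed, together with the defining property of a \cpo. First I would observe that, since $\emptyset$ is directed, the completeness hypothesis guarantees that the least upper bound $\Lub \emptyset$ exists and lies in $B$. The whole argument then reduces to identifying what this particular least upper bound must be.

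The key observation is that \emph{every} element of $B$ is vacuously an upper bound of $\emptyset$: the condition $\fa s \in \emptyset \: s \appx b$ holds for every $b \in B$ simply because there is no $s$ to test it against. Hence the set of upper bounds of $\emptyset$ is all of $B$. Now invoking the definition of least upper bound, $\Lub \emptyset$ approximates all upper bounds of $\emptyset$, and therefore $\Lub \emptyset \appx b$ for every $b \in B$. This is exactly the statement that $\Lub \emptyset$ is a least element of the universe, which completes the proof.

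The main obstacle — really the only point requiring care — is the vacuous-quantifier step: one must be comfortable asserting that the universally quantified predicate defining ``upper bound'' is satisfied by every element precisely when the bounded set is empty. Everything else is an unpacking of definitions; in particular no case analysis is required, and antisymmetry is needed only if one additionally wishes to record that the least element so obtained is unique.
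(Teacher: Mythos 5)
Your proof is correct and follows exactly the route the paper sets up: the claim that the empty set is directed is stated immediately beforehand precisely so that completeness yields $\Lub\emptyset$, which is vacuously below every element and hence a least element. The paper gives no explicit proof of this claim, but your argument is the intended one and fills the gap correctly.
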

\subsection{Finitary Bases}
\begin{mdef}{Finitary Basis}
A {\it finitary basis\/} \bas{B} is a 
partial order $\langle
B,\appx\rangle $ such that $B$ is countable and
every finite consistent subset
has a least upper bound in $B$. 
\end{mdef}

We call the elements of a finitary basis
$\bas{B}$ {\it propositions} since they can be 
interpreted as
logical assertions
about domain elements. 
In propositional logic,
the least upper bound of a set of propositions is the 
conjunction of all the propositions in the set. 
Since the empty set, $\emptyset$, is a finite consistent
subset of $\bas{B}$, it has a least upper bound, which is
denoted $\bot$.
The $\bot$ proposition is true for all domain elements; hence it
does not give any ``information'' about an element.

\begin{exm}\label{exmstr}
Let $B=\{\bot ,0\bot ,1\bot , 00,01,10,11\}$ where $0\bot$
describes strings that start with 0 and are indeterminate past that
point; 00 describes the string consisting of two consecutive 0's.
The other propositions are defined similarly. Let
$\appx$ denote the
implication (or, conversely, the approximation) relation between
propositions.
Thus,
$0\bot \appx 00$ and $0\bot \appx 01$.  In pictorial form, the partial
order $\langle B,\appx\rangle $ looks like:
\begin{center}
\begin{picture}(120,55)
\mknod{10}{45}{00} 
\mknod{40}{45}{01} 
\mknod{70}{45}{10} 
\mknod{100}{45}{11} 
\mknod{25}{25}{0$\bot$} 
\mknod{85}{25}{1$\bot$} 
\mknod{55}{5}{$\bot$}
\nln{10}{40}{10}{1}{-1}
\nln{40}{40}{10}{-1}{-1}
\nln{70}{40}{10}{1}{-1}
\nln{100}{40}{10}{-1}{-1}
\nln{25}{20}{20}{2}{-1}
\nln{85}{20}{20}{-2}{-1}
\end{picture}
\end{center}
\end{exm}

$\langle B,\appx\rangle $ is clearly a partial order.
To show that $\langle B,\appx\rangle $ is a finitary basis, we must
show that $B$ is countable and that
all finite bounded ({\it i.e.}, consistent) subsets of $B$ have least
upper bounds.

Since $B$ is finite, it is obviously countable.  It is easy to
confirm that every finite consistent subset has a least
upper bound (by inspection).  In fact, the least upper bound
of any consistent subset $S$ of $B$ is simply the greatest element
of $S$.\footnote{This is a special property of this particular partial order.  It is
not true of finitary bases in general.}
Thus, $\langle B,\appx\rangle $ is a finitary basis.  \eop 

\begin{exm}\label{exmintp}
Let $B=\{(n,m) \: \such \:n,m\in \nat\cup \{\infty\}, n \leq m \}$ where
the proposition $(n,m)$ represents an integer $x$ such that $n \leq x
\leq m$. $\bot$ in this example is the proposition $(0,\infty)$. Let
$\appx$ be defined as
\[(n,m) \appx (j,k) \:\:\iff\:\: n \leq j ~{\rm and}~ k \leq m\]
\end{exm}

For example, $(1,10) \appx (2,6)$ but $(2,6)$ and $(7,12)$ are
incomparable, as are $(2,6)$ and $(4,8)$.  It is easy to confirm that
$B$ is countable and that
$\langle B,\appx\rangle $ is a partial order.
A subset $S$ of $B$ is consistent if there is an integer
for which the proposition in $S$ is true.
Thus, $(2,6)$ and $(4,8)$ are consistent since either 4, 5,
or 6 could be represented by these propositions.  The least upper
bound of these
elements is $(4,6)$.  In general, for a consistent
subset $S = \{(n_i,m_i) \: \such \:i \in I\}$ of $B$,
the least upper bound of $S$ is defined as
\[\Lub S  = (\mx \:  \{n_i \: \such \:i \in I\}, \mn \: 
\{m_i \: \such \:i \in I\})\;.\]
Therefore, $\langle B,\appx\rangle $ is a finitary basis. \eop\\

Given a finitary basis $\bas{B}$, the corresponding
{\it domain\/} $\dom{D}_\bas{B}$ (also called $\domb$)
is constructed by forming all
consistent subsets of $\bas{B}$ that
are ``closed'' under implication and finite conjunction (where
$a \appx b$ corresponds to $b \imp a$ and
$a \lub b$ then corresponds to $a \mand b$).  More precisely, a
consistent subset $S \subseteq \bas{B}$ is an element of
the corresponding domain $\dom{D}_\bas{B}$ iff
\begin{itemize}
\item
$\forall s \in S \;\; \forall b \in \bas{B} \;\; b \appx s \imp b \in S$, and
\item
$\forall r,s \in S \;\; r \lub s \in S$.
\end{itemize}
Corresponding to each basis element/proposition $p \in B$
there is a unique element
$\ideal_p$ =  $\setof{b \in B \: \such \:b \appx p} 
\in \dom{D}_\bas{B}$.  In addition,
$\dom{D}_\bas{B}$ contains elements (``closed'' subsets of $B$)
corresponding to the ``limits'' of all progressive directed
subsets of $\bas{B}$.
This construction
``completes'' the finitary basis $\bas{B}$
by adding limit elements for all progressive directed
subsets of $B$.

In $\dom{D}_\bas{B}$, every
element $d$ is represented by the set of all the propositions in the finitary
basis $\bas{B}$ that describe ({\it i.e.}, approximate) $d$.  
These sets are called {\it ideals\/}.
\begin{mdef}{Ideal}\label{defideal}
For finitary basis $\bas{B}$, a subset $\ideal$ of $B$
is an {\it ideal over} $\bas{B}$ iff
\begin{itemize}
\item
$\ideal$ is downward-closed (implication):
$e\in {\ideal} \imp (\forall b\in B \; b\appx e \imp b\in {\ideal})$
\item
$\ideal$ is closed under least upper bounds on finite subsets (finite conjunction): $\forall r,s \in I \;\; r \lub s \in \ideal$.\footnote{Using induction, it is easy to prove that closure under lubs of pairs implies closure under lubs of finite sets, and trivially vice versa.}
\end{itemize}
\end{mdef}
\subsection{Domains}
Now, we construct domains as partially ordered sets of ideals.
\begin{mdef}{Constructed Domain}\label{1.15}
Let $\bas{B}$ be a finitary basis.
The {\it domain\/} $\dom{D}_\bas{B}$ {\it determined} by $\bas{B}$
is the partial order $\langle D,\appx_D\rangle$ where $D$, the universe of $\dom{D}_\bas{B}$, is
the set of all ideals $\ideal$ over \bas{B}, and $\appx_D$ is the
subset relation.  We will frequently write $\domd$ or $\domb$ instead of
$\dom{D}_\bas{B}$.
\end{mdef}

The proof of the following two claims are easy; they are left to the reader.
\begin{clm}
The least upper bound of 
two ideals $\ideal_1$ and $\ideal_2$, closing over $\ideal_1$ and $\ideal_2$, 
if it exists, is found by unioning $\ideal_1$ and $\ideal_2$
to form an ideal $\ideal_1 \union \ideal_2$ over $\bas{B}$.
\end{clm}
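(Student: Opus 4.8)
The plan is to read the join in the poset $\dom{D}_\bas{B}$, whose order is $\sub$, directly: the least upper bound of $\ideal_1$ and $\ideal_2$, when it exists, is by definition the \emph{smallest} ideal that contains both. I would therefore identify this smallest ideal with the ideal obtained by ``closing over'' $\ideal_1$ and $\ideal_2$, namely the downward closure of all finite least upper bounds of elements drawn from $\ideal_1 \union \ideal_2$; write $\ideal$ for this set. Note first that the bare set-union $\ideal_1 \union \ideal_2$ is already downward-closed (each $\ideal_i$ is), but it need \emph{not} be closed under finite lubs, so the closure step is genuinely required.

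The first simplification I would record is a concrete description of $\ideal$. Since each $\ideal_i$ is itself closed under finite lubs and contains $\bot$, any finite lub of elements of $\ideal_1 \union \ideal_2$ collapses to $a \lub b$ for a single $a \in \ideal_1$ and a single $b \in \ideal_2$ (group the chosen elements by which ideal they come from, take the lub within each ideal, and pad with $\bot$ on the empty side). Hence $\ideal = \setof{x \in B \such \te a \in \ideal_1, \te b \in \ideal_2 \mbox{ with } a \lub b \mbox{ defined and } x \appx a \lub b}$. Taking $b = \bot$ (resp. $a = \bot$) shows $\ideal_1 \sub \ideal$ and $\ideal_2 \sub \ideal$, so $\ideal$ is an upper bound. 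Minimality is then immediate: any ideal $\ideal'$ with $\ideal_1 \sub \ideal'$ and $\ideal_2 \sub \ideal'$ contains every such $a$ and $b$, hence, being lub-closed and downward-closed, every such $a \lub b$ and everything below it, so $\ideal \sub \ideal'$. Thus $\ideal$ is contained in every upper bound of $\setof{\ideal_1, \ideal_2}$.

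To finish, I would invoke the hypothesis that the lub exists, say as the ideal $\ideal^*$, and use it to verify that $\ideal$ is a legitimate ideal; minimality then forces $\ideal = \ideal^*$. Downward closure of $\ideal$ holds by construction, so the one substantive point---and the step I expect to be the main obstacle---is closure of $\ideal$ under finite lubs. Given $x_1, x_2 \in \ideal$, bounded by $a_1 \lub b_1$ and $a_2 \lub b_2$, I would set $a = a_1 \lub a_2 \in \ideal_1$ and $b = b_1 \lub b_2 \in \ideal_2$ (lub-closure of each $\ideal_i$); since $\ideal_1, \ideal_2 \sub \ideal^*$ we have $a, b \in \ideal^*$, so $a \lub b$ exists in $B$---this is exactly where the existence hypothesis is spent, together with the finitary-basis property that a finite consistent subset has a least upper bound. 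Then $\setof{x_1, x_2}$ is bounded by $a \lub b$, whence $x_1 \lub x_2$ exists and satisfies $x_1 \lub x_2 \appx a \lub b$, returning $x_1 \lub x_2$ to $\ideal$. Finally I would note the dual reading of ``if it exists'': should $\ideal_1$ and $\ideal_2$ admit no common upper bound, some pair $a \in \ideal_1$, $b \in \ideal_2$ is inconsistent, so no ideal can contain both and no least upper bound exists---confirming that the hypothesis is precisely what licenses the lubs used above.
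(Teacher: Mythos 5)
Your proof is correct, and it is worth noting that the paper offers no proof at all here --- this claim is explicitly ``left to the reader'' --- so there is no authorial argument to diverge from; yours is the natural one the text intends. You correctly read ``closing over'' as taking the ideal generated by the set union (the bare union is downward-closed but not lub-closed, as the $(2,6)$, $(4,8)$ example in the paper already shows), you reduce the generated ideal to the down-set of elements $a \lub b$ with $a \in \ideal_1$, $b \in \ideal_2$, and you correctly isolate the only place the hypothesis ``if it exists'' is needed: an ideal $\ideal^*$ containing both $\ideal_1$ and $\ideal_2$ guarantees that $a$ and $b$ are consistent in $\bas{B}$, so the lubs used in verifying closure of the candidate set actually exist.
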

\begin{clm}
The
domain $\domd$ determined by
a finitary basis $\bas{B}$
is a {\it complete partial order\/}.
\end{clm}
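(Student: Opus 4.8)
The plan is to verify the two conditions in the definition of a complete partial order directly against the construction in Definition~\ref{1.15}. The first condition, that $\langle D,\subseteq\rangle$ is a partial order, is immediate: the subset relation is reflexive, antisymmetric, and transitive on any family of sets, which are exactly the three requirements on an approximation ordering---and Definition~\ref{1.15} already presents $\domd$ as a partial order. Hence the whole content of the claim lies in the second condition, that every directed family of ideals has a least upper bound which is again an ideal.

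So let ${\cal S}\subseteq D$ be a directed family of ideals, and take as the candidate least upper bound the union $J=\Union_{\ideal\in{\cal S}}\ideal$. First I would confirm that $J$ satisfies the two clauses of Definition~\ref{defideal}, so that $J\in D$. Downward closure is straightforward and uses no directedness: if $e\in J$ then $e\in\ideal$ for some $\ideal\in{\cal S}$, and any $b\appx e$ lies in that same $\ideal$ by its downward closure, hence in $J$. Granting for the moment that $J$ is an ideal, the remaining obligations are routine: each $\ideal\in{\cal S}$ satisfies $\ideal\subseteq J$, so $J$ is an upper bound of ${\cal S}$; and if $K$ is any ideal with $\ideal\subseteq K$ for every $\ideal\in{\cal S}$, then $J=\Union{\cal S}\subseteq K$, so $J$ is least.

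The step I expect to be the crux is showing that $J$ is closed under the least upper bounds of its pairs, since this is the sole place where directedness of ${\cal S}$ enters. Given $r,s\in J$, there are ideals $\ideal_1,\ideal_2\in{\cal S}$ with $r\in\ideal_1$ and $s\in\ideal_2$. Here I would invoke directedness: the finite subfamily $\{\ideal_1,\ideal_2\}$ has an upper bound $\ideal_3\in{\cal S}$, so $\ideal_1\subseteq\ideal_3$ and $\ideal_2\subseteq\ideal_3$ and therefore $r,s\in\ideal_3$. Because $\ideal_3$ is an ideal---hence downward closed and closed under pairwise lubs, and so directed---the pair $\{r,s\}$ has an upper bound in $\ideal_3$ and is thus consistent; the finitary-basis property then supplies $r\lub s$, and the closure clause of Definition~\ref{defideal} places $r\lub s\in\ideal_3\subseteq J$. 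The essential point is that without directedness one could only locate $r$ and $s$ in two possibly incompatible ideals, with no guarantee that $r\lub s$ is even defined; directedness is precisely the hypothesis that funnels any two members of $J$ into a common ideal where their lub already resides. Finally, since the empty family is directed by the earlier claim, the same argument applied to ${\cal S}=\emptyset$ gives $J=\emptyset$ as the least element of $\domd$, so no separate case is required.
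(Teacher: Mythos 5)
The paper leaves this claim as an exercise, so there is no official proof to compare against; your overall strategy---take the union of a directed family of ideals, check downward closure, and use directedness to funnel any $r,s$ into a common ideal $\ideal_3$ so that $r\lub s$ is defined and lands in $\ideal_3\subseteq J$---is exactly the standard argument and is correct for every \emph{nonempty} directed family, including your identification of where directedness is genuinely needed.

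The one step that does not survive scrutiny is your treatment of the empty family. You conclude that $\Union\emptyset=\emptyset$ is the least element of $\domd$, but the paper states explicitly (just after the definition of finite elements) that the least element of $\domd$ is the principal ideal $\ideal_\bot=\setof{\bot}$, not $\emptyset$. The discrepancy comes from reading the second clause of Definition~\ref{defideal} as closure under lubs of \emph{pairs} only; the clause is headed ``closed under least upper bounds on finite subsets,'' and the lub of the empty finite subset is $\bot$, so every ideal must contain $\bot$ and $\emptyset$ is not an ideal. (Indeed, if $\emptyset$ were admitted as an ideal, the paper's assertion that $\setof{\bot}$ is least would be false.) The fix is one line, but it is not ``the same argument'': for ${\cal S}=\emptyset$ the union construction produces something outside $D$, and you must instead observe directly that $\ideal_\bot$ is an ideal contained in every ideal, hence is the least upper bound of the empty family. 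With that case repaired, the proof is complete.
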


Each proposition $b$ in a finitary basis $B$ determines an ideal
$\ideal_b$ consisting of the set of propositions implied by $b$.
An ideal of this form called a {\it principal ideal\/} of $B$.
\begin{mdef}{Principal Ideals}
For finitary basis $\bas{B}=\langle B,\appx\rangle $, the {\it principal ideal}
determined by $b\in B$, is the ideal
\[{\ideal_b}=\{b'\in B \: \such \:b'\appx b\}\;.\]
We will use the notation 
$\ideal_b$
to denote the principal ideal determined by an element $b$ throughout this
monograph.\end{mdef}

Since there is a natural one-to-one correspondence between the
propositions of a finitary basis
$\bas{B}$ and the principal ideals over $\bas{B}$,
the following theorem obviously holds.

\begin{thm}
The principal ideals over a finitary basis $\bas{B}$ form a
finitary basis under the subset ordering.
\end{thm}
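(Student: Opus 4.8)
The plan is to exploit the natural bijection $b \mapsto \ideal_b$ between the universe $B$ and the collection of principal ideals, and to show that this bijection is in fact an \emph{order isomorphism} carrying $\langle B,\appx\rangle$ onto the principal ideals ordered by $\subseteq$. Once this is established, the two defining properties of a finitary basis---countability and the existence of least upper bounds for finite consistent subsets---transfer immediately from $\bas{B}$, since both properties are invariant under order isomorphism.

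First I would verify the order-isomorphism claim, namely that $b \appx c$ iff $\ideal_b \subseteq \ideal_c$. The forward direction is transitivity: if $b \appx c$ and $b' \appx b$, then $b' \appx c$, so every element of $\ideal_b$ lies in $\ideal_c$. The reverse direction is reflexivity: $b \in \ideal_b$ since $b \appx b$, so $\ideal_b \subseteq \ideal_c$ forces $b \in \ideal_c$, i.e. $b \appx c$. Antisymmetry of $\appx$ then shows the map is injective, and it is surjective onto the principal ideals by definition; hence it is an order isomorphism. Because $\subseteq$ is automatically a partial order on any family of sets, the principal ideals form a partial order, and countability is inherited through the bijection with the countable set $B$.

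The substantive step is the least-upper-bound property. Given a finite subset $\{\ideal_{b_1}, \ldots, \ideal_{b_n}\}$ of principal ideals that is consistent under $\subseteq$---say bounded above by a principal ideal $\ideal_c$---the isomorphism pulls this back to a finite consistent subset $\{b_1, \ldots, b_n\}$ of $B$, bounded above by $c$. Since $\bas{B}$ is a finitary basis, $d = \Lub \{b_1, \ldots, b_n\}$ exists in $B$. I would then argue that $\ideal_d$ is the least upper bound of $\{\ideal_{b_1}, \ldots, \ideal_{b_n}\}$: each $b_i \appx d$ gives $\ideal_{b_i} \subseteq \ideal_d$ by the isomorphism, so $\ideal_d$ is an upper bound; and if $\ideal_e$ is any principal upper bound, then each $b_i \appx e$, whence $d \appx e$ by leastness of $d$ in $B$, so $\ideal_d \subseteq \ideal_e$.

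The main obstacle---really the only point requiring care---is ensuring that the least upper bound of the principal ideals is itself a \emph{principal} ideal, rather than merely some ideal or some arbitrary set in the ambient powerset lattice. This is precisely where the finitary-basis hypothesis on $\bas{B}$ (as opposed to mere partial-order structure) is used: it is the existence of $d = \Lub\{b_1,\ldots,b_n\}$ inside $B$ that guarantees the candidate bound $\ideal_d$ is principal. Everything else is a routine transport of structure across the order isomorphism.
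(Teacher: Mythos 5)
Your proof is correct and follows essentially the same route as the paper, which disposes of the theorem in one line by invoking the natural order-isomorphism $b \mapsto \ideal_b$ (``the universality of the subset relation over partial orders''). You have simply written out in full the details that the paper leaves implicit, including the one point that genuinely needs the finitary-basis hypothesis, namely that the least upper bound of finitely many principal ideals is again principal.
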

\begin{proof}
This is an instance of the universality of the subset relation over partial orders.\eop
\end{proof}\\

Within the domain $\domd$ determined by a finitary
basis $\bas{B}$, the principal ideals are characterized by
an important topological property called {\it finiteness}.

\begin{mdef}{Finite Elements}
An element $e$ of a {\it cpo} $\domd=\langle D,\appx\rangle$ is {\it
finite\/}
iff
for every directed subset $S$ of $D$, $e =
\Lub S \imp e \in S$.\footnote{This property is
	weaker in general than the corresponding
	property (called {\it isolated} or {\it compact})
	that is widely used in topology.  In the context of {\it cpos}, the
	two properties are equivalent.}
\end{mdef}

The set of finite elements in a {\it cpo} $\domd$ is denoted $\domf{D}$. The proof of the following theorem is left to the reader.
\begin{thm}
An element of the domain 
$\domd$ of ideals determined by a finitary basis $\bas{B}$
is finite
iff it is principal.
\end{thm}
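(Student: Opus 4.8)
The plan is to prove the biconditional by establishing each implication separately, relying on one preparatory observation: for a \emph{directed} family $S$ of ideals, the least upper bound computed in $\domd$ is simply the union, $\Lub S = \Union S$. I would verify this by checking that $\Union S$ is itself an ideal---downward-closure is immediate since a union of downward-closed sets is downward-closed, and closure under finite lubs follows because any two propositions $r,s \in \Union S$ already lie in a common member of $S$ (here directedness is essential, and the finitary-basis property supplies $r \lub s$)---and that $\Union S$ is then visibly the least of all upper bounds. This refines the earlier Claim that $\domd$ is a \cpo.

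For the direction ``principal $\imp$ finite,'' suppose $\ideal_b$ is the principal ideal of some $b \in B$, and let $S$ be a directed subset of $\domd$ with $\Lub S = \ideal_b$. By the preparatory observation $\Union S = \ideal_b$, so from $b \in \ideal_b$ we obtain a member $\ideal \in S$ with $b \in \ideal$. Downward-closure of $\ideal$ forces $\ideal_b \sub \ideal$, while $\ideal \sub \Union S = \ideal_b$ gives the reverse inclusion; hence $\ideal = \ideal_b \in S$, which is exactly the defining property of a finite element.

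For the converse, ``finite $\imp$ principal,'' let $\ideal$ be finite and consider the family $S = \setof{\ideal_b \such b \in \ideal}$ of principal ideals of its members. Since $\ideal$ is downward-closed we have $\ideal_b \sub \ideal$ for each $b \in \ideal$, and since every $b \in \ideal$ lies in $\ideal_b$ we get $\Union S = \ideal$. The crux---and the step I expect to be the main obstacle---is showing that $S$ is directed, so that the preparatory observation applies and yields $\Lub S = \ideal$: given finitely many $\ideal_{b_1},\dots,\ideal_{b_n}$ with each $b_i \in \ideal$, the closure of $\ideal$ under finite lubs (which in turn relies on the finitary-basis property that the finite consistent set $\setof{b_1,\dots,b_n} \sub \ideal$ possesses a least upper bound) produces $b = b_1 \lub \cdots \lub b_n \in \ideal$, and $\ideal_b \in S$ then dominates every $\ideal_{b_i}$. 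With $S$ directed and $\Lub S = \ideal$, finiteness of $\ideal$ forces $\ideal \in S$, i.e. $\ideal = \ideal_b$ for some $b \in \ideal$, so $\ideal$ is principal.
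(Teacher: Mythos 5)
Your proof is correct, and since the paper explicitly leaves this theorem's proof to the reader there is no authorial argument to compare it against; what you give is the standard one, and your preparatory observation is exactly part~(2) of Theorem~\ref{union}, so it could simply be cited rather than re-derived. The only caveat---inherited from the paper's own conventions (which declare the empty set directed) rather than introduced by you---is that both directions tacitly take the directed sets involved to be non-empty: otherwise $\Union S=\emptyset\neq\ideal_\bot=\Lub S$, your appeal to ``some $\ideal\in S$ with $b\in\ideal$'' would fail for $b=\bot$, and indeed $\ideal_\bot$ itself would not count as finite, so the non-emptiness convention is the one under which the theorem as stated is true.
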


In $\domd$,
the principal ideal determined by the least proposition
$\bot$ is the set $\{\bot\}$.  This ideal is the least
element in the domain (viewed as a \cpo).
In contexts where there is no confusion,
we will abuse notation and denote this ideal by the symbol
$\bot$ instead of $\ideal_\bot$.

The next theorem identifies the relationship between an ideal and
all the principal ideals that approximate it.
\begin{thm}\label{idleqthm}  
Let $\domd$ be the domain determined by a finitary basis $\bas{B}$.
For any $\ideal \in \domd$,
$\ideal=\Lub \:\{\ideal' \in \domf{D} \: \such \:\ideal' \appx \ideal \}\, .$
\end{thm}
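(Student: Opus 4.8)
The plan is to lean on the previous theorem, which identifies the finite elements of $\domd$ with exactly the principal ideals, so that the approximants appearing on the right-hand side can be indexed by propositions of $\bas{B}$ rather than by arbitrary finite ideals. Since $\appx_D$ is the subset relation, a principal ideal $\ideal_b$ satisfies $\ideal_b \appx \ideal$ iff $b \in \ideal$: if $b \in \ideal$ then downward-closure of $\ideal$ gives $\ideal_b = \{b' : b' \appx b\} \subseteq \ideal$, and conversely $\ideal_b \subseteq \ideal$ forces $b \in \ideal$ because $b \in \ideal_b$. Hence the set $\{\ideal' \in \domf{D} \: \such \: \ideal' \appx \ideal\}$ is precisely $S := \{\ideal_b : b \in \ideal\}$, and it suffices to prove $\ideal = \Lub S$.

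Before computing the least upper bound I would observe that $S$ is directed, which makes $\Lub S$ a genuine directed least upper bound in the \cpo\ $\domd$. Given finitely many members $\ideal_{b_1}, \dots, \ideal_{b_n}$ of $S$, the propositions $b_1, \dots, b_n$ all lie in $\ideal$; since $\ideal$ is closed under least upper bounds of finite subsets, the element $b = \Lub\{b_1,\dots,b_n\}$ exists in $B$ and belongs to $\ideal$. Then $\ideal_b \in S$ while $\ideal_{b_i} \subseteq \ideal_b$ for each $i$, so $\ideal_b$ is an upper bound, lying inside $S$, of the chosen finite subset. The empty subset is handled by $\ideal_\bot \in S$, which holds because $\bot \in \ideal$ for every ideal.

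Finally I would verify the identity $\ideal = \Lub S$ directly from the definition of least upper bound. First, $\ideal$ is an upper bound of $S$, since $\ideal_b \subseteq \ideal$ for each $b \in \ideal$ as noted above. Second, $\ideal$ is below every upper bound: if $\domu$ is any ideal with $\ideal_b \subseteq \domu$ for all $b \in \ideal$, then for an arbitrary $x \in \ideal$ we have $x \in \ideal_x$ and $\ideal_x \in S$, so $x \in \ideal_x \subseteq \domu$; thus $\ideal \subseteq \domu$. Being an upper bound that is contained in every upper bound, $\ideal$ is the least upper bound of $S$, which is exactly the claim.

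The argument is almost entirely bookkeeping with the subset ordering, so I do not expect a serious obstacle; the one step deserving care is the directedness check, where one must invoke the ideal's closure under finite conjunction (extended from pairs to finite sets) to produce the required upper bound $\ideal_b$ inside $S$. Everything else rests on the characterization of finite elements as principal ideals and on the simple fact that each $x \in \ideal$ is captured by its own principal ideal $\ideal_x$.
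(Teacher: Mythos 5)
Your proof is correct and complete: the identification of the finite approximants of $\ideal$ with the principal ideals $\ideal_b$ for $b \in \ideal$, the directedness check via closure of $\ideal$ under finite lubs, and the direct verification that $\ideal$ is the least upper bound are all sound. The paper itself gives no proof here --- it defers to Exercise 1.9 --- so there is nothing to compare against, but your argument is exactly the intended one (one could shorten the final step by citing Theorem \ref{union}, which gives $\Lub S = \bigcup S = \bigcup_{b \in \ideal} \ideal_b = \ideal$ once directedness is established).
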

\proof See Exercise 9.
\eop

The approximation ordering in a partial order
allows us to differentiate
partial elements from total elements.
\begin{mdef}{Partial and Total Elements}
Let $\bas{B}=\langle B,\appx\rangle$ be a partial order.
An element $b \in B$ is {\it partial\/} iff there exists
an element
$b'\in B$ such that $b\neq b'$ and $b\appx b'$ (sometimes written $b \appxneq b'$).
An element $b\in B$ is {\it total\/} iff
for all $b'\in B$, $b\appx b'$ implies $b=b'\, .$
\end{mdef} 

\begin{exm}
The domain determined by the finitary basis defined in Example~\ref{exmstr}
consists only of elements for each proposition in the basis.  The
four total elements are the principal ideals for the propositions
$00,01,10$, and $11$.  In general, a finite basis determines a domain
with this property (all ideals are principal ideals). \eop
\end{exm}
\begin{exm}\label{domnat}
The domain determined by the basis defined in 
Example~\ref{exmintp}
contains total elements for each of the natural numbers.
These elements are the principal ideals for propositions of the form
$(n,n)$.  In this case as well, there are no ideals formed that are
not principal. \eop
\end{exm}
\begin{exm}\label{inftreeexm}
Let $\Sigma = \{0,1\}$, and let $\sigstar$ be the set of all finite
strings over $\Sigma$ with $\epsilon$ denoting the empty string.
$\sigstar$ forms a finitary basis under the prefix ordering on
strings. $\epsilon$ is the least element in 
$\sigstar$.  The domain $\dom{S}$ determined by
$\sigstar$ 
contains principal ideals for all the finite
bitstrings.
In addition, $\dom{S}$ contains nonprincipal ideals
corresponding to all infinite bitstrings.  Given any infinite
bitstring $s$, the corresponding ideal $\ideal_s$ is the
set of all finite prefixes of $s$.  In fact, these prefixes form
a chain.\footnote{Nonprincipal ideals in other
domains are not necessarily chains.  Strings are a special case
because finite elements can ``grow'' in only one direction.
In contrast, the ideals corresponding to infinite trees---other
than vines---are not chains.}
\eop
\end{exm}

If we view {\it cpo}s abstractly, the names we associate with
particular elements in the universe are unimportant.  Consequently,
we introduce the notion of {\it isomorphism}: two domains
are {\it isomorphic} iff they have exactly the same structure.

\begin{mdef}{Isomorphic Partial Orders}
Two partial orders $\BA$
and $\BB$ are 
{\it isomorphic\/}, denoted
$\BA \iso \BB$, iff there exists a
one-to-one onto function 
$m : A \map B$
that preserves the approximation ordering:
\[ \forall a,b\in A \; a\appx_\BA b \iff m(a)\appx_\BB m(b)\,. \]
\end{mdef}

\begin{thm}
Let $\domd$ be the domain determined by a finitary
basis $\bas{B}$.  $\domf{D}$ forms a finitary basis $\bas{B'}$
under the approximation ordering $\appx$ (restricted
to $\domf{D}$).  Moreover, the
domain $\dome$ determined by the
finitary basis $\bas{B'}$ is isomorphic to $\domd$.
\end{thm}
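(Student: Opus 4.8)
The plan is to dispatch the first assertion immediately from results already in hand, and then to establish the isomorphism of the second assertion by exhibiting an explicit pair of mutually inverse, order-preserving maps between $\dome$ and $\domd$. For the first assertion, the theorem characterizing finite elements tells us that $\domf{D}$ is exactly the set of principal ideals $\{\ideal_b \such b \in B\}$, and the earlier theorem on principal ideals tells us that this set forms a finitary basis under the subset ordering. Since the approximation ordering $\appx_D$ on $\domd$ is precisely the subset relation, its restriction to $\domf{D}$ {\it is} the subset ordering, so $\domf{D}$ is the finitary basis $\bas{B'}$ as claimed; nothing further is needed here.

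For the isomorphism, recall that an element of $\dome$ is an ideal $J$ over $\bas{B'}$, that is, a set of principal ideals (equivalently, of finite elements of $\domd$) that is downward-closed under $\appx$ and closed under finite lubs. I would define
\[ \Phi : \domd \map \dome, \qquad \Phi(\ideal) = \{e \in \domf{D} \such e \appx \ideal\}, \]
the set of finite elements approximating $\ideal$, and
\[ \Psi : \dome \map \domd, \qquad \Psi(J) = \Lub J = \Union_{e\in J} e, \]
the union (equivalently, the lub in the \cpo\ $\domd$) of the member ideals. The first verification is that each map lands in its claimed codomain: that $\Phi(\ideal)$ is downward-closed and closed under finite lubs over $\bas{B'}$---the latter using that the lub of two consistent principal ideals $\ideal_{b_1},\ideal_{b_2}$ is again principal, namely $\ideal_{b_1 \lub b_2}$, which lies below $\ideal$ since $\ideal$ is closed under lubs---and that $\Psi(J)$, being the union of a directed family of ideals, is itself an ideal over $\bas{B}$.

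Next I would show the two maps are mutually inverse. The composite $\Psi \circ \Phi$ is the identity on $\domd$: this is exactly the content of Theorem~\ref{idleqthm}, which asserts $\ideal = \Lub\{\ideal' \in \domf{D} \such \ideal' \appx \ideal\}$. For $\Phi \circ \Psi$, given $J \in \dome$ I would show $\{e \in \domf{D} \such e \appx \Union J\} = J$. The inclusion $\supseteq$ is immediate since every $e \in J$ satisfies $e \sub \Union J$. For $\sub$, writing a finite element as $\ideal_b$, the condition $\ideal_b \sub \Union J$ forces $b \in \Union J$, hence $b \in \ideal_c$ for some $\ideal_c \in J$; then $\ideal_b \appx \ideal_c$, and downward-closure of $J$ yields $\ideal_b \in J$. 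Finally, both $\Phi$ and $\Psi$ are visibly monotone with respect to inclusion, so as mutually inverse monotone bijections they constitute an order isomorphism, giving $\dome \iso \domd$.

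I expect the only real friction to be bookkeeping: keeping straight the three layers of objects (propositions of $\bas{B}$, the principal ideals that constitute $\bas{B'}$, and the ideals over $\bas{B'}$) and repeatedly using that consistency and finite lubs transfer faithfully between $\bas{B}$ and $\bas{B'}$ via $b \mapsto \ideal_b$. There is no deep obstacle, since the one substantive fact---that every element is the lub of the finite elements beneath it---has already been isolated as Theorem~\ref{idleqthm}; conceptually the whole statement just records that $b \mapsto \ideal_b$ is an isomorphism of finitary bases and that the domain construction respects such isomorphisms.
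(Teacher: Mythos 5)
Your proposal is correct and follows essentially the same route as the paper: the paper likewise observes that $\domf{D}$ is exactly the set of principal ideals, so that $\bas{B'}\iso\bas{B}$ as finitary bases, and exhibits the very same isomorphism $d\mapsto\{e\in\domf{D}\such e\appx d\}$ that you call $\Phi$. The only difference is one of detail: the paper leaves the verification at ``it is easy to see,'' whereas you supply the explicit inverse $\Psi$ and check both composites (via Theorem~\mref{idleqthm}), which is a welcome but not substantively different elaboration.
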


\proof  
Since the finite
elements of $\domd$ are precisely the principal ideals, it is easy to
see that
$\bas{B'}$ is isomorphic to $\bas{B}$.  Hence, $\bas{B'}$
is a finitary basis and $\dome$ is isomorphic to $\domd$.
The isomorphism between $\domd$ and $\dome$ is given by the
mapping $\delta:\domd\map\dome$ is defined by the equation
\[\delta(d)=\{e\in\domf{D} \: \such \:e\appx d\}\, .\]  
\eop

The preceding theorem justifies the following comprehensive
definition for domains.
\begin{mdef}{Domain}
A {\it cpo} $\domd=\langle D,\appx\rangle$ is a {\it domain}
iff 
\begin{itemize}
\item
$\domd^0$ forms a finitary basis under the approximation ordering
$\appx$ restricted to $\domd^0$, and
\item
$\domd$ is isomorphic to 
the domain $\dome$ determined by
$\domd^0$.
\end{itemize}
\end{mdef}
In other words, a domain is a partial order that is isomorphic
to a constructed domain.

To conclude this section, 
we state some closure properties on $\domd$
to provide more intuition about the approximation ordering.
\begin{thm}\label{union}  Let $\domd$ be the domain determined
by a finitary basis $\bas{B}$.
For any subset $S$ of $\domd$,
the
following properties hold:
\begin{enumerate}
\item $\bigcap S \in \domd$ and
$\bigcap S = \Glb S\,.$
\item 
if $S$ is directed, then
$\bigcup S \in \domd$ and
$\bigcup S = \Lub S\,.$

\end{enumerate}
\end{thm}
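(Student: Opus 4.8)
The plan is to exploit that $\appx_D$ is literally the subset relation and that ideals are themselves subsets of $B$. Each clause then splits into two tasks: (a) show that the set-theoretic intersection $\bigcap S$ (resp.\ union $\bigcup S$) is again an ideal, so that it is a genuine element of $\domd$; and (b) check that this element is the greatest lower bound (resp.\ least upper bound) of $S$. Task (b) is essentially free: once membership in $\domd$ is established, the glb/lub claims reduce to the universality of $\subseteq$, namely that $\bigcap S$ is contained in each member of $S$ and contains every common sub-ideal, and dually that $\bigcup S$ contains each member of $S$ and is contained in every common super-ideal. Hence the real content sits entirely in task (a).

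For clause 1 I would first check downward closure of $\bigcap S$: if $e\in\bigcap S$ and $b\appx e$, then $e$ lies in each $\ideal\in S$, and downward closure of each $\ideal$ puts $b$ in each $\ideal$, hence in $\bigcap S$. The one substantive point is closure under binary lubs. Taking $r,s\in\bigcap S$ (here $S$ is assumed non-empty, the intended reading, since $\bigcap\emptyset=B$ need not be an ideal and $\Glb\emptyset$ need not exist), both $r$ and $s$ lie in a common ideal; being members of an ideal they are consistent, so $r\lub s$ exists in $\bas{B}$ by the finitary-basis property, and closure of each $\ideal\in S$ places $r\lub s$ in every such $\ideal$ and therefore in $\bigcap S$. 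Thus $\bigcap S$ is an ideal, and the trivial glb verification completes the clause.

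For clause 2 the downward-closure check is the same one-line argument applied to the union. The crux of the theorem is closure of $\bigcup S$ under binary lubs, and this is exactly where directedness of $S$ is indispensable. Given $r,s\in\bigcup S$, pick $\ideal_1,\ideal_2\in S$ with $r\in\ideal_1$ and $s\in\ideal_2$; directedness supplies an upper bound $\ideal_3\in S$ of the finite set $\setof{\ideal_1,\ideal_2}$, so $r,s\in\ideal_3$ and hence $r\lub s\in\ideal_3\subseteq\bigcup S$. Without directedness $r$ and $s$ could be drawn from mutually inconsistent ideals, so that $r\lub s$ need not even exist; no weaker hypothesis suffices. With $\bigcup S$ shown to be an ideal, the lub verification is again immediate.

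The step I expect to be the main obstacle---indeed the only nontrivial one---is this closure-under-lubs argument for the union, since it is the single place that forces the directedness hypothesis and that genuinely invokes the defining property of a finitary basis (existence of least upper bounds of finite consistent sets). Everything else is routine bookkeeping about $\subseteq$, which I would present tersely, flagging only the $S\neq\emptyset$ proviso that keeps $\bigcap S$ an ideal in clause 1.
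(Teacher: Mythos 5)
Your proof is correct and takes essentially the same route as the paper's own (very terse) proof: verify the two defining conditions for ideals for $\bigcap S$ and $\bigcup S$, with directedness entering exactly at the closure-under-lubs step for the union. You supply considerably more detail than the paper does, and your observation that clause 1 implicitly requires $S\neq\emptyset$ (since $\bigcap\emptyset = B$ need not be an ideal and $\Glb\emptyset$ need not exist) is a legitimate caveat the paper glosses over.
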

\proof The conditions for ideals specified in
Definition~\ref{defideal} must be satisfied for these properties to hold.
The intersection case is trivial.  The union
case requires the directedness restriction since ideals require closure under lubs.
\eop 

For the remainder of this monograph, we will ignore the distinction
between principal ideals and the corresponding elements ({\it i.e.}, propositions) of the
finitary basis whenever it is convenient.

\begin{center}
	{\bf Exercises}
\end{center}
\begin{exer}
Let
\[B=\{s_n\: \such \:s_n=\{m\in \nat  \: \such \:m\geq n\},~ n\in \nat
\}\]
What is the approximation ordering for $B$? Verify that $B$ is a
finitary basis.  What are the total elements 
of the domain determined by $B$?  Draw a partial picture demonstrating the
approximation ordering in the basis.
\end{exer}
\begin{exer}
Example~\ref{exmstr} can be generalized to allow strings of any
finite length.  Give the finitary basis for this general case.
What is the approximation ordering?  What does the domain look
like?  What are the total elements in the domain?  Draw a partial
picture of the approximation ordering.
\end{exer}
\begin{exer}
Let $B$ be all finite subsets of $\nat$ with the subset relation
as the approximation relation.  Verify that this is a finitary basis.
What is the domain determined by $B$?  What
are the total elements?  Draw a partial picture of the domain.
\end{exer}
\begin{exer}
Construct two non-isomorphic infinite domains in which all elements are
finite but there is no infinite chain of elements
({\it i.e.}, no sequence $\langle x_n\rangle ^\infty_{n=0}$ with $x_n \appxneq x_{n+1}$---{\it i.e.}, 
$x_n \appx x_{n+1}$ but 
$x_n \neq x_{n+1}$---
for all $n$). 
\end{exer}
\begin{exer}
Let $B$ be the set of all non-empty open intervals on the real
line with rational endpoints plus a ``bottom'' element.  What would a reasonable approximation ordering be? Verify
that $B$ is a finitary basis.  For any real number $r$, show that
\[\{i\in B \: \such \:r\in i\} \union \{ \bot \}\]
is an ideal element.  Is it a total element?  What are the total
elements? (Hint: When $r$ is rational consider all intervals with
$r$ as a right-hand end point.)
\end{exer}
\begin{exer}
Let $\bas{D}$ be a finitary basis for domain $\domd$.  Define a new
basis $\bas{D}'=\{\downarrow X\: \such \:X\in\bas{D}\}$ where $\downarrow
X=\{Y\in\bas{D}\: \such \:X\appx Y\}$.  Show that $\bas{D}'$ is a finitary basis
and that $\bas{D}$ and $\bas{D'}$ are isomorphic.
\end{exer}
\begin{exer}
Let $\langle B,\appx\rangle $ be a finitary basis where
\[B=\{X_0,X_1,\ldots ,X_n,\ldots \}.\]
Suppose that consistency of finite sequences of elements is decidable.
Let
\begin{eqnarray*}
Y_0&=&X_0\\
Y_{n+1}&=&
	\left\{\begin{array}{ll}
		X_{n+1} & 
		\mbox{if $X_{n+1}$ is consistent with $Y_0,Y_1,\ldots,Y_n$}\\
		Y_n & \mbox{otherwise}\,.
		\end{array}
	\right.
\end{eqnarray*}
Show that $\{Y_0,\ldots ,Y_n,\ldots\}$ is a total element in the domain
determined by $B$.  (Hint: Show that $Y_0,\ldots ,Y_{n-1}$ is consistent
for all $n$.) Show that all ideals can be determined by such sequences.
\end{exer}
\begin{exer}
Devise a finitary basis $\bas{B}$ with more than two elements such
that every pair of elements in $B$ is consistent, but
$B$ is not consistent.
\end{exer}
\begin{exer}
Prove Theorem~\ref{idleqthm}.
\end{exer}
\newpage
\section{Operations on Data}
\subsection{Motivation}
Since program operations perform computations {\it incrementally} on
data values that correspond to ideals (sets of approximations),
operations must obey some critical
{\it topological} constraints.
For any approximation $x'$ to the input value $x$, a program 
operation $f$ must produce the output $f(x')$.
Since program output cannot be withdrawn, every program operation $f$
is a {\it monotonic} function: 
$x_1 \appx x_2$ implies $f(x_1) \appx (x_2)$.

We can describe this process in more detail by examining the structure
of computations.
Recall that every value in a domain $\domd$ can be interpreted as
a set of finite elements in $\domd$ that is closed under implication.
When an operation $f$ is applied to the input value $x$, $f$
gathers information about $x$ by asking the program computing $x$ to
generate a countable chain of finite elements $C$ where
$\Lub \:  \{\ideal_c  \: \such \: c \in C\} = x$.
For the sake of simplicity, we can force the chain $C$ describing
the input value $x$ to be infinite:
if $C$ is finite, we can convert it to an equivalent infinite chain by
repeating the last element.  Then we can view operation $f$ as a function
on infinite streams
that
repeatedly ``reads'' the next element in 
an infinite chain $C$ and ``writes'' the next
element in an infinite chain $C'$ where 
$\Lub \:  \{\ideal_{c'}  \: \such \: c' \in C'\} = f(x)$.
Any such function $f$ on $\domd$ is clearly monotonic.
In addition, $f$ obeys the stronger property
that for any directed set $S$, $f(\Lub S) = \Lub \:  \{f(s) \: \such \: s \in S\}$.
This property is called {\it continuity}.\footnote{For the computational motivation behind continuity, the interested reader is referred to Stoy's detailed and highly-readable account in~\cite{stoy}, in particular the derivation of Condition~6.39 on page 99 of~\cite{stoy}, and the following discussion of its implications.}

The formulation of
computable operations as functions on streams of finite elements
is concrete and
intuitive, but it is not canonical.  Due to the elements of domains in general not being totally-ordered but partially-ordered, there are many different
functions on streams of finite elements
corresponding to the same continuous function $f$ over a domain $\domd$.
For this reason, we will use a slightly different model of incremental
computation as the
basis for defining the form of operations on domains.

To produce a canonical representation for
computable operations, we represent values as {\it ideals}
rather than chains of finite elements.  In
addition, we allow 
computations to be performed in parallel, producing finite answers incrementally
in non-deterministic order.
It is important to emphasize that the result
of every computation---an
ideal $\ideal$---is still deterministic; only the order in which the
elements of $\ideal$ are enumerated is non-deterministic.
When an operation $f$ is applied to an input value $x$, $f$
gathers information about $x$ by asking for the enumeration
of the ideal of finite elements $I_x =
\{d \in \domf{D}  \: \such \: d \appx x\}$.
In response to each input approximation $d \appx x$, $f$
enumerates the ideal
$I_{f(d)} =
\{e \in \domf{D}  \: \such \: e \appx f(d)\}$.
Since $I_{f(d)}$ may be infinite, each enumeration must be an independent
computation.  The operation $f$ merges all of these enumerations
yielding an enumeration of the ideal
$I_{f(x)} =
\{e \in \domf{D} \: \such \: e \appx f(x)\}$.

A computable operation $f$ mapping domain $\dom{A}$, with basis $\bas{A}$, into domain $\dom{B}$, with basis $\bas{B}$, can be
formalized as a consistent relation $F \subseteq \bas{A} \times \bas{B}$ (a subset of the Cartesian product of the two basis)
such that 
\begin{itemize}
\item
the image $F(a) = \setof{b \in \bas{B} | a \, F \, b}$
of any input element $a \in \bas{A}$ is an ideal, and
\item
$F$ is monotonic: $a \appx a' \imp F(a) \subseteq F(a')$.
\end{itemize}
These closure properties ensure that
the relation $F$ {\it uniquely} identifies a continuous function $f$ on $D$.
Relations ({\it i.e.}, subsets of $\bas{A} \times \bas{B}$) satisfying these closure properties are called
{\it approximable mappings}.

\subsection{Approximable Mappings and Continuous Functions}
The following set of definitions restates the
preceding descriptions
in more rigorous form.
\begin{mdef}{Approximable Mapping}
Let $\dom{A}$ and $\dom{B}$ be the domains
determined by finitary bases
$\bas{A}$ and $\bas{B}$, respectively.
An {\it approximable mapping}
$F \sub \bas{A} \times \bas{B}$ is a binary relation over
$\bas{A} \times \bas{B}$ such that 
\begin{enumerate}
\item $\bot_A\,F\,\bot_B$
\item If $a\,F\,b$ and $a\,F\,b'$ then $a\,F\,(b \lub b')$
\item If $a\,F\,b$ and $b'\appx_B b$, then $a\,F\,b'$
\item If $a\,F\,b$ and $a\appx_A a'$, then $a'\,F\,b$
\end{enumerate}
The partial order of approximable mappings
$F \sub \bas{A} \times \bas{B}$ under the subset relation
is denoted by the expression $\cMap(\bas{A},\bas{B})$.
\end{mdef}
Conditions 1, 2, and 3 force the image of an input ideal
to be an ideal.
Condition 4
states that the function on ideals associated with $F$ is
monotonic.

\begin{mdef}{Continuous Function}
Let $\dom{A}$ and $\dom{B}$ be the domains
determined by finitary bases
$\bas{A}$ and $\bas{B}$, respectively.
A function $f:\dom{A} \map \dom{B}$ is {\it continuous}
iff for any ideal $\ideal$ in $\dom{A}$,
$f(\ideal) = \Lub \:  \{f(\ideal_a) \: \such \: a \in \ideal\}$.
The partial ordering $\appx_B$ from $\dom{B}$ determines a
partial ordering $\appx$ on continuous functions:
\[ f \appx g \iff \fa x \in \dom{A} \; f(x) \appx_{\dom{B}} 
g(x) \,.\]
The partial order consisting of the continuous functions from
$\dom{A}$ to $\dom{B}$ under the pointwise ordering
is denoted by the expression $\dom{A} \map_c \dom{B}$ (or, sometimes, by the expression $\cFun(\dom{A},\dom{B})$).
\end{mdef}

It is easy to show that continuous functions satisfy a stronger
condition than the definition given above.

\begin{thm}\label{2.3}
If a function $f:\dom{A} \map \dom{B}$ is continuous,
then for every directed subset $S$ of $\dom{A}$,
$f(\Lub S) =
\Lub \:  \{f(\ideal) \: \such \: \ideal \in S\}$.
\end{thm}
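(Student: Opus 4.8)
The plan is to reduce the statement to the \emph{defining} property of continuity by exploiting the fact that, for a directed set, the least upper bound is just the union. First I would record a preliminary observation: continuity implies monotonicity. If $\ideal \appx \ideal'$ in $\dom{A}$, then $\ideal \subseteq \ideal'$, so $\{f(\ideal_a) \such a \in \ideal\} \subseteq \{f(\ideal_a) \such a \in \ideal'\}$; taking least upper bounds of both sides and appealing to the definition of continuity gives $f(\ideal) \appx f(\ideal')$. This monotonicity is what makes the right-hand side well defined: since $S$ is directed and $f$ is monotonic, the image $\{f(\ideal) \such \ideal \in S\}$ is itself directed in $\dom{B}$ (any finite collection of images $f(\ideal_1),\ldots,f(\ideal_n)$ is dominated by $f(\ideal)$, where $\ideal \in S$ is a common upper bound of the $\ideal_i$ furnished by directedness of $S$), so its least upper bound exists because $\dom{B}$ is a \cpo.

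Next I would invoke Theorem~\ref{union}: because $S$ is directed, $\Lub S = \bigcup S \in \dom{A}$. The argument then splits into two inequalities, to be combined by antisymmetry. For the easy direction, $\Lub \{f(\ideal) \such \ideal \in S\} \appx f(\Lub S)$: every $\ideal \in S$ satisfies $\ideal \appx \Lub S$, so monotonicity gives $f(\ideal) \appx f(\Lub S)$, exhibiting $f(\Lub S)$ as an upper bound of the image and hence as an element above its least upper bound.

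The substantive direction is $f(\Lub S) \appx \Lub \{f(\ideal) \such \ideal \in S\}$, and I expect this to be the main obstacle. Here I would apply the definition of continuity directly to the ideal $\Lub S = \bigcup S$, obtaining $f(\Lub S) = \Lub \{f(\ideal_a) \such a \in \bigcup S\}$. It then suffices to bound each generator $f(\ideal_a)$. Given $a \in \bigcup S$, there is some $\ideal \in S$ with $a \in \ideal$; since $\ideal$ is downward closed, $\ideal_a \subseteq \ideal$, i.e.\ $\ideal_a \appx \ideal$, whence monotonicity yields $f(\ideal_a) \appx f(\ideal) \appx \Lub \{f(\ideal') \such \ideal' \in S\}$. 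Thus $\Lub \{f(\ideal) \such \ideal \in S\}$ is an upper bound of all the generators $f(\ideal_a)$, so it dominates their least upper bound $f(\Lub S)$.

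Combining the two inequalities by antisymmetry gives the desired equality. The crux is the interplay between the basis-level generators $\ideal_a$ appearing in the definition of continuity and the ideal-level members $\ideal$ of $S$: the directedness of $S$ together with $\Lub S = \bigcup S$ is precisely what lets every generating proposition $a$ be captured inside a single element of $S$, so that no genuinely new information is created by passing from the principal-ideal approximants to the arbitrary directed set $S$.
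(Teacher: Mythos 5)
Your proof is correct and follows essentially the same route as the paper's: both invoke Theorem~\ref{union} to identify $\Lub S$ with $\bigcup S$, apply the definition of continuity to $\bigcup S$ and to each $\ideal \in S$, and then identify the resulting nested least upper bounds. You merely spell out as two inequalities (plus the monotonicity and image-directedness checks) what the paper compresses into a single ``lub of lubs'' computation.
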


\proof By Theorem \ref{union},
$\Lub S$ is simply $\Union S$.
Since $f$ is continuous,
$f(\Lub S) = \Lub \:  \{f(\ideal_a) \: \such \: \exists \ideal \in S \; a \in \ideal\}$.
Similarly, for every $\ideal \in \dom{A}$,
$f(\ideal) = \Lub \:  \{f(\ideal_a) \: \such \: a \in \ideal\}$.
Hence,
$\Lub \:  \{f(\ideal) \: \such \: \ideal \in S\} =
\Lub \:  \{\Lub \:  \{f(\ideal_a) \: \such \: a \in \ideal\} \: \such \: \ideal \in S\} =
\Lub \:  \{f(\ideal_a) \: \such \: \exists \ideal \in S \; a \in \ideal\}$.
\eop\\

Every approximable mapping $F$ over the finitary basis 
$\bas{A} \times \bas{B}$
determines a continuous function $f: \dom{A} \map \dom{B}$.  Similarly,
every continuous function $f: \dom{A} \map \dom{B}$ determines
an approximable mapping $F$ over the finitary basis 
$\bas{A} \times \bas{B}$.
\begin{mdef}{Image of Approximable Mapping}
For approximable mapping 
\[F \sub \bas{A} \times \bas{B}\]
the {\it image\/}
of $d\in \dom{A}$
under $F$  (denoted $\apply(F,d)$) is the ideal
$\{b\in\bas{B} \: \such \: \exists a\in\bas{A}, 
\: a\in d \: \mand a\,F\,b\}\,.$
The {\it function}
$f:\dom{A}\map\dom{B}$ {\it determined by} $F$
is defined by the equation:
\[f(d) = \apply(F,d)\,.\]
\end{mdef}
\begin{rem}
It is easy to confirm that $\apply(F,d)$ is an element of $\dom{B}$
and that the function $f$ is continuous.
Given any ideal $d \in \dom{A}$, 
$\apply(F,d)$ 
is the subset of $\bas{B}$ consisting of all
the elements related by $F$ to finite elements in $d$.
The set $\apply(F,d)$ is an ideal in $\dom{B}$
since
$(i)$ the set $\{b \in \bas{B} \: \such \: a \, F \, b\}$ is downward-closed for
all $a \in \bas{A}$, and 
$(ii)$ $a \, F \, b \mand a \, F \, b'$ implies $a \, F \, (b \lub b')$.
The continuity of $f$ is an immediate consequence of
the definition of $f$ and the definition of continuity.
\end{rem}

The following theorem establishes that the partial
order of
approximable mappings over $\bas{A} \times \bas{B}$ is
isomorphic to the partial order 
of continuous functions in $\dom{A}\map_c\domb$.
\begin{thm}\label{2.7} Let $\bas{A}$ and $\bas{B}$ be
finitary bases.  The partial order $\cMap(\bas{A},\bas{B})$ consisting
of the set of approximable mappings over
$\bas{A}$ and $\bas{B}$ is isomorphic to
the partial order $\dom{A} \map_c \dom{B}$
of continuous functions mapping
$\dom{A}$ into $\dom{B}$.  The isomorphism is witnessed
by the function
$\fun: \cMap(\bas{A},\bas{B}) \map (\dom{A} \map_c \dom{B})$
defined by
\[ \fun(F) = f\]
where $f$ is the function defined by the equation
\[f(d) = \apply(F,d)\]
for all $d \in \dom{A}$.
\end{thm}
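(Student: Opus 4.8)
The plan is to prove that $\fun$ is an order-isomorphism by exhibiting an explicit inverse and verifying that $\fun$ both preserves and reflects the subset ordering. That $\fun$ is well-defined---that is, that $\fun(F)$ is genuinely a continuous function from $\dom{A}$ to $\dom{B}$ for each approximable mapping $F$---is exactly the content of the remark following the definition of $\apply$, so I would simply invoke it. It then suffices to establish three things: that $F \sub G \imp \fun(F) \appx \fun(G)$, that $\fun(F) \appx \fun(G) \imp F \sub G$ (which together give order-reflection and, taking both inclusions, injectivity), and that $\fun$ is surjective.

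Order-preservation is immediate: if $F \sub G$ and $b \in \apply(F,d)$, then some $a \in d$ has $a\,F\,b$, hence $a\,G\,b$, so $b \in \apply(G,d)$. For order-reflection, I would suppose $\fun(F) \appx \fun(G)$ and $a\,F\,b$, and then test the functions at the principal ideal $\ideal_a$. Since $a \in \ideal_a$, we get $b \in \apply(F,\ideal_a)$, and $\fun(F) \appx \fun(G)$ gives $b \in \apply(G,\ideal_a)$, so some $a' \appx_A a$ satisfies $a'\,G\,b$; condition~4 in the definition of approximable mapping then upgrades this to $a\,G\,b$. Thus $F \sub G$, and injectivity follows by applying both directions.

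For surjectivity, the plan is to define the candidate inverse $\morph{G}: (\dom{A} \map_c \dom{B}) \map \cMap(\bas{A},\bas{B})$ by $\morph{G}(f) = F_f$, where $a\,F_f\,b \iff b \in f(\ideal_a)$. I would first check that $F_f$ is an approximable mapping by verifying the four defining conditions, each of which reduces to a property already in hand: condition~1 holds because every ideal---in particular $f(\ideal_{\bot_A})$---contains $\bot_B$; conditions~2 and~3 hold because $f(\ideal_a)$ is an ideal, hence closed under finite lubs and downward-closed; and condition~4 holds because $a \appx_A a'$ gives $\ideal_a \sub \ideal_{a'}$, whence monotonicity of $f$ yields $f(\ideal_a) \sub f(\ideal_{a'})$.

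It remains to show $\fun(F_f) = f$, i.e., $\apply(F_f,d) = f(d)$ for all $d \in \dom{A}$; this is the step I expect to carry the real weight. Unwinding the definitions gives $\apply(F_f,d) = \Union\setof{f(\ideal_a) \such a \in d}$, while continuity gives $f(d) = \Lub\setof{f(\ideal_a) \such a \in d}$. The crux is to see that this family is directed: since $d$ is an ideal it is closed under finite lubs, so for $a,a' \in d$ the element $a \lub a'$ lies in $d$ with $\ideal_a, \ideal_{a'} \sub \ideal_{a \lub a'}$, and monotonicity of $f$ then makes $\setof{f(\ideal_a) \such a \in d}$ directed. Theorem~\ref{union} identifies the least upper bound of a directed family with its union, so $\Lub$ collapses to $\Union$ and the two expressions coincide. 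This yields $\fun \circ \morph{G} = \ident$, completing surjectivity; together with the bijectivity and order-reflection established above, $\fun$ is the desired isomorphism.
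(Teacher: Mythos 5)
Your proposal is correct and follows essentially the same route as the paper, which proves the theorem via a lemma establishing $a\,F\,b \iff \ideal_b \appx \fun(F)(\ideal_a)$, deduces order-reflection and injectivity from that biconditional, and obtains surjectivity by defining $F$ from $f$ via $a\,F\,b \iff \ideal_b \appx f(\ideal_a)$ (equivalent to your $b \in f(\ideal_a)$, since $f(\ideal_a)$ is downward-closed). The only difference is one of explicitness: where the paper concludes with ``$f$ and $\fun(F)$ agree on finite inputs, hence are equal by continuity,'' you spell out why, namely that $\setof{f(\ideal_a) \such a \in d}$ is directed so its lub is its union, which is a welcome filling-in of a step the paper leaves implicit.
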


\proof 
The theorem is an immediate consequence
of the following lemma. \eop
\begin{lem}\label{2.8}
\begin{enumerate}
\item
For any approximable mappings
$F,G \subseteq \bas{A} \times \bas{B}$
\begin{enumerate}
\item
$\fa a\in\bas{A}, b\in\bas{B} \;\;
a\,F\,b \iff \ideal_b \appx \fun(F)(\ideal_a)$.
\item $F\subseteq G \iff 
\fa a\in\bas{A} \;\; \fun(F)(a) \appx \fun(G)(a)$
\end{enumerate}
\item
The function $\fun: \cMap(\bas{A},\bas{B}) \map (\dom{A} \map_c \dom{B})$
is one-to-one and onto.
\end{enumerate}
\end{lem}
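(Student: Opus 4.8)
The plan is to prove the three numbered claims in the order $1(a)$, $1(b)$, $2$, using $1(a)$ as the computational workhorse. The first step is to unfold the definition of $\apply$ into a description of the ``row'' of an approximable mapping at a basis element $a$:
\[ \fun(F)(\ideal_a) = \apply(F,\ideal_a) = \{b \in \bas{B} \such \exists a' \appx a,\; a'\,F\,b\} = \{b \in \bas{B} \such a\,F\,b\}. \]
The final equality is exactly where monotonicity (Condition~4) is used: any witness $a' \appx a$ with $a'\,F\,b$ yields $a\,F\,b$, while conversely $a$ itself is a witness by reflexivity. With this identity $1(a)$ is immediate. The assertion $\ideal_b \appx \fun(F)(\ideal_a)$ unwinds to $\ideal_b \subseteq \{b' \such a\,F\,b'\}$; the forward direction then follows from downward closure (Condition~3), and the reverse direction follows by noting $b \in \ideal_b$ and reading off $a\,F\,b$.

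For $1(b)$, the forward implication is a direct consequence of the row description: if $F \subseteq G$ then $a\,F\,b \imp a\,G\,b$, so $\fun(F)(\ideal_a) \subseteq \fun(G)(\ideal_a)$ for every $a$. For the converse I would take an arbitrary pair with $a\,F\,b$ and apply $1(a)$ twice: from $a\,F\,b$ we get $\ideal_b \appx \fun(F)(\ideal_a) \appx \fun(G)(\ideal_a)$, and $1(a)$ applied to $G$ turns this back into $a\,G\,b$, giving $F \subseteq G$. I should remark that restricting the quantifier to basis elements (principal ideals) loses nothing: by Theorem~\ref{idleqthm} every ideal is the least upper bound of the principal ideals beneath it, so continuity propagates any pointwise inequality from principal ideals to all of $\dom{A}$, reconciling this form of $1(b)$ with the pointwise order on $\dom{A} \map_c \dom{B}$.

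Part $2$ splits into injectivity and surjectivity. Injectivity is a one-line corollary of $1(b)$ and antisymmetry of $\subseteq$: if $\fun(F) = \fun(G)$ then each function approximates the other, so $F \subseteq G$ and $G \subseteq F$, whence $F = G$. Surjectivity is where the real work lies, and I expect it to be the main obstacle. Given a continuous $f : \dom{A} \map \dom{B}$, I would define the candidate preimage by $a\,F\,b \iff b \in f(\ideal_a)$ and then discharge two obligations.

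The first obligation is that $F$ is an approximable mapping. Conditions~1,~2, and~3 fall out of $f(\ideal_a)$ being an ideal in $\dom{B}$: it contains $\bot_B$, is closed under finite least upper bounds, and is downward closed. Condition~4 follows from monotonicity of the continuous $f$, since $a \appx a'$ gives $\ideal_a \subseteq \ideal_{a'}$ and hence $f(\ideal_a) \subseteq f(\ideal_{a'})$. The second, more delicate obligation is the round-trip identity $\fun(F) = f$, which is the crux of the whole lemma. Unfolding gives $\fun(F)(d) = \{b \such \exists a \in d,\; b \in f(\ideal_a)\} = \bigcup_{a \in d} f(\ideal_a)$, and this must be shown equal to $f(d)$. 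Here continuity is indispensable: since $d$ is an ideal, the family $\{\ideal_a \such a \in d\}$ is directed, so $\{f(\ideal_a) \such a \in d\}$ is directed and by Theorem~\ref{union} its least upper bound is its union; continuity of $f$ then yields $f(d) = \Lub\{f(\ideal_a) \such a \in d\} = \bigcup_{a \in d} f(\ideal_a)$, closing the argument and establishing that $\fun$ is onto.
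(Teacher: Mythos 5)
Your proposal is correct and follows essentially the same route as the paper: part 1(a) from the definition of $\apply$ plus downward closure and monotonicity, part 1(b) as a corollary of 1(a), injectivity from part 1, and surjectivity by defining $a\,F\,b$ via membership in $f(\ideal_a)$ (equivalent to the paper's $\ideal_b \appx f(\ideal_a)$ since $f(\ideal_a)$ is an ideal) and recovering $f$ by continuity. Your explicit union computation for the round-trip identity just spells out what the paper compresses into ``$f$ and $\fun(F)$ agree on finite inputs, hence are equal by continuity.''
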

\proof (lemma)
\begin{enumerate}
\item
Part ($a$) is the 
immediate
consequence of 
the definition of $f$ ($b \in f(\ideal_a) \iff a \, F \, b$) and
the fact that $f(\ideal_a)$ is downward closed.
Part ($b$) follows directly from Part ($a$):
$F \subseteq G \iff \fa a \in \dom{A} \;\;
\{b \: \such \: a \, F \, b\} \subseteq  \{b \: \such \:
a \, G \, b\}$.
But the latter holds iff 
$\fa a\in\bas{A} \;\; (f(a) \subseteq g(a) \iff
f(a) \appx g(a))$.
\item
Assume $\fun$ is not one-to-one.  Then there are distinct
approximable mappings $F$ and $G$ such that
$\fun(F) = \fun(G)$.  Since $\fun(F) = \fun(G)$,
\[\fa a \in \bas{A}, b \in \bas{B} \;\;
(\ideal_b \appx \fun(F)(\ideal_a) \iff
\ideal_b \appx \fun(G)(\ideal_a))\:.\]
By Part 1 of the lemma, 
\[\fa a \in \bas{A}, b \in \bas{B} \;\;
(a\,F\,b \iff 
\ideal_b \appx \fun(F)(\ideal_a) \iff
\ideal_b \appx \fun(G)(\ideal_a) \iff
a\,G\,b)\:.\]
\end{enumerate}

We can prove that $\fun$ is {\it onto} as follows.  
Let $f$ be an arbitary continuous function
in $\dom{A}\map\dom{B}$.  Define the relation
$F \sub \bas{A} \times \bas{B}$ by the rule
\[a \: F \: b \iff \ideal_b \appx f(\ideal_a)\;.\]
It is easy to verify that $F$ is  an approximable mapping.
By Part 1 of the lemma,
\[a \: F \: b \iff \ideal_b \appx \fun(F)(\ideal_a)\;.\]
Hence
\[\ideal_b \appx f(\ideal_a) \iff \ideal_b \appx \fun(F)(\ideal_a)\;,\]
implying that $f$ and $\fun(F)$ agree on finite inputs.
Since $f$ and $\fun(F)$ are continuous, they are equal.
\eop\\

The following examples show how approximable mappings
and continuous functions are related.
\begin{exm}\label{2.9}
Let $\dom{B}$ be the domain of infinite strings from the previous section
and let $\dom{T}$ be the truth value domain with two total elements, {\tt
true} and {\tt false}, where 
$\bot_{\dom{T}}$ denotes that there is insufficient
information to determine the outcome.  Let
$p:\dom{B}\map\dom{T}$ be the function defined by the equation:

\[
p(x) =
	\left\{ \begin{array}{ll}
		{\tt true}  & \mbox{if $x = 0^n1y$ where $n$ is even}\\
		{\tt false} & \mbox{if $x = 0^n1y$ where $n$ is odd}\\
		\bot_{\dom{T}}   & \mbox{otherwise}
		\end{array}
	\right.
\]

The function $p$ determines whether or not there 
are an even number of 0's before
the first 1 in the string.  
If there is no 1 in the string, the
result is $\bot_{\dom{T}}$.  It is easy to
show that $p$ is continuous.  
The corresponding binary relation $P$
is defined by the rule:
\begin{eqnarray*}
a\:P\:b &\iff& (b \appx_{\dom{T}}\bot_{\dom{T}}) \; \lor \;\\
& & (0^{2n}1\appx_B a \land b\appx_{\dom{T}}{\tt true}) \; \lor \;\\
 & & (0^{2n+1}1\appx_B a \land b\appx_{\dom{T}}{\tt false})
\end{eqnarray*}

The reader should verify that $P$ is an approximable
mapping and that $p$ is the continuous function determined
by $P$. \eop
\end{exm}
\begin{exm}\label{2.10}
Given the domain $\dom{B}$ 
from the previous example, let 
$g:\dom{B}\map\dom{B}$ be the function defined by the equation:

\[
g(x) =
	\left\{ \begin{array}{ll}
		0^{n+1}y & \mbox{if $x = 0^n1^k0y$}\\
		\bot_D	& \mbox{otherwise}
		\end{array}
	\right. 
\]

The function $g$ eliminates the first substring of the form
$1^k \; (k > 0)$ from
the input string $x$.  If $x = 1^\infty$, the infinite string of ones,
then $g(x) = \bot_D$.  Similarly, if
$x = 0^n1^\infty$, then $g(x) = \bot_D$.  The reader should confirm
that $g$ is continuous and determine the approximable mapping $G$
corresponding to $g$. \eop
\end{exm}

Approximable mappings and continuous functions
can be composed and manipulated just like any other relations and functions.
In particular, the composition operators for approximable
mappings and continuous functions behave
as expected. In fact, they form {\it categories\/}.

\subsection{Categories of Approx. Mappings and Cont. Functions}
Approximable
mappings and continuous functions form categories over finitary bases and domains, respectively. 
\begin{thm}\label{2.11}
The approximable mappings form a {\it category\/} over finitary bases
where the
{\it identity mapping\/} for finitary basis $B$,
$\ident_B \subseteq \bas{B} \times \bas{B}$, is defined for
$a,b\in \bas{B}$ as
\[a\:\ident_B \:b \iff b\appx a\]
and the composition $G \circ F \subseteq \bas{B_1} \times \bas{B_3}$ 
of approximable mappings
$F \subseteq \bas{B_1}\times\bas{B_2}$ and $G \subseteq
\bas{B_2}\times\bas{B_3}$ is defined for
$a\in\bas{B_1}$ and $c\in\bas{B_3}$ by the rule:
\[a \: (G\circ F) \:c \iff \exists b\in\bas{B_2} \: a\,F\,b \mand b\,G\,c\,.\]
\end{thm}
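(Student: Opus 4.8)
The plan is to verify the four things required of a category: that the proposed identity $\ident_B$ is a genuine approximable mapping, that the composite $G \circ F$ of approximable mappings is again an approximable mapping, that $\ident_B$ is a two-sided identity for composition, and that composition is associative. Since approximable mappings are exactly the relations satisfying the four closure conditions of the Approximable Mapping definition, the first two requirements reduce to checking those conditions, while the identity and associativity laws are equalities of relations that I would establish by unfolding the defining rules and invoking the closure conditions.

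First I would check that $\ident_B$ is an approximable mapping. Condition 1 ($\bot_B\,\ident_B\,\bot_B$) is just $\bot_B \appx \bot_B$, which holds by reflexivity. Conditions 3 and 4 follow immediately from transitivity of $\appx$. Condition 2 is the only one with content: from $b \appx a$ and $b' \appx a$ one concludes $b \lub b' \appx a$ because $a$ is an upper bound of $\setof{b,b'}$ and $b \lub b'$ is their least upper bound, which exists since $\setof{b,b'}$ is consistent in the finitary basis $\bas{B}$.

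Next I would check that $G \circ F$ satisfies the four conditions. Conditions 1, 3, and 4 are direct: for Condition 1 take the witness $b = \bot_{B_2}$ and apply Condition 1 for both $F$ and $G$; for Condition 3 push the downward step through $G$ using $G$'s Condition 3; for Condition 4 push the upward step through $F$ using $F$'s Condition 4. Condition 2 is the crux. Given witnesses $b,b'$ with $a\,F\,b$, $b\,G\,c$, $a\,F\,b'$, and $b'\,G\,c'$, I would form $b \lub b'$ (which exists because $b,b'$ lie in the consistent image $F(a)$), obtain $a\,F\,(b \lub b')$ from $F$'s Condition 2, then use $G$'s Condition 4 (monotonicity) to get $(b \lub b')\,G\,c$ and $(b \lub b')\,G\,c'$, and finally $G$'s Condition 2 to get $(b \lub b')\,G\,(c \lub c')$; this single witness establishes $a\,(G \circ F)\,(c \lub c')$.

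Finally, for the identity laws I would unfold definitions: $a\,(F \circ \ident_{B_1})\,b$ means $\te a'\,(a' \appx a \mand a'\,F\,b)$, which collapses to $a\,F\,b$ via $F$'s Condition 4 in one direction and the witness $a' = a$ in the other; dually, $\ident_{B_2} \circ F = F$ uses $F$'s Condition 3 and the witness $b' = b$. For associativity, both $(H \circ G) \circ F$ and $H \circ (G \circ F)$ unfold to the same formula $\te b\,\te c\,(a\,F\,b \mand b\,G\,c \mand c\,H\,d)$, so the two relations coincide. I expect the main obstacle to be Condition 2 for the composite, since it is the one place where the two closure properties of $G$---monotonicity and closure under binary joins---must be combined, and where one must take care that the relevant least upper bounds exist by appealing to consistency of the images.
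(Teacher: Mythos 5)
Your proposal is correct and follows essentially the same route as the paper's proof: verify the four closure conditions for $\ident_B$ and for $G\circ F$ (with the same treatment of Condition 2, taking $b\lub b'$ as the single intermediate witness and combining $G$'s monotonicity with its closure under joins), then establish the identity laws by unfolding the composite and appealing to Conditions 3 and 4, and associativity by commuting the existential quantifiers. The only difference is that you spell out the (routine) verification that $\ident_B$ is approximable, which the paper leaves to the reader.
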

To show that this structure is a category, we must
establish the following properties:
\begin{enumerate}
\item the identity mappings are approximable mappings,
\item the identity mappings composed with an approximable mapping
defines the original mapping,
\item the mappings formed by composition are approximable mappings, and
\item composition of approximable mappings is {\it associative\/}.
\end{enumerate}
\proof Let $F \sub \bas{B_1}\times\bas{B_2}$ and $G \sub \bas{B_2}\times\bas{B_3}$ be
approximable mappings. Let ${\ident}_1, {\ident}_2$ be identity mappings
for $\bas{B_1}$ and $\bas{B_2}$ respectively.  
\begin{enumerate}
\item The verification that the identity mappings satisfy the
requirements for approximable mappings is straightforward and left
to the reader.
\item To show $F\circ {\ident}_1$ and ${\ident}_2\circ F$ are
approximable mappings, we prove the following equivalence:
\[F\circ \ident_1 = \ident_2\circ F = F\]
For $a\in\bas{B_1}$ and $b\in\bas{B_2}$,
\[a\:(F\circ \ident_1) \:b \iff \exists c\in\domd_1 \;\; (c\appx a\mand
c\,F\,b)\, .\]  
By the definition of approximable mappings,
this property
holds iff $a\,F\,b$, implying that
$F$ and $F\circ \ident_1$ are the same relation.  The proof of
the other half of
the equivalence is similar.
\item 
We must show that the relation
$G\circ F$ is approximable given that the relations
$F$ and $G$ are approximable.  To prove the first condition, we observe
that $\bot_1\,F\,\bot_2$ and
$\bot_2\,G\,\bot_3$ by assumption,
implying that $\bot_1\:(G\circ F)\:\bot_3$.  Proving the second
condition requires a bit more work.
If $a\,(G\circ F)\,c$ and $a\,(G\circ F)\,c'$, then by the definition of
composition, $a\, F\,b$ and $b\,G\,c$ for some $b$ and $a\, F\,b'$ and
$b'\,G\,c'$ for some $b'$.   Since $F$ and $G$ are approximable mappings,
$a\,F\,(b\lub b')$ and since $b'\appx (b\lub b')$, it must be true that
$(b\lub b')\,G\,c$.  By an analogous argument, $(b\lub b')\,G\,c'$.
Therefore, $(b\lub b')\,G\,(c\lub c')$ since $G$ is an approximable
mapping, implying that $a\,(G\circ F)\,(c\lub c')$.  The final
condition asserts that $G\circ F$ is monotonic.  We can prove this
as follows.  If $a\appx a'$,
$c'\appx c$ and $a\,(G\circ F)\,c$, then $a\,F\,b$ and $b\,G\,c$ for
some $b$.  So $a'\,F\,b$ and $b\,G\,c'$ and thus $a'\,(G\circ F)\,c'$.
Thus, $G\circ F$ satisfies the conditions of an approximable mapping.
\item Associativity of composition implies that for approximable
mapping $H$ with $F, G$ as above and $H:\domd_3\map\domd_4$,
$H\circ (G\circ F) = (H\circ G)\circ F$. Assume $a\,(H\circ (G\circ
F))\,z$. Then,
\[\begin{array}{lcl}
a\,(H\circ (G\circ F))\,z&\iff&\exists c\in\domd_3 \; a\,(G\circ
F)\,c\mand c\,H\,z\\
&\iff& \exists c\in\domd_3 \, \exists b\in\domd_2 \;  a\,F\,b\mand
b\,G\,c\mand c\,H\,z\\
&\iff& \exists b\in\domd_2 \: \exists c\in\domd_3 \; a\,F\,b\mand
b\,G\,c\mand c\:H\,z\\
&\iff& \exists b\in\domd_2 \:  a\,F\,b\mand b\,(H\circ G)\,z\\
&\iff& a\,((H\circ G)\circ F)\,z
\end{array}\]
\end{enumerate}
\eop

Since finitary
bases correspond to domains and
approximable mappings correspond to continuous functions, 
we
can restate the same theorem in terms of domains and continuous
functions.

\begin{cor}\label{2.12}
The continuous functions form a {\it category\/} over domains
determined by finitary bases;
the
{\it identity function\/} for domain $\dom{B}$,
$\ident_B: \dom{B} \map \dom{B}$, is defined for
by the equation
\[\ident_B(d) = d\]
and the composition $g \circ f \in \dom{B}_1 \map \dom{B}_3$ 
of continuous functions
$f: \dom{B}_1\map\dom{B}_2$ and $g:
\dom{B}_2\map\dom{B}_3$ is defined for
$a\in\dom{B}_1$ by the equation
\[(g\circ f)(a) = g(f(a))\,.\]
\end{cor}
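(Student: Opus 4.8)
The plan is to avoid verifying the category axioms for continuous functions from scratch, and instead to transport the category structure of Theorem~\ref{2.11} across the isomorphism $\fun$ of Theorem~\ref{2.7}. Since $\fun : \cMap(\bas{A},\bas{B}) \map (\dom{A} \map_c \dom{B})$ is a bijection for every pair of finitary bases, it suffices to check that $\fun$ is compatible with the identity and composition operations on the two sides. The identity laws and associativity for continuous functions then follow mechanically from the corresponding laws for approximable mappings already established in Theorem~\ref{2.11}.

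First I would show that $\fun$ sends the identity approximable mapping to the identity function. For any $d \in \dom{B}$ we have $\fun(\ident_B)(d) = \apply(\ident_B,d) = \setof{b \such \exists a \in d,\; b \appx a}$, and since ideals are downward closed this set is exactly $d$. Hence $\fun(\ident_B) = \ident_B$, where on the right $\ident_B$ denotes the function defined by $\ident_B(d) = d$.

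Next I would verify that $\fun$ carries composition of approximable mappings to composition of functions, i.e.\ $\fun(G \circ F) = \fun(G) \circ \fun(F)$. Writing $f = \fun(F)$ and $g = \fun(G)$ and unwinding both sides on an arbitrary input $d$, the left side is $\apply(G \circ F, d) = \setof{c \such \exists a \in d,\ \exists b,\; a\,F\,b \mand b\,G\,c}$, while the right side is $\apply(G, f(d))$; substituting $f(d) = \setof{b \such \exists a \in d,\; a\,F\,b}$ yields the same set after pushing the existential quantifier over $b$ through the definition of $\apply$. This is the one genuinely computational step, and I expect it to be the main obstacle, although it is still routine: care is needed only to match the quantifier structure on the two sides correctly.

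Finally, because $\fun$ is a bijection that respects identities and composition, and $\cMap(\bas{A},\bas{B})$ forms a category, the image $\dom{A} \map_c \dom{B}$ inherits all the category axioms, with identities and composition given by exactly the formulas in the statement. As an independent sanity check one may confirm directly that the composite $g \circ f$ is continuous: by Theorem~\ref{2.3} the function $g$ preserves least upper bounds of directed sets, and monotonicity of $f$ makes $\setof{f(\ideal_a) \such a \in \ideal}$ directed, so $(g \circ f)(\ideal) = g(\Lub \setof{f(\ideal_a) \such a \in \ideal}) = \Lub \setof{(g\circ f)(\ideal_a) \such a \in \ideal}$, which is the definition of continuity. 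The identity laws and associativity are automatic for ordinary functions, so the isomorphism route and the direct route agree.
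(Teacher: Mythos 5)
Your proposal is correct and follows essentially the same route as the paper: the paper also derives the corollary by transporting the category structure of Theorem~\ref{2.11} across the isomorphism between $\cMap(\bas{A},\bas{B})$ and $\dom{A}\map_c\dom{B}$ from Theorem~\ref{2.7}, stating it as an ``immediate consequence.'' You simply make explicit the two compatibility checks (preservation of identities and of composition under $\fun$) that the paper leaves implicit, and both computations are accurate.
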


\proof The corollary is an immediate consequence of the preceding theorem
and two facts:
\begin{itemize}
\item
The partial order of finitary bases is isomorphic to the partial order of
domains determined by finitary bases; the ideal completion mapping
established the isomorphism.
\item
The partial order of approximable mappings over $\bas{A} \times
\bas{B}$ is isomorphic to the 
partial order of continuous functions in $\dom{A} \map \dom{B}$.
\end{itemize}
\eop\\

Based on Theorem~\ref{2.11} and Corollary~\ref{2.12}, we define $\cat{FB}$ as the category having finitary bases as its objects and approximable mappings between finitary bases as its arrows, and we define $\cat{Dom}$ as the category having domains as its objects and continuous functions over domains as its arrows.
\subsection{Domain Isomorphisms}
Isomorphisms between domains are important.  We briefly state and
prove
one of their most important properties.
\begin{thm}\label{2.13}
Every isomorphism between domains is characterized by an approximable
mapping between the finitary bases.  Additionally, finite elements are
always mapped to finite elements.
\end{thm}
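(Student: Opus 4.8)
The plan is to start with an arbitrary domain isomorphism $h:\domd\map\dome$, where $\domd$ and $\dome$ are the domains determined by finitary bases $\bas{A}$ and $\bas{B}$, and to manufacture from it an approximable mapping $H\sub\bas{A}\times\bas{B}$ that represents $h$. Theorem~\ref{2.7} already supplies a bijection between approximable mappings and continuous functions, so once I know that $h$ is a \emph{continuous} function the first assertion follows by simply taking $H$ to be the approximable mapping corresponding to $h$. Two things therefore need checking: that an order isomorphism is automatically continuous, and that finiteness is preserved. Throughout I will use that the inverse $h^{-1}:\dome\map\domd$ is again an order isomorphism, since the defining biconditional $a\appx b\iff h(a)\appx h(b)$ is symmetric in $h$ and $h^{-1}$.

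The first step is to show that $h$ preserves least upper bounds of directed sets. Given a directed $S\sub\domd$, monotonicity of $h$ makes $h(S)$ directed and makes $h(\Lub S)$ an upper bound of $h(S)$; and $h(\Lub S)$ is the \emph{least} such bound, because any upper bound $u$ of $h(S)$ satisfies $s\appx h^{-1}(u)$ for every $s\in S$ (applying the monotone $h^{-1}$), forcing $\Lub S\appx h^{-1}(u)$ and hence $h(\Lub S)\appx u$. Thus $h(\Lub S)=\Lub h(S)$. To cast this in the form demanded by the definition of continuity I would invoke Theorem~\ref{idleqthm}: every $\ideal\in\domd$ equals $\Lub\{\ideal_a\such a\in\ideal\}$, and this family of principal ideals is directed, since for $a,a'\in\ideal$ we have $a\lub a'\in\ideal$ and $\ideal_a\lub\ideal_{a'}=\ideal_{a\lub a'}$ lies in the family. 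Applying the supremum-preservation to this directed family yields $h(\ideal)=\Lub\{h(\ideal_a)\such a\in\ideal\}$, which is exactly continuity. Theorem~\ref{2.7} then produces the approximable mapping $H$ that characterizes $h$, and running the same argument on $h^{-1}$ produces its inverse, so the two bases are in fact isomorphic under $H$.

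For the second assertion I would argue that finiteness, defined via directed sets and their lubs, is an order-theoretic invariant preserved by any isomorphism. Suppose $e\in\domf{D}$ and let $S$ be a directed subset of $\dome$ with $h(e)=\Lub S$. Since $h^{-1}$ is an isomorphism, $h^{-1}(S)$ is directed and, by the supremum-preservation just established, $\Lub h^{-1}(S)=h^{-1}(\Lub S)=h^{-1}(h(e))=e$. Finiteness of $e$ forces $e\in h^{-1}(S)$, hence $h(e)\in S$, so $h(e)$ is finite; applying the identical argument to $h^{-1}$ shows conversely that finite elements of $\dome$ are carried back to finite elements of $\domd$. Since the finite elements are precisely the principal ideals, this shows that $h$ restricts to a bijection of basis elements, reinforcing the sense in which $H$ is a mapping ``between the finitary bases.'' The only real subtlety, and the step I would be most careful about, is the supremum-preservation property of the second paragraph: it is what upgrades a bare order isomorphism to a continuous function and simultaneously makes the finiteness argument go through, and it relies essentially on having the inverse $h^{-1}$ available as a monotone map.
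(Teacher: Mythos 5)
Your proof is correct, but it takes a genuinely different route from the paper's. The paper works concretely: it defines the relation $a\,F\,b \iff \ideal_b \appx f(\ideal_a)$ directly and then verifies by hand, using surjectivity and order-reflection of $f$, that the function induced by $F$ via the image construction agrees with $f$ on every ideal; for the second claim it builds the preimage ideal $z=\Lub\:\{\ideal_{b'} \such b\in f(\ideal_a)\}$, shows $z=\ideal_a$, and concludes that $f(\ideal_a)$ is a principal ideal $\ideal_c$, invoking the characterization of finite elements as principal ideals. You instead front-load the work into an abstract order-theoretic fact --- an order isomorphism preserves lubs of directed sets --- and then obtain continuity from Theorem~\ref{idleqthm} (noting, correctly, that $\setof{\ideal_a \such a \in \ideal}$ is directed because ideals are closed under finite lubs), after which Theorem~\ref{2.7} hands you the approximable mapping wholesale; your finiteness argument applies the directed-set definition of finiteness directly to $h(e)=\Lub S$ and $e=\Lub h^{-1}(S)$. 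Your version is shorter and more modular, and the finiteness half is more general (it works for any directed-lub-preserving order isomorphism of cpos, with no reference to principal ideals), while the paper's version has the advantage of exhibiting the approximable mapping explicitly and showing concretely which basis element $c$ the finite element $\ideal_a$ is sent to. The one step worth making fully explicit in your write-up is that the $H$ produced by Theorem~\ref{2.7} is a relation on the bases $\bas{A}\times\bas{B}$, i.e.\ exactly the ``approximable mapping between the finitary bases'' the statement asks for; unwinding the onto-direction of Lemma~\ref{2.8} shows it is precisely the paper's relation $a\,H\,b \iff \ideal_b \appx h(\ideal_a)$, so the two proofs construct the same object.
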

\proof Let $f:\dom{D}\map\dome$ be a one-to-one and onto function that
preserves the approximation ordering.  Using the earlier theorem
characterizing approximable mappings and their associated functions, we
can define the mapping as $a\,F\,b \iff \ideal_b\appx f(\ideal_a)$ where
$\ideal_a,\ideal_b$ are the principal ideals for $a,b$.  As shown in
Exercise~\ref{2.8}, monotone functions on finite elements always
determine approximable mappings.  Thus, we need to show that the
function described by this mapping, using the function image
construction defined
earlier, is indeed the original function $f$.  To show this, the
following equivalence must be established for $a\in\domd$:
\[f(a) = \{b\in\bas{E}\: \such \: \exists a'\in a \: 
\ideal_b\appx f(\ideal_{a'})\}\]
The right-hand side of this equation, call it $e$,  is an ideal---for
a proof of 
this, see Exercise~\ref{2.11}.  Since $f$ is an onto function,
there must be some $d\in\domd$ such that $f(d)=e$.
Since $a'\in a$, $\ideal_{a'}\appx a$ holds.  Thus, $f(\ideal_{a'})\appx
f(a)$.  Since this holds for all $a'\in a$, $f(d)\appx f(a)$. Now,
since $f$ is an order-preserving function,
$d\appx a$.  In addition, since $a'\in a$, $f(\ideal_{a'})\appx
f(d)$ by the definition of $f(d)$ so $\ideal_{a'}\appx d$.  Thus, $a'\in d$
and thus $a\appx d$ since $a'$ is an arbitrary element of $a$.  Thus,
$a=d$ and $f(a) = f(d)$ as desired.

To show that finite elements are mapped to finite elements, let $\ideal_a\in
\domd$ for $a\in\bas{D}$.  Since $f$ is one-to-one and onto, every
$b\in f(\ideal_a)$ has a unique $\ideal_{b'}\appx \ideal_a$ such that
$f(\ideal_{b'})=\ideal_b$.  This element is found using the inverse of $f$, which
must exist.  Now, let
\[z=\Lub \: \{\ideal_{b'}\: \such \: b\in f(\ideal_a)\}\]
Since $p\appx q$ implies $\ideal_{p'}\appx \ideal_{q'}$, $z$ is also an ideal
(see Exercise~\ref{2.11} again).  Since  $\ideal_{b'}\appx \ideal_a$
holds for each $\ideal_{b'}$,
$z\appx \ideal_a$ must also hold.  Also, since each $\ideal_{b'}\appx z$,
$\ideal_b=f(\ideal_{b'})\appx f(z)$.  Therefore, $b\in f(z)$.  Since $b$ is an
arbitrary element in $f(\ideal_a)$, $f(\ideal_a)\appx f(z)$ must hold and thus
$\ideal_a\appx z$.  Therefore, $\ideal_a=z$ and $a\in z$.  But then $a\in
\ideal_{c'}$ for some $c\in f(\ideal_a)$ by the definition of $z$.
Thus,  $\ideal_a 
\appx \ideal_{c'}$ and  $f(\ideal_a)\appx \ideal_c$.  Since $c$ was
chosen such that 
$\ideal_c\appx f(\ideal_a)$, $\ideal_{c'}\appx \ideal_a$ and therefore
$\ideal_c=f(\ideal_a)$ and  
$f(\ideal_a)$ is finite.  The same argument holds for the inverse of $f$;
therefore, the isomorphism preserves the finiteness of elements.
\eop
\vspace{.25in}
\begin{center}
{\bf Exercises}
\end{center}
\begin{exer}\label{2.14}
Show that the partial order of monotonic functions mapping
$\domd^0$ to $\dome^0$ (using the pointwise ordering)
is isomorphic to the partial order of approximable mappings
$f:\bas{D}\times\bas{E}$. 
\end{exer}
\begin{exer}\label{2.15}
Prove that, if $F \sub \bas{D}\times\bas{E}$ is an approximable
mapping, then the corresponding function $f:\domd\map\dome$ satisfies
the following equation:
\[f(x) = \Lub \: \{e \: \such \: \exists d \in x \; d F e\}\]
for all $x\in\domd$.
\end{exer}
\begin{exer}\label{2.16}
Prove the following claim: if
$F,G \sub \bas{D}\times\bas{E}$ are approximable mappings, then there exists
$H \sub \bas{D}\times\bas{E}$ such that $H = F \cap G = F \glb G$.
\end{exer}
\begin{exer}\label{2.17}
Let $\langle I,\leq\rangle $ be a non-empty partial order that is
directed and let $\langle D,\appx\rangle$ be a finitary basis.  Suppose
that $a:I\map\domd$ is defined such that $i\leq j\imp a(i)\appx a(j)$
for all $i,j\in I$.  Show that 
\[\bigcup\{a(i)\: \such \: i\in I\}\]
is an ideal in $\domd$.  
This says that the domain is closed under directed unions.
Prove also that for an approximable mapping $f:\bas{D}\map\bas{E}$,
then for any directed union,
\[f(\bigcup\{a(i)\: \such \: i\in I\}) = \bigcup\{f(a(i))\: \such \: i\in I\}\]
This says that approximable mappings preserve directed unions.  If an
elementwise function preserves directed unions, must it come from an
approximable mapping? (Hint: See Exercise~\ref{2.9}??).
\end{exer}
\begin{exer}\label{2.18}
Let $\langle I,\leq\rangle $ be a directed partial order with
$f_i:\bas{D}\map\bas{E}$ as a family of approximable mappings indexed
by $i\in I$.  And assume $i\leq j \imp f_i(x)\appx f_j(x)$ for all
$i,j\in I$ and all $x\in\domd$.  Show that there is an approximable
mapping $g:\bas{D}\map\bas{E}$ where
\[g(x)=\bigcup\{f_i(x)\: \such \: i\in I\}\]
for all $x\in\domd$.
\end{exer}
\begin{exer}\label{2.19}
Let $f:\domd\map\dome$ be an isomorphism between domains.  Let
$\phi:\bas{D}\map\bas{E}$ be the one-to-one correspondence from
Theorem~\ref{2.7} where
\[f(\ideal_a)=\ideal_{\phi(a)}\]
for $a\in\bas{D}$.  Show that the approximable mapping determined by
$f$ is the relationship $\phi(x)\appx b$.  Show also that if $a$ and
$a'$ are 
consistent in $\domd$ then $\phi(a\lub a') = \phi(a) \lub \phi(a')$.
Show how this means that isomorphisms between domains correspond to
isomorphisms between the bases for the domains.  
\end{exer}
\begin{exer}\label{2.20}
Show that the mapping defined in Example~\ref{2.10} is approximable.
Is it uniquely determined by the following equations or are some
missing?

\[\begin{array}{lcl}
g(0x) &=& 0g(x)\\
g(11x) &=& g(1x)\\
g(10x) &=& 0x\\
g(1) &=& \bot
\end{array}\]
\end{exer}
\begin{exer}\label{2.21}
Define in words the effect of the approximable mapping
$h:\bas{B}\map\bas{B}$ using the bases defined in Example~\ref{2.10}
where
\[\begin{array}{lcl}
h(0x)&=&00h(x)\\
h(1x)&=&10h(x)
\end{array}\]
for all $x\in\dom{B}$.  Is $h$ an isomorphism?  Does there exist a map
$k:\bas{B}\map\bas{B}$ such that $k\circ h = \ident_B$ and is $k$ a
one-to-one function?
\end{exer}
\begin{exer}\label{2.22}
Generalize the definition of approximable mappings to define mappings
\[f:\bas{D_1}\times\bas{D_2}\map\bas{D_3}\]
of two variables.  (Hint: A mapping $f$ can be a ternary relation
$f\subseteq\bas{D_1}\times\bas{D_2}\times\bas{D_3}$ where the relation
among the basis elements is denoted $(a,b)\,F\,c$.)  State a modified
version of the theorem characterizing these mappings and their
corresponding functions.
\end{exer}
\begin{exer}\label{2.23}
Modify the construction of the domain $\dom{B}$ from
Example~\ref{2.9} to construct a domain $\dom{C}$ with both finite and
infinite total elements ($\dom{B}\subseteq\dom{C}$).  Define an
approximable map, $C$,  on this domain corresponding to the concatenation of
two strings.  (Hint: Use 011 as an finite total element, 011$\bot$
as the corresponding finite partial element.)  Recall that $\epsilon$,
the empty sequence, is different from $\bot$, the undefined
sequence.  Concatenation should be defined such that if $x$ is an
infinite element from $\dom{C}$, then $\fa
y\in\bas{C} \; (x,y)\,C\,x$.  How does concatenation behave on partial
elements on the left?
\end{exer}
\begin{exer}\label{2.24}
Let $\bas{A}$ and $\bas{B}$ be arbitrary finitary bases.  Prove that
the partial order of approximable mappings over $\bas{A} \times
\bas{B}$ is a domain.  (Hint: The finite elements are the closures
of finite consistent relations.)  Prove that the
partial order of continuous functions in $\dom{A} \map \dom{B}$ is
a domain.
\end{exer}
\newpage
\section{Domain Constructors}
Now that the notion of a
domain has been defined, we need to develop
convenient methods for constructing specific domains.  
The strategy that we will follow is to define the simplest domains ({\it i.e.}, flat domains) directly
(as initial or term algebras) 
and to construct more complex
domains by applying domain constructors to simpler domains.
Since
domains are completely determined by finitary bases, we will
focus on how to construct composite finitary bases from 
simpler ones.  These constructions obviously determine
corresponding constructions on domains.
The two most important constructions on finitary bases are (1)
Cartesian products of finitary bases and (2) approximable mappings
on finitary bases constructed using a function-space constructor.
\subsection{Cartesian Products}
\begin{mdef}{Product Basis}\label{3.1}
Let $\bas{D}$ and $\bas{E}$ be the finitary bases
generating domains $\domd$ and $\dome$.  The {\it product basis},
$\bas{D\times E}$, is the partial order consisting
of the universe
\[D\times E = \{[d,e]\such d\in \bas{D},e\in \bas{E}\}\]
and the approximation ordering
\[[d,e] \appx [i,j] \iff d\appx_D i~{\rm and}~e\appx_E j.\]
\end{mdef}
\begin{thm}\label{3.2} The product 
basis of two finitary bases, as defined above,
is a finitary basis.  
\end{thm}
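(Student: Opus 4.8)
To prove that $\bas{D\times E}$ is a finitary basis, I must check the three defining properties from the Finitary Basis definition: that $\bas{D\times E}$ is a partial order, that its universe $D\times E$ is countable, and that every finite consistent subset has a least upper bound. The plan is to reduce each condition to the corresponding (already-established) property of the component bases $\bas{D}$ and $\bas{E}$, since the product ordering is defined coordinatewise.

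\textbf{The partial-order and countability conditions.} First I would confirm that $\appx$ on $D\times E$ is reflexive, antisymmetric, and transitive. Each of these follows immediately by applying the corresponding property of $\appx_D$ and $\appx_E$ in each coordinate: for instance, antisymmetry of the product order reduces to antisymmetry of $\appx_D$ together with antisymmetry of $\appx_E$, since $[d,e]\appx[i,j]$ and $[i,j]\appx[d,e]$ force $d=i$ and $e=j$ separately. For countability, since $\bas{D}$ and $\bas{E}$ are finitary bases, their universes $D$ and $E$ are countable; the universe $D\times E$ is then a subset of (indeed, in bijection with) the Cartesian product of two countable sets, which is countable. These steps are routine.

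\textbf{The least-upper-bound condition (the crux).} The substantive step is showing that every finite consistent subset of $D\times E$ has a least upper bound. The key observation I would establish first is that consistency and least upper bounds in the product decompose coordinatewise: a finite set $S=\{[d_1,e_1],\dots,[d_n,e_n]\}$ is consistent in $\bas{D\times E}$ if and only if $\{d_1,\dots,d_n\}$ is consistent in $\bas{D}$ and $\{e_1,\dots,e_n\}$ is consistent in $\bas{E}$. For the forward direction, an upper bound $[i,j]$ of $S$ yields, by the definition of the product ordering, an upper bound $i$ for the $d_k$ and an upper bound $j$ for the $e_k$. Conversely, upper bounds in each coordinate assemble into a product upper bound. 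Granting this, I would define the candidate least upper bound as $\Lub S = [\,\Lub\{d_1,\dots,d_n\},\ \Lub\{e_1,\dots,e_n\}\,]$, where the two component lubs exist precisely because $\bas{D}$ and $\bas{E}$ are finitary bases and the projected sets are finite and consistent. It remains to check that this element is indeed the least upper bound: it is an upper bound of $S$ by coordinatewise comparison, and any other upper bound $[i,j]$ dominates each component lub (since $i$ bounds the $d_k$ and $j$ bounds the $e_k$), hence dominates the candidate in the product order.

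\textbf{Expected obstacle.} I do not anticipate a serious obstacle, since the argument is entirely a coordinatewise lifting of the component bases' properties; the only point requiring care is stating the consistency-decomposition lemma cleanly and invoking it in both directions, so that the lub of a finite consistent product subset is correctly exhibited as the pair of component lubs rather than merely asserted to exist.
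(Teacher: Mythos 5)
Your proof is correct and follows essentially the same route as the paper's: verify countability and the partial-order axioms coordinatewise, observe that boundedness of a finite subset of the product forces boundedness of each component set, and exhibit the least upper bound as the pair of component lubs. The only difference is that you spell out the consistency-decomposition and the leastness check in more detail than the paper, which simply asserts them.
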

\proof 
Let $\bas{D}$ and 
$\bas{E}$ be finitary
bases and let 
$\bas{D\times E}$ be defined as above.  Since
\bas{D} and \bas{E} are countable, the universe of $\bas{D}\times\bas{E}$
must be countable.  It is easy to show that $
\bas{D}\times\bas{E}$ is a 
partial order.  By the construction, the bottom
element of the product basis is $[\bot_D,\bot_E]$.  For any
finite bounded subset $R$ of $\bas{D}\times\bas{E}$ where
$R=\{[d_i,e_i]\}$, the lub of $R$ is $[\lub\{d_i\},\lub\{e_i\}]$ which
must be defined since \bas{D} and \bas{E} are finitary bases and for
$R$ to be bounded, each of the component sets must be bounded.  \eop

It is straightforward to
define projection
mappings on product bases,
corresponding to the standard projection functions defined on 
Cartesian products of sets.

\begin{mdef}{Projection Mappings}\label{3.3}
For a finitary basis 
$\bas{D}\times\bas{E}$, {\it projection
mappings\/} $\cP_0 \sub (\bas{D}\times\bas{E}) \times \bas{D}$ and
$P_1 \sub (\bas{D}\times\bas{E}) \times \bas{E}$ are the relations
defined by the rules
\begin{eqnarray*}
[d,e] \: P_0 \:d' & \iff & d' \appx_D d \\ \relax
[d,e] \: P_1 \:e' & \iff & e' \appx_E e
\end{eqnarray*}
where $d$,$d'$ are arbitrary
elements of $\bas{D}$ and $e$, $e'$ are arbitrary
elements of $\bas{E}$.

Let $\bas{A}$, $\bas{D}$, and $\bas{E}$ be finitary bases
and let 
$F \sub \bas{A}\times\bas{D}$ and
$G \sub \bas{A}\times\bas{E}$ be approximable mappings.
The 
{\it paired mapping\/} 
$\langle F,G\rangle \sub \bas{A}\times(\bas{D}\times\bas{E})$
is the relation defined by the rule
\[a\:\langle F,G\rangle \:[d,e] \iff a\:F\:d\mand a\:G\:e\]
for all $a\in\bas{A},d\in\bas{D}$, and all $e\in\bas{E}$.  
\end{mdef}

It is an easy exercise to show that 
projection mappings and paired mappings
are approximable mappings (as
defined in the previous section).
\begin{thm}\label{3.4}
The mappings $\cP_0$, $\cP_1$, and $\langle F,G\rangle $ are approximable mappings if
$F,G$ are. In addition,
\begin{enumerate}
\item 
$\cP_0\circ \langle F,G\rangle  = F$ and 
$\cP_1\circ \langle F,G\rangle  = G$.
\item 
For $[d,e]\in \bas{D}\times\bas{E}$ and $d' \in \bas{D}$, $[d,e] \:
\cP_0 \: d' \iff d' \appx d$.
\item
For $[d,e]\in \bas{D}\times\bas{E}$ and $e' \in \bas{E}$, $[d,e] \:
\cP_1 \: e' \iff e' \appx d$.
\item 
For approximable mapping $H \sub \bas{A}\times(\bas{D}\times\bas{E})$,
$H=\langle(\cP_0\circ H),(\cP_1\circ H)\rangle$.
\item 
For $a\in\bas{A}$ and $[d,e] \in \bas{D}\times\bas{E}$,
$[a, [d,e]] \in \langle F,G\rangle \iff
[a,d] \in F \land  [a,e] \in G$.
\end{enumerate}
\end{thm}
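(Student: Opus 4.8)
The statement as printed says $e' \appx d$, which is almost certainly a typo for $e' \appx e$ (to match the definition of $\cP_1$). I will prove the corrected version.

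The plan is to verify each of the six assertions using only the definitions of the projection mappings, the paired mapping, and composition of approximable mappings (Theorem~\ref{2.11}), together with the four approximable-mapping axioms. First I would dispose of the claim that $\cP_0$, $\cP_1$, and $\langle F,G\rangle$ are approximable mappings. For $\cP_0$, the four conditions reduce to trivial facts about $\appx_D$: condition~1 is $\bot_D \appx_D \bot_D$; condition~2 follows because if $d'\appx_D d$ and $d''\appx_D d$ then $(d'\lub d'')\appx_D d$ (using that $\bas{D}$ is a finitary basis); conditions~3 and~4 are immediate from transitivity of $\appx_D$ and the product ordering. The argument for $\cP_1$ is symmetric. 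For $\langle F,G\rangle$, each condition follows by applying the corresponding condition of $F$ and of $G$ componentwise and using that lubs in $\bas{D}\times\bas{E}$ are computed coordinatewise, i.e. $[d,e]\lub[d',e']=[d\lub d', e\lub e']$.

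Next I would establish items~2 and~3, since these are simply restatements of the defining rules for $\cP_0$ and $\cP_1$ and require essentially no work. Then I would prove item~1 by unfolding the composition rule. By definition $a\,(\cP_0\circ\langle F,G\rangle)\,d'$ holds iff there exists $[d,e]\in\bas{D}\times\bas{E}$ with $a\,\langle F,G\rangle\,[d,e]$ and $[d,e]\,\cP_0\,d'$, i.e. $a\,F\,d$, $a\,G\,e$, and $d'\appx_D d$. The forward direction gives $a\,F\,d'$ by condition~3 for $F$. For the reverse, given $a\,F\,d'$, I would exhibit a suitable witness: take $d=d'$ and $e=\bot_E$ (legitimate since $a\,G\,\bot_E$ holds by condition~1 for $G$), which yields $a\,\langle F,G\rangle\,[d',\bot_E]$ and $[d',\bot_E]\,\cP_0\,d'$. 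Hence $\cP_0\circ\langle F,G\rangle=F$, and the identity $\cP_1\circ\langle F,G\rangle=G$ follows symmetrically.

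The main obstacle is item~4, the surjective-pairing law $H=\langle(\cP_0\circ H),(\cP_1\circ H)\rangle$, because here I must reason about an arbitrary approximable mapping into a product rather than one already presented in paired form. The plan is to show the two relations contain the same pairs $[a,[d,e]]$. Unfolding the right-hand side via item~5 (which I would prove first, as it is just the definition of $\langle\cdot,\cdot\rangle$ rewritten in pair notation), membership reduces to $a\,(\cP_0\circ H)\,d \mand a\,(\cP_1\circ H)\,e$, i.e. there exist $[d_1,e_1]$ and $[d_2,e_2]$ with $a\,H\,[d_1,e_1]$, $d\appx_D d_1$, $a\,H\,[d_2,e_2]$, and $e\appx_E e_2$. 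The crux is recombining these two separate witnesses into a single one: since $H$ is an approximable mapping, condition~2 gives $a\,H\,([d_1,e_1]\lub[d_2,e_2])=a\,H\,[d_1\lub d_2,\, e_1\lub e_2]$, and then $[d,e]\appx[d_1\lub d_2,\,e_1\lub e_2]$ yields $a\,H\,[d,e]$ by condition~3. The converse inclusion is the easy direction: from $a\,H\,[d,e]$ the defining rules immediately give $a\,(\cP_0\circ H)\,d$ and $a\,(\cP_1\circ H)\,e$. Finally, item~5 is recorded as the translation of the paired-mapping definition into bracket notation and needs no separate argument. \eop
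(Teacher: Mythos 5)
The paper gives no proof of this theorem---it is explicitly left as an exercise---so there is nothing to compare against; your argument is correct and is the expected one, including the key step in item~4 where the two separate witnesses are recombined via closure under lubs and downward closure, and your reading of item~3 as a typo for $e'\appx e$ is right. One pedantic remark: in the reverse direction of item~1, $a\,G\,\bot_E$ follows from condition~1 \emph{together with} condition~4 (monotonicity applied to $\bot_A\appx a$), not from condition~1 alone, but this is a one-line patch and not a gap.
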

\proof The proof is left as an exercise to the reader.  
\eop

The projection mappings and paired mappings on finitary
bases obviously correspond
to continuous functions on the corresponding domains.
We will denote the continous
functions corresponding to $\cP_0$ and $\cP_1$ by
the symbols $p_1$ and $p_2$.  Similarly, we will denote
the function corresponding to the paired mapping $\langle F, G
\rangle$ by the expression $\langle f, g \rangle$.

It should be clear that the definition of projection mappings
and paired mappings
can be generalized
to products of more than two domains.
This generalization enables us to treat
a multi-ary continous function (or
approximable mapping) as a special form of
a unary continuous function (or approximable mapping) since
multi-ary inputs can be treated as single objects in a product
domain.  Moreover, it is easy to show that a
relation $R \sub (\bas{A_1} \times \ldots \times \bas{A_n}) \times
\bas{B}$ 
of arity $n+1$ (as in
Exercise~\ref{2.19}) is an approximable mapping
iff
every restriction of $R$ to a single input argument (by fixing
the other input arguments) is an approximable mapping.

\begin{thm}\label{3.5} A relation
$F \sub (\bas{A}\times\bas{B})\times\bas{C}$ is
an approximable mapping
iff for every $a\in\bas{A}$ and every $b\in\bas{B}$,
the derived relations
\begin{eqnarray*}
F_{a,*} & = & \setof{ [y,z] \; \such \; [[a,y],z] \in F}\\
F_{*,b} & = & \setof{ [x,z] \; \such \; [[x,b],z] \in F}
\end{eqnarray*}
are approximable mappings.
\end{thm}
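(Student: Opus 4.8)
The plan is to verify directly, in both directions, the four defining conditions of an approximable mapping, using the fact (Definition~\ref{3.1} and Theorem~\ref{3.2}) that the product basis $\bas{A}\times\bas{B}$ has least element $[\bot_A,\bot_B]$ and ordering $[x,y]\appx[x',y']$ iff $x\appx x'$ and $y\appx y'$. Throughout, $F_{a,*}$ is viewed as a relation in $\bas{B}\times\bas{C}$ and $F_{*,b}$ as a relation in $\bas{A}\times\bas{C}$, where $y\,F_{a,*}\,z$ abbreviates $[a,y]\,F\,z$ and $x\,F_{*,b}\,z$ abbreviates $[x,b]\,F\,z$.

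For the forward direction, I would assume $F$ is approximable, fix $a\in\bas{A}$, and check the four conditions for $F_{a,*}$ (the argument for $F_{*,b}$ being symmetric). Conditions~2 and~3, which concern only the output coordinate in $\bas{C}$, transfer verbatim from the corresponding conditions on $F$ applied to the fixed input $[a,y]$. The remaining two conditions exploit the product order together with condition~4 (input-monotonicity) of $F$: to obtain $\bot_B\,F_{a,*}\,\bot_C$ I start from $[\bot_A,\bot_B]\,F\,\bot_C$ and lift it along $[\bot_A,\bot_B]\appx[a,\bot_B]$; to obtain monotonicity of $F_{a,*}$ in its input, I note that $y\appx y'$ gives $[a,y]\appx[a,y']$, so $[a,y]\,F\,z$ yields $[a,y']\,F\,z$.

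For the backward direction, assume every $F_{a,*}$ and every $F_{*,b}$ is approximable. Conditions~1, 2, and~3 for $F$ all follow from a single fixed-first-coordinate restriction: condition~1 is $\bot_B\,F_{\bot_A,*}\,\bot_C$, and conditions~2 and~3 for an input $[a,b]$ are exactly conditions~2 and~3 for $F_{a,*}$ at input $b$. The one genuinely nontrivial condition is input-monotonicity (condition~4): given $[a,b]\,F\,z$ with $a\appx a'$ and $b\appx b'$, I must derive $[a',b']\,F\,z$. The idea is to change the two coordinates one at a time. First use the approximability of $F_{*,b}$: since $a\,F_{*,b}\,z$ and $a\appx a'$, monotonicity of $F_{*,b}$ gives $a'\,F_{*,b}\,z$, i.e.\ $[a',b]\,F\,z$. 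Then use the approximability of $F_{a',*}$: since $b\,F_{a',*}\,z$ and $b\appx b'$, monotonicity of $F_{a',*}$ gives $b'\,F_{a',*}\,z$, i.e.\ $[a',b']\,F\,z$, as required.

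This last step is the main obstacle, and it is precisely where both families of restrictions are indispensable: moving along the first coordinate requires the relations with the second coordinate held fixed, while moving along the second requires those with the first held fixed, so neither family alone establishes joint monotonicity. This is the discrete analogue of the familiar fact that separate monotonicity in each argument, combined with the lattice-theoretic closure conditions, yields a well-defined map of the product. The generalization to relations of arity $n+1$ follows by the same coordinate-by-coordinate argument.
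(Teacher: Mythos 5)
Your proof is correct, and the harder direction coincides exactly with the paper's argument: you derive joint input-monotonicity of $F$ by moving one coordinate at a time, first using $F_{*,b}$ to pass from $[a,b]$ to $[a',b]$ and then $F_{a',*}$ to pass to $[a',b']$, and you rightly flag this as the step where both families of restrictions are indispensable; the paper's verification of conditions 1--3 from the fixed-first-coordinate restrictions is likewise the same as yours. Where you diverge is the easy direction. The paper does not verify the four closure conditions for $F_{a,*}$ by hand; instead it first proves a small lemma that the constant relations $\cK_a$ are approximable and then writes $F_{a,*} = F \circ \langle \cK_a, \ident_B\rangle$ and $F_{*,b} = F \circ \langle \ident_A, \cK_b\rangle$, so that approximability follows from the already-established closure of approximable mappings under composition (Theorem~2.11) and pairing (Theorem~3.4). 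Your direct check is entirely valid and slightly more self-contained, since it needs no constant-relation lemma; each of the four conditions for $F_{a,*}$ does follow as you say, with condition~1 obtained by lifting $[\bot_A,\bot_B]\,F\,\bot_C$ along $[\bot_A,\bot_B]\appx[a,\bot_B]$ and input-monotonicity from $[a,y]\appx[a,y']$. What the paper's route buys is the observation that restriction to one argument is itself expressible inside the category of approximable mappings, which is a reusable structural fact; what yours buys is brevity and independence from the auxiliary machinery.
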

\proof
Before we prove the theorem, we need to introduce the
class of constant relations $\cK_e \sub \bas{D}\times\bas{E}$
for arbitrary finitary bases $\bas{D}$ and $\bas{E}$ and show that
they are approximable mappings.

\begin{lem}\label{3.6} 
For each $e \in \bas{E}$, let the ``constant'' relation
$\cK_e \sub \bas{D} \times \bas{E}$ be defined by the equation
\[\cK_e = \setof {[d,e'] \; \such \; d \in \bas{D}, e' \appx e}\,.\]
In other words,
\[d \: \cK_e \: e' \; \iff \; e' \appx e \, .\]
For $e\in\bas{E}$, the constant relation
$\cK_e \sub \bas{D} \times \bas{E}$ is approximable.
\end{lem}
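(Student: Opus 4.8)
The plan is to verify directly that the constant relation $\cK_e \sub \bas{D} \times \bas{E}$ satisfies the four defining conditions of an approximable mapping. Since $\cK_e$ is defined by the simple rule $d \: \cK_e \: e' \iff e' \appx e$ — which does not depend on $d$ at all — each condition should reduce to a basic order-theoretic fact about $\bas{E}$, so I expect this to be a short and mechanical check rather than a proof with any genuine obstacle.

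First I would check condition 1, that $\bot_D \: \cK_e \: \bot_E$: this holds because $\bot_E \appx e$ for every $e \in \bas{E}$, since $\bot_E$ is the least element of the basis. Next, for condition 2, suppose $d \: \cK_e \: b$ and $d \: \cK_e \: b'$; then by definition $b \appx e$ and $b' \appx e$, so $e$ is an upper bound of $\setof{b,b'}$, whence the least upper bound satisfies $b \lub b' \appx e$, giving $d \: \cK_e \: (b \lub b')$. For condition 3, if $d \: \cK_e \: b$ and $b'' \appx_E b$, then $b \appx e$ and transitivity of $\appx_E$ yields $b'' \appx e$, so $d \: \cK_e \: b''$.

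The only slightly nonstandard point is condition 4 (monotonicity in the input): if $d \: \cK_e \: b$ and $d \appx_D d'$, we must conclude $d' \: \cK_e \: b$. But this is immediate precisely because membership in $\cK_e$ is independent of the left component — the hypothesis $d \: \cK_e \: b$ already means $b \appx e$, and that same fact gives $d' \: \cK_e \: b$ regardless of $d'$. So the constancy of the relation makes the input-monotonicity condition trivially satisfied.

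Having dispatched all four conditions, I would conclude that $\cK_e$ is an approximable mapping. The genuinely substantive content lies not in this lemma but in how $\cK_e$ will be deployed afterward to prove Theorem~\ref{3.5}: the constant relations furnish a way to ``freeze'' one input coordinate of $F$ so that the restrictions $F_{a,*}$ and $F_{*,b}$ can be exhibited as compositions and pairings of already-known approximable mappings. Thus the real work is deferred to the main theorem, and this lemma is merely the preparatory step supplying the building block.
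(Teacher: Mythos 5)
Your verification is correct and is exactly the direct check the paper intends --- the paper's own ``proof'' simply leaves the lemma to the reader, and your four-condition verification (with the key observations that $\bot_E \appx e$, that $e$ bounds $\setof{b,b'}$ so the lub exists in the finitary basis and approximates $e$, and that input-monotonicity is trivial by constancy) is the standard argument. Nothing further is needed.
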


\proof (lemma) The proof of this lemma is left to the reader.
\eop 

To prove the 
``if'' direction of the theorem, we observe that we can construct
the relations $F_{a,*}$  and $F_{*,b}$ for all
$a \in \bas{A}$ and $b \in \bas{B}$
by composing and pairing
primitive approximable mappings.  In particular,
$F_{a,*}$ is the relation
\[F \circ \langle K_a, \ident_B \rangle\]
where $\ident_B$ denotes the identity relation
on $\bas{B}$.
Similary,
$F_{*,b}$ is the relation
\[F \circ \langle \ident_A, K_b\rangle\]
where $\ident_A$ denotes the identity relation
on $\bas{A}$.

To prove the ``only-if''
direction, we assume that for all $a \in \bas{A}$ and
$b \in \bas{B}$,
the relations $F_{a,*}$ and $F_{*,b}$ are approximable.
We must show that the four closure properties for approximable
mappings hold for $F$.
\begin{enumerate}
\item
Since $F_{\bot_A,*}$ is approximable, $[\bot_B,\bot_C] \in F_{\bot_A,*}$,
which implies that $[[\bot_A,\bot_B],\bot_C] \in F$.
\item
If $[[x,y],z] \in F$ and $[[x,y],z'] \in F$, then
$[y,z] \in F_{x,*}$ and $[y,z'] \in F_{x,*}$.  Since
$F_{x,*}$ is approximable, $[y,z \lub z'] \in F_{x,*}$, implying
$[[x,y],z \lub z'] \in F$.
\item
If $[[x,y],z] \in F$ and $z' \appx z$, then
$[y,z] \in F_{x,*}$.  Since
$F_{x,*}$ is approximable, $[y,z'] \in F_{x,*}$, implying
$[[x,y],z'] \in F$.
\item
If $[[x,y],z] \in F$ and $[x,y] \appx [x',y']$,
then $[y,z] \in F_{x,*}$, $x \appx x'$, and $y \appx y'$.
Since $F_{x,*}$ is approximable, $[y',z] \in F_{x,*}$, implying
$[x,y'],z] \in F$, which is equivalent to $[x,z] \in F_{*,y'}$.
Since $F_{*,y'}$ is approximable, $[x',z] \in F_{*,y'}$, implying
$[[x',y'],z] \in F$.
\end{enumerate}
\eop

The same result can be restated in terms of continous functions.

\begin{thm}\label{3.7} A function of two arguments,
$f:\dom{A}\times\dom{B}\map\dom{C}$ is continuous
iff for every $a\in\dom{A}$ and every $b\in\dom{B}$,
the unary functions
\[x\mapsto f[a,x]~{\rm and}~y\mapsto f[y,b]\]
are continuous.
\end{thm}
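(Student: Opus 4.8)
The plan is to reduce this statement to its approximable-mapping counterpart, Theorem~\ref{3.5}, by transporting everything across the isomorphism between continuous functions and approximable mappings established in Theorem~\ref{2.7}. Let $f:\dom{A}\times\dom{B}\map\dom{C}$ be given, and let $F\sub(\bas{A}\times\bas{B})\times\bas{C}$ be the relation defined (as in the onto half of Lemma~\ref{2.8}) by $[[a,b],c]\in F\iff \ideal_c\appx f[\ideal_a,\ideal_b]$. The first thing I would record is that, under this correspondence, the derived relation $F_{a,*}$ for a \emph{basis} element $a\in\bas{A}$ is precisely the approximable mapping associated to the unary continuous function $x\mapsto f[\ideal_a,x]$, and symmetrically for $F_{*,b}$. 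This identification, together with Theorem~\ref{2.7} applied to the base pairs $(\bas{B},\bas{C})$ and $(\bas{A},\bas{C})$, turns Theorem~\ref{3.5} into the statement that $f$ is continuous iff $x\mapsto f[\ideal_a,x]$ and $y\mapsto f[y,\ideal_b]$ are continuous \emph{for all basis elements} $a\in\bas{A}$, $b\in\bas{B}$.

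For the forward direction I would argue directly, without appeal to Theorem~\ref{3.5}. Suppose $f$ is continuous and fix an arbitrary domain element $a\in\dom{A}$. For any directed $S\sub\dom{B}$, the set $\setof{[a,s]\such s\in S}$ is directed in the product domain and, since lubs in a product are computed componentwise, its lub is $[a,\Lub S]$. Continuity of $f$ in the strong form of Theorem~\ref{2.3} then gives $f[a,\Lub S]=\Lub\setof{f[a,s]\such s\in S}$, so $x\mapsto f[a,x]$ preserves directed lubs and is continuous; the argument for $y\mapsto f[y,b]$ is identical. Note this handles \emph{every} domain element, not merely basis elements.

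For the converse I would exploit the gap between the two quantifier ranges. The hypothesis of the theorem asserts partial continuity at every domain element, hence in particular at every principal ideal $\ideal_a$, $\ideal_b$; this is exactly the basis-element hypothesis produced by the translation of Theorem~\ref{3.5} above, so $f$ is jointly continuous. The only real work beyond citation is verifying the identification of $F_{a,*}$ with the mapping $x\mapsto f[\ideal_a,x]$, which I would check by unwinding the $\apply$ construction: $c\in f[\ideal_a,x]$ holds iff there is a finite $[a',b']$ with $a'\appx a$, $b'\in x$, and $[[a',b'],c]\in F$; monotonicity of $F$ in its first argument (closure condition 4) lets one replace $a'$ by $a$, so this reduces to $\te\, b'\in x$ with $[b',c]\in F_{a,*}$, i.e.\ $c\in\apply(F_{a,*},x)$.

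I expect the main obstacle to be bookkeeping rather than depth: one must keep straight the three separate instances of Theorem~\ref{2.7} (for the bases $\bas{A}\times\bas{B}$, $\bas{B}$, and $\bas{A}$, each paired with $\bas{C}$) and the tacit identification of the product domain $\dom{A}\times\dom{B}$ with the domain determined by the product basis $\bas{A}\times\bas{B}$ from Theorem~\ref{3.2}, under which an ideal over the product basis corresponds to a pair of ideals. The one genuinely mathematical point is the basis-element-versus-domain-element mismatch, and the clean way to dispatch it is exactly as above: prove the strong (domain-element) forward direction by hand, and invoke only the weak (basis-element) hypothesis in the converse, so the two directions meet without ever needing partial continuity at non-finite elements as an input.
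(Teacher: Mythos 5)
Your proposal is correct and follows essentially the same route as the paper, whose entire proof is the single sentence that the result is immediate from Theorem~\ref{3.5} together with the isomorphism between approximable mappings over $(\bas{A}\times\bas{B})\times\bas{C}$ and continuous functions over $(\dom{A}\times\dom{B})\map\dom{C}$. The one place you go beyond the paper is in noticing the quantifier mismatch --- Theorem~\ref{3.5} ranges over \emph{basis} elements while Theorem~\ref{3.7} ranges over \emph{domain} elements --- and your repair (prove the forward direction directly for arbitrary domain elements via Theorem~\ref{2.3}, and feed only the principal-ideal instances of the hypothesis into the converse) is exactly the right patch for a step the paper's one-line proof silently elides.
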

\proof
Immediate from the previous theorem and the fact that
the domain of approximable mappings over
$(\bas{A} \times \bas{B}) \times \bas{C}$ is isomorphic
to the domain of continuous functions over
$(\dom{A} \times \dom{B}) \times \dom{C}$.
\eop

\subsection{Multiary Function Composition}
The composition of functions, as defined in Theorem~\ref{2.11}, can be
generalized to functions of several arguments.  But we need some
new syntactic machinery to describe more general forms of function 
composition.

\begin{mdef}{Cartesian Types} Let $S$ be a set of symbols used to denote
primitive types.  
The set $S^*$ of {\it Cartesian types} over $S$
consists of the set of expressions denoting
all finite non-empty Cartesian products over primitive types
in $S$:
\[S^* ::= S \; \such \; S \times S \; \such \; \ldots \; .\]
A {\it signature} $\Sigma$ is a pair $\pair SO$
consisting of a set $S$ of type names
$\setof{s_1, \ldots, s_m}$ used
to denote domains and a set $O=\setof{o_i^{\rho_i\map\sigma_i}
\; \such \; 1 \leq i \leq m, \; \rho_i \in S^*, \; \sigma_i \in S}$ of function symbols
used
to denote first-order functions over the domains $S$.
Let $V=\setof{v_i^{\tau} \; \such \; \tau \in S,
\; i \in \Nat}$ be a countably-infinite set
of symbols (variables) distinct
from the symbols in $\Sigma$.
The {\it typed expressions} over $\Sigma$ (denoted
$\Exp(\Sigma)$)
is the set of ``typed'' terms
determined by the following inductive definition:
\begin{enumerate}
\item
$v_i^{\tau} \in V$ is a term of type $\tau$,
\item
for $M_1^{\tau_1}, \dots, M_n^{\tau_n} \in \Exp(\Sigma)$
and $o^{(\tau_1 \times \dots \times \tau_n) \to \tau_0} \in O$
then 
\[
o^{(\tau_1 \times \dots \times \tau_n) \to \tau_0}(M_1^{\tau_1}, \dots, 
M_n^{\tau_n})^{\tau_0}\]
is a term of type $\tau_0$.
\end{enumerate}
We will restrict our attention to terms where every instance
of a variable $v_i$ in a term has the same type $\tau$.
To simplify notation,
we will drop the type superscripts from terms whenever they can be easily 
inferred from context.
\end{mdef}

\begin{mdef}{Finitary Algebra}
A {\it finitary algebra} with signature $\Sigma$ is a function $\bA$ mapping
\begin{itemize}
\item
each primitive type $\tau \in S$ to a finitary basis $\bAof{\tau}$,
\item
each operation type $\tau^1 \times \dots \times \tau^n \in S^*$ 
to the finitary basis $\bAof{\tau^1} \times \dots \times \bAof{\tau^n}$,
\item
each function symbol 
$o_i^{\rho_i \map \sigma_i} \in O$
to an approximable mapping
$\bAof{o_i} \sub (\bAof{\rho_i} \times \bAof{\sigma_i})$.
(Recall that $\bAof{\rho_i}$ is a product 
basis.)
\end{itemize}
\end{mdef}

\begin{mdef}{Closed Term}
A term $M \in \Exp(\Sigma)$ is {\it closed} iff it contains no
variables in $V$.
\end{mdef}

The finitary 
algebra $\bA$ implicitly assigns a meaning to every closed term $M$
in $\Exp(\Sigma)$.  This extension is inductively defined by the equation:
\[\bAof{o[M_1,\dots,M_n]} = \bAof{o}[\bAof{M_1}, \dots, \bAof{M_n}] =\]\[
\setof{ b_0 \; \such \; \te [b_1,\dots,b_n] \in \bAof{\rho_i} \;
[b_1,\dots,b_n] \bAof{o} b_0} \, .\]
We can extend $\bA$ to terms $M$ with free variables by
implicitly abstracting over all of the free variables in $M$.

\begin{mdef} {Meaning of Terms}
Let $M$ be a term in $\Exp{(\Sigma)}$
and let $l = x_1^{\tau_1}, \dots, x_n^{\tau_n}$ be
a list of distinct variables in $V$ containing all the free variables
of $M$.  Let $\bA$ be a finitary algebra with signature $\Sigma$ and
for each tuple $[d_1, \dots, d_n] \in \bAof{\tau_1} \times \dots
\times \bAof{\tau_n}$, let $\bA_\setof{x_1 := d_1,\dots, x_n := d_n}$ denote
the algebra $\bA$ extended by defining 
\[\bAof{x_i} = d_i\]
for $1 \leq i \leq n$.
The meaning of $M$ with respect
to $l$, denoted $\bAof{x_1^{\tau_1}, \dots, x_n^{\tau_n} \, \mapsto \,M}$, 
is the relation $F_M \sub (\bAof{\tau_1} \times
\bAof{\tau_n}) \times \bAof{\tau_0}$ defined by 
the equation:
\[
F_M[d_1, \dots, d_n] = \of{\bA_\setof{x_1 := d_1,\dots,x_1 := d_n}}{M}
\]
\end{mdef}

The relation denoted by 
$\bAof{x_1^{\tau_1}, \dots, x_n^{\tau_n} \, \mapsto \, M}$ 
is often called a {\it substitution\/}.  
The following
theorem shows that the relation
$\bAof{x_1^{\tau_1}, \dots, x_n^{\tau_n} \, \mapsto \, M}$ 
is approximable.
\begin{thm}\label{3.12}
(Closure of continuous functions under substitution)
Let $M$ be a term in $\Exp{(\Sigma)}$
and let $l = x_1^{\tau_1}, \dots, x_n^{\tau_n}$ be
a list of distinct variables in $V$ containing all the free variables
of $M$.
Let $\bA$ be a finitary algebra with signature $\Sigma$.
Then the relation $F_M$ denoted by the expression
\[x_1^{\tau_1}, \dots, x_n^{\tau_n} \mapsto M \] 
is approximable.
\end{thm}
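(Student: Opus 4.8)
The plan is to argue by structural induction on the term $M$, since the meaning relation of a term is assembled entirely out of the two constructions already shown to preserve approximability: composition (Theorem~\ref{2.11}) and pairing (Theorem~\ref{3.4}).

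First I would dispatch the base case, in which $M$ is a single variable. Because the list $l = x_1^{\tau_1}, \dots, x_n^{\tau_n}$ contains all the free variables of $M$, we must have $M = x_j$ for some $j$ with $1 \leq j \leq n$. Unwinding the definition of the meaning relation gives $F_M[d_1,\dots,d_n] = \ideal_{d_j}$, so that $[[d_1,\dots,d_n],b] \in F_M \iff b \appx d_j$. This is exactly the projection mapping onto the $j$-th coordinate of the product basis $\bAof{\tau_1} \times \dots \times \bAof{\tau_n}$, which is approximable by Theorem~\ref{3.4} together with the remark that projections and pairings generalize from binary products to products of arbitrary finite arity.

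For the inductive step I would take $M = o(M_1, \dots, M_k)$, where the function symbol $o$ is interpreted by an approximable mapping $\bAof{o} \sub \bAof{\rho} \times \bAof{\tau_0}$ on its argument-product basis $\bAof{\rho}$ and result basis $\bAof{\tau_0}$, and each $M_i$ is a proper subterm of $M$. The induction hypothesis supplies, for each $i$, an approximable meaning relation $F_{M_i}$ carrying the input product basis $\bAof{\tau_1} \times \dots \times \bAof{\tau_n}$ (fixed by $l$) into the $i$-th factor of $\bAof{\rho}$. The crux is then to establish the factorization
\[F_M = \bAof{o} \circ \langle F_{M_1}, \dots, F_{M_k} \rangle\, ,\]
in which the pairing $\langle F_{M_1}, \dots, F_{M_k}\rangle$ collects the subterm meanings into a single approximable mapping into $\bAof{\rho}$. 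This identity follows by unwinding the definition of $\bAof{o[M_1,\dots,M_k]}$ and matching it against the definitions of pairing and composition. Once it is in hand, approximability of $F_M$ is immediate from closure of approximable mappings under pairing (Theorem~\ref{3.4}, in its $k$-ary form) and under composition (Theorem~\ref{2.11}).

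The hard part will not be any single deep step but the bookkeeping that makes the factorization legitimate. I must check that every subterm $M_i$ is assigned a meaning over the \emph{same} input basis $\bAof{\tau_1} \times \dots \times \bAof{\tau_n}$ fixed by $l$---even when $M_i$ actually mentions only some of the variables in $l$---so that the common-source requirement of the paired-mapping construction in Definition~\ref{3.3} is met, and I must confirm that both the product-basis constructor and the paired-mapping constructor behave as expected when generalized from the binary case of Theorem~\ref{3.4} to arbitrary finite arity. With these type-matching details settled, the induction closes. \eop
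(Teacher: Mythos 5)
Your proposal is correct and takes essentially the same route as the paper's own proof: structural induction on $M$, with the variable case handled by the projection mapping and the application case handled by the factorization $F_M = \bAof{o}\circ\langle F_{M_1},\dots,F_{M_k}\rangle$, whose approximability follows from Theorems~\ref{3.4} and~\ref{2.11}. The only (negligible) difference is that the paper also lists constants as a separate base case, dispatched by the constant relation $\cK_c$, which you omit; everything else, including your care that all the $F_{M_i}$ share the common source basis $\bAof{\tau_1}\times\dots\times\bAof{\tau_n}$, matches the paper's argument.
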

\proof The proof proceeds by induction on the structure of $M$.
The base cases are easy.  If $M$ is a variable $x_i$,
the relation $F_M$ is
simply the projection mapping $\cP_i$.
If $M$ is a constant $c$ of type $\tau$, then
$F_M$ is the contant relation $\cK_c$ of arity $n$.
The induction step is also straightforward.
Let $M$ have the form $g[M_1^{\sigma_1},\ldots,M_m^{\sigma_m}]$.  
By the induction
hypothesis,
\[x_1^{\tau_1}, \dots, x_n^{\tau_n} \mapsto M_i^{\sigma_i} \]
denotes an approximable mapping $F_{M_i} \sub
(\bAof{\tau_1} \times \bAof{\tau_n}) \times \bAof{\sigma_i}$.
But $F_M$ is simply the composition of the approximable mapping
$\bAof{g}$ with the mapping $\langle F_{M_1}, \dots F_{M_m} \rangle$.
Theorem~\ref{2.11} tells us that the composition must be approximable.
\eop

The preceding generalization of composition obviously carries over
to continuous functions.  The details are left to the reader.
\subsection{Function Spaces}
The next domain constructor, the function-space constructor, allows
approximable mappings (or, equivalently,
continuous functions)
to be regarded as objects.  
In this framework, standard operations on approximable mappings such as
{\it application\/} and {\it composition\/} are approximable mappings too.
Indeed, the definitions of ideals
and of approximable mappings are quite similar.
The space of approximable mappings
is built by looking at
the actions of mappings on finite sets, and then using progressively
larger finite sets to construct the mappings in the limit.  
To this end, the
notion of a {\it finite step mapping\/} is required. 
\begin{mdef}{Finite Step Mapping}\label{3.13}
Let $\bas{A}$ and $\bas{B}$ be finitary bases.
An approximable mapping $F \sub \bas{A} \times \bas{B}$ is a
{\it finite step mapping} iff there exists
a finite set $S\sub
\bas{A}\times\bas{B}$ and $F$ is the least
approximable mapping such that $S \sub F$.
\end{mdef}
It is easy to show that for every {\it consistent} finite set $S \sub
\bas{A}\times\bas{B}$, a least mapping $F$ always exists.
$F$
is simply the closure of $S$ under the four conditions that
an approximable mapping must satisfy.
The
least approximable mapping respecting the empty set is the 
relation $\setof {\pair a{\bot_B} \; \such \; a \in \bas{A}}$.

The space of approximable mappings is built from finite step
mappings.
\begin{mdef}{Partial Order of Finite Step Mappings}\label{3.8.2}
For finitary bases $\bas{A}$ and $\bas{B}$ the
{\it mapping basis\/} is the partial order
$\bas{A}\amap\bas{B}$ consisting of
\begin{itemize}
\item
the universe of all finite step
mappings, and
\item
the approximation ordering
\[F \appx G \iff \forall a\in\bas{A} \; F(a) \appx_B G(a) \, .\]
\end{itemize}
\end{mdef}

The following theorem establishes that the constructor $\amap$
maps finitary bases into finitary bases.
\begin{thm}\label{3.9}
Let $\bas{A}$ and $\bas{B}$ be finitary bases.  Then, the mapping basis
$\bas{A}\amap\bas{B}$ is a finitary
basis.
\end{thm}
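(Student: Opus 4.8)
The plan is to show that $\bas{A} \amap \bas{B}$ (finite step mappings under the pointwise ordering) satisfies the three defining conditions of a finitary basis: the universe is countable, it is a partial order, and every finite consistent subset has a least upper bound. The countability and partial-order conditions are routine, so the substance of the proof lies in establishing the least-upper-bound property. My strategy is to identify each finite step mapping with the finite consistent set $S \sub \bas{A} \times \bas{B}$ that generates it (its closure under the four approximable-mapping conditions), and to reduce lubs in the mapping basis to lubs of these generating sets.

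First I would dispatch countability: since $\bas{A}$ and $\bas{B}$ are countable, so is $\bas{A} \times \bas{B}$, hence the collection of finite subsets of $\bas{A} \times \bas{B}$ is countable, and each finite step mapping is the closure of such a finite set. This gives a surjection from a countable set onto the universe of $\bas{A} \amap \bas{B}$, so the universe is countable. I would then check that the pointwise ordering $F \appx G \iff \fa a \in \bas{A} \; F(a) \appx_B G(a)$ is reflexive, antisymmetric, and transitive; reflexivity and transitivity are immediate from the corresponding properties of $\appx_B$, and antisymmetry follows because two approximable mappings agreeing on every $F(a)$ must be equal as relations (using that $a\,F\,b \iff \ideal_b \appx F(\ideal_a)$, as established in Lemma~\ref{2.8}). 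The bottom element is the closure of the empty set, namely $\setof{\pair a{\bot_B} \such a \in \bas{A}}$.

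The main obstacle, and the heart of the argument, is the least-upper-bound condition. Given a finite consistent set $\setof{F_1, \dots, F_k}$ of finite step mappings with upper bound $G$, I would form the union $S = S_1 \union \dots \union S_k$ of their generating finite sets and let $H$ be the least approximable mapping containing $S$ (the closure of $S$ under the four conditions). The key claims are that $H$ is itself a finite step mapping (it is generated by the finite set $S$, so I must argue the closure of a finite set stays within the realm handled by Definition~\ref{3.13}), that $H$ is an upper bound of each $F_i$ (immediate since $S_i \sub S \sub H$ and closure is monotone), and that $H \appx G$ for every upper bound $G$. The last point is where consistency is essential: each $F_i \appx G$ means $S_i \sub G$ as relations, so $S \sub G$, and since $G$ is an approximable mapping containing $S$ while $H$ is the \emph{least} such mapping, we get $H \sub G$, whence $H \appx G$ pointwise. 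I would take care to reconcile the pointwise ordering $\appx$ with the subset ordering on relations, observing via Lemma~\ref{2.8}(b) that for approximable mappings $F \subseteq G \iff F \appx G$, so the two notions coincide and the lub computed as a closure of a union behaves correctly.

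The one subtlety I would flag is ensuring that consistency of the finite family $\setof{F_1, \dots, F_k}$ in the mapping basis really does guarantee that the union $S_1 \union \dots \union S_k$ is a \emph{consistent} finite subset of $\bas{A} \times \bas{B}$, so that its closure is a genuine (finite step) approximable mapping rather than collapsing through an inconsistency. This follows because any common upper bound $G$ contains all the $S_i$ as subrelations, so $G$ witnesses that the combined finite set can be closed consistently; I would spell this out as the crucial link between consistency in the mapping basis and the existence of the lub.
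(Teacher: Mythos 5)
Your proposal is correct and follows essentially the same route as the paper: both reduce the lub of a finite consistent set of finite step mappings to the closure of the union of their generating pairs (equivalently, the intersection of all approximable mappings above them), using the upper bound supplied by consistency to guarantee this closure exists and is least. Your version is slightly more explicit than the paper's about why the resulting mapping is itself a finite step mapping and about the agreement of the pointwise and subset orderings, but these are refinements of the same argument rather than a different approach.
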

\proof Since the elements are finite subsets of a countable set, the
basis must be countable.  It is easy to confirm that
$\bas{A}\amap\bas{B}$ is a partial order; this task is left
to the reader.  We must show that every finite consistent subset
of $\bas{A}\amap\bas{B}$ has a least upper bound in
$\bas{A}\amap\bas{B}$.
Let $\S$ be a finite consistent subset of the universe of
$\bas{A}\amap\bas{B}$.  Each element of $\S$ is a
set of ordered pairs $\pair ab$ that meets the approximable
mapping closure conditions.
Since $\S$ is consistent, it has an upper bound $\S' \in 
\bas{A}\amap\bas{B}$.
Let $U = \Union \S$.  Clearly, $U \sub \S'$.  But $U$ may not be
approximable.  Let $S$ be the intersection of all relations
in $\bas{A}\amap\bas{B}$ above $\S$.  Clearly $U \sub S$, implying
$S$ is a superset of every element of $\S$.
It is easy to verify that $S$ is approximable, because all the
approximable mapping
closure conditions are preserved by infinite intersections.
\eop

\begin{mdef}{Function Domain}
We will denote
the domain
of ideals determined by the finitary basis $\bas{A}\amap\bas{B}$
by the expression $\dom{A}\amap\dom{B}$.  The justification
for this notation will be explained shortly.
\end{mdef}

Since the partial order of approximable mappings is isomorphic
to the partial order of continuous functions, the preceding
definitions and theorems about approximable mappings can be restated
in terms of continuous functions.

\begin{mdef}{Finite Step Function}
Let $\dom{A}$ and $\dom{B}$ be the domains determined
by the finitary bases $\bas{A}$ and $\bas{B}$, respectively.
A continuous function $f$ in $\dom{A}\map_c\dom{B}$ is
{\it finite} iff there exists a finite step mapping
$F \sub \bas{A} \times \bas{B}$ such that $f$ is the function
determined by $F$.
\end{mdef}

\begin{mdef}{Function Basis}
For domains $\dom{A}$ and $\dom{B}$, the
{\it function basis\/} is the partial order
$(\dom{A}\map_c\dom{B})^0$
consisting of
\begin{itemize}
\item
a universe of all finite step
functions, and
\item
the approximation order
\[f\appx g \iff \forall a\in\dom{A} \; f(a) \appx_\domb g(a) \, .\]
\end{itemize}
\end{mdef}

\begin{cor} (to Theorem~\ref{3.9})
For domains $\dom{A}$ and $\dom{B}$,
the function basis
$(\dom{A}\map_c\dom{B})^0$
is a finitary
basis.
\end{cor}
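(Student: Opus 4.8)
The plan is to derive the corollary directly from the isomorphism established in Theorem~\ref{2.7} together with Theorem~\ref{3.9}, using the fact that the property of being a finitary basis is invariant under order isomorphism. Recall that Theorem~\ref{2.7} supplies an isomorphism $\fun:\cMap(\bas{A},\bas{B})\map(\dom{A}\map_c\dom{B})$ between the full partial order of approximable mappings and the full partial order of continuous functions. The strategy is to show that $\fun$ restricts to an order isomorphism between the sub-partial-order $\bas{A}\amap\bas{B}$ of finite step mappings and the sub-partial-order $(\dom{A}\map_c\dom{B})^0$ of finite step functions, and then to invoke Theorem~\ref{3.9}, which asserts that $\bas{A}\amap\bas{B}$ is a finitary basis.

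First I would check that $\fun$ carries the universe of $\bas{A}\amap\bas{B}$ exactly onto the universe of $(\dom{A}\map_c\dom{B})^0$. This is essentially definitional: by the definition of finite step function, a continuous function $f$ lies in $(\dom{A}\map_c\dom{B})^0$ iff $f=\fun(F)$ for some finite step mapping $F\in\bas{A}\amap\bas{B}$. Hence the image of the finite step mappings under $\fun$ is precisely the set of finite step functions, and since $\fun$ is injective (Lemma~\ref{2.8}, part~2) its restriction is a bijection between the two universes.

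Next I would verify that this restriction preserves the ordering in both directions. The ordering on the mapping basis (Definition~\ref{3.8.2}) is the pointwise ordering $F\appx G\iff\fa a\in\bas{A}\;F(a)\appx_B G(a)$, while the ordering on the function basis is the pointwise ordering over all of $\dom{A}$. Using the identity $\fun(F)(\ideal_a)=F(a)$ for basis elements $a$ and Lemma~\ref{2.8}, part~1(b), the condition $\fa a\in\bas{A}\;F(a)\appx_B G(a)$ is equivalent to $\fa a\in\bas{A}\;\fun(F)(\ideal_a)\appx\fun(G)(\ideal_a)$, which in turn is equivalent to $\fun(F)\appx\fun(G)$ pointwise on all of $\dom{A}$, because a continuous function is determined by its values on the finite (principal) elements. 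Thus $\fun$ restricted to finite step mappings is an order isomorphism onto the finite step functions, i.e. $\bas{A}\amap\bas{B}\iso(\dom{A}\map_c\dom{B})^0$.

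Finally, since $\bas{A}\amap\bas{B}$ is a finitary basis by Theorem~\ref{3.9}, and since countability, the partial-order axioms, and the existence of least upper bounds of finite consistent subsets are all preserved by order isomorphisms, the function basis $(\dom{A}\map_c\dom{B})^0$ is a finitary basis. The main obstacle to watch is the ordering step: one must confirm that the pointwise ordering restricted to basis elements (as used in Definition~\ref{3.8.2}) genuinely coincides, under $\fun$, with the full pointwise ordering on continuous functions. This reduces to the observation---already implicit in the proof of Lemma~\ref{2.8}---that a continuous function is completely determined by its action on principal ideals, so agreement on finite inputs forces agreement everywhere.
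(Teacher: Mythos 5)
Your proof is correct and follows exactly the route the paper intends: the corollary is left unproved in the text precisely because it is meant to be obtained by transporting Theorem~3.9 across the isomorphism of Theorem~2.7 between approximable mappings and continuous functions. You have simply filled in the details the paper leaves implicit (that $\fun$ restricts to an order isomorphism from finite step mappings onto finite step functions, and that the finitary-basis axioms are invariant under such an isomorphism), and those details are all sound.
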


We can prove that
the domain constructed by generating the ideals over
$\bas{A}\amap\bas{B}$ is isomorphic to the partial
order $\cMap(\bas{A},\bas{B})$
of approximable mappings defined in Section 2.
This result is not surprising; it merely demonstrates
that $\cMap(\bas{A},\bas{B})$
is a domain and that we have identified the finite elements 
correctly in defining
$\bas{A}\amap\bas{B}$.

\begin{thm}\label{3.10} The domain of ideals 
determined by $\bas{A}\amap\bas{B}$
is isomorphic to the partial order of
the approximable mappings $\cMap(\bas{A},\bas{B})$.
Hence, $\cMap(\bas{A},\bas{B})$ is a domain.
\end{thm}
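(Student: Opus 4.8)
The plan is to exhibit an explicit order-isomorphism between the domain $\dom{A}\amap\dom{B}$ of ideals over $\bas{A}\amap\bas{B}$ and the partial order $\cMap(\bas{A},\bas{B})$, and then to invoke the fact (recorded in the definition of \emph{domain}) that any partial order isomorphic to a constructed domain is itself a domain. A preliminary observation streamlines everything: for finite step mappings $F,G$, the mapping-basis ordering $F\appx G$ (namely $\fa a\in\bas{A}\;F(a)\appx_B G(a)$) coincides with ordinary containment $F\sub G$ of relations, by Lemma~\ref{2.8}. Thus an ideal $\ideal\in\dom{A}\amap\dom{B}$ is exactly a $\sub$-downward-closed, lub-closed family of finite step mappings.

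First I would define $\Phi:(\dom{A}\amap\dom{B})\map\cMap(\bas{A},\bas{B})$ by $\Phi(\ideal)=\Union\ideal$, the union of all finite step mappings in $\ideal$ viewed as a single relation in $\bas{A}\times\bas{B}$, and check that $\Phi(\ideal)$ is an approximable mapping. Conditions 1, 3, and 4 of an approximable mapping are preserved by unions of approximable mappings and so hold immediately (condition 1 using that $\ideal$ contains the least step mapping). Condition 2 is where directedness of $\ideal$ is needed: if $a\,(\Union\ideal)\,b$ and $a\,(\Union\ideal)\,b'$ are witnessed by $G_1,G_2\in\ideal$, directedness supplies $G\in\ideal$ above both, whence $a\,G\,(b\lub b')$ and therefore $a\,(\Union\ideal)\,(b\lub b')$.

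Next I would introduce the candidate inverse $\Psi:\cMap(\bas{A},\bas{B})\map(\dom{A}\amap\dom{B})$ given by $\Psi(F)=\setof{G\in\bas{A}\amap\bas{B}\such G\sub F}$, and verify that $\Psi(F)$ is an ideal: downward closure is transitivity of $\sub$, and closure under finite lubs holds because the mapping-basis lub of two finite step mappings below $F$ is the closure of their union under the four conditions, which remains below $F$ since $F$ is itself approximable. The crux is then to show $\Phi$ and $\Psi$ are mutually inverse. The easy direction, $\Phi(\Psi(F))=F$, uses that each pair $[a,b]\in F$ generates a finite step mapping $\sub F$ containing $[a,b]$. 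The delicate direction is $\Psi(\Phi(\ideal))=\ideal$: given a finite step mapping $G\sub\Union\ideal$, I would write $G$ as the least approximable mapping above a finite set $S=\setof{[a_1,b_1],\dots,[a_k,b_k]}$, locate each $[a_i,b_i]$ in some $G_i\in\ideal$, apply directedness to obtain a single $G^*\in\ideal$ above all the $G_i$ (so $S\sub G^*$), and conclude $G\appx G^*$ by minimality of $G$; downward closure of $\ideal$ then yields $G\in\ideal$.

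I expect this last step to be the main obstacle, since it is the one place where the two defining features of $\bas{A}\amap\bas{B}$ interact---the representation of a finite step mapping as the least approximable mapping over a finite generating set, and the directedness built into ideals. The remaining bookkeeping is routine: $\Phi$ clearly preserves $\sub$, and it reflects $\sub$ because it is a bijection with monotone inverse $\Psi$, so $\Phi$ is an order-isomorphism. Since $\dom{A}\amap\dom{B}$ is a constructed domain and $\cMap(\bas{A},\bas{B})$ is isomorphic to it, $\cMap(\bas{A},\bas{B})$ is a domain, completing the proof.
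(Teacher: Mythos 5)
Your proposal is correct and follows essentially the same route as the paper: both define the isomorphism by sending an ideal of finite step mappings to the union of its members, check that this union is approximable, and establish bijectivity (the paper argues injectivity via a witness pair and surjectivity via the set of finite step maps below a given mapping, which is exactly your $\Psi$). Your version simply makes the inverse explicit and fills in the directedness and minimality details that the paper leaves to the reader.
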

\proof 
We must establish an isomorphism between the domain determined
by $\bas{A}\amap\bas{B}$ and the partial order of mappings
from $\bas{A}$ to $\bas{B}$.  Let $h: \dom{A}\amap\dom{B} \map
\cMap(\bas{A},\bas{B})$ be the function 
defined by rule
\[h \: \F = \Union \setof{F \in \F} \,.\]
It is easy to confirm that
the relation on the right hand side of the preceding equation is an
approximable mapping: if it violated any of the closure properties,
so would a finite approximation in $\F$.
We must prove that the function $h$ is one-to-one and onto.
To prove the former, we note that
each pair of distinct ideals has a witness $\pair ab$ that belongs
to a set in one ideal but not in any set in the other.  Hence,
the images of the two ideals are distinct.  The function $h$
is onto because every approximable mapping is the image of
the set of finite step maps that approximate it.
\eop

The preceding theorem can be restated in terms of continuous functions.
\begin{cor} (to Theorem~\ref{3.10})
The domain of ideals 
determined by the finitary basis $(\dom{A}\map_c\dom{B})^0$
is isomorphic to the partial order of
continuous functions $\dom{A} \map_c \dom{B}$.
Hence, $\dom{A} \map_c \dom{B}$ is a domain.
\end{cor}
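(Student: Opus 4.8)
The plan is to obtain this corollary with no new construction at all, simply by transporting Theorem~\ref{3.10} across the isomorphism between approximable mappings and continuous functions supplied by Theorem~\ref{2.7}. Every object named in the statement has an approximable-mapping counterpart, and the isomorphisms already in hand can be chained together.

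First I would check that the finitary bases $\bas{A}\amap\bas{B}$ and $(\dom{A}\map_c\dom{B})^0$ are themselves isomorphic. The witness is the map $\fun$ of Theorem~\ref{2.7} restricted to finite step mappings: by the definition of a finite step function, $f$ is finite exactly when $f=\fun(F)$ for some finite step mapping $F$, so $\fun$ restricts to a bijection between the two universes. This bijection preserves the approximation ordering in both directions, since the order on finite step mappings is just the subset order on relations (note $F\appx G$ unfolds to $\fa a\in\bas{A}\; F(a)\sub G(a)$, i.e.\ $F\sub G$), and part~1(b) of Lemma~\ref{2.8} states precisely that $F\sub G \iff \fa a\in\bas{A}\; \fun(F)(a)\appx\fun(G)(a)$, which by continuity is the defining order on finite step functions. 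Hence $\bas{A}\amap\bas{B}\iso(\dom{A}\map_c\dom{B})^0$.

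Next I would invoke the general fact that isomorphic finitary bases determine isomorphic domains of ideals. Because the ideal completion is defined purely from the approximation ordering (downward closure together with closure under finite lubs), an order isomorphism of bases carries ideals to ideals image-wise and so lifts to an isomorphism of the two domains of ideals. Applying this to the previous step, the domain of ideals over $(\dom{A}\map_c\dom{B})^0$ is isomorphic to the domain of ideals over $\bas{A}\amap\bas{B}$. Composing with Theorem~\ref{3.10} (the latter is isomorphic to $\cMap(\bas{A},\bas{B})$) and then with Theorem~\ref{2.7} ($\cMap(\bas{A},\bas{B})\iso\dom{A}\map_c\dom{B}$) chains three isomorphisms to yield that the domain of ideals over $(\dom{A}\map_c\dom{B})^0$ is isomorphic to $\dom{A}\map_c\dom{B}$, which is exactly the assertion that $\dom{A}\map_c\dom{B}$ is a domain.

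The one step deserving genuine care, and the main (if minor) obstacle, is the lifting claim in the previous paragraph: that an order isomorphism of finitary bases induces an isomorphism of their ideal completions. Here I would verify that the induced map on ideals is well defined (the image of an ideal is again an ideal), bijective, and order preserving under the subset ordering. Everything else is bookkeeping, including the harmless mismatch between quantifying the function order over all of $\dom{A}$ versus over the basis $\bas{A}$, which continuity reconciles; the bijection of universes and the two-way preservation of order both drop straight out of Theorem~\ref{2.7} and Lemma~\ref{2.8}, so no fresh combinatorial argument is required.
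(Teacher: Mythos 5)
Your proposal is correct and matches the paper's intent exactly: the paper offers no separate argument for this corollary, treating it as a direct restatement of Theorem~\ref{3.10} across the basis/domain and mapping/function correspondences of Theorem~\ref{2.7}, which is precisely the chain of isomorphisms you spell out. Your additional care about the base-level isomorphism $\bas{A}\amap\bas{B}\iso(\dom{A}\map_c\dom{B})^0$ and its lift to ideal completions just makes explicit what the paper leaves implicit.
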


Now that we have defined the approximable mapping and continous
function domain constructions,
we can show that operators on maps and functions
introduced in Section 2 are continuous functions.
\begin{thm}\label{3.11}
Given finitary bases, $\bas{A}$ and $\bas{B}$, there is an
approximable mapping
\[Apply:((\bas{A}\amap\bas{B}) \times \bas{A}) \times \bas{B}\]
such that for all $F:\bas{A}\amap\bas{B}$ and $a\in\bas{A}$,
\[Apply[F,a] = F(a)\,.\]
Recall that for any approximable mapping $G \sub \bas{C} \times
\bas{D}$ and any element $c \in \bas{C}$
\[G(c) = \setof{d \; \such \; c \: G \: d}\,.\]
\end{thm}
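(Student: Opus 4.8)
The plan is to exhibit $Apply$ explicitly as a relation on basis elements and then check the four closure conditions in the definition of an approximable mapping; no separate closure step will be needed, since the conditions hold on the nose. A typical basis element of the source basis $(\bas{A}\amap\bas{B})\times\bas{A}$ is a pair $[F,a]$ where $F$ is a finite step mapping and $a\in\bas{A}$. I would define
\[[F,a]\:Apply\:b \iff a\:F\:b\,,\]
equivalently $[F,a]\:Apply\:b \iff b\in F(a)$. With this definition the desired equation is immediate: $Apply[F,a]=\setof{b\such [F,a]\:Apply\:b}=\setof{b\such a\:F\:b}=F(a)$.

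It then remains to verify the four conditions. First I would observe that the bottom element of $(\bas{A}\amap\bas{B})\times\bas{A}$ is $[\bot,\bot_A]$, where $\bot=\setof{\pair a{\bot_B}\such a\in\bas{A}}$ is the least finite step mapping; since $\bot_A\:\bot\:\bot_B$, we get $[\bot,\bot_A]\:Apply\:\bot_B$, giving condition~1. Conditions~2 and~3 (closure of outputs under finite lubs, and downward closure of outputs) transfer directly from the corresponding closure properties of the fixed finite step mapping $F$: if $a\:F\:b$ and $a\:F\:b'$ then $b,b'$ lie in the ideal $F(a)$, hence are consistent, and $a\:F\:(b\lub b')$; and if $a\:F\:b$ with $b'\appx b$ then $a\:F\:b'$.

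The only condition requiring real care is monotonicity (condition~4): from $[F,a]\:Apply\:b$ and $[F,a]\appx[F',a']$ I must derive $[F',a']\:Apply\:b$. Here I would first record the small lemma that the ordering on the mapping basis coincides with relational containment, i.e. $F\appx F'\iff F\sub F'$: by the definition of the mapping basis, $F\appx F'$ means $F(a)\sub F'(a)$ for every $a$, and since $F(a)=\setof{b\such a\:F\:b}$ this is exactly $F\sub F'$ as sets of pairs. Given this, $a\:F\:b$ together with $F\sub F'$ yields $a\:F'\:b$; then, because $F'$ is itself an approximable mapping and $a\appx a'$, condition~4 for $F'$ gives $a'\:F'\:b$, whence $[F',a']\:Apply\:b$. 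This is the heart of the argument, and the main obstacle is not any single hard computation but precisely this bookkeeping: recognizing that order in the function basis is literally inclusion of relations is what lets the monotonicity in the \emph{argument} pair $[F,a]$ reduce to the built-in monotonicity of the component mappings.

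Finally, I would remark that the identity $Apply[F,a]=F(a)$ propagates from basis elements to arbitrary domain elements $F\in\dom{A}\amap\dom{B}$ and $a\in\dom{A}$ by continuity: both sides are continuous in $[F,a]$ (the left by the correspondence between approximable mappings and continuous functions of Theorem~\ref{2.7}, the right because application is continuous in both the mapping and the argument), and they agree on the principal/finite elements whose lubs generate $F$ and $a$, so they agree everywhere.
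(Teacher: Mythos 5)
Your proposal is correct and follows essentially the same route as the paper: you define $[F,a]\:Apply\:b \iff a\:F\:b$ and read off $Apply[F,a]=F(a)$ directly, exactly as the text does. The paper leaves the verification that $Apply$ is approximable ``to the reader,'' and your check of the four conditions---in particular the observation that the mapping-basis order is relational containment, so monotonicity in $[F,a]$ reduces to monotonicity of $F'$---is precisely the intended filling-in of that gap.
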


\proof For $F\in(\bas{A}\amap\bas{B})$, $a\in\bas{A}$ and $b\in\bas{B}$,
define the $Apply$ relation 
as follows:
\[[F,a]\:Apply\:b \iff a\:F\:b\,.\]
It is easy to verify that $Apply$ is an approximable mapping; this
step is left to the reader.
From the definition of $Apply$, we deduce
\[Apply[F,a] = \setof{b \; \such \; [F,a]\: Apply \: b} =
\setof{b \; \such \; a \: F \: b} = F(a)\,.\]
\eop

This theorem can be restated in terms of continuous functions.

\begin{cor}\label{3.11.1}
Given domains, $\dom{A}$ and $\dom{B}$, there is a
continuous function
\[apply:((\dom{A}\map_c\dom{B}) \times \dom{A}) \map_c \dom{B}\]
such that for all $f:\dom{A}\map_c\dom{B}$ and $a\in\dom{A}$,
\[apply[f,a] = f(a)\,.\]
\end{cor}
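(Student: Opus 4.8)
The plan is to produce $apply$ as the continuous function determined by the approximable mapping $Apply$ of Theorem~\ref{3.11}, and then to verify the defining equation by unwinding the image construction. First I would note that $Apply$ is an approximable mapping over the finitary basis $((\bas{A}\amap\bas{B})\times\bas{A})\times\bas{B}$, which is legitimate because $\bas{A}\amap\bas{B}$ is a finitary basis by Theorem~\ref{3.9} and products of finitary bases are finitary bases by Theorem~\ref{3.2}. Theorem~\ref{2.7} then supplies a corresponding continuous function
\[ apply \; = \; \fun(Apply) \; : \; ((\dom{A}\amap\dom{B})\times\dom{A}) \map_c \dom{B}, \]
characterized by $apply(d) = \apply(Apply,d)$ for every ideal $d$ in the source domain.

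Next I would rewrite the source in the form stated. By Theorem~\ref{3.10} the domain $\dom{A}\amap\dom{B}$ of ideals over $\bas{A}\amap\bas{B}$ is isomorphic to $\cMap(\bas{A},\bas{B})$, which in turn is isomorphic to $\dom{A}\map_c\dom{B}$ by Theorem~\ref{2.7}; the product construction then identifies $(\dom{A}\amap\dom{B})\times\dom{A}$ with $(\dom{A}\map_c\dom{B})\times\dom{A}$, matching the corollary. Under this chain of isomorphisms a point $f$ of $\dom{A}\amap\dom{B}$ is an ideal of finite step mappings, and the continuous function it names has underlying approximable mapping $\Union f$, exactly as in the proof of Theorem~\ref{3.10}.

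The remaining, and principal, task is to verify $apply[f,a] = f(a)$. Using the image construction, $apply[f,a] = \apply(Apply,[f,a])$, which unfolds to $\setof{b\in\bas{B} \such \exists [F,a']\in[f,a] \; [F,a']\,Apply\,b}$. Membership in the product ideal gives $[F,a']\in[f,a] \iff F\in f \mand a'\in a$, and the definition of $Apply$ gives $[F,a']\,Apply\,b \iff a'\,F\,b$, so this set equals $\setof{b \such \exists F\in f,\, a'\in a \; a'\,F\,b}$. Since $\Union f$ is the approximable mapping corresponding to $f$, the clause $\exists F\in f\; a'\,F\,b$ is precisely $a'\,(\Union f)\,b$; hence the set is $\setof{b \such \exists a'\in a \; a'\,(\Union f)\,b} = \apply(\Union f,a) = f(a)$, as required.

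I expect the only genuine subtlety to lie in this last identification: one must track the isomorphisms of Theorems~\ref{2.7} and~\ref{3.10} carefully enough to be sure that the ideal $f$ is sent to the continuous function whose underlying mapping is exactly $\Union f$, so that the unfolded image truly computes $f(a)$ rather than some function merely below it. Once that bookkeeping is settled, continuity of $apply$ requires no separate argument, being inherited directly from Theorem~\ref{2.7}.
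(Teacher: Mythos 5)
Your proposal is correct and follows essentially the same route as the paper: take $apply$ to be the continuous function corresponding to the approximable mapping $Apply$, then unwind the image construction on the ideal $[f,a]$ to recover $f(a)$. The only cosmetic difference is that the paper finishes with a two-inclusion argument (using the least finite step mapping containing a given pair $(a',b)$ for the harder inclusion), whereas you collapse the existential directly via the identification of the ideal $f$ with the approximable mapping $\Union\setof{F \such F\in f}$ from Theorem~\ref{3.10}; both discharge the same final set equality.
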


\proof (of corollary).
Let $apply: ((\dom{A}\map_c \dom{B}) \times \dom{A}) \map_c \dom{B}$
be the continuous function (on functions rather
than relations!) corresponding to $Apply$.
From the definition of $apply$ and
Theorem~\ref{2.7} which relates approximable mappings on
finitary bases to 
continous functions over the corresponding domains, we know that
\[apply[f,\ideal_A]=\{b\in\bas{B}\such\exists F'\in
(\bas{A}\amap\bas{B}) \ ; \exists
a\in \ideal_A \mand F' \sub  F \mand [F',a]\:Apply\:b\}\]
where $F$ denotes the approximable mapping corresponding to $f$.
Since $f$ is the continuous function corresponding to $F$,
\[f(\ideal_A) = \{b\in\bas{B}\such \exists a\in\ideal_A \; a\:F\:b\}\]
So, by the definition of the $Apply$ relation, $apply[f,\ideal_A]\sub
f(\ideal_A)$.  For every $b\in f(\ideal_A)$, there exists
$a \in \ideal_A$ such that $a\:F\:b$.  Let $F'$ be the least
approximable mapping such that $a \: F' b$.  By definition, $F'$ is
a finite step mapping.  Hence $b \in apply[f,\ideal_A]$, implying
$f(\ideal_A) \sub apply[f,\ideal_A]$.  Therefore, $
f(\ideal_A) = apply[f,\ideal_A]$ for arbitrary $\ideal_A$. \eop

The preceding theorem and
corollary demonstrate that approximable mappings and continuous functions
can operate on other approximable mappings or continuous functions
just like other data objects.  The next theorem
shows that the {\it currying\/} operation is a continuous
function.

\begin{mdef} {The Curry Operator}
Let \bas{A}, \bas{B}, and \bas{C} be finitary bases.
Given an approximable mapping
$G$ in the
basis $(\bas{A}\times\bas{B})\amap\bas{C}$, 
\[Curry_G:\bas{A}\amap(\bas{B}\amap\bas{C})\]
is the relation defined by the equation
\[Curry_G(a) = \setof {F \in \bas{B}\amap\bas{C} \; \such \;
\fa [b,c] \in F \; [a,b] \: G \: c}\]
for all $a \in \bas{A}$.
Similarly, given any continuous function
$g: (\dom{A}\times\dom{B})\map_c\dom{C}$,
\[curry_g:\dom{A}\map_c(\dom{B}\map_c\dom{C})\]
is the function defined by the equation
\[curry_g(\ideal_A) = (y \mapsto g[\ideal_A,y])\,.\]
By Theorem~2.7??, $(y \mapsto g[\ideal_A,y])$ is a 
continous function.
\end{mdef}

\begin{lem}
$Curry_G$ is an approximable mapping
and $curry_g$ is the continuous function determined
by $Curry_G$.
\end{lem}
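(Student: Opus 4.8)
The plan is to reformulate the defining condition of $Curry_G$ so that its content is exposed. For each $a\in\bas{A}$ I would introduce the relation
\[H_a = \setof{[b,c]\in\bas{B}\times\bas{C}\such [a,b]\:G\:c}\,,\]
so that, unwinding the definition, $a\:Curry_G\:F \iff F\sub H_a$. First I would check that each $H_a$ is itself an approximable mapping from $\bas{B}$ to $\bas{C}$: its four closure conditions translate line-for-line into the four conditions on $G$. Closure of the image under lubs comes from condition~2 on $G$, downward closure from condition~3, and monotonicity in $b$ from condition~4 (using $[a,b]\appx[a,b']$ whenever $b\appx b'$). I would also record the monotonicity fact $a\appx a'\imp H_a\sub H_{a'}$, which is once more condition~4 on $G$ applied to $[a,b]\appx[a',b]$.

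With this reformulation the four conditions for $Curry_G$ become routine. Condition~1, namely $\bot_A\:Curry_G\:\bot_{\bas{B}\amap\bas{C}}$, reduces to showing each pair $[b,\bot_C]$ of the bottom step mapping lies in $H_{\bot_A}$, i.e. $[\bot_A,b]\:G\:\bot_C$; this follows from $[\bot_A,\bot_B]\:G\:\bot_C$ by condition~4 on $G$. Conditions~3 and~4 are immediate from the fact that for step mappings the pointwise order coincides with inclusion ($F'\appx F\iff F'\sub F$, as in Lemma~\ref{2.8}) together with $H_a\sub H_{a'}$. The one condition that uses the structure of the mapping basis is condition~2: given $F\sub H_a$ and $F'\sub H_a$, I must show $F\lub F'\sub H_a$. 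Here I would invoke that $F\lub F'$ is the \emph{least} approximable mapping containing $F\union F'$ (its existence as a finite step mapping is guaranteed, since $F\union F'\sub H_a$ and $H_a$ is approximable, hence consistent); as $H_a$ is an approximable mapping containing $F\union F'$, minimality forces $F\lub F'\sub H_a$.

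For the second assertion I would compute the continuous function determined by $Curry_G$ directly. By the $\apply$ construction its value at an ideal $d\in\dom{A}$ is the ideal $\apply(Curry_G,d)=\setof{F\such \te a\in d\,,\,F\sub H_a}$ of the function domain $\dom{B}\amap\dom{C}$. Under the isomorphism of Theorem~\ref{3.10} this ideal corresponds to the approximable mapping $\Union\setof{F\such\te a\in d\,,\,F\sub H_a}$, which equals $\Union\setof{H_a\such a\in d}$ because every approximable mapping is the union of the finite step mappings below it. It then remains to identify $\Union\setof{H_a\such a\in d}$ with the approximable mapping of $y\mapsto g[d,y]$, that is, with $\setof{[b,c]\such \ideal_c\appx g[d,\ideal_b]}$, where $g$ is the function determined by $G$ and $g[d,\ideal_b]=\setof{c\such\te a\in d\,,\,\te b'\appx b\,,\,[a,b']\:G\:c}$.

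The main obstacle is precisely this last identification. One inclusion is immediate by taking $b'=b$. The reverse inclusion, which must collapse the extra existential quantifier over $b'\appx b$, is where I would again lean on monotonicity of $G$: from $b'\appx b$ we get $[a,b']\appx[a,b]$, so condition~4 on $G$ upgrades $[a,b']\:G\:c$ to $[a,b]\:G\:c$, placing $[b,c]$ in $H_a$ and hence in $\Union\setof{H_a\such a\in d}$. This shows the two approximable mappings coincide, so the continuous function determined by $Curry_G$ sends $d$ to the function $y\mapsto g[d,y]=curry_g(d)$, which is exactly the claim.
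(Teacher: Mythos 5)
Your argument is correct, and there is nothing in the paper to compare it against: the paper dismisses this lemma with ``A straightforward exercise'' and supplies no proof. Your write-up is a sound filling-in of that exercise, and it handles the two genuinely non-trivial points correctly --- the existence of $F\lub F'$ in the mapping basis (obtained as the least approximable mapping containing $F\union F'$, which exists because $H_a$ is an approximable upper bound) and the collapse of the extra existential over $b'\appx b$ via monotonicity of $G$ when identifying the induced function with $curry_g$.
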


\proof A straightforward exercise.\eop\\

It is more convenient to discuss the {\it currying} operation
in the context of continuous functions than approximable mappings.

\begin{thm}
Let $g \in (\dom{A}\times\dom{B})\map_c\dom{C}$ and
$h \in  (\dom{A}\map_c(\dom{B}\map_c\dom{C})$.
The $curry$ operation satisfies the following two equations:
\begin{eqnarray*}
apply\circ\langle curry_g\circ p_0,p_1\rangle  & = & g\\
curry_{apply\circ\langle h\circ p_0,p_1\rangle} & = & h\, .
\end{eqnarray*}
In addition, the function
\[curry:(\dom{A}\times\dom{B}\map\dom{C})\map
(\dom{A}\map_c(\dom{B}\map_c\dom{C}))\]
defined by the equation
\[curry(g)(\ideal_A)(\ideal_B) = curry_g(\ideal_A)(\ideal_B)\]
is continuous.
\end{thm}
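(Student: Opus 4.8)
The plan is to verify the two displayed identities by direct pointwise computation, and then to obtain continuity of $curry$ by reducing it to the fact that least upper bounds in the relevant function domains are taken pointwise. Throughout I use that the ordering on continuous functions is pointwise, so that two continuous functions are equal iff they agree on every element, and I freely invoke the preceding lemma, which guarantees that $curry_g$ is a well-defined continuous function in $\dom{A}\map_c(\dom{B}\map_c\dom{C})$, together with the continuity of $apply$ (Corollary~\ref{3.11.1}).

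For the first identity, fix an arbitrary $[\ideal_A,\ideal_B]\in\dom{A}\times\dom{B}$. Unfolding the paired function and the projections gives $\langle curry_g\circ p_0,p_1\rangle[\ideal_A,\ideal_B]=[curry_g(\ideal_A),\ideal_B]$, and applying $apply$ yields
\[apply\circ\langle curry_g\circ p_0,p_1\rangle[\ideal_A,\ideal_B]=curry_g(\ideal_A)(\ideal_B)=g[\ideal_A,\ideal_B]\,,\]
using $curry_g(\ideal_A)=(y\mapsto g[\ideal_A,y])$ in the last step. Since $[\ideal_A,\ideal_B]$ is arbitrary, the two functions coincide, proving the first equation. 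For the second, set $g'=apply\circ\langle h\circ p_0,p_1\rangle$; the identical computation with $h$ replacing $curry_g$ gives $g'[\ideal_A,\ideal_B]=h(\ideal_A)(\ideal_B)$ for all $\ideal_A,\ideal_B$. Hence
\[curry_{g'}(\ideal_A)=(y\mapsto g'[\ideal_A,y])=(y\mapsto h(\ideal_A)(y))=h(\ideal_A)\,,\]
the final step being extensionality of continuous functions. As this holds for every $\ideal_A$, we conclude $curry_{g'}=h$.

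Finally I would prove that $curry$ is continuous. First note that $(\dom{A}\times\dom{B})\map_c\dom{C}$ is a domain (the corollary to Theorem~\ref{3.10}), and that least upper bounds of directed families in this domain, and likewise in $\dom{A}\map_c(\dom{B}\map_c\dom{C})$ and $\dom{B}\map_c\dom{C}$, are computed pointwise; this follows from the pointwise definition of the approximation ordering on functions together with Theorem~\ref{union}, which identifies directed lubs with directed unions. Monotonicity of $curry$ is then immediate from the identity $curry(g)(\ideal_A)(\ideal_B)=g[\ideal_A,\ideal_B]$, and it guarantees that $\{curry(f):f\in S\}$ is directed for any directed $S$. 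To establish continuity, let $S$ be the directed set of finite step functions approximating a given $g$, so that $g=\Lub S$ by Theorem~\ref{idleqthm}. Evaluating at arbitrary $\ideal_A,\ideal_B$ and pushing the pointwise lub through each level gives
\[curry(\Lub S)(\ideal_A)(\ideal_B)=(\Lub S)[\ideal_A,\ideal_B]=\Lub\{f[\ideal_A,\ideal_B]:f\in S\}\,,\]
which equals $(\Lub\{curry(f):f\in S\})(\ideal_A)(\ideal_B)$ again by pointwise lubs; hence $curry(\Lub S)=\Lub\{curry(f):f\in S\}$, exactly the continuity condition of the Continuous Function definition.

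The first two equations are routine definition-chasing. The main obstacle is the continuity argument, and specifically the justification that the least upper bound is pointwise at each of the three nested function-space levels; once that bookkeeping is settled, pushing the lub through the evaluation $curry(g)(\ideal_A)(\ideal_B)=g[\ideal_A,\ideal_B]$ is mechanical.
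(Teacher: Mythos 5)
Your proof is correct. The verification of the two displayed equations is essentially the same definition-chasing the paper performs (your treatment of the second equation is organized a little differently --- you compute $g'=apply\circ\langle h\circ p_0,p_1\rangle$ directly and show $curry_{g'}=h$, whereas the paper defines $g'[a,b]=h\,a\,b$, shows $curry_{g'}=h$, and then invokes the first equation to identify $g'$ with $apply\circ\langle h\circ p_0,p_1\rangle$ --- but the content is identical). Where you genuinely diverge is the continuity claim. The paper observes that the two equations exhibit $curry$ as a bijection whose inverse is $h\mapsto apply\circ\langle h\circ p_0,p_1\rangle$, notes that $curry(g)\below curry(g')\iff g\below g'$, and concludes that $curry$ is an order isomorphism and therefore continuous; this is short but silently relies on the fact that an order isomorphism between domains preserves all directed lubs. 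You instead verify the continuity condition directly, pushing a directed lub through the identity $curry(g)(\ideal_A)(\ideal_B)=g[\ideal_A,\ideal_B]$ at each of the three nested function-space levels. Your route is more elementary and self-contained in spirit, but it shifts the burden onto the claim that directed lubs in $(\dom{A}\times\dom{B})\map_c\dom{C}$, $\dom{B}\map_c\dom{C}$, and $\dom{A}\map_c(\dom{B}\map_c\dom{C})$ are computed pointwise; that is true in this framework (it is essentially Theorem~\ref{3.10} together with the fact that a directed union of approximable mappings is approximable, cf.\ Exercise~\ref{2.18}), but it deserves an explicit citation rather than the passing appeal to Theorem~\ref{union}, which concerns ideals over a single finitary basis rather than function spaces. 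With that reference supplied, both arguments are complete; the paper's buys brevity at the cost of an unstated general fact about order isomorphisms, yours buys transparency at the cost of the pointwise-lub bookkeeping you yourself flag.
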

\proof Let $g$ be any continuous function
in the domain $(\dom{A}\times\dom{B})\map_c\dom{C}$.
Recall that 
\[curry_g(a) = (y \mapsto g[a,y])\,.\]
Using this definition and the definition of operations in the
first equation, we can deduce
\[\begin{array}{lcl}
apply\circ\langle curry_g\circ p_0,p_1\rangle [a,b]
& = & apply[\langle curry_g \circ p_0, p_1 \rangle [a,b]] \\
& = & apply[(curry_g \circ p_0)[a,b], p_1[a,b]] \\
& = & apply[curry_g p_0[a,b], b] \\
& = & apply[curry_g \: a, b] \\
& = & curry_g \: a \: b \\
& = & g[a,b]\,.
\end{array}\]
Hence, the first equation holds.

The second equation follows almost immediately from
the first.  
Define $g':(\dom{A}\times\dom{B})\map_c\dom{C}$ by the equation
\[g'[a,b] = h \; a \; b\,.\]
The function $g'$ is defined so that $curry_{g'} = h$.  This
fact is easy to prove.  For
$a \in \dom{A}$:
\begin{eqnarray*}
curry_{g'}(a) & = & (y \mapsto g'[a,y])\\
		& = & (y \mapsto h (a)(y))\\
		& = & h(a) \, .
\end{eqnarray*}
Since $h = curry_{g'}$, the first equation implies that
\begin{eqnarray*}
apply\circ\langle h\circ p_0,p_1\rangle & = &
apply\circ\langle curry_{g'}\circ p_0,p_1\rangle \\
& = & g'\,.
\end{eqnarray*}
Hence,
\[
curry_{apply}\circ\langle h\circ p_0,p_1\rangle = curry_{g'} = h\,.
\]

These two equations show that $(\dom{A}\times\dom{B})\map_c\dom{C}$
is isomorphic to $(\dom{A}\map_c(\dom{B}\map_c\dom{C})$ under the
$curry$ operation.  In addition, the
definition of $curry$ shows that
\[curry(g) \below curry(g') \iff g \below g'\,.\]
Hence, $curry$ is an isomorphism.  Moreover,
$curry$ must be continuous. \eop

The same theorem can be restated in terms of approximable mappings.

\begin{cor}
The relation $Curry_G$ satisfies the following two equations:
\begin{eqnarray*}
Apply\circ\langle Curry_G \circ \cP_0,\cP_1\rangle  & = & G\\
Curry_{Apply\circ\langle G\circ \cP_0,\cP_1\rangle} & = & G\, .
\end{eqnarray*}
In addition, the relation
\[Curry:(\bas{A}\times\bas{B})\amap\bas{C})\amap
(\bas{A} \amap (\bas{B}\amap\bas{C}))\]
defined by the equation
\[Curry(G) = \setof{[a,F] \; \such \; a \in \bas{A}, \: 
F \in (\bas{B}\amap\bas{C}), \; \fa [b,c] \in F \:
[a,b]\: G \: c }\]
is approximable.
\end{cor}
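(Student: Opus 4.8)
The plan is to derive this corollary from the preceding theorem by transporting its two equations backward across the isomorphism between approximable mappings and continuous functions. Recall that Theorem~\ref{2.7} supplies, for each pair of finitary bases, an isomorphism $\fun$ between $\cMap(\bas{A},\bas{B})$ and $\dom{A}\map_c\dom{B}$, and that Theorem~\ref{3.10} identifies the function basis $\bas{A}\amap\bas{B}$ with $\cMap(\bas{A},\bas{B})$ as a domain. Thus every approximable mapping named in the corollary corresponds, under $\fun$, to exactly one continuous function named in the theorem, and conversely.

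First I would check that each operator occurring in the corollary is the $\fun$-image of the corresponding operator in the theorem. This amounts to collecting facts already established: $Apply$ corresponds to $apply$ by Corollary~\ref{3.11.1}; $Curry_G$ corresponds to $curry_g$ by the lemma immediately preceding the theorem; composition of approximable mappings corresponds to composition of continuous functions by Theorem~\ref{2.11} and Corollary~\ref{2.12}; and the paired mapping $\langle\cdot,\cdot\rangle$ together with the projections $\cP_0,\cP_1$ correspond to the paired function $\langle\cdot,\cdot\rangle$ and the projections $p_0,p_1$ by Theorem~\ref{3.4}. Because $\fun$ is a bijection that commutes with every one of these constructions, the two relational identities claimed in the corollary are precisely the $\fun^{-1}$-images of the two continuous-function identities $apply\circ\langle curry_g\circ p_0,p_1\rangle = g$ and $curry_{apply\circ\langle h\circ p_0,p_1\rangle} = h$, once $g$ is taken to correspond to the uncurried $G$ in the first line and $h$ to the curried $G$ in the second. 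Hence both equations hold.

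For the final assertion that the relation $Curry$ is approximable, I would again invoke the isomorphism: the preceding theorem already shows that the elementwise map $curry$ is a continuous function, so the relation corresponding to it under Theorem~\ref{2.7} is an approximable mapping, and unwinding the correspondence shows that this relation is exactly the set $Curry(G)$ displayed in the statement. Alternatively, one can verify the four closure conditions of the definition of an approximable mapping directly from the displayed formula for $Curry(G)$, using the closure properties of $G$ and of the step mappings $F\in\bas{B}\amap\bas{C}$.

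The bulk of the argument is bookkeeping rather than mathematics: the one step that warrants genuine care is confirming that $\fun$ respects \emph{all} of the derived operators simultaneously, so that the two sides of each relational equation really are the $\fun^{-1}$-images of the two sides of the corresponding continuous-function equation. Since each individual correspondence has been proved in an earlier result, no new difficulty arises; the main obstacle is simply keeping the types straight, in particular distinguishing the two roles played by the single letter $G$ (uncurried in the first equation, curried in the second).
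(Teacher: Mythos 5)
Your proposal is correct and matches the paper's (implicit) argument: the paper offers no separate proof for this corollary, presenting it simply as the restatement of the preceding theorem under the isomorphism of Theorem~2.7 between approximable mappings and continuous functions, which is exactly the transport you carry out. The individual correspondences you cite ($Apply\leftrightarrow apply$, $Curry_G\leftrightarrow curry_g$, composition, pairing and projections) are each established earlier in the paper, so spelling them out as you do is the right way to make the omitted proof explicit.
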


\begin{table}
	\begin{tabular}{|l|l|l|l|l|l|}
		\hline
		Domain   &  Elem. &  Finite Elem.   & Cont. Func.  & Fin. Step Func. & Func. Dom. \\
		\hline
		Fin. Bas. & Ideal & Princ. Ideal & Appr. Map. & Fin. Step Map. & Map. Basis  \\ \hline
	\end{tabular}
	\caption{\label{tabdomfb}Domains and Finitary Bases}
\end{table}
Table~\ref{tabdomfb} summarizes the main elements of the correspondence between 
domains and 
finitary bases. Whenever convenient, in the following sections we take liberty to confuse corresponding notions. Context and notation should make clear which category is meant.

\vspace{.25in}
\begin{center}
{\bf Exercises}
\end{center}
\begin{exer}\label{3.14}
We assume that there is a countable basis.  Thus, the basis elements
could without loss of generality be defined in terms of $\{0,1\}^*$.
Show that the product space $\bas{A}\times\bas{B}$ could be defined as
a finitary basis over $\{0,1\}^*$ such that
\[\bas{A}\times\bas{B}=\{[0a,1b]\such a\in\bas{A},b\in\bas{B}\}\]
Give the appropriate definition for the elements in the domain. Also
show that there exists an approximable mapping
$diag:\bas{D}\map\bas{D}\times\bas{D}$ where $diag(x)= [x,x]$ for all
$x\in\domd$. 
\end{exer}
\begin{exer}\label{3.15}
Establish some standard isomorphisms:
\begin{enumerate}
\item $\bas{A}\times\bas{B}\iso\bas{B}\times\bas{A}$
\item
$\bas{A}\times(\bas{B}\times\bas{C})\iso(\bas{A}\times\bas{B})\times\bas{C}$
\item
$\bas{A}\iso\bas{A'},\bas{B}\iso\bas{B'}\imp\bas{A}\times\bas{B}\iso\bas{A'}\times\bas{B'}$
\end{enumerate}
for all finitary bases.
\end{exer}
\begin{exer}\label{3.16}
Let $B\sub \{0,1\}^*$ be a finitary basis.  Define 
\[B^\infty=\bigcup\limits_{n=0}^\infty 1^n0B\]
Thus, $B^\infty$ contains infinitely many disjoint copies of $B$.  Now
let $D^\infty$ be the least family of subsets over $\{0,1\}^*$ such
that 
\begin{enumerate}
\item $B^\infty\in D^\infty$
\item if $b\in \bas{B}$ and $d\in D^\infty$, then $0X\cup 1Y\in
D^\infty$.
\end{enumerate}
Show that, with the superset relation as the approximation ordering,
$D^\infty$ is a finitary basis.  State any assumptions that must be
made.  Show then that $D^\infty\iso D\times D^\infty$.
\end{exer}
\begin{exer}\label{3.18}
Using the product construction as a guide, generate a definition for the
separated sum structure $\bas{A}+\bas{B}$.  Show that there are mappings
$in_A:\bas{A}\map\bas{A}+\bas{B}$, $in_B:\bas{B}\map\bas{A}+\bas{B}$,
$out_A:\bas{A}+\bas{B}\map\bas{A}$, and
$out_B:\bas{A}+\bas{B}\map\bas{B}$ such that $out_A\circ in_A = \ident_A$
where $\ident_A$ is the identity function on $\bas{A}$.
State any necessary assumptions to ensure this function equation is
true.
\end{exer}
\begin{exer}\label{3.19}
For approximable mappings $f:\bas{A}\map\bas{A'}$ and
$g:\bas{B}\map\bas{B'}$, show that there exist approximable mappings,
$f\times g:\bas{A}\times\bas{B}\map\bas{A'}\times\bas{B'}$ and
$f+g:\bas{A}+\bas{B}\map\bas{A'}+\bas{B'}$ such that
\[(f\times g)[a,b] = [f \: a, g \: b]\]
and thus
\[f\times g = \langle f\circ p_0,g\circ p_1\rangle \]
Show also that 
\[out_A\circ (f+g)\circ in_A = f\]
and
\[out_B\circ (f+g)\circ in_B = g\]
Is $f+g$ uniquely determined by the last two equations?
\end{exer}
\begin{exer}\label{3.22}
Prove that the composition operator is an approximable mapping.  That
is, show that
$comp:(\bas{B}\map\bas{C})\times(\bas{A}\map\bas{B})\map(\bas{A}\map\bas{C})$
is an approximable mapping where for $f:\bas{A}\map\bas{B}$ and
$g:\bas{B}\map\bas{C}$, $comp[g,f] = g\circ f$.  Show this using the
approach used in showing the result for $apply$ and $curry$.  That is,
define the relation and then build the function from $apply$ and $curry$,
using $\circ$ and paired functions.  (Hint: Fill in mappings according
to the following sequence of domains.)
\[\begin{array}{c}
(\bas{A}\map\bas{B})\times\bas{A}\map\bas{B}\\
(\bas{B}\map\bas{C})\times((\bas{A}\map\bas{B})\times\bas{A})\map(\bas{B}\map\bas{C})\times\bas{B}\\
((\bas{B}\map\bas{C})\times(\bas{A}\map\bas{B}))\times\bas{A}\map(\bas{B}\map\bas{C})\times\bas{B}\\
((\bas{B}\map\bas{C})\times(\bas{A}\map\bas{B}))\times\bas{A}\map\bas{C}\\
(\bas{B}\map\bas{C})\times(\bas{A}\map\bas{B})\map(\bas{A}\map\bas{C}).
\end{array}\]
This map shows only one possible solution.
\end{exer}
\begin{exer}\label{3.26}
Show that for every domain $\domd$ there is an approximable mapping
\[cond:\bas{T}\times\bas{D}\times\bas{D}\map\bas{D}\]
called the conditional operator such that
\begin{enumerate}
\item $cond[true,a,b]=a$
\item $cond[false,a,b]=b$
\item $cond[\bot_T,a,b]=\bot_D$
\end{enumerate}
and $\bas{T}=\{\bot_T,true,false\}$ such that $\bot_T\appx true$,
$\bot_T\appx false$, but $true$ and $false$ are incomparable.
(Hint: Define a $Cond$ relation).
\end{exer}
\def\far{\map}
\def\bottom{\bot}
\newpage
\section{Fixed Points and Recursion}
\subsection{Fixed Points}
Functions can now be constructed by composing basic functions.
However, we wish to be able to define functions recursively as well.
The technique of recursive definition will also be useful for defining
domains as we will see in Section~\ref{sec6}.  Recursion can be thought of as (possibly infinite) iterated
function composition. The primary result for interpreting recursive definitions is the following {\it Fixed Point
Theorem\/}. 
\begin{thm}\label{4.1}
For any continuous function $f:\domd\far\domd$ determined by an approximable mapping $F:\bas{D}\far\bas{D}$,
there exists a least element $x\in\domd$ such that 
\[f(x) = x.\]
\end{thm}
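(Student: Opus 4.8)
The plan is to build the least fixed point by the standard Kleene iteration, taking the least upper bound of the ascending chain generated by repeatedly applying $f$ to the bottom element. First I would note that $\domd$, being the domain determined by a finitary basis, is a \cpo\ and hence (by the earlier claim) possesses a least element $\bot$. Because $\bot$ is least, $\bot \appx f(\bot)$. Since $f$ is determined by the approximable mapping $F$, condition 4 in the definition of approximable mappings makes $f$ monotonic; a routine induction then gives $f^n(\bot) \appx f^{n+1}(\bot)$ for every $n$, where $f^n$ denotes the $n$-fold iterate. Consequently the set $C = \setof{f^n(\bot) \such n \in \nat}$ is a chain, and in particular a directed subset of $\domd$.

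Next, because $\domd$ is a \cpo, the directed set $C$ has a least upper bound in $\domd$; I set $x = \Lub C$. To verify that $x$ is a fixed point I would invoke the strong continuity property of Theorem~\ref{2.3}, which applies precisely because $C$ is directed:
\[ f(x) = f(\Lub C) = \Lub\setof{f(f^n(\bot)) \such n \in \nat} = \Lub\setof{f^{n+1}(\bot) \such n \in \nat}\,. \]
The final least upper bound coincides with $\Lub C$ because the chain $\setof{f^{n+1}(\bot) \such n \in \nat}$ differs from $C$ only by the element $f^0(\bot) = \bot$, which lies below every other element and therefore cannot lower the least upper bound. Hence $f(x) = x$.

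Finally, to establish that $x$ is the \emph{least} such element, I would take an arbitrary $y \in \domd$ with $f(y) = y$ and show $x \appx y$. Since $\bot$ is least, $\bot \appx y$; applying monotonicity of $f$ together with $f(y) = y$, an induction yields $f^n(\bot) \appx f^n(y) = y$ for all $n$, so $y$ is an upper bound of $C$. As $x$ is by construction the \emph{least} upper bound of $C$, we obtain $x \appx y$, which is exactly the assertion that $x$ is the least element satisfying $f(x) = x$.

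I expect the only delicate points to be the verification that $C$ is directed (so that Theorem~\ref{2.3} is applicable) and the equality $\Lub\setof{f^{n+1}(\bot) \such n \in \nat} = \Lub C$; both are conceptually transparent but must be stated with care, since the entire argument rests on being able to pull $f$ through the least upper bound of a directed set. Monotonicity of $f$, which is what lets us form the chain in the first place, is read off directly from the fourth closure condition on $F$ (equivalently, it follows from continuity).
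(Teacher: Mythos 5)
Your proof is correct, and the fixed point you construct is literally the same object as the paper's: the paper defines $x=\setof{d\in\bas{D}\such\exists n\,.\,\bot\,F^n\,d}$, which is exactly $\Lub\setof{f^n(\bot)\such n\in\nat}$ once one identifies directed unions of ideals with least upper bounds. The execution differs in two ways, though. First, the paper works entirely at the level of the relation $F$ on basis elements: it verifies by hand that $x$ is an ideal, reads off $f(x)\appx x$ directly from the definition of $x$, and then obtains the reverse inequality $x\appx f(x)$ by the classical trick that $f(x)\appx x$ forces $f(f(x))\appx f(x)$, so $f(x)$ is itself a pre-fixed point and leastness applies; it never invokes Theorem~\ref{2.3}. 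You instead stay at the level of the domain, lean on cpo-completeness and Theorem~\ref{2.3} to push $f$ through the lub of the chain, and get $f(x)=x$ in one step --- cleaner, and it makes visible that the argument is really Kleene's theorem for an arbitrary cpo with least element. Second, the paper proves the stronger minimality statement that $f(a)\appx a$ already implies $x\appx a$ (leastness among \emph{pre}-fixed points), which is what Theorem~\ref{4.2} later needs, whereas you only quantify over genuine fixed points $f(y)=y$. This is not a gap for the theorem as stated, and your own induction extends verbatim to pre-fixed points (replace $f^{n+1}(\bot)=f(f^n(\bot))\appx f(y)=y$ by $\appx f(y)\appx y$), but it is worth recording the stronger form since the monograph uses it immediately afterwards.
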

\proof Let $f^n$ stand for the function $f$ composed with itself $n$
times, and similarly for $F^n$.  Thus, for
\[\begin{array}{lcl}
f^0&=&I_\domd~\\
f^{n+1}&=&f\circ f^n\\
F^0&=&\ident_D~{\rm and}\\
F^{n+1}&=&F\circ F^n
\end{array}\]
we define
\[x = \{d\in\bas{D}\such\exists n\in\nat. \bot\:F^n\:d\}.\]
To show that $x\in\domd$, we must show it to be an ideal.  Map $F$
is an approximable mapping, so $\bot\in x$ since
$\bot\:F\:\bot$.  For $d\in x$ and $d'\appx d$, $d'\in x$ must hold
since, for $d\in x$, there must exist an $a\in \bas{D}$ such that
$a\:F\:d$.  But by the definition of an approximable mapping,
$a\:F\:d'$ must hold as well so $d'\in x$. Closure under lubs is
direct since $F$ must include lubs to be approximable.

To see that $f(x)=x$, or equivalently $x\:F\:x$, note that for any $d\in x$, if $d\:F\:d'$, then
$d'\in x$.  Thus, $f(x)\appx x$.  Now, $x$ is constructed to
be the least element in $\domd$ with this property. To see this is
true, let $a\in\domd$ such that $f(a)\appx a$.  We want to show that
$x\appx a$.  Let $d\in x$ be an arbitrary element.  Therefore, there
exists an $n$ such that $\bot\: F^n\:d$ and therefore
\[\bot\:F\:d_1\:F\:d_2\:\ldots\:F\:d_{n-1}\:F\:d.\]
Since $\bot\in a$, $d_1\in f(a)$.  Thus, since $f(a)\appx a$,
$d_1\in a$.  Thus, $d_2\in f(a)$ and therefore $d_2\in a$.  Using
induction on $n$, we can show that $d\in f(a)$.  Therefore, $d\in a$
and thus $x\appx a$.  

Since $f$ is
monotonic and $f(x)\appx x$, $f(f(x))\appx f(x)$.  Since $x$ is the
least element with this property, $x\appx f(x)$ and thus $x=f(x)$.
\eop

Since the element $x$ above is the least element, it must be unique.
Thus we have defined a function mapping the domain $\domd\far\domd$
into the domain $\domd$.  The next step is to show that this mapping
is approximable.
\begin{thm}\label{4.2}
For any domain $\domd$, there is an approximable mapping
\[fix:(\bas{D}\far\bas{D})\far\bas{D}\]
such that if $f:\bas{D}\far\bas{D}$ is an approximable mapping,
\[fix(f) = f(fix(f))\]
and for $x\in\domd$,
\[f(x)\appx x \imp fix(f)\appx x\]
This property implies that $fix$ is unique.  The function $fix$ is
characterized by the equation
\[fix(f)=\bigcup\limits_{n=0}^\infty f^n(\bottom)\]
for all $f:\bas{D}\far\bas{D}$.
\end{thm}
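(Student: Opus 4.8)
The plan is to exhibit the relation $Fix$ explicitly, check that it satisfies the four closure conditions for an approximable mapping, and then verify that the continuous function it induces is precisely the least-fixed-point operator already constructed in Theorem~\ref{4.1}. Concretely, for a finite step mapping $F_0 \in \bas{D}\far\bas{D}$ with determined continuous function $f_0$, and for $d\in\bas{D}$, I would define
\[F_0 \: Fix \: d \iff d \in fix(f_0),\]
where by Theorem~\ref{4.1} the least fixed point is the ideal $fix(f_0)=\bigcup_{n=0}^{\infty} f_0^n(\bottom) \in \domd$. Thus $Fix \sub (\bas{D}\far\bas{D})\times\bas{D}$, and everything reduces to analysing this relation.

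First I would dispatch conditions 1--3, which hold simply because $fix(f_0)$ is an ideal: it contains $\bottom$ (condition~1, taking $F_0$ to be the least step mapping), it is closed under finite lubs (condition~2), and it is downward closed (condition~3). The only condition requiring real work is monotonicity (condition~4): if $F_0 \appx F_1$, I must show $fix(f_0)\sub fix(f_1)$. This follows from a routine induction establishing $f_0^n(\bottom)\appx f_1^n(\bottom)$ for every $n$, using that $F_0\appx F_1$ forces $f_0\appx f_1$ pointwise (Lemma~\ref{2.8}) together with the monotonicity of each continuous function; taking unions over $n$ then yields the inclusion.

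The substantive step is to identify the function induced by $Fix$ and match it to the stated formula. An element of $\dom{D}\far\dom{D}$ is an ideal $\F$ of finite step mappings whose union is an approximable mapping $F$ with determined continuous function $f$ (Theorem~\ref{3.10}), and $\F$ consists of exactly the finite step mappings $\appx F$. The induced function sends $\F$ to $\setof{d \such \exists F_0 \in \F,\; d \in fix(f_0)}$, and I must show this equals $\bigcup_n f^n(\bottom)$. The inclusion into $\bigcup_n f^n(\bottom)$ is immediate from the monotonicity just proved, since every $F_0 \in \F$ satisfies $F_0 \appx F$ and hence $fix(f_0) \sub fix(f)$. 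The reverse inclusion is the main obstacle: given $d \in f^n(\bottom)$, I must manufacture a single finite step mapping in $\F$ that already produces $d$ in its own least fixed point. I would argue by induction on $n$. Unfolding $d \in f^{n+1}(\bottom)=f(f^n(\bottom))$ yields some $a\in f^n(\bottom)$ with $a\:F\:d$; the induction hypothesis supplies a finite step mapping $G \appx F$, with function $g$, such that $a \in g^n(\bottom)$, and I let $H$ be the least step mapping containing the single pair $[a,d]$, so that $H \appx F$ as well. Setting $F_0 = G \lub H$, which lies in $\F$ because $\F$ is closed under finite lubs, monotonicity gives $a\in f_0^n(\bottom)$, and $a\:F_0\:d$ then places $d$ in $f_0^{n+1}(\bottom)\sub fix(f_0)$ with $F_0\in\F$, as required. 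Here continuity of $f$ is exactly what guarantees that the $n$-step production of the single finite element $d$ depends on only finitely much of $F$, so that it can be captured inside $\F$.

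Finally, having shown that the induced function agrees with $f\mapsto\bigcup_n f^n(\bottom)$, I would invoke Theorem~\ref{4.1} to obtain the two defining equations for free: that expression is precisely the least element $x$ satisfying $f(x)=x$, so $fix(f)=f(fix(f))$ holds, and $f(x)\appx x \imp fix(f)\appx x$ holds by minimality, which also delivers the asserted uniqueness. The delicate point throughout is confined to the finitary approximation argument in the reverse inclusion; the remaining verifications are the routine ideal-closure and monotonicity checks sketched above.
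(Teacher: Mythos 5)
Your proof is correct and follows essentially the same route as the paper's: both reduce the theorem to showing that $fix(f)=\bigcup_n f^n(\bot)$ is determined by its values on finite step mappings, and both establish the crucial reverse inclusion by packaging the finite chain $\bot\:F\:x_1\:F\cdots F\:d$ witnessing $d\in fix(f)$ into a single finite step mapping below $F$ whose own least fixed point already contains $d$. Your inductive construction of $F_0=G\lub H$ is a more carefully spelled-out version of the paper's one-shot ``basis element encompassing the step functions required for this sequence,'' and the remaining verifications (approximability of the relation, the two defining equations via the Fixed Point Theorem, uniqueness) match the paper's.
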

\proof The final equation can be simplified to
\[fix(f) = \{d\in\bas{D}\such \exists n\in\nat.\bot\:f^n\:d\}\]
which is the equation used in the previous theorem to define the fixed
point. Using the formula from Exercise~\ref{2.9} on the above
definition for $fix$ yields the following equation to be shown:
\[fix(f)=\bigcup \{fix(\ideal_F)\such\exists
F\in(\bas{D}\far\bas{D}).F\in f\}\]
where $\ideal_F$ denotes the ideal for $F$ in $\bas{D}\far\bas{D}$.

From its definition, $fix$ is monotonic since, if $f\appx g$,
then $fix(f)\appx fix(g)$ since $f^n\appx g^n$.  Since $F\in f$,
$\ideal_F\appx f$ and since $fix$ is monotonic, $fix(\ideal_F)\appx fix(f)$.  

Let $x\in fix(f)$.  Thus, there is a finite sequence of elements
such that $\bot\:f\:x_1\:f\:\ldots\:f\:x'\:f\:x$.  Define $F$ as the
basis element encompassing the step functions required for this
sequence.  Clearly, $F\in f$.  In addition, this same sequence exists in
$fix(\ideal_F)$ since we constructed $F$ to contain it,  and thus, $x\in
fix(\ideal_F)$ and $fix(f)\appx fix(\ideal_F)$.  The equality is therefore
established.  

The first equality is direct from the Fixed Point Theorem since the same
definition is used.  Assume $f(x)\appx x$ for some $x\in\domd$.  Since
$\bot\in x$, $x\neq\emptyset$.  Since $f$ is an approximable
mapping, for $x'\in x$ and $x'\:f\:y$, $y\in x$ must hold.  By
induction, for any $\bot\:f\:y$, $y\in x$ must hold.  Thus, $fix(f)\appx x$.

To see that the operator is unique, define another operator $fax$ that
satisfies the first two equations.  It can easily be shown that
\[\begin{array}{lcll}
fix(f)&\appx& fax(f)~{\rm and}\\
fax(f)&\appx&fix(f)
\end{array}\]
Thus the two operators are the same. \eop

\subsection{Recursive Definitions}
Recursion has played a part already in the definitions above.  Recall that
$f^n$ was defined for all $n\in\nat$.  More complex examples of
recursion are given below.
\begin{exm}\label{4.3}
Define a basis $\bas{N}=\langle N,\appx_N\rangle $ where
\[N=\{\{n\}\such n\in \nat\}\cup \{\nat\}\]
and the approximation ordering is the superset relation.  This
generates a flat domain with $\bottom=\{\{\nat\}\}$ and the total elements
being in a one-to-one correspondence with the natural numbers.  Using
the construction outlined in Exercise~\ref{3.16}, construct the basis
$F=N^\infty$.  Its corresponding domain is the domain of partial functions over the
natural numbers.  To see this, let $\Phi$ be the set of all finite
partial functions $\varphi\subseteq \nat\times\nat$.  Define
\[\uparrow\varphi=\{\psi\in\Phi\such\varphi\subseteq\psi\}\]
Consider the finitary basis $\langle F',\appx_F'\rangle $ where 
\[F'=\{\uparrow\varphi\such\varphi\in\Phi\}\]
and the approximation order is the superset relation.  The reader
should satisfy himself that $F'$ and $F$ are isomorphic and that the
elements are the partial functions.  The total elements are the total
functions over the natural numbers.  

The domains $\dom{F}$ and $(\dom{N}\far\dom{N})$ are not isomorphic.
However, the following mapping $val:F\times\bas{N}\far\bas{N}$ can be
defined as follows:
\[(\uparrow\varphi,\{n\})\:val\:\{m\} \iff (n,m)\in\varphi\]
and
\[(\uparrow\varphi,\nat)\:val\:\nat\]
Define also as the ideal for $m\in\dom{N}$,
\[\hat{m} = \{\{m\},\nat\}\]
It is easy to show then that for $\pi\in\dom{F}$ and $n\in\dom{N}$ we
have
\[\begin{array}{lcll}
val(\pi,\hat{n})&=&\hat{\pi(n)}&{\rm if}~\pi(n)\neq\bottom\\
&=&\bottom&{\rm otherwise}
\end{array}\]
Thus, \[curry(val):\bas{F}\far(\bas{N}\far\bas{N})\]
is a one-to-one function on elements.  (The problem is that
($\bas{N}\far\bas{N}$) has more elements than \bas{F} does as the
reader should verify for himself).  

Now, what about mappings $f:\bas{F}\far\bas{F}$?  Consider the
function
\[\begin{array}{lcll}
f(\pi)(n)&=&0&{\rm if}~n=0\\
&=&\pi(n-1)+n-1&{\rm for}~n>0
\end{array}\]
If $\pi$ is a total function, $f(\pi)$ is a total function.
If $\pi(k)$ is undefined, then $f(\pi)(k+1)$ is undefined.
The function $f$ is approximable since it is completely determined by
its actions on partial functions.  That is
\[f(\pi)=\bigcup\{f(\varphi)\such\exists\varphi\in\Phi.
\varphi\subseteq\pi\}\]
The Fixed Point Theorem defines a least fixed point for any
approximable mapping.  Let $\sigma=f(\sigma)$.  Now, $\sigma(0)=0$ and
\[\begin{array}{lcl}
\sigma(n+1)&=&f(\sigma)(n+1)\\
&=&\sigma(n)+n
\end{array}\]
By induction, $\sigma(n)=\sum\limits_{i=0}^n i$ and therefore, $\sigma$ is a
total function.  Thus, $f$ has a unique fixed point.  

Now, in looking at $(\bas{N}\far\bas{N})$, we have
$\hat{0}\in\dom{N}$ (The symbols $n$ and $\hat{n}$ will no longer be
distinguished, but the usage should be clear from context.).  Now define
the two mappings, $succ,pred:\bas{N}\far\bas{N}$ as approximable
mappings such that
\[\begin{array}{lcl}
n\:succ\: m& \iff& \exists p\in\nat.n\appx p,m\appx p+1\\
n\:pred\: m& \iff& \exists p+1\in\nat.n\appx p+1,m\appx p
\end{array}\]
In more familiar terms, the same functions are defined as
\[\begin{array}{lcll}
succ(n)&=&n+1\\
pred(n)&=&n-1&{\rm if}~n>0\\
&=&\bottom&{\rm if}~n=0
\end{array}\]
The mapping $zero:\bas{N}\far\bas{T}$ is also defined such that
\[\begin{array}{lcll}
zero(n)&=&true&{\rm if}~n=0\\
&=&false&{\rm if}~n>0
\end{array}\]
where $\dom{T}$ is the domain of truth value defined in an earlier
section. The {\it structured domain\/} $\langle N,0,succ,pred,zero\rangle $ is
called ``The Domain of the Integers'' in the present context.  The
function element $\sigma$ defined as the fixed point of the mapping
$f$ can now be defined directly as a mapping
$\sigma:\bas{N}\far\bas{N}$ as follows:
\[\sigma(n)=cond(zero(n),0,\sigma(pred(n))+pred(n))\]
where the function $+$ must be suitably defined.  Recall that $cond$
was defined earlier as part of the structure of the domain $\dom{T}$.
This equation is called a {\it functional equation\/}; the next
section will give another notation, the $\lambda-calculus$ for writing
such equations. \eop
\end{exm}
\begin{exm}\label{4.4}
The domain $\dom{B}$ defined in Example~\ref{2.3} contained only
infinite elements as total elements.  A related domain, $\dom{C}$
defined in Exercise~\ref{2.21}, can be regarded as a generalization on
$\dom{N}$. To demonstrate this, the structured domain corresponding to
the domain of integers must be presented.  The total elements in
$\dom{C}$ are denoted $\sigma$ while the partial elements are denoted
$\sigma\bottom$ for any $\sigma\in\{0,1\}^*$.  

The empty sequence $\epsilon$ assumes the role of the number $0$ in
$\dom{N}$.  Two approximable mappings can serve as the successor
function: $x\mapsto 0x$ denoted $succ_0$ and $x\mapsto 1x$ denoted
$succ_1$.  The predecessor function is filled by the $tail$ mapping
defined as follows: 
\[\begin{array}{lcll}
tail(0x)& =& x,\\
tail(1x)& =& x&{\rm and}\\
tail(\epsilon)& =& \bottom.
\end{array}\]
The $zero$ predicate is defined using the $empty$ mapping defined as
follows:
\[\begin{array}{lcll}
empty(0x)& =& false,\\
empty(1x)& =& false&{\rm and}\\
empty(\epsilon)& =& true.
\end{array}\]
To distinguish the other types of elements in $\dom{C}$, the following
mappings are also defined:
\[\begin{array}{lcll}
zero(0x)& =& true,\\
zero(1x)& =& false&{\rm and}\\
zero(\epsilon)& =& false.\\
one(0x)& =& false,\\
one(1x)& =& true&{\rm and}\\
one(\epsilon)& =& false.
\end{array}\]
The reader should verify the conditions for an approximable
mapping are met by these functions.

An element of $\dom{C}$ can be defined using a fixed point equation.
For example, the total element representing an infinite sequence of
alternating zeroes and ones is defined by the fixed point of the
equation
$a=01a$.
This same element is defined with the equation
$a=0101a$.
(Is the element defined as
$b=010b$
the same as the previous two?)

Approximable mappings in $\dom{C}\far\dom{C}$ can also be defined
using equations.  For example, the mapping
\[\begin{array}{lcll}
d(\epsilon) &= &\epsilon,\\
d(0x)&=&00d(x)&{\rm and}\\
d(1x)&=&11d(x)
\end{array}\]
can be characterized with the functional equation
\[d(x)=cond(empty(x),\epsilon,cond(zero(x),succ_0(succ_0(d(tail(x)))),succ_1(succ_1(d(tail(x))))))\]

The concatenation function of Exercise~\ref{2.21} over
$\dom{C}\times\dom{C}\far\dom{C}$ can be defined with the functional
equation
\[C(x,y)=cond(empty(x),y,cond(zero(x),succ_0(C(tail(x),y)),succ_1(C(tail(x),y))))\]
The reader should verify that this definition is consistent with the
properties required in the exercise.
\end{exm}
These definitions all use {\it recursion\/}.  They rely
on the object being defined for a base case ($\epsilon$ for example) or
on earlier values ($tail(x)$ for example).  These equations
characterize the object being defined, but unless a theorem is proved
to show that a solution to the equation exists, the definition is
meaningless.  However, the Fixed Point Theorem for domains was established
earlier in this section.  Thus, solutions exist to these equations
provided that the variables in the equation range over domains and any
other functions appearing in the equation are known to be continuous
(that is approximable).  

\subsection{Peano's Axioms}
To illustrate one use of the Fixed Point Theorem as well as show the
use of recursion in a more familiar setting, we will show that all
second order models of Peano's axioms are isomorphic.  Recall that
\begin{mdef}{Model for Peano's Axioms}\label{4.5}
A structured set $\langle \nat,0,succ\rangle $ for $0\in\nat$ and
$succ:\nat\times\nat$ is a {\it model for Peano's axioms\/} if all the
following conditions are satisfied:
\begin{enumerate}
\item $\forall n\in\nat . 0\neq succ(n)$
\item $\forall n,m \in\nat .succ(n)=succ(m)\imp n=m$
\item $\forall x\subseteq\nat .0\in x\mand succ(x)\subseteq x\imp
x=\nat$
\end{enumerate}
where $succ(x)=\{succ(n)\such n\in x\}$.  
The final clause is usually referred to as the principle of
mathematical induction. 
\end{mdef}
\begin{thm}\label{4.6}
All second order models of Peano's axioms are isomorphic.
\end{thm}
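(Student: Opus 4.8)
The plan is to use the Fixed Point Theorem (Theorem~\ref{4.1}) to manufacture the comparison map between two models and to discharge every subsequent claim by an appeal to the induction axiom (clause~3 of Definition~\ref{4.5}). Let $\langle\nat,0,succ\rangle$ and $\langle\nat',0',succ'\rangle$ be two models, and let $\dom{N}$, $\dom{N}'$ be the corresponding flat domains built as in Example~\ref{4.3}, together with the truth domain $\dom{T}$ and its conditional $cond$. First I would note that the induction axiom already supplies the case analysis that recursion needs: the set $\{0\}\union succ(\nat)$ contains $0$ and is closed under $succ$, so by clause~3 every element of $\nat$ is either $0$ or a successor, and by clause~2 that successor is unique. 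Hence a zero-test $zero:\bas{N}\far\bas{T}$ and a predecessor $pred:\bas{N}\far\bas{N}$ are well-defined approximable mappings, exactly as in Example~\ref{4.3}, and likewise for the primed model.

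With these in hand I would define a functional on the continuous function domain $\dom{N}\map_c\dom{N}'$ by
\[ \Phi(f)(x) = cond(zero(x),\,0',\,succ'(f(pred(x))))\,, \]
which is continuous because it is assembled by composition and pairing from the continuous operations $zero$, $pred$, $succ'$, and $cond$. The Fixed Point Theorem then yields a least fixed point $\phi = fix(\Phi)$, and unwinding the fixed-point equation gives $\phi(0)=0'$ and $\phi(succ(n))=succ'(\phi(n))$. The first substantive step is to show that $\phi$ is total: the set $\{n\in\nat \such \phi(n)\neq\bot\}$ contains $0$ (since $\phi(0)=0'$) and is closed under $succ$ (since $\phi(succ(n))=succ'(\phi(n))$ is defined whenever $\phi(n)$ is), so clause~3 forces it to be all of $\nat$. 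Symmetrically I would obtain $\psi=fix(\Psi):\dom{N}'\map_c\dom{N}$ with $\psi(0')=0$ and $\psi(succ'(m))=succ(\psi(m))$, total for the same reason.

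To finish, I would prove $\psi\circ\phi=\ident_{\dom{N}}$ and $\phi\circ\psi=\ident_{\dom{N}'}$ by induction: each composite fixes the appropriate zero element, and the successor clauses give the induction step, e.g.
\[\psi(\phi(succ(n)))=\psi(succ'(\phi(n)))=succ(\psi(\phi(n)))=succ(n)\,.\]
Two applications of clause~3 then establish the identities, so $\phi$ is a bijection with inverse $\psi$; since it preserves $0$ and commutes with $succ$ by construction, it is the required isomorphism of structured sets. I expect the main obstacle to be not the inductions themselves but the preliminary justification that the recursion is well-posed: writing down $\Phi$ presupposes the predecessor and zero-test, and their existence as single-valued total maps is exactly where all three Peano axioms must be combined---clause~3 to guarantee that every element is $0$ or a successor, clause~2 to make $pred$ single-valued, and clause~1 to keep the two cases disjoint. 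Once that is secured, the continuity of $\Phi$ and the Fixed Point Theorem supply $\phi$ mechanically, and every remaining assertion collapses to an instance of mathematical induction.
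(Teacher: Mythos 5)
Your proof is correct, but it takes a genuinely different route from the paper's. The paper applies the Fixed Point Theorem not to a functional on $\dom{N}\map_c\dom{N}'$ but to a set-transformer on the powerset domain ${\cal P}(N\times M)$, namely $u\mapsto \{(0,\bullet)\}\union\{(succ(n),succ'(m))\such (n,m)\in u\}$; its least fixed point $r$ is the \emph{graph} of the isomorphism, taken directly as a relation. The paper then spends its effort showing, via all three axioms and an induction on the set of elements of $N$ related to exactly one element of $M$, that $r$ is single-valued, total, injective and surjective. Your version instead builds the \emph{function} $\phi$ by what is essentially definition-by-primitive-recursion, which forces you to first justify $zero$ and $pred$ as well-defined approximable mappings --- the preliminary step you rightly identify as the crux, and one the paper's relational construction avoids entirely, since generating the graph as a least closed set requires no case analysis on whether an argument is zero or a successor. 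In exchange, your approach gets single-valuedness for free (a fixed point of $\Phi$ is a function by construction) and reduces the remaining work to three clean inductions (totality of $\phi$, totality of $\psi$, and the two composite identities), whereas the paper must argue uniqueness of correspondents by hand inside its induction. Both arguments lean on the Fixed Point Theorem for existence and on clause~3 for everything else; yours is closer to the classical recursion-theorem proof, the paper's is more economical in its prerequisites. One small point to make explicit if you write this up: in the induction step $\psi(\phi(succ(n)))=\psi(succ'(\phi(n)))=succ(\psi(\phi(n)))$ you need $\phi(n)$ to already be a total element of $\dom{N}'$ for the defining equation of $\psi$ on successors to apply, so the totality induction must be completed before the inverse inductions begin --- your ordering of the steps already respects this.
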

\proof Let $\langle N,0,+\rangle $ and $\langle M,\bullet,\#\rangle $ be models for Peano's
axioms.  Let $N\times M$ be the cartesian product of the two sets and
let ${\cal P}(N\times M)$ be the powerset of $N\times M$.  Recall from
Exercise~\ref{1.15} that the powerset can be viewed as a domain with
the subset relation as the approximation order.  Define the following
mapping:
\[u\mapsto \{(0,\bullet)\}\cup\{(+(n),\#(m))\such (n,m)\in u\}\]
The reader should verify that this mapping is approximable.  Since it
is indeed approximable, a fixed point exists for the function.  Let
$r$ be the least fixed point:
\[r=\{(0,\bullet)\}\cup\{(+(n),\#(m))\such (n,m)\in r\}\]
But $r$ defines a binary relation which establishes the isomorphism.
To see that $r$ is an isomorphism, the one-to-one and onto features
must be established.  By construction,
\begin{enumerate}
\item $0\:r\:\bullet$ and
\item $n\:r\:m \imp +(n)\:r\:\#(m)$.
\end{enumerate}

Now, the sets $\{(0,\bullet)\}$ and $\{(+(n),\#(m))\such (n,m)\in r\}$
are disjoint by  the first axiom.  Therefore, $0$ corresponds
to only one element in $m$.  Let $x\subseteq N$ be the set of all
elements of $N$ that correspond to only one element in $m$.  Clearly,
$0\in x$.  Now, for some $y\in x$ let $z\in M$ be the element in $M$
that $y$ uniquely corresponds to (that is $y\:r\:z$).  But this means
that $+(y)\:r\#(z)$ by the construction of the relation.  If there
exists $w\in M$ such that $+(y)\:r\:w$ and since $(+(y),w)\neq
(0,\bullet)$, the fixed point equation implies that $(+(y)=+(n_0))$
and $(w=\#(m_0))$ for some $(n_0,m_0)\in r$.  But then by the second
axiom, $y=n_0$ and since $y\in x$, $z=m_0$.  Thus, $\#(z)$ is the
unique element corresponding to $+(y)$. The third axiom can now be
applied, and thus every element in $N$ corresponds to a unique element
in $M$.  The roles of $N$ and $M$ can be reversed in this proof.
Therefore, it can also be shown that every element of $M$ corresponds
to a unique element in $N$.  Thus, $r$ is a one-to-one and onto
correspondence. 
\eop
\vspace{.25in}
\begin{center}
{\bf Exercises}
\end{center}
\begin{exer}\label{4.7}
In Theorem~\ref{4.2}, an equation was given to find the least fixed
point of a function $f:\domd\far\domd$.  Suppose that for
$a\in\dom{D}$, $a\appx f(a)$.  Will the fixed point $x=f(x)$ be such
that $a\appx x$?  (Hint:  How do we know that
$\bigcup\limits_{n=0}^\infty f^n(a)\in\domd$?)
\end{exer}
\begin{exer}\label{4.8}
Let $f:\domd\far\domd$ and $S\subseteq \domd$ satisfy
\begin{enumerate}
\item $\bottom\in S$
\item $x\in S\imp f(x)\in S$
\item $[\forall n .\{x_n\}\subseteq S \mand x_n\appx x_{n+1}]\imp
\bigcup\limits_{n=0}^\infty x_n \in S$
\end{enumerate}
Conclude that $fix(f)\in S$.  This is sometimes called the principle
of fixed point induction.  Apply this method to the set
\[S=\{x\in\domd\such a(x)= b(x)\}\]
where $a,b:\domd\far\domd$ are approximable, $a(\bottom)=b(\bottom)$,
and $f\circ a=a\circ f$ and $f\circ b=b\circ f$.
\end{exer}
\begin{exer}\label{4.9}
Show that there is an approximable operator
\[\Psi:((\domd\far\domd)\far\domd)\far((\domd\far\domd)\far\domd)\]
such that for $\Theta:(\domd\far\domd)\far\domd$ and
$f:\domd\far\domd$,
\[\Psi(\Theta) (f) = f(\Theta(f))\]
Prove also that $fix:(\domd\far\domd)\far\domd$ is the least fixed
point of $\Psi$.
\end{exer}
\begin{exer}\label{4.10}
Given a domain $\domd$ and an element $a\in\domd$, construct the
domain $\domd_a$ where
\[\domd_a=\{x\in\domd\such x\appx a\}\]
Show that if $f:\domd\far\domd$ is approximable, then $f$ can be
restricted to another approximable map
$f':\domd_{fix(f)}\far\domd_{fix(f)}$ where $\forall x\in
\domd_{fix(f)}.f'(x)=f(x)$
How many fixed points does $f'$ have in $\domd_{fix(f)}$?
\end{exer}
\begin{exer}\label{4.11}
The mapping ${\bf fix}$ can be viewed as assigning a fixed point operator to
any domain $\domd$.  Show that ${\bf fix}$ can be uniquely characterized by
the following conditions on an assignment $\domd\leadsto F_D$:
\begin{enumerate}
\item $F_D:(\domd\far\domd)\far\domd$
\item $F_D(f)=f(F_D(f))$ for all $f:\domd\far\domd$
\item when $f_0:\domd_0\far\domd_0$ and $f_1:\domd_1\far\domd_1$ are
given and $h:\domd_0\far\domd_1$ is such that $h(\bottom)=\bottom$ and
$h\circ f_0=f_1\circ h$, then
\[h(F_{D_0}(f_0)) = F_{D_1}(f_1).\]
\end{enumerate}
Hint:  Apply Exercise~\ref{4.7} to show ${\bf fix}$ satisfies the
conditions.  For the other direction, apply Exercise~\ref{4.10}.
\end{exer}
\begin{exer}\label{4.12}
Must an approximable function have a maximum fixed point?  Give an
example of an approximable function that has many fixed points. 
\end{exer}
\begin{exer}\label{4.14}
Must a monotone function $f:{\cal P}(A)\far{\cal P}(A)$ have a maximum
fixed point? (Recall: ${\cal P}(A)$ is the powerset of the set $A$).
\end{exer}
\begin{exer}\label{4.18}
Verify the assertions made in the first example of this section. 
\end{exer}
\begin{exer}\label{4.19}
Verify the assertions made in the second example, in particular those
in the discussion of ``Peano's Axioms''.  Show that the predicate
function $one:\dom{C}\far\dom{T}$ could be defined using a fixed point
equation from the other functions in the structure.
\end{exer}
\begin{exer}\label{4.20}
Prove that
\[fix(f\circ g)=f(fix(g\circ f))\]
for approximable functions $f,g:\domd\far\domd$.
\end{exer}
\begin{exer}\label{4.21}
Show that the less-than-or-equal-to relation $l\subseteq
\nat\times\nat$ is uniquely determined by 
\[l=\{(n,n)\such n\in\nat\}\cup \{(n,succ(m)\such (n,m)\in l\}\]
for the structure called the ``Domain of Integers''.
\end{exer}
\begin{exer}\label{4.22}
Let $N^*$ be a structured set satisfying only the first two of the
axioms referred to as ``Peano's''.  Must there be a subset $S\subseteq
N^*$ such that all three axioms are satisfied?  (Hint: Use a least
fixed point from ${\cal P}(N^*)$).
\end{exer}
\begin{exer}\label{4.23}
Let $f:\domd\far\domd$ be an approximable map.  Let
$a_n:\domd\far\domd$ be a sequence of approximable maps such that
\begin{enumerate}
\item $a_0(x)=\bottom$ for all $x\in\domd$
\item $a_n\appx a_{n+1}$ for all $n\in\nat$
\item $\bigcup\limits_{n=0}^\infty a_n = \ident_D$ in $\domd\far\domd$
\item $a_{n+1}\circ f = a_{n+1}\circ f\circ a_n$ for all $n\in\nat$
\end{enumerate}
Show that $f$ has a unique fixed point. (Hint: Show that if $x=f(x)$
then $a_n(x)\appx a_n(fix(f))$ for all $n\in\nat$.  Show this by
induction on $n$.)
\end{exer}
\newpage
\section{Typed $\lambda$-Calculus}
As shown in the previous section, functions can be characterized by
recursion equations which combine previously defined functions with
the function being defined.  The expression of these functions is
simplified in this section by introducing a notation for specifying a
function without having to give the function a name.  The notation
used is that of the {\it typed\/} $\lambda$-Calculus; a function is
defined using a $\lambda$-{\it abstraction\/}.  

\subsection{Definition of Typed $\lambda$-Calculus}
An informal characterization of the $\lambda$-calculus suffices for
this section; more formal descriptions are available elsewhere in the
literature~\cite{blc}.  Thus, examples are used to introduce the notation.

An infinite number of variables, $x$,$y$,$z$,$\ldots$ of various types
are required.  While a variable has a certain type, type subscripts
will not be used due to the notational complexity.  A distinction
must also be made between type symbols and domains.  The domain
$\dom{A}\times\dom{B}$ does not uniquely determine the
component domains  $\dom{A}$ and $\dom{B}$ even though these domains are
uniquely determined by the symbol for the domain.  The domain is the
{\it meaning\/} that we attribute to the symbol.   

In addition to variables, constants are also present.  For example,
the symbol $0$ is used to represent the zero element from the domain
$\domn$. Another constant, present in each domain by virtue of
Theorem~\ref{4.2}, is $fix^\domd$, the least fixed point operator for
domain $\domd$ of type $(\domd\far\domd)\far\domd$.  The constants and
variables are the atomic (non-compound) terms.  Types can be
associated with all atomic terms.

There are several constructions for compound terms.  First, given
$\tau,\ldots,\sigma$, a list of terms, the {\it ordered tuple\/}
\[\langle \tau,\ldots,\sigma\rangle \]
is a compound term.  If the types of $\tau,\ldots,\sigma$ are
$\dom{A},\ldots,\dom{B}$, the type of the tuple is
$\dom{A}\times\ldots\times\dom{B}$ since the tuple is to be an element
of this domain.  The tuple notation for combining functions given
earlier should be disregarded here.

The next construction is function application.  If the term $\tau$ has
type $\dom{A}\far\dom{B}$ and the term $\sigma$ has the type
$\dom{A}$, then the compound term
\[\tau(\sigma)\]
has type $\dom{B}$.  Function application denotes the value of a
function at a given input.  The notation
$\tau(\sigma_0,\ldots,\sigma_n)$  abbreviates
$\tau(\langle \sigma_0,\ldots,\sigma_n\rangle )$.  Functions applied to tuples
allows us to represent applications of multi-variate functions. 

The $\lambda$-abstraction is used to define functions.  Let
$x_0,\ldots,x_n$ be a list of distinct variables of type
$\domd_0,\ldots,\domd_n$.  Let $\tau$ be a term of some type
$\domd_{n+1}$.  $\tau$ can be thought of as a function of $n+1$
variables with type
$(\domd_0\times\ldots\times\domd_n)\far\domd_{n+1}$.  The name 
for this function is written
\[\lambda x_0,\ldots,x_n.\tau\]
This expression denotes the entire function.  To look at some familiar
functions in the new notation, consider
\[\lambda x,y.x\]
This notation is read ``lambda ex wye (pause) ex''.  If the types of
$x$ and $y$ are $\dom{A}$ and $\dom{B}$ respectively, the function has
type $(\dom{A}\times\dom{B})\far\dom{A}$.  This function is the first
projection function $p_0$.  This function and the second projection
function can be defined by the following equations:
\[\begin{array}{lcl}
p_0&=&\lambda x,y.x\\
p_1&=&\lambda x,y.y
\end{array}\]
Recalling the function tuple notation introduced in an earlier
section, the following equation holds:
\[\langle f,g\rangle =\lambda w.\langle f(w),g(w)\rangle \]
which defines a function of type $\domd_1\far(\domd_2\times\domd_3)$.

Other familiar functions are defined by the following equations:
\[\begin{array}{lcl}
eval&=&\lambda f,x.f(x)\\
curry&=&=\lambda g\lambda x\lambda y.g(x,y)
\end{array}\]
The $curry$ example shows that this notation can be iterated.  A
distinction is thus made between the terms $\lambda x,y.x$ and
$\lambda x\lambda y.x$ which have the types
$\domd_0\times\domd_1\far\domd_0$ and 
$\domd_0\far\domd_1\far\domd_0$ respectively.  Thus, the following
equation also holds:
\[curry(\lambda x,y.\tau)=\lambda x\lambda y.\tau\]
which relates the multi-variate form to  the iterated or {\it
curried\/} form.  Another true equation is
\[fix={\bf fix}(\lambda F\lambda f.f(F(f)))\]
where $fix$ has type $(\domd\far\domd)\far\domd$ and {\bf fix} has type
\[((((\domd\far\domd)\far\domd)\far((\domd\far\domd)\far\domd))\far((\domd\far\domd)\far\domd))\]
This is the content of Exercise~\ref{4.9}.  

This notation can now be used to define functions using recursion
equations.  For example, the function $\sigma$ in Example~\ref{4.3} can be
characterized by the following equation:
\[\sigma=fix(\lambda f\lambda n.cond(zero(n),0,f(pred(n))+pred(n))\]
which states that $\sigma$ is the least recursively defined function
$f$ whose value at $n$ is $cond(\ldots)$.  The variable $f$ occurs in
the body of the $cond$ expression, but this is just the point of a
recursive definition.  $f$ is defined in terms of its value on
``smaller'' input values. The use of the fixed point operator makes
the definition explicit by forcing there to be a unique solution to
the equation.

In an abstraction $\lambda x,y,z.\tau$, the variables $x$,$y$, and $z$
are said to be {\it bound\/} in the term $\tau$.  Any other variables
in $\tau$ are said to be {\it free\/} variables in $\tau$ unless they
are bound elsewhere in $\tau$.  Bound variables are simply
placeholders for values; the particular variable name chosen is
irrelevant.  Thus, the equation
\[\lambda x.\tau=\lambda y.\tau[y/x]\]
is true provided $y$ is not free in $\tau$.  The notation $\tau[y/x]$
specifies the substitution of $y$ for $x$ everywhere $x$ occurs in
$\tau$.  The notation $\tau[\sigma/x]$ for the substitution of the
term $\sigma$ for the variable $x$ is also legitimate.  

\subsection{Semantics of Typed $\lambda$-Calculus}
To show that the equations above with $\lambda$--terms are indeed
meaningful, the following theorem relating $\lambda$--terms and
approximable mappings must be proved.
\begin{thm}\label{5.1}
Every typed $\lambda$--term defines an approximable function of its
free variables.  
\end{thm}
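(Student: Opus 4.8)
The plan is to argue by structural induction on $M$, in the same spirit as the proof of Theorem~\ref{3.12}, but now accounting for the two term-formers---function application and $\lambda$-abstraction---that take the calculus beyond the first-order setting of that theorem. To keep the induction uniform I would let the free variables of $M$ determine a single product basis $\bas{A} = \bAof{\tau_1} \times \dots \times \bAof{\tau_n}$ (allowing spare variables, and treating the tuple of free variables as one object in $\bas{A}$ as sanctioned by Theorem~\ref{3.7}), and prove that the meaning $x_1^{\tau_1}, \dots, x_n^{\tau_n} \mapsto M$ is an approximable mapping $F_M$ from $\bas{A}$ to the basis of the type of $M$. Weakening---a subterm mentioning fewer variables than $M$---is harmless, since ignoring a coordinate is just precomposition with a projection and hence preserves approximability.

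The base cases and the first-order inductive cases reuse the machinery already in hand. If $M$ is a variable $x_i$, then $F_M$ is the projection $\cP_i$, approximable by Theorem~\ref{3.4}. If $M$ is a constant $c$, then $F_M$ is the constant mapping $\cK_c$ of Lemma~\ref{3.6}; the only constants deserving comment are the distinguished operators such as $fix^{\domd}$, which is itself approximable by Theorem~\ref{4.2}. For a tuple $M = \langle M_1, \dots, M_k\rangle$, the induction hypothesis supplies approximable $F_{M_1}, \dots, F_{M_k}$, and $F_M$ is their paired mapping $\langle F_{M_1}, \dots, F_{M_k}\rangle$, again approximable by Theorem~\ref{3.4}. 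For an application $M = P(Q)$, where $P$ has type $\bas{B}\amap\bas{C}$ and $Q$ has type $\bas{B}$, the hypothesis gives approximable $F_P \sub \bas{A} \times (\bas{B}\amap\bas{C})$ and $F_Q \sub \bas{A}\times\bas{B}$; then
\[F_M = Apply \circ \langle F_P, F_Q\rangle\]
is approximable because $Apply$ is approximable (Corollary~\ref{3.11.1}) and composition preserves approximability (Corollary~\ref{2.12}).

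The one genuinely new case, and the step I expect to carry the weight of the argument, is $\lambda$-abstraction. Suppose $M = \lambda y.\, P$ with $y$ of type $\bas{B}$ and $P$ of type $\bas{C}$, so that $M$ has type $\bas{B}\amap\bas{C}$. Adjoining $y$ to the variable list, the induction hypothesis gives an approximable mapping
\[F_P \sub (\bas{A}\times\bas{B}) \times \bas{C}\,,\]
and abstracting over $y$ is exactly currying: $F_M$ is $Curry$ applied to $F_P$, yielding a mapping $\bas{A}\amap(\bas{B}\amap\bas{C})$ that sends the remaining free variables to an element of the function basis $\bas{B}\amap\bas{C}$. That this operation lands inside the approximable mappings---and that its functional reading $curry$ is continuous---is precisely the content of the preceding lemma and theorem establishing that $Curry$ is approximable and that $curry$ is continuous, so invoking them closes the case and the induction. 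The only real care needed here is bookkeeping: lining up the product-associativity implicit in the domain of $Curry$ with the list-of-variables presentation of the meaning function, which is again mediated by Theorem~\ref{3.7} and the standard product isomorphisms. Rerunning the entire induction with projections $p_i$, pairing, $apply$, and $curry$ in place of their relational counterparts yields the companion statement for the continuous-function reading of the theorem.
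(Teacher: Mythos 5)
Your proposal is correct and follows essentially the same route as the paper: structural induction with variables handled by projections, constants by constant mappings, tuples by pairing, application by composition with $apply$/$eval$, and abstraction by currying. The paper's version is terser (and cites Theorem~\ref{3.12} somewhat loosely for the abstraction case where your appeal to the $Curry$ lemma is the more precise reference), but the decomposition and the key lemmas invoked are the same.
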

\proof Induction on the length of the term and its structure will be
used in this proof.  
\begin{description}
\item[Variables] Direct since $x\mapsto x$ is an approximable function.
\item[Constants] Direct since $x\mapsto k$ is an approximable function
for constant $k$.
\item[Tuples] Let $\tau=\langle \sigma_0,\ldots,\sigma_n\rangle $.  Since the
$\sigma_i$ terms are less complex, they are approximable functions of
their free variables by the induction hypothesis.  Using
Theorem~\ref{3.4} (generalized to the multi-variate case)  then,
$\tau$ which takes tuples as values also defines an approximable
function. 
\item[Application] Let $\tau=\sigma_0(\sigma_1)$.  We assume that the
types of the terms are appropriately matched.  The $\sigma_i$ terms
define approximable functions again by the induction hypothesis.
Recalling the earlier equations, the value of $\tau$ is the same as
the value of $eval(\sigma_0,\sigma_1)$.  Since $eval$ is approximable,
Theorem~\ref{3.7} shows that the term defines an approximable
function.
\item[Abstraction] Let $\tau=\lambda x.\sigma$.  By the induction
hypothesis, $\sigma$ defines a function of its free variables.  Let
those free variables be of types $\domd_0,\ldots,\domd_n$ where
$\domd_n$ is the type of $x$.  Then $\sigma$ defines an approximable
function
\[g:\domd_0\times\ldots\times\domd_n\far\domd'\]
 where $\domd'$
is the type of $\sigma$.  Using Theorem~\ref{3.12}, the function
\[curry(g):\domd_0\times\ldots\times\domd_{n-1}\far(\domd_n\far\domd')\]
yields an approximable function, but this is just the function defined
by $\tau$.
The reader can generalize this proof for multiple bound variables in
$\tau$. 
\end{description}
\eop

Given this, the equation $\tau=\sigma$ states that the two terms
define the same approximable function of their free variables. As an
example,
\[\lambda x.\tau=\lambda y.\tau[y/x]\]
provided $y$ is not free in $\tau$ since the generation of the
approximable function did not depend on the name $x$ but only on its
location in $\tau$.  Other equations such as these are given in the
exercises.  The most basic rule is shown below.
\begin{thm}\label{5.2} For appropriately typed terms, the following equation is
true:
\[(\lambda x_0,\ldots,x_{1}.\tau)(\sigma_0,\ldots,\sigma_{n-1})=
\tau[\sigma_0/x_0,\ldots,\sigma_{n-1}/x_{n-1}]\]
\end{thm}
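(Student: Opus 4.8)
The plan is to reduce the claimed syntactic identity to an equality of the approximable functions denoted by the two sides, and then to discharge it using the currying identity from the preceding theorem together with a semantic \emph{substitution lemma}. By Theorem~\ref{5.1}, each side denotes an approximable function of its free variables, which are exactly the variables free in $\sigma_0,\ldots,\sigma_{n-1}$ together with the variables free in $\tau$ other than $x_0,\ldots,x_{n-1}$; call these remaining variables $\vec{z}$. Since approximable mappings (equivalently, continuous functions) are ordered pointwise, equality amounts to agreement on every input, so it suffices to show that the two denoted functions take the same value on an arbitrary tuple $\vec{d}$ of values for $\vec{z}$.

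First I would unfold the left-hand side using the semantics of application and of abstraction. Let $g$ be the approximable function denoted by $\tau$, viewed as a function of the bound variables $x_0,\ldots,x_{n-1}$ together with $\vec{z}$, and let $s_i$ be the value of $\sigma_i$ at $\vec{d}$. The abstraction $\lambda x_0,\ldots,x_{n-1}.\tau$ denotes, at the point $\vec{d}$, the function $curry_g(\vec{d})$, and the surrounding application feeds this the tuple $\langle s_0,\ldots,s_{n-1}\rangle$ via $apply$. By the first equation of the preceding theorem on $curry$, namely $apply\circ\langle curry_g\circ p_0,p_1\rangle = g$, this application collapses: evaluated at the relevant point it yields $g$ applied to $\langle s_0,\ldots,s_{n-1}\rangle$ (for the bound variables) together with $\vec{d}$ (for $\vec{z}$). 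Thus the left-hand side computes $\tau$ with each $x_i$ bound to the value $s_i$ of $\sigma_i$.

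The remaining work is the substitution lemma: the function denoted by $\tau[\sigma_0/x_0,\ldots,\sigma_{n-1}/x_{n-1}]$, evaluated at $\vec{d}$, equals the function denoted by $\tau$ evaluated with each $x_i$ bound to the value of $\sigma_i$ at $\vec{d}$. I would prove this by induction on the structure of $\tau$, following the same case analysis as Theorem~\ref{5.1}. The variable case splits into $\tau = x_i$, where both sides reduce to the value of $\sigma_i$ since the denotation of a variable is a projection, and $\tau$ a variable not among the $x_i$, where the substitution is vacuous; constants are immediate; tuples and applications follow by applying the induction hypothesis componentwise, using that pairing and $eval$ are approximable (Theorems~\ref{3.4} and~\ref{3.7}); and the abstraction case uses that the denotation of an abstraction is obtained functorially through $curry$ (Theorem~\ref{3.12}). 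Combining the collapsed left-hand side with this lemma identifies it with the denotation of $\tau[\sigma_0/x_0,\ldots,\sigma_{n-1}/x_{n-1}]$ at $\vec{d}$, which is the right-hand side.

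The main obstacle is the abstraction case of the substitution lemma, where the usual capture problem arises. For $\tau = \lambda y.\rho$ I must first rename the bound variable so that $y$ is distinct from all $x_i$ and does not occur free in any $\sigma_i$, invoking the renaming equation $\lambda x.\rho = \lambda y.\rho[y/x]$ already justified in this section; only then does substitution commute with the abstraction and allow the induction hypothesis on $\rho$ to close the case. The only other bookkeeping is the passage from the iterated single-variable form of $curry$ to the multi-variate tuple form of the abstraction, which is handled by the multi-variate generalizations of the product and currying results cited above and so introduces no new mathematical content.
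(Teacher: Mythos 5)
Your proposal is correct, but it is organized differently from the paper's proof. The paper proves the equation for $n=1$ directly by induction on the structure of $\tau$: in each case it shows that the operation of applying $\lambda x.\tau'$ to $\sigma$ commutes with the outermost constructor of $\tau$ --- the distribution law $(\lambda x.\langle\tau_0,\tau_1\rangle)(\sigma)=\langle(\lambda x.\tau_0)(\sigma),(\lambda x.\tau_1)(\sigma)\rangle$ for tuples, an explicit computation with the underlying approximable relations to get $(\lambda x.\tau_0(\tau_1))(\sigma)=((\lambda x.\tau_0)(\sigma))((\lambda x.\tau_1)(\sigma))$ for applications, and the identity $(\lambda x.\lambda y.\tau)(\sigma)=\lambda y.((\lambda x.\tau)(\sigma))$, obtained via $curry(inv(g))$, for abstractions --- and then invokes the induction hypothesis on the subterms, with the variable and constant cases supplying the actual substitution. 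You instead collapse the outermost redex once and for all using the currying identity $apply\circ\langle curry_g\circ p_0,p_1\rangle=g$, reducing the theorem to a freestanding semantic substitution lemma (the denotation of $\tau[\vec\sigma/\vec x]$ at $\vec d$ equals the denotation of $\tau$ at $\vec d$ extended by the values of the $\sigma_i$), which you then prove by the same five-case induction. The two inductions carry different hypotheses but essentially the same content; your decomposition buys a cleaner application case (compositionality of the interpretation via $eval$ does the work, with no relational computation) and a reusable lemma, at the cost of introducing the environment-update formulation that the paper's purely equational style avoids. Both treatments discharge the same obligation in the abstraction case --- your explicit renaming of the bound variable versus the paper's side condition that $y$ not be free in $\sigma$ --- and both leave the passage from $n=1$ to general $n$ to the multivariate versions of the product and currying results.
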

\proof The proof is given for $n=1$ and proceeds again by induction on
the length of the term and the structure of the term. 
\begin{description}
\item[Variables] This means $(\lambda x.x)(\sigma)=\sigma$ must be
true which it is.
\item[Constants] This requires $(\lambda x.k)(\sigma)= k$ must be true
which it is for any constant $k$.
\item[Tuples] Let $\tau=\langle \tau_0,\tau_1\rangle $.  This requires that
\[(\lambda x.\langle \tau_0,\tau_1\rangle )(\sigma) =
\langle \tau_0[\sigma/x],\tau_1[\sigma/x]\rangle \]
must be true. This equation holds since the left-hand side can be
transformed using the following true equation:
\[(\lambda x.\langle \tau_0,\tau_1\rangle )(\sigma) =
\langle (\lambda x.\tau_0)(\sigma),(\lambda x.\tau_1)(\sigma)\rangle \]
Then the inductive hypothesis is applied to the $\tau_i$ terms.
\item[Applications] Let $\tau=\tau_0(\tau_1)$.  Then, the result
requires that the equation
\[(\lambda x.\tau_0(\tau_1))(\sigma) =
\tau_0[\sigma/x](\tau_1[\sigma/x])\]
hold true. To see that this is true, examine the approximable
functions for the left-hand side of the equation.
\[\begin{array}{lcl}
\tau_0&\mapsto&\bar{V},x\far t_0\\
\tau_1&\mapsto&\bar{V},x\far t_1\\
\sigma&\mapsto&\bar{V}\far s\\
\mbox{so}\\
(\lambda x.\tau_0(\tau_1))(\sigma)&\mapsto&\bar{V}\far[(x\far
t_0(t_1))(s)]\\
&=&\bar{V},x\far[(x\far t_0)(s)]([(x\far t_1)(s)])
\end{array}\]
From this last term, we use the induction hypothesis.  To see why the
last step holds, start with the set representing the left-hand side
and using the aprroximable mappings for the terms:
\[\begin{array}{cl}
&(\lambda x.\tau_0(\tau_1))(\sigma)\\
\mapsto&\bar{V}\far[(x\far t_0(t_1))(s)]\\
=&\{b\such \exists a.a\in s\mand a\:[x\far t_0(t_1)]\: b\}\\
=&\{b\such \exists a. a\in s\mand a\:\{(x,u)\such v\in x\far t_1)\mand
v\:(x\far t_0)\: u\}\: b\}\\
=&\{b\such \exists a.a\in s\mand v\in (x\far t_1)(a)\mand v\:
(x\far t_0)(a)\:b\}\\
=&\{b\such \exists a,c.a\in s\mand a\:(x\far t_1)\: v\mand a\:(x\far
t_0)\:c\mand v\:c\:b\}\\
=&\{b\such v\in [(x\far t_1)(s)]\mand c\in (x\far t_0)(s) \mand
v\:c\:b\}\\
=&\{b\such v\in [(x\far t_1)(s)]\mand v\:[(x\far t_0)(s)]\:b\}\\
=&[(x\far t_0)(s)]([(x\far t_1)(s)])
\end{array}\]
\item[Abstractions] Let $\tau=\lambda y.\tau_0$.  The required
equation is
\[(\lambda x.\lambda y.\tau_0)(\sigma)=\lambda y.\tau_0[\sigma/x]\]
provided that $y$ is not free in $\sigma$.  The following true
equation applies here:
\[(\lambda x.\lambda y.\tau)(\sigma)=\lambda y.((\lambda
x.\tau)(\sigma))\]
To see that this equation holds, let $g$ be a function of $n+2$ free
variables defined by $\tau$.  By Theorem~\ref{5.1}, the term $\lambda
x.\lambda y.\tau$ defines the function $curry(curry(g))$ of $n$
variables. Call this function $h$.  Thus,
\[h(v)(\sigma)(y) = g(v,\sigma,y)\]
where $v$ is the list of the other free variables.  Using a combinator
$inv$ which inverts the order of the last two arguments, 
\[h(v)(\sigma)(y)=curry(inv(g))(v,y)(\sigma)\]
But, $curry(inv(g))$ is the function defined by $\lambda x.\tau$.  Thus,
we have shown that
\[(\lambda x.\lambda y.\tau)(\sigma)(y)=(\lambda x.\tau)(\sigma)\]
is a true equation.  If $y$ is not free in $\alpha$ and
$\alpha(y)=\beta$ is true, then $\alpha=\lambda y.\beta$ must also be
true.
\end{description}
 \eop

If $\tau'$ is the term $\lambda x,y.\tau$, then $\tau'(x,y)$ is the
same as $\tau$.  This specifies that $x$ and $y$ are not free in
$\tau$. This notation is used in the proof of the following theorem.
\begin{thm}\label{5.3}
The least fixed point of
\[\lambda x,y.\langle \tau(x,y),\sigma(x,y)\rangle\]
is the pair with coordinates $fix(\lambda x.\tau(x,fix(\lambda
y.\sigma(x,y))))$ and $fix(\lambda y.\sigma(fix(\lambda
x.\tau(x,y)),y))$. 
\end{thm}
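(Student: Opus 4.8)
The plan is to recognize this identity as the simultaneous (Bekić) fixed-point rule and to establish the two coordinates separately, exploiting the symmetry of the statement under the simultaneous exchange of $\tau$ with $\sigma$ and $x$ with $y$. Write $D$ and $E$ for the types (domains) of $x$ and $y$, so that $\tau$ and $\sigma$ denote continuous functions $D\times E\far D$ and $D\times E\far E$, and set $H=\lambda x,y.\langle\tau(x,y),\sigma(x,y)\rangle$. By Theorem~\ref{5.1} (pairing and application of approximable functions), $H$ is a continuous function from $D\times E$ to itself, so by the Fixed Point Theorem~\ref{4.1} it has a least fixed point $\langle a,b\rangle$; this is the least pair satisfying $a=\tau(a,b)$ and $b=\sigma(a,b)$. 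It therefore suffices to prove $a=fix(\lambda x.\tau(x,fix(\lambda y.\sigma(x,y))))$, since the companion formula for $b$ follows by running the identical argument with the two coordinates interchanged.

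First I would name the inner, parametrized fixed point $g(x)=fix(\lambda y.\sigma(x,y))$, the least $y$ with $y=\sigma(x,y)$, and set $\phi=\lambda x.\tau(x,g(x))$ and $a_1=fix(\phi)$. For the inequality $a\appx a_1$ I would verify that $\langle a_1,g(a_1)\rangle$ is itself a fixed point of $H$: by construction $g(a_1)=\sigma(a_1,g(a_1))$, and $a_1=\phi(a_1)=\tau(a_1,g(a_1))$, so both coordinate equations hold. Because $\langle a,b\rangle$ is the \emph{least} fixed point of $H$, we obtain $\langle a,b\rangle\appx\langle a_1,g(a_1)\rangle$ in the product order, and in particular $a\appx a_1$.

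For the reverse inequality $a_1\appx a$ I would use the least-pre-fixed-point property of $fix$ (Theorem~\ref{4.2}). Since $\langle a,b\rangle$ is a fixed point, $b=\sigma(a,b)$, so $b$ is a fixed point of $\lambda y.\sigma(a,y)$ and hence $g(a)\appx b$. Monotonicity of $\tau$ then gives $\phi(a)=\tau(a,g(a))\appx\tau(a,b)=a$, exhibiting $a$ as a pre-fixed point of $\phi$; as $a_1=fix(\phi)$ is the least such, $a_1\appx a$. Combining the two inequalities gives $a=a_1$, and the symmetric argument gives $b=fix(\lambda y.\sigma(fix(\lambda x.\tau(x,y)),y))$, which is the asserted second coordinate.

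The step I expect to be the main obstacle is the continuity bookkeeping needed to guarantee that $\phi$ is continuous and hence that $a_1=fix(\phi)$ is well defined: one must argue that $x\mapsto g(x)=fix(\lambda y.\sigma(x,y))$ is continuous, which relies on $x\mapsto(\lambda y.\sigma(x,y))$ being a continuous map of $x$ into the function domain $E\far E$ (continuity of $curry$) followed by the continuous operator $fix$ of Theorem~\ref{4.2}. Once this is in place, the remainder is the clean twofold interplay between ``the least fixed point is below every fixed point'' and ``the least fixed point is below every pre-fixed point.''
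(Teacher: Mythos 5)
Your proof is correct and follows essentially the same route as the paper's: you exhibit the nested fixed point as a simultaneous fixed point of the pairing map to get one inequality from leastness, and you use the least-pre-fixed-point property of $fix$ together with monotonicity to get the reverse inequality, exactly the interplay the paper uses (you organize it one coordinate at a time where the paper intertwines both, but this is just the mirror image of the same argument). Your explicit remark that $x\mapsto fix(\lambda y.\sigma(x,y))$ is continuous (via $curry$ and the continuity of $fix$) fills in a point the paper leaves implicit.
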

\proof We are thus assuming that $x$ and $y$ are {\bf not} free in
$\tau$ and $\sigma$.  The purpose here is to find the least solution
to the pair of equations:
\[x=\tau(x,y)~{\rm and}~y=\sigma(x,y)\]
This generalizes the fixed point equation to two variables.  More
variables could be included using the same method. Let 
\[y_*=fix(\lambda y.\sigma(fix(\lambda x.\tau(x,y)),y))\]
and
\[x_*=fix(\lambda x.\tau(x,y))\]
Then,
\[x_*=\tau(x_*,y_*)\]
and
\[\begin{array}{lcl}
y_*&=&\sigma(fix(\lambda x.\tau(x,y_*),y_*))\\
&=&\sigma(x_*,y_*).
\end{array}\]
This shows that the pair $\langle x_*,y_*\rangle $ is one fixed point.  Now, let
$\langle x_0,y_0\rangle $ be the least solution. (Why must a least solution exist?
Hint: Consider a suitable mapping of type
$(\domd_x\times\domd_y)\far(\domd_x\times\domd_y)$.) Thus, we know
that $x_0=\tau(x_0,y_0)$, $y_0=\sigma(x_0,y_0)$, and that $x_0\appx
x_*$ and $y_0\appx y_*$.  But this means that $\tau(x_0,y_0)\appx x_0$
and thus $fix(\lambda x.\tau(x,y_0))\appx x_0$ and consequently
\[\sigma(fix(\lambda x.\tau(x,y_0),y_0))\appx \sigma(x_0,y_0)\appx
y_0\]
By the fixed point definition of $y_*$, $y_*\appx y_0$ must hold as
well so $y_0=y_*$.  Thus,
\[x_*=fix(\lambda x.\tau(x,y*))=fix(\lambda x.\tau(x,y_0))\appx x_0.\]
Thus, $x*=x_0$ must also hold.  A similar argument holds for
$x_0$.\eop

The purpose of the above proof is to demonstrate the use of least fixed
points in proofs. The following are also true equations:
\[fix(\lambda x.\tau(x))=\tau(fix(\lambda x.\tau(x)))\]
and
\[\tau(y)\appx y\imp fix(\lambda x.\tau(x))\appx y\]
if $x$ is not free in $\tau$.  These equations combined with the
monotonicity of functions were the methods used in the proof above.
Another example is the proof of the following theorem.
\begin{thm}\label{5.4}
Let $x$,$y$, and $\tau(x,y)$ be of type $\domd$ and let
$g:\domd\far\domd$ be a function.  Then the equation
\[\lambda x.fix(\lambda y.\tau(x,y))=fix(\lambda g.\lambda
x.\tau(x,g(x)))\]
holds.
\end{thm}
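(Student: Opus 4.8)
The plan is to prove the two sides are equal by establishing each below the other in the pointwise order on $\domd\far\domd$, using only the universal property of the least fixed point from Theorem~\ref{4.2}. Write $h = \lambda x.fix(\lambda y.\tau(x,y))$ for the left-hand side and let $\Phi = \lambda g.\lambda x.\tau(x,g(x))$ be the functional on the function space, so that the right-hand side is $fix(\Phi)$ (here the outer $fix$ is taken at the domain $\domd\far\domd$). By Theorem~\ref{5.1} both $h$ and $\Phi$ are approximable, so every fixed point appearing below exists, and $\Phi$ admits a least fixed point $fix(\Phi)$.

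First I would check that $h$ is a fixed point of $\Phi$. Unfolding the definitions, $\Phi(h)(x) = \tau(x,h(x)) = \tau(x,fix(\lambda y.\tau(x,y)))$. Applying the fixed-point equation of Theorem~\ref{4.2} to the inner function $\lambda y.\tau(x,y)$ gives $fix(\lambda y.\tau(x,y)) = \tau(x,fix(\lambda y.\tau(x,y)))$, and the right-hand side here is precisely $h(x)$. Hence $\Phi(h)(x) = h(x)$ for every $x\in\domd$, so $\Phi(h) = h$ in the pointwise ordering. Since $fix(\Phi)$ is the least prefixed point of $\Phi$, the second clause of Theorem~\ref{4.2} (with $\Phi(h)\appx h$) yields $fix(\Phi)\appx h$.

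For the reverse inequality I would read the fixed-point equation $\Phi(fix(\Phi)) = fix(\Phi)$ coordinatewise: for each $x$, this says $\tau(x,fix(\Phi)(x)) = fix(\Phi)(x)$, so the element $fix(\Phi)(x)$ is a fixed point of $\lambda y.\tau(x,y)$. As $h(x) = fix(\lambda y.\tau(x,y))$ is the \emph{least} such fixed point, Theorem~\ref{4.2} gives $h(x)\appx fix(\Phi)(x)$ for every $x$, hence $h\appx fix(\Phi)$. Combining the two inequalities with antisymmetry of $\appx$ gives $h = fix(\Phi)$, which is the claimed equation $\lambda x.fix(\lambda y.\tau(x,y)) = fix(\lambda g.\lambda x.\tau(x,g(x)))$.

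The only real care required is the bookkeeping about which instance of $fix$ lives in which domain: the inner $fix$ acts, for each fixed $x$, on $\lambda y.\tau(x,y):\domd\far\domd$, while the outer $fix$ acts on the functional $\Phi:(\domd\far\domd)\far(\domd\far\domd)$. No continuity estimate beyond Theorem~\ref{5.1} is needed, so I expect the main obstacle to be purely presentational, namely stating the pointwise reading of $\Phi(fix(\Phi)) = fix(\Phi)$ precisely enough to license the appeal to the least-fixed-point property of the inner operator $\lambda y.\tau(x,y)$.
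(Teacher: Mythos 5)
Your proposal is correct and follows essentially the same route as the paper's proof: both directions are obtained by showing the left-hand side is a fixed point of the outer functional (giving $fix(\Phi)\appx h$ by leastness) and that $fix(\Phi)(x)$ is a fixed point of the inner operator $\lambda y.\tau(x,y)$ (giving $h(x)\appx fix(\Phi)(x)$ by leastness of the inner $fix$). The only difference is presentational; your explicit bookkeeping of which $fix$ lives in which domain is a welcome clarification of what the paper leaves implicit.
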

\proof Let $f$ be the function on the left-hand side.  Then,
\[f(x)=fix(\lambda y.\tau(x,y))=\tau(x,f(x))\]
holds using the equations stated above.  Therefore,
\[f=\lambda x.\tau(x,f(x))\]
and thus 
\[g_0=fix(\lambda g.\lambda x.\tau(x,g(x)))\appx f. \]
By the definition of $g_0$ we have
\[g_0(x)=\tau(x,g_0(x))\]
for any given $x$.  By the definition of $f$ we find that
\[f(x)=fix(\lambda y.\tau(x,y))\appx g_0(x)\]
must hold for all $x$.  Thus $f\appx g_0$ and the equation is
true.\eop

This proof illustrates the use of inclusion and equations between
functions.  The following principle was used:
\[(\forall x.\tau\appx\sigma)\imp \lambda x.\tau\appx\lambda
x.\sigma\]
This is a restatement of the first part of Theorem~\ref{3.13}.  

\subsection{Combinators and Recursive Functions}
Below is a list of various combinators with their definitions in
$\lambda$-notation.  The meanings of those combinators not previously
mentioned should be clear.
\[\begin{array}{lcl}
p_0&=&\lambda x,y.x\\
p_1&=&\lambda x,y.y\\
pair&=&\lambda x.\lambda y.\langle x,y\rangle \\
n-tuple&=&\lambda x_0\lambda\ldots\lambda
x_{n-1}.\langle x_0,\ldots,x_{n-1}\rangle\\
diag&=&\lambda x.\langle x,x\rangle \\
funpair&=&\lambda f.\lambda g.\lambda x.\langle f(x),g(x)\rangle \\
proj^n_i&=&\lambda x_0,\ldots,x_{n-1}.x_i\\
inv^n_{i,j}&=&\lambda x_0,\ldots,x_i,\ldots,x_j,\ldots,x_{n-1}.\langle x0,\ldots,x_j,\ldots,x_i,\ldots,x_{n-1}\rangle \\
eval&=&\lambda f,x.f(x)\\
curry&=&\lambda g.\lambda x.\lambda y.g(x,y)\\
comp&=&\lambda f,g.\lambda x.g(f(x))\\
const&=&\lambda k.\lambda x.k\\
{\bf fix}&=&\lambda f.fix(\lambda x.f(x))
\end{array}\]
These combinators are actually {\it schemes} for combinators since no types
have been specified and thus the equations are ambiguous.  Each scheme
generates an infinite number of combinators for all the various types.

One interest in combinators is that they allow expressions without
variables---if enough combinators are used.  This is useful at times
but can be clumsy.  However, defining a combinator when the same
combination of symbols repeatedly appears is also useful.  

There are some familiar combinators that do not appear in the table.
Combinators such as $cond$, $pred$, and $succ$ cannot be defined in the pure
$\lambda$-calculus but are instead specific to certain domains.  They
are thus regarded as primitives.  A large number of other functions
can be defined using these primitives and the $\lambda$-notation, as
the following theorem shows.
\begin{thm}\label{5.6}
For every partial recursive function $h:\nat\far\nat$, there is a
$\lambda$-term $\tau$ of type $\dom{N}\far\dom{N}$ such that the only constants
occurring in $\tau$ are $cond$, $succ$, $pred$, $zero$, and $0$ and if
$h(n)=m$ then $\tau(n)=m$.  If $h(n)$ is undefined, then
$\tau(n)=\bottom$ holds.  $\tau(\bottom)=\bottom$ is also true.
\end{thm}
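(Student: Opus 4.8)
The plan is to induct on the standard generative definition of the partial recursive functions: this class is the smallest collection of functions on $\nat$ that contains the initial functions (the constant zero, the successor, and the projections) and is closed under composition, primitive recursion, and minimization. For each clause I will exhibit a typed $\lambda$-term, built only from $cond$, $succ$, $pred$, $zero$, $0$, the pure-$\lambda$ combinators of the previous subsection (tupling, projection, application, abstraction), and the generic fixed-point operator $fix$, that computes the corresponding function on the total elements $\hat n$ of $\dom{N}$ and returns $\bottom$ off its domain of definition. Here $fix$ is the poly-domain operator guaranteed by Theorem~\ref{4.2} and available in every domain, not one of the domain-specific numerical constants the statement enumerates, so its use is consistent with the claim. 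Theorem~\ref{5.1} will guarantee at each step that the term we build denotes an approximable function.

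First I would dispatch the base cases. The zero function is $\lambda x.\,0$, the successor is the constant $succ$, and the $i$th projection of arity $n$ is $proj^n_i=\lambda x_0,\ldots,x_{n-1}.\,x_i$, all already in hand. Composition is equally routine: if $h=g\circ(h_1,\ldots,h_k)$ with the $h_j$ and $g$ represented by $\tau_j$ and $\gamma$, then $h$ is represented by $\lambda \bar x.\,\gamma(\tau_1(\bar x),\ldots,\tau_k(\bar x))$, using the tuple and application constructs; agreement on total inputs follows from Theorem~\ref{5.2} together with the induction hypothesis.

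The substantial work is in the two recursion clauses, both handled by $fix$. For primitive recursion, where $h(0,\bar y)=g(\bar y)$ and $h(n{+}1,\bar y)=f(n,h(n,\bar y),\bar y)$, I would set
\[
\tau = fix\bigl(\lambda H.\,\lambda n,\bar y.\; cond(zero(n),\,\gamma(\bar y),\,\phi(pred(n),H(pred(n),\bar y),\bar y))\bigr),
\]
where $\gamma,\phi$ represent $g,f$; the Fixed Point Theorem (Theorem~\ref{4.1}) supplies the solution, and an induction on $n$ shows $\tau(\hat n,\bar y)=\widehat{h(n,\bar y)}$ whenever the right side is defined. For minimization, where $h(\bar x)$ is the least $m$ with $g(\bar x,m)=0$, I would introduce a search term
\[
search = fix\bigl(\lambda S.\,\lambda \bar x,m.\; cond(zero(\gamma(\bar x,m)),\,m,\,S(\bar x,succ(m)))\bigr)
\]
and take $\tau=\lambda \bar x.\,search(\bar x,0)$.

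The main obstacle is precisely the minimization case, and it is a matter of partiality rather than of writing the term. I must check that the least fixed point of the search functional behaves exactly like the $\mu$-operator: when some $m$ witnesses $g(\bar x,m)=0$ with $g(\bar x,m')\neq\bottom$ for all smaller $m'$, the nested $cond$s unwind in finitely many steps to return $\widehat m$; but when no witness exists, or a smaller argument already diverges, the ascending approximants $\phi^k(\bottom)$ never commit to a numeral, so their union is $\bottom$, matching the undefinedness of $h$. This is where the least-fixed-point characterization $fix(f)=\Lub_n f^n(\bottom)$ of Theorem~\ref{4.2} does the real work. Finally, $\tau(\bottom)=\bottom$ in every case follows because each primitive $cond$, $succ$, $pred$, $zero$ is strict at $\bottom_{\dom N}$ in its principal argument, and because $fix$ and composition preserve this strictness, so a wholly undefined input can never trigger a numeral-producing branch.
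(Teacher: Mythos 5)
Your overall strategy coincides with the paper's: induct on the generative definition of the partial recursive functions, realize primitive recursion and minimization with $fix$ and $cond$, and use the least-fixed-point characterization $fix(f)=\bigcup_n f^n(\bot)$ to get divergence right in the $\mu$-case. Your primitive-recursion and search terms are essentially the ones in the text.

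The gap is in your base cases and in the final paragraph that is supposed to deliver $\tau(\bot)=\bot$ and the propagation of undefinedness through composition. The terms $\lambda x.0$ and $proj^n_i=\lambda x_0,\ldots,x_{n-1}.x_i$ are \emph{not} strict: $(\lambda x.0)(\bot)=0$, and $(\lambda x,y.x)(\hat{n},\bot)=\hat{n}$. So your closing claim that each primitive is strict and that composition preserves strictness is false for the terms you actually constructed. This is not cosmetic: if $h$ is the constant zero function, your $\tau$ violates the clause $\tau(\bot)=\bot$ outright; and for a composition $h=g\circ(h_1,\ldots,h_k)$ in which $h_1(n)$ diverges while the other $h_j(n)$ converge, the standard convention makes $h(n)$ undefined, yet your term evaluates $\gamma(\bot,\widehat{h_2(n)},\ldots)$, which for a non-strict $\gamma$ (for instance a projection onto the second coordinate) returns a numeral rather than $\bot$. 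The paper repairs exactly this point by working throughout with \emph{strict} extensions of the initial functions, using the device $\lambda x.cond(zero(x),x,x)$ for the strict identity and, e.g., $\lambda x,y.cond(zero(y),x,x)$ for a strict projection, and then noting that strictness is preserved by substitution, by the primitive-recursion fixed point, and by the $\mu$-search. Once your base cases are made strict in this way, the rest of your argument goes through as written.
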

\proof It is convenient in the proof to work with strict functions
$f:\domn^k\far\domn$ such that if any input is $\bottom$, the result
of the function is $\bottom$.  The composition of strict functions is
easily shown to be strict.  It is also easy to see that any partial
function $g:\nat^k\far\nat$ can be extended to a strict approximable
function $\bar{g}:\domn^k\far\domn$ which yields the same values on
inputs for which $g$ is defined.  Other input values yield $\bottom$.
We want to show that $\bar{g}$ is definable with a
$\lambda$-expression. 

First we must show that {\it primitive\/} recursive functions have
$\lambda$-definitions.  Primitive recursive functions are formed from
starting functions using composition and the scheme of primitive
recursion.  The starting functions are the constant function for zero
and the identity and projection functions.  These functions, however,
must be strict so the term $\lambda x,y.x$ is not sufficient for a
projection function.  The following device reduces a function to its
strict form.  Let $\lambda x.cond(zero(x),x,x)$ be a function with $x$
of type $\domn$.  This is the strict identity function.  The strict
projection function attempted above can be defined as
\[\lambda x,y.cond(zero(y),x,x)\]
The three variable projection function can be defined as
\[\lambda x,y,z.cond(zero(x),cond(zero(z),y,y),cond(zero(z),y,y))\]

While not very elegant, this device does produce strict functions.
Strict functions are closed under substitution and composition.  Any
substitution of a group of functions into another function can be
defined with a $\lambda$-term if the functions themselves can be so
defined. Thus, we need to show that functions obtained by primitive
recursion are definable.  Let $f:\domn\far\domn$, and
$g:\domn^3\far\domn$ be total functions with $\bar{f}$ and $\bar{g}$
being $\lambda$-definable.  We obtain the function
$h:\domn^2\far\domn$ by primitive recursion where
\[\begin{array}{lcl}
h(0,m)&=&f(m)\\
h(n+1,m)&=&g(n,m,h(n,m))
\end{array}\]
for all $n,m\in\domn$.  The $\lambda$-term for $\bar{h}$ is
\[fix(\lambda k.\lambda
x,y.cond(zero(x),\bar{f}(y),\bar{g}(pred(x),y,k(pred(x),y))))\]
Note that the fixed point operator for the domain $\domn^2\far\domn$
was used.  The variables $x$ and $y$ are of type $\domn$.  The $cond$
function is used to encode the function requirements.  The fixed point
function is easily seen to be strict and this function is $\bar{h}$.  

Primitive recursive functions are now $\lambda$-definable.  To obtain
{\it partial\/} ({\it i.e.}, general) recursive functions, the $\mu$-scheme (the least
number operator) is used.  Let $f(n,m)$ be a primitive recursive
function. Then, define $h$, a partial function, as $h(m) =$ the least
$n$ such that $f(n,m)=0$.  This is written as $h(m)=\mu n.f(n,m)=0$.
Since $\bar{f}$ is $\lambda$-definable as has just been shown, let
\[\bar{g}=fix(\lambda g.\lambda
x,y.cond(zero(\bar{f}(x,y)),x,g(succ(x),y)))\]
Then, the desired function $\bar{h}$ is defined as $\bar{h}=\lambda
y.\bar{g}(0,y)$.  It is easy to see that this is a strict function.
Note that, if $h(m)$ is defined, clearly $h(m)=\bar{g}(0,m)$ is also
defined.  If $h(m)$ is undefined, it is also true that
$\bar{g}(0,m)=\bottom$ due to the fixed point construction but it is
less obvious. This argument is left to the reader.\eop

Theorem~\ref{5.6} does not claim that all $\lambda$-terms define
partial recursive functions although this is also true.
Further examples of recursion are found
in the exercises.
\vspace{.25in}
\vspace{.25in}
\vspace{.25in}
\begin{center}
{\bf Exercises}
\end{center}
\begin{exer}\label{5.7}
Find the definitions of 
\[\lambda x,y.\tau~{\rm and}~\sigma(x,y)\]
which use only $\lambda v$ with one variable and applications only to
one argument at a time.  Note that use must be made of the combinators
$p_0$, $p_1$, and $pair$.  Generalize the result to functions of many
variables. 
\end{exer}
\begin{exer}\label{5.8}
The table of combinators was meant to show how combinators could be
defined in terms of $\lambda$-expressions.  Can the tables be turned
to show that, with enough combinators available, every
$\lambda$-expression can be defined by combining combinators using
application as the only mode of combination?
\end{exer}
\begin{exer}\label{5.9}
Suppose that $f,g:\domd\far\domd$ are approximable and $f\circ
g=g\circ f$.  Show that $f$ and $g$ have a least common fixed point
$x=f(x)=g(x)$.  (Hint: See Exercise~\ref{4.20}.)  If, in addition,
$f(\bottom)=g(\bottom)$, show that $fix(f)=fix(g)$.  Will
$fix(f)=fix(f^2)$? What if the assumption is weakened to $f\circ
g=g^2\circ f$?
\end{exer}
\begin{exer}\label{5.11}
For any domain $\domd$, $\domd^\infty$ can be regarded as consisting
of bottomless stacks of elements of $\domd$.  Using this view, define
the following combinators with their obvious meaning:
$head:\domd^\infty\far\domd$, $tail:\domd^\infty\far\domd^\infty$ and
$push:\domd\times\domd^\infty\far\domd^\infty$.  Using the fixed point
theorem, argue that there is a combinator $diag:\domd\far\domd^\infty$
where for all $x\in\domd$, $diag(x)=\langle x\rangle _{n=0}^\infty$.  (Hint: Try a
recursive definition, such as 
\[diag(x)=push(x,diag(x))\]
but be sure to prove that all terms of $diag(x)$ are $x$.)  Also
introduce by an appropriate recursive definition a combinator
$map:(\domd\far\domd)^\infty\times\domd\far\domd^\infty$  where for
elements of the proper type
\[map(\langle f_n\rangle _{n=0}^\infty,x)=\langle f_n(x)\rangle _{n=0}^\infty\]
\end{exer}
\begin{exer}\label{5.12}
For any domain $\domd$ introduce, as a least fixed point, a combinator
\[while:(\domd\far\dom{T})\times(\domd\far\domd)\far(\domd\far\domd)\]
by the recursion equation 
\[while(p,f)(x)=cond(p(x),while(p,f)(f(x)),x)\]
Prove that 
\[while(p,while(p,f))=while(p,f)\]
Show how $while$ could be used to obtain the least number operator,$\mu$,
mentioned in the proof of Theorem~\ref{5.6}.  Generalize this idea to
define a combinator
\[find:\domd^\infty\times(\domd\far\dom{T})\far\domd\]
which means ``find the first term in the sequence (if any) which
satisfies the given predicate''.
\end{exer}
\begin{exer}\label{5.13}
Prove the existence of a one-one function
$num:\nat\times\nat\leftrightarrow\nat$ such that
\[\begin{array}{lcl}
num(0,0)&=&0\\
num(n,m+1)&=&num(n+1,m)\\
num(n+1,0)&=&num(0,n)+1
\end{array}\]
Draw a descriptive picture (an infinite matrix) for the function.
Find a closed form for the values if possible.  Use the function to
prove the isomorphism between
${\cal P}(\nat)$,${\cal P}(\nat\times\nat)$, and ${\cal
P}(\nat)\times{\cal P}(\nat)$.
\end{exer}
\begin{exer}\label{5.14}
Show that there are approximable mappings
\[graph:({\cal P}(\nat)\far{\cal P}(\nat))\far{\cal P}(\nat)\]
and
\[fun:{\cal P}(\nat)\far({\cal P}(\nat)\far{\cal P}(\nat))\]
where $fun\circ graph = \lambda f.f$ and $graph\circ fun\appx \lambda
x.x$. (Hint: Using the notation
$[n_0,\ldots,n_k]=num(n_0,[n_1,\ldots,n_k])$, two such combinators can be
given by the formulas
\[\begin{array}{lcl}
fun(u)(x)&=&\{m\such \exists n_0,\ldots,n_{k-1}\in
x.[n_0+1,\ldots,n_{k-1}+1,0,m]\in u\}\\
graph(f)&=&\{[n_0+1,\ldots,n_{k-1}+1,0,m]\such m\in
f(\{n_0,\ldots,n_{k-1}\})\}
\end{array}\]
where $k$ is a variable - meaning all finite sequences are to be
considered.)
\end{exer}
\def\llub{\lub}
\def\Lub{\bigsqcup}
\newpage
\section{Introduction to Domain Equations}\label{sec6}
As stressed in the introduction, the notion of computation with
potentially infinite elements is an integral part of domain theory.
The previous sections have defined the notion of functions over
domains, as well as a notation for expressing these functions.  In
addition, the notion of computation through series of
approximations has been addressed.  This computation is possible since
the functions defined have been approximable and thus continuous.
This section addresses the construction of more complex domains with
infinite elements.  The next section looks specifically at the notion
of computability with respect to these infinite elements.  The last
section looks at another approach to domain construction.

New domains have been constructed from
existing ones using domain constructors such as the product
construction ($\times$), the function space construction ($\far$) and
the sum construction ($+$) of Exercise~\ref{3.18}.  These
constructors can be {\it iterated\/} similar to the way that
function application was iterated to form recursive
function definitions.  In this way, domains can be characterized using
recursion equations, called {\it domain equations\/}. 

\subsection{Domain Equations}
A domain equation represents an isomorphism between the domain as a 
whole and the combination of domains that comprise it.  These recursive
domains are frequently termed {\it reflexive\/} domains since, as in the
following example, the domain contains a copy of itself in its structure. 

\begin{exm}\label{6.1n}
Consider the following domain equation:
\[\dom{T}=\dom{A}+(\dom{T}\times\dom{T})\]
where $\dom{A}$ is a previously defined domain.  This domain can be
thought of as containing atomic elements from $\dom{A}$ or pairs of
elements of $\dom{T}$.  
\begin{itemize}
	\item What do the elements of this domain look like?  In particular, what
	are the finite elements of this domain?  \item How is the domain
	constructed?  
	\item What is an appropriate approximation ordering for the
	domain?  
	\item What do lubs in this domain look like?  \item What is the
	appropriate notion of consistency?  
	\item Does this domain even
	exist? In other words, are we certain a solution to this domain equation exists?
	\item And if a solution to the equation exists, is it a unique solution?
\end{itemize}
Each of these questions is examined below.

The domain equation tells us that an element of the domain is either
an element from $\dom{A}$ or is a pair of ``smaller'' elements from
$\dom{T}$.  One method of constructing a sum domain is using pairs
where some distinguished element denotes what type an element is.
Thus, for some $a\in\dom{A}$, the pair $\langle \pi,a\rangle $ might
represent the element in $\dom{T}$ for the given element $a$. For some
$s,t\in\dom{T}$, the pair $\langle \langle s,t\rangle ,\pi\rangle $
might then represent the 
element in $\dom{T}$ for the pair $s,t$.
Thus, $\pi$ is the distinguished element, and the location of $\pi$ in
the pair specifies the type of the element.   The finite elements
are either elements in $\dom{T}$ representing the (finite) elements of
$\dom{A}$ or the pair elements from $\dom{T}$ whose components are
also finite elements in $\dom{T}$.  

The question then arises about infinite elements.  Are there infinite
elements in this domain?  Consider the following fixed point equation
for some element for $a\in\dom{A}$:
\[x=\langle \langle a,x\rangle ,\pi\rangle .\]
The fixed point of this equation is the infinite product of the
element $a$.  Does this element fit the definition for $\dom{T}$?
From the informal description of the elements of $\dom{T}$ given so
far, $x$ does qualify as a member of $\dom{T}$.  

Now that some intuition has been developed about
this domain, a formal construction is required.  Let
$\langle \bas{A},\appx_A\rangle $ be the finitary basis used to generate the domain
$\dom{A}$. Let $\pi$ be an object such that $\pi\notin\bas{A}$.
Define the bottom element of the finitary basis \bas{T} as
$\Delta_T=\langle \pi,\pi\rangle $. Next, all the elements of $\dom{A}$ must be
included so define an element in $\bas{T}$ for each $a\in\bas{A}$ as
$\langle \pi,a\rangle $. Finally, pair elements for all elements in $\bas{T}$ must
exist in $\bas{T}$ to complete the construction.  The set $\bas{T}$
can be defined inductively as the {\it least\/} set such that:
\begin{enumerate}
\item $\Delta_T\in\bas{T}$,
\item $\langle \pi,a\rangle \in\bas{T}$ whenever $a\in\bas{A}$,
\item $\langle \langle \Delta_T,s\rangle ,\pi\rangle \in\bas{T}$ whenever $s\in\bas{T}$ (necessary??),
\item $\langle \langle t,\Delta_T\rangle ,\pi\rangle \in\bas{T}$ whenever $t\in\bas{T}$ (necessary??), and
\item $\langle \langle t,s\rangle ,\pi\rangle \in\bas{T}$ whenever $s, t\in\bas{T}$.
\end{enumerate}
The set can also be characterized by the following fixed point
equation:
\[\bas{T}=\{\Delta_T\}\cup\{\langle \pi,a\rangle \such
a\in\bas{A}\}\cup\{\langle \langle \Delta_T,s\rangle ,\pi\rangle \such
s\in\bas{T}\}\cup\{\langle \langle t,\Delta_T\rangle ,\pi\rangle \such t\in\bas{T}\}\cup\{\langle \langle t,s\rangle ,\pi\rangle \such s,t\in\bas{T}\}.\]
A solution must exist for this equation by the fixed point theorem.

Now that the basis elements have been defined, we must show how to
find lubs.  We will again use an inductive definition.
\begin{enumerate}
\item $\langle\pi,\pi\rangle \llub t=t$ for all $t\in \bas{T}$
\item For $a,b\in \bas{A}$,
$\langle\pi,a\rangle\llub\langle\pi,b\rangle=\langle\pi,a\llub
b\rangle$ if $a\llub b$ exists in $\bas{A}$
\item $\langle\langle s,t\rangle,\pi\rangle\llub\langle\langle
s',t'\rangle,\pi\rangle=\langle\langle s\llub s',t\llub
t'\rangle,\pi\rangle$ if $s\llub s'$ and $t\llub t'$ exist in
$\bas{T}$.
\item The lub $\langle\pi,a\rangle\llub\langle\langle
s,t\rangle,\pi\rangle$ does not exist.
\end{enumerate}

Next, the notion of consistency needs to be explored.  From the
definition of lubs given above, the following sets are consistent:
\begin{enumerate}
\item The empty set is consistent.
\item Everything is consistent with the bottom element.
\item A set of elements all from the basis \bas{A} is consistent in
\bas{T} if the set of elements is consistent in \bas{A}.
\item A set of product elements in \bas{T} is consistent if the left
component elements are consistent and the right component elements are
consistent. 
\end{enumerate}
These conditions derive from the sum and product nature of the domain.

The approximation ordering in the basis has the following inductive
definition:
\begin{enumerate}
\item $\Delta_T\appx_T s$ for all $s\in\bas{T}$
\item $y\appx_Tu\llub\Delta_T$ whenever $y\appx_Tu$
\item $\langle \pi,a\rangle \appx_T\langle \pi,b\rangle $ whenever $a\appx_Ab$
\item $\langle \langle s,t\rangle ,\pi\rangle \appx_T\langle \langle u,v\rangle ,\pi\rangle $ whenever $s\appx_Tu$ and
$t\appx_Tv$
\end{enumerate}

The next step is to verify that $\bas{T}$ is indeed a finitary basis.
The basis is still countable.
The approximation is clearly a partial order.  The existence of
lubs of finite bounded ({\it i.e.}, consistent) subsets must be verified.  The
definition of consistency gives us the requirements for a bounded
subset.  Each of the conditions for consistency are examined inductively
since the definitions are all inductive:
\begin{enumerate}
\item The lub of the empty set is the bottom element
$\Delta_T$. 
\item The lub of a set containing the bottom element is the lub of the
set without the bottom element which must exist by the induction
hypothesis.  
\item The lub of a set of elements all from the
$\bas{A}$ is the element in $\bas{T}$ for the lub in $\bas{A}$.  This
element must exist since $\bas{A}$ is a finitary basis and all
elements from $\bas{A}$ have corresponding elements in $\bas{T}$.
\item The lub of a set of product elements is the pair of the lub of
the left components and the lub of the right components.  These exist
by the induction hypothesis.
\end{enumerate}
Thus, a finitary basis has been created; the domain is formed as
always from the basis.  The solution to the domain equation has been
found since any element in the domain $\dom{T}$ is either an element
representing an element in $\dom{A}$ or is the product of two other
elements in $\dom{T}$.  Similarly, any element found on the left-hand
side must also be in the domain $\dom{T}$ by the construction.  Thus,
the domain $\dom{T}$ is {\it identical\/} to the domain
$\dom{A}+(\dom{T}\times\dom{T})$. 

To look at the question concerning the existence and uniqueness of
the solution to this domain equation, recall the fixed point theorem.
This theorem states that a fixed point set exists for any approximable
mapping over a domain.

\subsection{Subdomains}
In Section~\ref{univdom}, the concept of a {\it
universal domain\/} is introduced.  A universal domain is a domain
which contains all other domains as sub-domains.  These sub-domains
are, roughly speaking, the image of approximable functions over the
universal domain.  The domain equation for $\dom{T}$ can be viewed as
an approximable mapping over the universal domain.  As such, the fixed
point theorem states that a least fixed point set for the function
does exist and is unique. Sub-domains are defined formally below.

Looking again at the informal discussion concerning the elements of
the domain $\dom{T}$, the infinite element proposed does fit into the
formal definition for elements of $\dom{T}$.  This element
is an infinite tree with all left sub-trees
containing only the element $a$.
For this infinite element to be computable, it must be the lub
 of some ascending chain of finite approximations to it.  The
element $x$ can, in fact, be defined by the following
ascending sequence of finite trees:
\[\begin{array}{lcl}
x_0&=&\bottom\\
x_{n+1}&=&\langle \langle a,x_n\rangle ,\pi\rangle \\
x&=&\Lub^\infty_{n=0}x_n
\end{array}\]

Thus, using domain equations, a domain has been defined recursively.
This domain includes infinite as well as finite elements and allows
computation on the infinite elements to proceed using the finite
approximations, as with the more conventionally defined domains
presented earlier.
\end{exm}

The final topic of this section is the notion of a sub-domain.
Informally, a sub-domain is a structured part of a larger domain.
Earlier, a domain was described as a sub-domain of the universal
domain.  Thus, the sub-domain starts with a subset of the
elements of the larger domain while retaining the approximation
ordering, consistency relation and lub relation,
suitably restricted to the subset elements.

\begin{mdef}{Sub-Domain}\label{6.10}
A domain $\langle \dom{R},\appx_R\rangle $ is a {\it sub-domain\/} of a domain
$\langle \dom{D},\appx_D\rangle $, denoted $\dom{R}\subd\domd$ iff  
\begin{enumerate}
\item $\dom{R}\subseteq\domd$ - The elements of $\dom{R}$ are
a subset of the elements of $\dom{D}$.
\item $\bottom_R=\bottom_D$ - The bottom elements are the same.
\item For $x,y\in\dom{R}$, $x\appx_Ry\iff x\appx_Dy$ - The
approximation ordering  for $\dom{R}$ is the approximation ordering for
$\domd$ restricted to elements in $\dom{R}$.
\item For $x,y,z\in\dom{R}$, $x\llub_Ry=z$  iff $x\llub_Dy=z$  - The
lub relation for $\dom{R}$ is the lub
relation for $\domd$ restricted to elements in $\dom{R}$.
\item $\dom{R}$ is a domain.
\end{enumerate}
\end{mdef}

Equivalently, a sub-domain can be thought of as the image of an
approximable function which approximates the identity function (also
termed a {\it projection\/}).  The notion of a sub-domain is used in
the final section in the discussions about the universal domain. This
mapping between the domains can be formalized as follows:
\begin{thm}\label{6.12}
If $\domd\subd\dom{E}$, then there exists a projection pair of
approximable mappings $i:\domd\far\dom{E}$ and $j:\dom{E}\far\domd$
where $j\circ i={\ident}_\domd$ and $i\circ j\appx {\ident}_\dome$ where $i$
and $j$  are
determined by the following equations:
\[\begin{array}{lcl}
i(x)&=&\{y\in\bas{E}\such \exists z\in x.z\appx y\}\\
j(y)&=&\{x\in\bas{D}\such x\in y\}
\end{array}\]
for all $x\in\domd$ and $y\in\dom{E}$.
\end{thm}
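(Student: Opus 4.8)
The plan is to read $\domd \subd \dom{E}$ (Definition~6.10) as the assertion that the finitary basis $\bas{D}$ of finite elements of $\domd$ sits inside the basis $\bas{E}$ of finite elements of $\dom{E}$, with the bottom element, the approximation ordering, and binary lubs on $\bas{D}$ all inherited from $\bas{E}$ by conditions~(2)--(4). Under this identification an element $x \in \domd$ is an ideal $x \sub \bas{D} \sub \bas{E}$, and the two maps are the expected ones: $i$ \emph{generates} from $x$ the smallest ideal of $\dom{E}$ containing it, while $j$ \emph{restricts} an ideal of $\dom{E}$ to the sub-basis, $j(y) = y \cap \bas{D}$. Note here that the displayed formula for $i$ must be read with the inequality reversed, i.e.\ as the downward closure $\{e \in \bas{E} \such \exists z \in x,\, e \appx z\}$, since the literal upward closure is not downward closed and hence not an ideal. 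First I would check that $i$ and $j$ are approximable, and then verify the two equations $j \circ i = \ident_\domd$ and $i \circ j \appx \ident_\dome$.

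For approximability I would exhibit the underlying relations and verify the four closure conditions. The relation for $i$ is $d\,I\,e \iff e \appx d$ (for $d \in \bas{D}$, $e \in \bas{E}$), which is literally the pattern of the identity mapping of Theorem~2.11 transported along $\bas{D} \sub \bas{E}$; conditions~(1)--(4) are then immediate, using $\bot_D = \bot_E$ and the fact that the ordering on $\bas{D}$ is the restriction of that on $\bas{E}$. The relation for $j$ is $e\,J\,d \iff (d \in \bas{D} \mand d \appx e)$, giving $j(y) = \{d \in \bas{D} \such \exists e \in y,\; d \appx e\} = y \cap \bas{D}$ since $y$ is downward closed. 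Conditions~(1),~(3),~(4) for $J$ are routine; the one substantive condition is closure of the output under lubs: from $e\,J\,d$ and $e\,J\,d'$ I must produce $e\,J\,(d \lub_D d')$, which requires that $d \lub_D d'$ actually \emph{exists} in $\domd$ and coincides with $d \lub_E d'$.

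This last point is where I expect the real work, and it is the heart of the theorem. The difficulty is that $d, d' \in \bas{D}$ may be consistent in $\bas{E}$ (sharing the upper bound $e$) without \emph{a priori} being consistent in $\bas{D}$; condition~(4) only equates lubs already known to exist in $\domd$. To close this gap I would invoke the projection view of a subdomain noted in the surrounding text: a subdomain is the image of an approximable $p \appx \ident_\dome$ with $p \circ p = p$, whose elements are exactly its fixed points. Given such a $p$, if $c = d \lub_E d'$ then $p(c) \in \domd$ is an upper bound of $d = p(d)$ and $d' = p(d')$, and it is the least one, because any upper bound $u \in \domd$ of $d, d'$ satisfies $c \appx u$ and hence $p(c) \appx p(u) = u$. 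Thus $d \lub_D d' = p(c) = d \lub_E d'$, consistency transfers from $\bas{E}$ to $\bas{D}$, and $j(y)$ is genuinely an ideal of $\domd$. Equivalently one packages this as a lemma: $\bas{D}$ is closed under those $\bas{E}$-lubs of its elements that exist, so $y \cap \bas{D}$ inherits lub-closure from $y$.

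Finally I would verify the two equations by direct computation on ideals, which is short. For $j \circ i$: $j(i(x)) = i(x) \cap \bas{D} = \{d \in \bas{D} \such \exists z \in x,\; d \appx z\}$, and since $x$ is downward closed in $\bas{D}$ this set is exactly $x$, so $j \circ i = \ident_\domd$. For $i \circ j$: every generator of $i(j(y))$ lies $\appx$ some $d \in y \cap \bas{D} \sub y$, and $y$ is downward closed, whence $i(j(y)) \sub y$, i.e.\ $i \circ j \appx \ident_\dome$ in the subset ordering; equality fails in general precisely because $\dom{E}$ may contain finite elements lying above nothing in $\bas{D}$. Together these exhibit $\pair i j$ as a projection pair, completing the proof. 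The only genuine obstacle is the consistency-transfer step of the previous paragraph; the remaining verifications are the same routine checks used for the identity mapping in Theorem~2.11.
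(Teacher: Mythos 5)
The paper gives no proof of this theorem---it is explicitly left as an exercise---so there is no official argument to compare against; I can only assess yours on its own terms. Most of it is right: the reversed inequality in the displayed formula for $i$ is a genuine typo that you correctly repair (the composite formula in Theorem~8.5, which has $y \appx x'$, confirms the downward-closure reading), the four approximability conditions for your relations $I$ and $J$ check out as you describe, and the two computations $j(i(x)) = x$ and $i(j(y)) \sub y$ are exactly the right one-line arguments. You have also put your finger on the one step that actually requires thought: for $j(y) = y \cap \bas{D}$ to be closed under lubs you need that two elements of $\bas{D}$ with a common upper bound in $\bas{E}$ are already consistent in $\bas{D}$, with the same lub, and Definition~6.10(4) as literally written does not deliver this. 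It can genuinely fail: take $\dome = \{\bot,a,b,\top\}$ with $a,b$ incomparable below $\top$ and $\domd = \{\bot,a,b\}$; all five clauses of Definition~6.10 hold, yet $j(\ideal_\top) = \{\bot,a,b\}$ is not an ideal over $\bas{D}$.

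The problem is that your repair of this step is circular. The finitary projection $p \appx \ident_\dome$ with fixed-point set $\domd$ is not part of the data of Definition~6.10; in this monograph it is \emph{obtained from} the projection pair $(i,j)$ of the present theorem, via $p = i \circ j$ in Theorem~8.2, and the informal remark just before the theorem that a sub-domain ``can be thought of as'' the image of a projection is precisely the equivalence that Theorems~6.12, 6.15 and 8.2 are meant to establish. So you cannot assume such a $p$ exists in order to prove that $j$ is well defined. The non-circular fix is to read into Definition~6.10 the clause that the prose immediately above it states but the formal list omits: the \emph{consistency relation} of $\domd$ is the restriction of that of $\dome$. Granting that, $d \appx e$ and $d' \appx e$ make $\{d,d'\}$ consistent in $\dome$, hence in $\domd$, so $d \lub_D d'$ exists, and clause~(4) then identifies it with $d \lub_E d' \appx e$; everything else in your write-up goes through unchanged.
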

The proof is left as an exercise.

By the definition of a sub-domain, it should be clear that
\[\domd_0\subd\dom{E}\mand\domd_1\subd\dom{E}\imp(\domd_0\subd\domd_1\iff\domd_0\subseteq\domd_1)\]
Using this observation, the sub-domains of a domain can be ordered.
Indeed, the following theorem is a consequence of this ordering.
\begin{thm}\label{6.11}
For a given domain $\domd$, the set of sub-domains $\{\domd_0\such
\domd_0\subd\domd\}$ form a domain.
\end{thm}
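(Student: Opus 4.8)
The plan is to verify the comprehensive Definition of Domain directly for the partial order ${\cal S}=\{\dom{R}\such \dom{R}\subd\domd\}$, exploiting the displayed observation just before the statement, which tells us that on ${\cal S}$ the relation $\subd$ coincides with set inclusion $\sub$. First I would record the order-theoretic skeleton: $\langle{\cal S},\sub\rangle$ is a partial order, and the one-point sub-domain $\{\bot\}$ (where $\bot$ is the bottom of $\domd$) is its least element. The substance is then to exhibit a finitary basis inside ${\cal S}$ and to show that ${\cal S}$ is isomorphic to its ideal completion.

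For the basis I would single out the \emph{finitely generated} sub-domains. Given a finite set $F\sub\bas{D}$ of finite elements of $\domd$ (using the identification $\bas{D}=\domf{D}$), let $\langle F\rangle$ be the closure of $F\cup\{\bot\}$ under the lubs of consistent pairs computed in $\domd$; this closure is finite, since every new element is a lub of a subset of $F$, and as a finite, lub-closed, bottomed subset of $\domd$ it is itself a sub-domain whose elements are all finite in $\domd$. I would then argue that these finitely generated sub-domains are precisely the finite elements $\domf{S}$ of ${\cal S}$: if $\langle F\rangle=\Lub_i\dom{R}_i=\Union_i\dom{R}_i$ for a directed family, then the finite set $F$ already lies inside some $\dom{R}_i$, forcing $\langle F\rangle\subd\dom{R}_i$; conversely every sub-domain is the directed union of the finitely generated sub-domains it contains, so a non--finitely-generated sub-domain cannot be compact. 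Countability of $\domf{S}$ follows from countability of $\bas{D}$, and two finitely generated sub-domains that are consistent in ${\cal S}$ have as their lub the sub-domain generated by the union of their generators (finite, since it lives inside a common upper bound); hence $\domf{S}$ is a finitary basis.

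Finally I would define the candidate isomorphism $\dom{R}\mapsto\{\dom{R}_0\in\domf{S}\such\dom{R}_0\subd\dom{R}\}$ from ${\cal S}$ to the ideal completion of $\domf{S}$, with inverse sending an ideal $\ideal$ to $\Union\ideal$. Order-preservation in both directions is immediate from $\subd{=}\sub$, and the two composites are identities once one knows that every sub-domain equals the directed union of its finite sub-domains and that any such directed union is again a sub-domain. I expect this last closure fact to be the main obstacle: verifying that a directed union $\dom{R}=\Union_i\dom{R}_i$ of sub-domains satisfies condition~5 of Definition~\ref{6.10} (that $\dom{R}$ is again a domain) requires checking that $\dom{R}$ is a cpo, that its finite elements form a finitary basis, and that the inherited bottom, ordering, and consistent-lub structure agree with $\domd$'s; the delicate point is reconciling finiteness computed in $\dom{R}$ with finiteness in $\domd$. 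Here I would lean on the projection characterization of Theorem~\ref{6.12}: viewing each sub-domain as the image of a projection $p\appx\ident_\domd$ shows that finite elements of $\dom{R}$ are finite in $\domd$, and realizing directed unions of sub-domains as pointwise directed suprema of projections inside the function domain $\domd\amap\domd$---already known to be a domain---gives the cleanest route to the required closure.
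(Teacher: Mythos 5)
The paper itself does not write this proof out (it records only the hint that inclusion is the approximation ordering and leaves the rest as an exercise), so your proposal must stand on its own. Its overall architecture --- take the finitely generated sub-domains as the finite elements, check they form a countable basis with lubs of consistent pairs given by unioning generating sets, and exhibit the whole poset as the ideal completion --- is the right shape. But there is a genuine error at exactly the step you flag as the main obstacle: a directed union of sub-domains is in general \emph{not} a sub-domain, so the lub of a directed family in $\{\dom{R}\such\dom{R}\subd\domd\}$ is not $\Union_i\dom{R}_i$, and the inverse of your isomorphism cannot be $\ideal\mapsto\Union\ideal$. Concretely, take $\domd={\cal P}(\nat)$ and $\dom{R}_n={\cal P}(\{0,\ldots,n\})$: each $\dom{R}_n$ is a finite (hence finitely generated) sub-domain and the family is an inclusion chain, but $\Union_n\dom{R}_n$ is the set of all finite subsets of $\nat$, which is not a cpo (the directed set of all finite subsets has no upper bound among finite subsets) and so violates condition~5 of Definition~\ref{6.10}. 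For the same reason your auxiliary claim that ``every sub-domain equals the directed union of the finitely generated sub-domains it contains'' fails: ${\cal P}(\nat)$ is not the union of its finitely generated sub-domains, only their least upper bound.

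The repair is the projection picture you gesture at, carried one step further. Via Theorem~\ref{6.12} and Theorem~\ref{8.2}, a directed family of sub-domains corresponds to a directed family of projections $p_i\appx\ident_\domd$ whose lub $p=\Lub_i p_i$ exists in the function domain; the lub of the sub-domains is the fixed-point set of $p$, which consists of all directed lubs, taken in $\domd$, of elements of $\Union_i\dom{R}_i$ --- strictly larger than the union whenever new limit points appear. Accordingly, the forward map $\dom{R}\mapsto\{\dom{R}_0\such\dom{R}_0 \mbox{ finitely generated},\ \dom{R}_0\sub\dom{R}\}$ is correct, but its inverse must send an ideal $\ideal$ to the closure of $\Union\ideal$ under directed lubs in $\domd$ (equivalently, to the domain determined by the finitary basis $\bas{D}\cap\Union\ideal$), not to $\Union\ideal$ itself. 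With that correction your compactness argument for the finitely generated sub-domains survives (a finite element of $\domd$ lying in such a closure already lies in the union, by finiteness), and the rest of the outline goes through.
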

The proof proceeds using the inclusion relation defined as an
approximation ordering and is left as an exercise.

Finally, a converse of Theorem~\ref{6.12} can also be established:
\begin{thm}\label{6.15}
For two domains $\domd$ and $\dome$, if there exists a projection pair
$i:\domd\far\dome$ and $j:\dome\far\domd$ with $j\circ i=\ident_\domd$ and
$i\circ j\appx \ident_\dome$, then $\exists\domd'\subd\dome$ where $\domd\iso\domd'$.
\end{thm}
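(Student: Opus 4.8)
The plan is to take the image of $i$ as the candidate subdomain and to show it coincides with the fixed-point set of the idempotent map $p=i\circ j$, which retracts $\dome$ onto that image. First I would set $p=i\circ j:\dome\far\dome$. By Corollary~\ref{2.12} the composite of continuous functions is continuous, and using the associativity of composition from Theorem~\ref{2.11} together with the hypothesis $j\circ i=\ident_\domd$ one computes
\[p\circ p = i\circ(j\circ i)\circ j = i\circ\ident_\domd\circ j = i\circ j = p,\]
so $p$ is idempotent; moreover $p\appx\ident_\dome$ by hypothesis. I then set $\domd'=p(\dome)$ and record the three-way identification $p(\dome)=\{y\in\dome\such p(y)=y\}=i(\domd)$: the first equality is the standard fact that an idempotent's image equals its fixed-point set, and for the second, any $y=i(x)$ satisfies $p(y)=i(j(i(x)))=i(x)=y$, while any fixed point $y=p(y)=i(j(y))$ visibly lies in $i(\domd)$.

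Second, I would verify the five clauses of Definition~\ref{6.10} for $\dom{R}=\domd'$, equipped with the ordering, consistency, and lub relations inherited from $\dome$. Clause~1 is immediate. For clause~2, since $p\appx\ident_\dome$ gives $p(\bottom_\dome)\appx\bottom_\dome$ and $\bottom_\dome$ is least, we get $p(\bottom_\dome)=\bottom_\dome$, so $\bottom_\dome\in\domd'$ and is its least element. Clause~3 holds because $\domd'$ carries the restricted ordering by construction. The crucial computation is clause~4: if $x,y\in\domd'$ are consistent in $\dome$ with $z=x\lub y$, then monotonicity of $p$ yields $x=p(x)\appx p(z)$ and $y=p(y)\appx p(z)$, so $p(z)$ is an upper bound of $\{x,y\}$ and hence $z\appx p(z)$; combined with $p(z)\appx z$ (from $p\appx\ident_\dome$) this forces $p(z)=z$, so $z\in\domd'$ and the $\dome$-lub agrees with the internally computed lub.

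Third, for the isomorphism I would corestrict $i$ to a map $i:\domd\far\domd'$. It is onto by the definition $\domd'=i(\domd)$ and one-to-one because $i(x)=i(x')$ implies $x=j(i(x))=j(i(x'))=x'$; its inverse is $j$ restricted to $\domd'$, since $j(i(x))=x$. Both $i$ and $j$ are monotone (being continuous), so $x\appx x'\iff i(x)\appx i(x')$, i.e.\ $i$ is an order isomorphism and $\domd\iso\domd'$. This discharges clause~5 of Definition~\ref{6.10}, since a partial order isomorphic to the domain $\domd$ is itself a domain.

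I expect the main obstacle to be clause~4 — showing that $\domd'$ is closed under the least upper bounds it inherits from $\dome$ and that these coincide with the bounds computed inside $\domd'$; the squeeze $z\appx p(z)\appx z$, which simultaneously exploits the monotonicity of $p$ and the inequality $p\appx\ident_\dome$, is the heart of the argument. The only remaining delicacy is bookkeeping: organizing the three equivalent descriptions of $\domd'$ so that whichever is most convenient can be invoked in each clause.
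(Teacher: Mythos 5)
Your proof is correct, but it follows a genuinely different route from the paper's. You work globally with the idempotent $p=i\circ j$, identify $\domd'$ as its fixed-point set (equivalently the image of $i$), and verify the clauses of Definition~\ref{6.10} directly for arbitrary elements of $\dome$; the decisive step is the squeeze $z\appx p(z)\appx z$, which shows the fixed-point set is closed under the binary lubs inherited from $\dome$. The paper instead argues at the level of finite elements: it first proves that $i$ carries principal ideals to principal ideals, by writing $i(\ideal_x)$ as the lub of the principal ideals it contains, applying $j$ and continuity together with $j\circ i=\ident_\domd$ to locate a single $y\in i(\ideal_x)$ with $\ideal_x\subseteq j(\ideal_y)$, and concluding $i(\ideal_x)=\ideal_y$; it then checks $i(\bot_\domd)=\bot_\dome$ and closure of the image \emph{basis} under lubs. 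Your version is shorter and anticipates the retraction machinery of Section~\ref{univdom} (Theorem~\ref{8.2} makes precisely your $p$ the canonical projection onto the sub-domain). What the paper's longer argument buys is the by-product that finite elements of $\domd$ map to finite elements of $\dome$, so that the basis of $\domd'$ sits inside $\bas{E}$; that fact is not demanded by Definition~\ref{6.10} and is not needed for the statement as given, but it is explicitly reused later (the proof of Theorem~\ref{8.5} cites it), so your proof, while sufficient here, does not supply everything the paper later extracts from this theorem's proof.
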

\proof We show that $i$ maps finite elements to finite
elements and that $\domd'$ is the image of $\domd$ in $\dome$.  

For some $x\in\bas{D}$ with $\ideal_x$ as the principal ideal of $x$, we
can write 
\[i(\ideal_x)=\lub \{\ideal_y\such y\in i(\ideal_x)\}\]
Applying $j$ to both sides we get
\[\ideal_x=j\circ i(\ideal_x)=\lub \{j(\ideal_y)\such y\in i(\ideal_x)\}\]
since $j\circ i=\ident_D$ and $j$ is continuous by assumption.  But, since
$x\in \ideal_x$, $x\in 
j(\ideal_y)$ for some $y\in i(\ideal_x)$.  This means that
\[\ideal_x\subseteq j(\ideal_y)\]
and thus
\[i(\ideal_x)\subseteq i\circ j(\ideal_y) \subseteq \ideal_y\]
Since $\ideal_y\subseteq i(\ideal_x)$ must hold by the construction, $i(\ideal_x) =
\ideal_y$. This proves that finite elements are mapped to finite elements. 

Next, consider the value for $i(\bottom_D)$.  Since $\bottom_D\appx_D
j(\bottom_E)$, $i(\bottom_D)\appx \bottom_E$.  Thus
$i(\bottom_D)=\bottom_E$. Thus, $\domd$ is isomorphic to the image of
$i$ in $\dome$.  We still must show that $\domd'$ is a domain.  Thus,
we need to show that if a lub exists in $\dome$ for a
finite subset in $\domd'$, then the lub is also in
$\domd'$.  
Let $y',z'\in\bas{D}'$ and $y'\llub z'=x'\in\bas{E}$.  Then, there exists
$y,z\in\bas{D}$ such that $i(\ideal_y)=\ideal_{y'}$ and $i(\ideal_z)=\ideal_{z'}$ 
which implies that $\ideal_y=j(\ideal_{y'})$ and $\ideal_z=j(\ideal_{z'})$. 
Since $\ideal_{y'}\appx \ideal_{x'}$ and $j(\ideal_{y'})\appx j(\ideal_{x'})$ by
monotonicity, $y\in j(\ideal_{x'})$ must hold.  By the same reasoning,
$z\in j(\ideal_{x'})$.  But then $x=y\llub z\in j(\ideal_{x'})$ must also hold
and thus $y\llub z \in \domd$ since the element $j(\ideal_{x'})$ must be an
ideal. But,
\[\begin{array}{lcl}
\ideal_y\appx \ideal_x&\imp& \ideal_{y'}\appx i(\ideal_{x})\\
\ideal_z\appx \ideal_x&\imp& \ideal_{z'}\appx i(\ideal_{x})
\end{array}\]
This implies that $y'\llub z'=x'\in i(\ideal_x)$.  We already know that
$x\in j(\ideal_{x'})$ so $i(\ideal_x)\appx \ideal_{x'}$.  Thus, $i(\ideal_x)=\ideal_{x'}$ and
thus, $x'\in\bas{D'}$.\eop
\vspace{.25in}
\begin{center}
{\bf Exercises}
\end{center}
\begin{exer}\label{6.24}
Show that there must exist domains satisfying 
\[\begin{array}{lcll}
\doma&=&\doma+(\doma\times\domb)&{\rm and}\\
\domb&=&\doma+\domb
\end{array}\]
Decide what the elements will look like and define $\doma$ and $\domb$
using simultaneous fixed points.
\end{exer}
\begin{exer}
Prove Theorem~\ref{6.11}
\end{exer}
\begin{exer}
Prove Theorem~\ref{6.12}
\end{exer}
\begin{exer}\label{6.28}
Show that if $\doma$ and $\domb$ are finite systems, that
\[\domd\unlhd\dom{E}\unlhd\domd\imp \domd\iso\dom{E}\]
where $\domd\iso\domd'$ and $\domd'\subd\dom{E}$ is denoted
$\domd'\unlhd\dom{E}$. 
\end{exer}
\newpage
\section{Computability in Effectively Given Domains}\label{sec7}
In the previous sections, we gave considerable emphasis to the notion
of computation using increasingly accurate approximations of the input
and output. 
This section defines this notion of computability more formally.  
In Section 5, we found that partial functions over the natural
numbers were expressible in the $\lambda$-notation.  This relationship
characterizes computation for a particular domain.  To describe computation
over domains in general, a broader definition is required.

The way a domain is presented impacts the way computations are
performed over it. Indeed, 
the theorems of recursive function theory~\cite{rogers} rely in part on the normal
presentation of the natural numbers.  A presentation for a domain is
an enumeration of the elements of the domain.  The standard
presentation of the natural numbers is simply the numbers in ascending
order beginning with 0. There are many permutations of the natural
numbers, each of which can be considered a presentation.  Computation
with these non-standard presentations may be impossible; that is a
computable function on the standard presentation may be non-computable
over a non-standard presentation.  Therefore, an {\it effective
presentation for a domain\/} is defined as a presentation which makes
the required information computable. 

\subsection{Effective Presentations}
Information about elements in a
domain can be characterized completely by looking at the finite
elements and their relationships. Thus a
presentation must enumerate the finite elements and allow the
consistency and lub relationships on these elements to
be computed to allow this style of computation.

 The consistency relation and the lub 
relation depend on each other.  For example, if a set of elements
is consistent, a lub must exist for the set.  Given that
a set is consistent, the lub can be found in finite time
by just enumerating the elements and checking to see if this element
is the lub.  However, if the set is inconsistent, the
enumeration will not reveal this fact.  Thus, the consistency relation
must be assumed to be recursive in an effective presentation.
Exercise~\ref{7.13} provides a description of presentations that
should clarify the assumptions made.
Formally, a presentation is defined as follows: 
\begin{mdef}{Effective Presentation}\label{7.1}
The {\it presentation\/} of a finitary basis \bas{D} is a function
$\pi:\nat\far\bas{D}$ such that $\pi(0)=\Delta_D$ and the range of
$\pi$ is the set of finite elements of \bas{D}. The definition holds
for a domain $\domd$ as well.

A presentation $\pi$ is {\it effective\/} iff
\begin{enumerate}
\item The consistency relation ($\exists
k.\pi_i\appx\pi_k\mand\pi_j\appx\pi_k$) for elements $\pi_i$ and
$\pi_j$ is recursive\footnote{Recursive in this context means that the
relation is decidable.} over $i$ and $j$.
\item The lub relation ($\pi_k=\pi_i\llub\pi_j$) is
recursive over $i$, $j$, and $k$.
\end{enumerate}
\end{mdef}

This definition supports our intuition about
domains; we have stated that the important information about a domain
is the set of finite elements, the ordering and consistency
relationships between the elements and the lub relation.  
Thus, an effective presentation provides, in a suitable (that is
computable) form, the basic information about the structure and
elements of a domain. A presentation can also be viewed as an
enumeration of the elements of the domain with the position of an
element in the
enumeration given by the index corresponding to the integer input for
that element in the presentation function with the 0 element
representing $\bottom$. This perspective is used in
the majority of the proofs.

\subsection{Computability}
Now that the presentation of a domain has been formalized, the notion
of computability can be formally defined.  Thus,
\begin{mdef}{Computable Mappings}\label{7.2}
Given two domains, $\domd$ and $\dome$ with effective presentations
$\pi_1$ and $\pi_2$ respectively, an approximable mapping
$f:\bas{D}\far\bas{E}$ is {\it computable\/} iff the relation
\[x_n\:f\:y_m\]
is recursively enumerable in $n$ and $m$.
\end{mdef}
 
By considering the domain $\domd$ to be a single element domain, the
above definition applies not only to computable functions but also to
computable elements.  For $d\in\domd$ where $d$ is the only element in
the domain, the element 
\[e=f(d)\in\dome\]
defines an element in $\dome$.  The definition states that $e$ is a
computable iff the set
\[\{m\in\nat\such y_m\appx e\}\]
is a recursively enumerable set of integers.  Clearly if the set of
elements approximating another is finite, the set is recursive.  The
notion of a recursively enumerable set simply requires that all elements
approximating the element in question be listed eventually.  The
computation then proceeds by accepting an enumeration representing the
input element and enumerating the elements that approximate
the desired output element.

Now that the notions of computability and effective presentations have
been formalized, the methods of constructing domains and functions
will be addressed.

The proof of the next theorem is trivial and is left to the reader.
\begin{thm}\label{7.3}
The identity map on an effectively given domain is computable.  The
composition of computable mappings on effectively given domains are
also computable.
\end{thm}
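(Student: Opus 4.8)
The plan is to unwind Definition~\ref{7.2}: an approximable mapping is computable exactly when its defining relation on indexed basis elements is recursively enumerable in the two indices. For the identity mapping $\ident_D$ I would first recall from Theorem~\ref{2.11} that $a \, \ident_D \, b \iff b \appx a$, so that $x_n \, \ident_D \, x_m \iff x_m \appx x_n$. The key observation is that under any effective presentation the approximation ordering is not merely r.e.\ but actually \emph{recursive}: for basis elements $\pi_i$ and $\pi_j$ one has $\pi_i \appx \pi_j \iff \pi_i \lub \pi_j = \pi_j$, and the right-hand side is an instance of the lub relation (with indices $i$, $j$, $j$), which is recursive by Definition~\ref{7.1}. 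Hence $x_m \appx x_n$ is decidable in $(m,n)$, so it is in particular recursively enumerable, and $\ident_D$ is computable.

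For the composition, let $f:\bas{D}\far\bas{E}$ and $g:\bas{E}\far\bas{F}$ be computable, where the common domain $\bas{E}$ carries a single effective presentation $\pi_2$ used as the codomain enumeration for $f$ and the domain enumeration for $g$; write $x_n=\pi_1(n)$, $y_m=\pi_2(m)$, $z_p=\pi_3(p)$. First note that $g\circ f$ is an approximable mapping by Theorem~\ref{2.11}, so only computability is at issue. By the composition rule of Theorem~\ref{2.11}, $x_n \,(g\circ f)\, z_p \iff \te m \in \nat \; [\, x_n \, f \, y_m \mand y_m \, g \, z_p \,]$. By hypothesis $x_n \, f \, y_m$ is r.e.\ in $(n,m)$ and $y_m \, g \, z_p$ is r.e.\ in $(m,p)$. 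I would then invoke the standard closure properties of recursively enumerable relations: conjunction preserves r.e.-ness, and so does existential projection over $\nat$. Concretely one dovetails the two semi-decision procedures across all $m$ together with their step budgets, accepting $(n,p)$ as soon as some $m$ is found for which both $x_n \, f \, y_m$ and $y_m \, g \, z_p$ have been confirmed. This enumerates $\{(n,p) \such x_n \,(g\circ f)\, z_p\}$, so $g\circ f$ is computable.

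The routine part is the pair of closure facts from recursion theory, which I would cite rather than reprove. The one step deserving genuine care---and what I regard as the crux---is establishing that $\appx$ is \emph{decidable} rather than merely semi-decidable: Definition~\ref{7.2} guarantees only that mapping relations are r.e., so to handle the identity one must extract decidability of $\appx$ from the recursiveness of the lub relation, via the equivalence $\pi_i \appx \pi_j \iff \pi_i \lub \pi_j = \pi_j$ (notably the consistency clause of Definition~\ref{7.1} is not even needed here). Once that equivalence is in hand the identity case is immediate, and the composition case reduces to the expected dovetailing argument.
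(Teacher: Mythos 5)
Your proof is correct; the paper marks this theorem's proof as trivial and leaves it to the reader, and your argument is precisely the intended one. The reduction of $\pi_i \appx \pi_j$ to the recursive lub relation via the equivalence $\pi_i \lub \pi_j = \pi_j$ settles the identity case (indeed giving decidability, more than the required recursive enumerability), and closure of recursively enumerable relations under conjunction and existential quantification over $\nat$ (by dovetailing) settles composition.
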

The following corollary is a consequence of this theorem:
\begin{cor}
For computable function $f:\domd\far\dome$ and a computable element
$x\in\domd$, the element $f(x)\in\dome$ is computable.
\end{cor}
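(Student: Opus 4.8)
The plan is to derive this corollary from Theorem~\ref{7.3} (closure of computable mappings under composition) by realizing a computable \emph{element} of $\domd$ as a computable \emph{mapping} out of a trivial one-element domain. Once $x$ is presented as such a mapping $g$, the composite $f \circ g$ names $f(x)$ and is computable by the theorem, which is exactly what we want.

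First I would fix the trivial finitary basis $\bas{U} = \{\bot\}$, whose only proposition is its bottom, generating the one-element domain $\domu$. Its canonical presentation sends $0 \mapsto \bot$, and the consistency and lub relations on $\bas{U}$ are trivially decidable, so $\domu$ is effectively given. Next I would encode the computable element $x \in \domd$ as the relation $g \sub \bas{U} \times \bas{D}$ defined by $\bot \: g \: d \iff d \appx x$. This is routinely an approximable mapping, and it picks out $x$, since $\apply(g,\bot) = \{\, d \in \bas{D} \such d \appx x \,\} = x$. Moreover $g$ is computable in the sense of Definition~\ref{7.2}: because $\bas{U}$ carries the single index $0$, the two-argument relation witnessing computability of $g$ reduces to recursive enumerability of $\{\, n \such \pi_1(n) \appx x \,\}$, which is precisely the hypothesis that $x$ is a computable element (here $\pi_1$ presents $\domd$).

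Then I would form the composite $f \circ g \sub \bas{U} \times \bas{E}$. By Theorem~\ref{7.3} the composition of computable mappings between effectively given domains is computable, so $f \circ g$ is computable. It remains to check that this composite names $f(x)$; unwinding the definition of composition from Theorem~\ref{2.11},
\[
\bot \:(f\circ g)\: e \iff \te d \in \bas{D}\; (\bot\:g\:d \mand d\:f\:e) \iff \te d \in \bas{D}\; (d \appx x \mand d\:f\:e),
\]
which is exactly the condition $e \in \apply(f,x) = f(x)$. Hence the recursively enumerable relation witnessing computability of $f \circ g$, after discarding its redundant first (index-$0$) coordinate, is a recursive enumeration of $\{\, m \such \pi_2(m) \appx f(x) \,\}$, so $f(x)$ is a computable element of $\dome$.

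I do not expect a substantive obstacle here, since the statement is essentially a specialization of Theorem~\ref{7.3}. The only point requiring care is the bookkeeping that bridges Definition~\ref{7.2}'s two-index formulation of a computable mapping and the one-index formulation of a computable element; this is handled by the observation that the source basis $\bas{U}$ has a single element indexed by $0$, so enumerability "in $n$ and $m$" collapses to enumerability in the single target index. \eop
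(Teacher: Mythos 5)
Your proposal is correct and is exactly the route the paper intends: the text defines computable elements precisely as computable mappings out of a one-element domain, and then states the corollary as an immediate consequence of Theorem~7.3 on composition. Your write-up just fills in the bookkeeping the paper leaves implicit.
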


In addition, the standard domain constructors maintain effective
presentations.
\begin{thm}\label{7.4}
For domains $\domd_0$ and $\domd_1$ with effective presentations, the
domains 
\[\domd_0+\domd_1~{\rm and}~ \domd_0\times\domd_1\]
are also effectively
given. In addition, the projection functions are all computable.
Finally, if $f$ and $g$ are computable maps, then so are $f+g$ and
$f\times g$.
\end{thm}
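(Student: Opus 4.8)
The plan is to build the two required presentations directly from the component presentations $\pi_0:\nat\map\bas{D}_0$ and $\pi_1:\nat\map\bas{D}_1$ using a fixed recursive pairing bijection $\langle\cdot,\cdot\rangle:\nat\times\nat\map\nat$ with recursive inverses $\ell,r$ normalized so that $\langle 0,0\rangle=0$. Before treating either constructor I would record one preliminary fact, since the computability of the projections rests on it: on any effectively presented finitary basis the approximation relation $\pi_i\appx\pi_j$ is itself recursive in $i,j$. Indeed $\pi_i\appx\pi_j$ holds iff $\{\pi_i,\pi_j\}$ is consistent and $\pi_i\lub\pi_j=\pi_j$. The consistency is decidable by clause~1 of Definition~\ref{7.1}, and once consistency is known (so the lub exists) the identity $\pi_j=\pi_i\lub\pi_j$ is decidable by clause~2. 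Thus $\appx$ reduces to the two relations that are assumed recursive.

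For the product I would set $\rho(n)=[\pi_0(\ell n),\pi_1(r n)]$. Because the pairing is a bijection, $\rho$ enumerates exactly the pairs of finite elements, which by Definition~\ref{3.1} and Theorem~\ref{3.2} are precisely the finite elements of $\bas{D}_0\times\bas{D}_1$, and $\rho(0)=[\pi_0(0),\pi_1(0)]$ is the product bottom. Effectiveness is then immediate from the coordinatewise definitions of the product order and lub: $\rho_n$ and $\rho_m$ are consistent iff $\pi_0(\ell n),\pi_0(\ell m)$ are consistent and $\pi_1(r n),\pi_1(r m)$ are consistent, and $\rho_k=\rho_n\lub\rho_m$ iff both coordinate lub relations hold. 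Each conjunct is recursive by hypothesis and $\ell,r$ are recursive, so both relations are recursive. For the separated sum of Exercise~\ref{3.18} I would reserve index $0$ for the fresh bottom and decode each $n>0$ as a tag bit plus a component index: writing $n-1=2m+b$ with $b\in\{0,1\}$, set $\sigma(0)=\bot$ and $\sigma(n)=(b,\pi_b(m))$. The same case analysis applies, the only new ingredient being the decidable comparison of tags (unequal tags yield inconsistency and no lub), so consistency and lub again reduce to the component relations together with a recursive tag test.

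For the maps, the projection mapping satisfies $[d,e]\,\cP_0\,d'\iff d'\appx d$ (Definition~\ref{3.3}), so $\rho_n\,\cP_0\,\pi_0(m)\iff\pi_0(m)\appx\pi_0(\ell n)$, which is recursive (hence recursively enumerable) by the preliminary fact; $\cP_1$ is symmetric. For $F\times G$ I would use the relational description $[a,b]\,(F\times G)\,[c,d]\iff a\,F\,c\mand b\,G\,d$ (which follows from $f\times g=\langle f\circ p_0,g\circ p_1\rangle$ in Exercise~\ref{3.19} after collapsing the existential introduced by $\cP_0,\cP_1$ via input monotonicity). On the chosen presentations the graph of $F\times G$ is then the conjunction of the graph of $F$ read off the left coordinates through $\ell$ and the graph of $G$ read off the right coordinates through $r$; a conjunction of two recursively enumerable relations is recursively enumerable, and $F,G$ are computable by assumption. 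The sum $F+G$ is handled by the analogous tagged description: the fresh bottom maps to the fresh bottom, matching tags reduce to $F$ or $G$, and mismatched tags never relate. After the recursive tag decoding and the recursive treatment of the bottom indices, the remaining relation is the graph of $F$ or of $G$, so the whole graph is recursively enumerable.

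The bulk of the work is routine: the closure verifications for presentations and the book-keeping of the encodings all reduce mechanically to the component relations. The one point needing a genuine (if small) argument, and the step I would present first, is the preliminary reduction showing that $\appx$ is recursive given that consistency and lub are recursive, because the computability of the projection mappings depends entirely on it; everything after that is a combination of recursive predicates with the two recursively enumerable graphs supplied by the hypotheses.
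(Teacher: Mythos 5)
Your proposal is correct and follows essentially the same route as the paper: enumerate the sum via a tag-plus-index encoding with a fresh index for bottom, enumerate the product via a recursive pairing function, reduce consistency and lubs coordinatewise (plus a tag test) to the component relations, and express the projection and map combinators as relations on indices. The one thing you add that the paper leaves implicit is the explicit observation that $\appx$ is recursive because it reduces to consistency plus a single instance of the lub relation; this is a worthwhile clarification since the paper's index-level definitions of the combinators silently rely on it.
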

\proof Let $\{X_i\such i\in\nat\}$ be the enumeration of $\domd_0$
and $\{Y_i\such i\in\nat\}$ be the enumeration of $\domd_1$. Another
method of sum construction is to use two distinguishing elements in
the first position to specify the element type.  Thus, a sum domain
can be defined as follows:
\[\domd_0+\domd_1=\{(\Delta_0,\Delta_1)\}\cup\{(0,x)\such
x\in\domd_0\}\cup \{(1,y)\such y\in\domd_1\}\]
The enumeration can then be defined as follows for $n\in\nat$:
\[\begin{array}{lcl}
Z_0&=&(\Delta_0,\Delta_1)\\
Z_{2n+1}&=&(0,X_n)\\
Z_{2n+2}&=&(1,Y_n)
\end{array}\]
The proof that $Z_i$ is an effective presentation is left as an
exercise.

For the product construction, the domain appears as follows:
\[\domd_0\times\domd_1=\{(x,y)\such x\in\domd_0,y\in\domd_1\}\]
The enumeration can be defined in terms of the functions
$p:\nat\far\nat$, $q:\nat\far\nat$, and $r:(\nat\times\nat)\far\nat$
where for $m$, $n$, $k\in\nat$:
\[\begin{array}{lcl}
p(r(n,m))&=&n\\
q(r(n,m))&=&m\\
r(p(k),q(k))&=&k
\end{array}\]

Thus, $r$ is a one-to-one pairing function (see Exercise~\ref{5.13})
of which there are several.  The functions $p$ and $q$ extract the
indices from the result of the pairing function.  The enumeration for
the product 
domain is then defined as follows:
\[W_i = (X_{p(i)},Y_{q(i)})\]
The proof that this is an effective presentation is also left as an
exercise. 

For the combinators, the relations will be defined in terms of the
enumeration indices.  For example,
\[\begin{array}{lcl}
X_n\:in_0\:Z_m&\iff& m=0~{\rm or}\\
&&\exists k.m=2k+1\mand X_k\appx X_n\\
W_k\:proj_1\:Y_m&\iff& Y_m\appx Y_{q(k)}
\end{array}\]
The reader should verify that these sets are recursively enumerable.
For this proof, recall that recursively enumerable sets are closed
under conjunction, disjunction, substituting recursive functions, and
applying an existential quantifier to the front of a recursive
predicate.  The proof for the 
other combinators is left as an exercise. \eop

Product spaces formalize the notion of computable functions
of several variables.  Note that the proof of Theorem~\ref{3.7} shows
that substitution of computable functions of severable variables into
other computable functions are still computable.  The next step is to
show that the function space constructor preserves effectiveness.
\begin{thm}\label{7.5}
For domains $\domd_0$ and $\domd_1$ with effective presentations, the
domain $\domd_0\far\domd_1$ also has an effective presentation.  The
combinators $apply$ and $curry$ are computable if all input domains are
effectively given.  The computable elements of the domain
$\domd_0\far\domd_1$ are the computable maps for
$\bas{D_0}\far\bas{D_1}$. 
\end{thm}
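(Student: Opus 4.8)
The plan is to treat the three assertions separately but to unify them through a single device. By Definition~\ref{3.13} the finite elements of $\domd_0\far\domd_1$ are exactly the finite step mappings, and every such mapping $F\sub\bas{D_0}\amap\bas{D_1}$ is the closure of a \emph{finite} set $S\sub\bas{D_0}\times\bas{D_1}$. Having fixed the given enumerations $X_i$ of $\bas{D_0}$ and $Y_j$ of $\bas{D_1}$, I would code a pair $(X_i,Y_j)$ by $r(i,j)$ as in Theorem~\ref{7.4} and a finite set of such pairs by a canonical code of a finite set of integers. Every question about the function space then reduces to a finite computation on generating sets that invokes only the consistency, lub, and ordering relations of $\bas{D_0}$ and $\bas{D_1}$, which are recursive by Definition~\ref{7.1}.

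First I would build the effective presentation of $\domd_0\far\domd_1$. I enumerate candidate codes $S$ and retain those generating a legitimate approximable mapping: a finite $S$ does so iff for every subset of $S$ whose input components are consistent in $\bas{D_0}$ the corresponding output components are consistent in $\bas{D_1}$ (this is forced by the monotonicity and lub-closure conditions), and since $S$ is finite this is a decidable test in the recursive component relations. Given step mappings $F,G$ presented by generating sets $S_F,S_G$, their consistency is the same finite test applied to $S_F\union S_G$, and when they are consistent their lub is the closure of $S_F\union S_G$, whose canonical code, hence index in the enumeration, is computable. The ordering $F\appx G$ reduces to checking, for each generating pair $(a,b)$ of $F$, that $a\:G\:b$, which holds iff $b\appx_{D_1}\Lub\setof{b'\such (a',b')\in S_G,\ a'\appx_{D_0}a}$ --- again a finite recursive test. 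I expect this step to be the main obstacle, since it is where one must pin down the correct finitary characterization of consistency and lubs of step mappings and verify that closures are computable; the other two parts are comparatively routine once it is in place.

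For the combinators, the relation $Apply$ of Theorem~\ref{3.11} satisfies $[F,a]\:Apply\:b\iff a\:F\:b$, and given codes for $F$ and $a$ this is decided by the same output-lub test used for the ordering above; since the product domain $(\bas{D_0}\amap\bas{D_1})\times\bas{D_0}$ is effectively presented by Theorem~\ref{7.4}, the relation is recursive in the relevant indices, so $apply$ is computable. For $Curry$, the corollary to the currying theorem gives $[a,F]\in Curry(G)\iff \fa [b,c]\in F\ [a,b]\:G\:c$; decoding $G$ and the target step mapping from their generating sets reduces $G\:Curry\:H$ to a finite conjunction of instances of the decidable relation $[a,b]\:G\:c$, so $Curry$ is recursive and hence computable.

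Finally, for the characterization of computable elements I would use Theorem~\ref{3.10}, which identifies an ideal $d$ over $\bas{D_0}\amap\bas{D_1}$ with the approximable mapping $F=\Union d\in\cMap(\bas{D_0},\bas{D_1})$. Writing $F_{n,m}$ for the least step mapping containing $(X_n,Y_m)$, whose index is computable from $(n,m)$, the key equivalence is $X_n\:F\:Y_m\iff F_{n,m}\appx d$. If $d$ is a computable element, then $\setof{k\such Z_k\appx d}$ is recursively enumerable, and composing with the computable map sending $(n,m)$ to the index of $F_{n,m}$ shows $\setof{(n,m)\such X_n\:F\:Y_m}$ is recursively enumerable, so $F$ is a computable map. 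Conversely, if $F$ is computable, then $Z_k\appx d$ holds iff every generating pair $(X_{n_i},Y_{m_i})$ of $Z_k$ satisfies $X_{n_i}\:F\:Y_{m_i}$, a finite conjunction of recursively enumerable conditions extracted from the computable decoding $k\mapsto S_{Z_k}$; hence $\setof{k\such Z_k\appx d}$ is recursively enumerable and $d$ is a computable element. The two reductions together yield the stated equality of computable elements and computable maps.
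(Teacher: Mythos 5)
Your proposal is correct and follows essentially the same route as the paper: enumerate the finite step mappings via their finite generating sets of pairs, reduce consistency, lubs, and the ordering to finite tests in the recursive relations of $\bas{D_0}$ and $\bas{D_1}$, observe that $Apply$ is decidable because $F$ is a finite step mapping, and use the Theorem~\ref{3.10} correspondence to identify computable elements with computable maps. You merely supply more detail where the paper is terse (the precise consistency criterion for a generating set, and the $Curry$ case, which the paper defers to the exercises).
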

\proof Let $\domd_0=\{X_i\such i\in\nat\}$ and $\domd_1=\{Y_i\such
i\in\nat\}$  be the presentations for the domains.  The elements of
$\bas{D_0}\far\bas{D_1}$ are finite step functions which respect the
mapping of some subset of $\bas{D_0}\times\bas{D_1}$.  Given the
enumeration, each element can be associated with a set
\[\{(X_{n_i},Y_{m_i})\such\exists q. 1\leq i\leq q\}\]
Thus, there is a finite set of integers pairs that determine the
element. Given the definition of consistency from Theorem~\ref{3.9}
for elements in the function space domain and the decidability of
consistency in $\domd_0$ and $\domd_1$, consistency of any finite
set of this form is decidable (tedious but decidable since all
elements must be checked with all others, etc).  Since consistency is
decidable, a systematic enumeration of pair sets which are consistent
can be made; this enumeration is simply the enumeration of
$\domd_0\far\domd_1$. Finding the lub consists of making
a finite series of tests to find the element that is the lub, which
must exist since the set is consistent and we have 
closure on lubs of finite consistent subsets.  Finding
the lub requires a finite series of checks in both $\domd_0$
and $\domd_1$ but these checks are decidable. Thus, the lub
 relation is also decidable in $\domd_0\far\domd_1$.  This shows
that $\domd_0\far\domd_1$ is effectively given.

To show that $apply$ and $curry$ are computable, the mappings need to
be examined.  The mapping defined for apply is
\[(F,a)\:apply\: b\iff a\:F\:b\]
The function $F$ is the lub of all the finite step
functions that are consistent with it.  As such, $F$ can be viewed as
the canonical representative of this set.  Since $F$ is a finite step
function, this relation is decidable.  As such, the $apply$ relation
is recursive and not just recursively enumerable and $apply$ is a
computable function.

The reasoning for $curry$ is similar in that the relations are
studied.  Given the increase in the number of domains, the
construction is more tedious and is left for the exercises.

To see that the computable elements correspond to the computable maps,
recall the relationship shown in Theorem~\ref{3.10} between the maps
and the elements in the function space.  Thus, we have
\[a\:f\:b \iff b\in f(\ideal_a)~{\rm or}~\ideal_b\appx f(\ideal_a)\]
Since $f$ is a computable map, we know that the pairs in the map are
recursively enumerable.  Using the previous techniques for deciding
consistency of finite sets, the set of elements consistent with $f$ can be
enumerated.  But this set is simply the ideal for $f$ in the function
space.  The converse direction is trivial. \eop

The final combinator to be discussed, and perhaps the most important,
is the fixed point combinator.
\begin{thm}\label{7.6}
For any effectively given domain, $\domd$, the combinator
$fix:(\domd\far\domd)\far\domd$ is computable.
\end{thm}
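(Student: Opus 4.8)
The plan is to reduce computability of $fix$, in the sense of Definition~\ref{7.2}, to the recursive enumerability of a single relation between enumeration indices, and then to exhibit a semi-decision procedure for that relation using the effective structure guaranteed by Theorem~\ref{7.5}. First I would recall the explicit description of the fixed point from Theorem~\ref{4.2}: for a finite step function $F:\bas{D}\far\bas{D}$,
\[ fix(F) = \{d\in\bas{D} \such \te n\in\nat.\ \bot\:F^n\:d\}. \]
Hence the approximable mapping $Fix\sub(\bas{D}\far\bas{D})\times\bas{D}$ underlying the combinator satisfies $F\:Fix\:d \iff d\in fix(F) \iff \te n\in\nat.\ \bot\:F^n\:d$. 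By Definition~\ref{7.2}, to prove $fix$ computable it suffices to show that, relative to the effective presentation of $\bas{D}\far\bas{D}$ (which exists by Theorem~\ref{7.5}) and of $\bas{D}$, the relation $\{(i,j)\such X_i\:Fix\:Y_j\}$ is recursively enumerable in $i$ and $j$, where $X_i$ enumerates the finite step functions and $Y_j$ the finite elements of $\bas{D}$.

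The key technical step I would establish is that application of a finite step function is effective: from an index $i$ for a finite step function $F=X_i$ and an index for a finite element $c\in\bas{D}$, one can compute an index for $F(c)$, and consequently the relation $a\:F\:b$ is decidable. This holds because $F$ is determined by a finite set of generating pairs recoverable from $i$, so $F(c)=\Lub\{b \such \te a.\ a\appx c \mand a\:F\:b\}$ is a finite least upper bound; each test $a\appx c$ is decidable (since $a\appx c \iff a,c$ are consistent $\mand a\lub c=c$, and both consistency and $\lub$ are recursive by effectiveness of $\bas{D}$), and the finite $\lub$ is then computable via the recursive lub relation. Iterating, the index of $F^n(\bot)$ is computable uniformly from $i$ and $n$, starting from $F^0(\bot)=\bot=\Delta_D$.

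Given this, the semi-decision procedure is routine dovetailing: on input $(i,j)$, for $n=0,1,2,\dots$ compute an index for $F^n(\bot)$ and test whether $Y_j\appx F^n(\bot)$ (decidable); accept as soon as some test succeeds. This halts precisely when $\te n.\ \bot\:F^n\:d$ holds, that is, exactly when $X_i\:Fix\:Y_j$, so the relation is recursively enumerable and $fix$ is computable. Equivalently, one could expand $\bot\:F^n\:d$ as $\te d_1,\dots,d_{n-1}.\ \bot\:F\:d_1 \mand \dots \mand d_{n-1}\:F\:d$ and observe that this is an existential quantification over finite tuples of a decidable predicate, again yielding a recursively enumerable relation.

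The main obstacle is the effectivity-of-application step: I must be careful that the finite description of $F$ is genuinely recoverable from its index in the presentation of $\bas{D}\far\bas{D}$ and that forming $F(c)$ only ever requires finite, decidable computations in $\bas{D}$. Once this is in hand, the iteration and the $\te n$ dovetailing add nothing beyond closure of recursively enumerable sets under existential quantification and substitution of recursive functions, which was already used in the proof of Theorem~\ref{7.4}. \eop
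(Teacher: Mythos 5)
Your proposal is correct and takes essentially the same route as the paper's proof: both reduce $F\:fix\:d$ to an existential quantification over finite iterates (equivalently, finite chains $\bot\:F\:d_1\:F\cdots F\:d$) of decidable checks, and conclude recursive enumerability from closure of r.e.\ predicates under existential quantification. The only difference is one of care, not of substance: you spell out why $a\:F\:b$ is decidable for a finite step function $F$ given by its index in the presentation of $\bas{D}\far\bas{D}$ from Theorem~\ref{7.5}, a point the paper's proof asserts implicitly with ``all of the checks in this finite sequence are decidable since $\domd$ is effectively given.''
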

\proof Let $\{X_n\such n\in\nat\}$ be the presentation of the  domain
$\domd$. Recall that for $f\in\domd\far\domd$, 
\[f\:fix\:X\iff \exists k\in\nat.
\Delta\:f\:X_1\:f\ldots f\:X_k\mand X_k=X\]
All of the checks in this finite sequence are decidable since $\domd$
is effectively given.  In addition, existential quantification of a
decidable predicate gives a recursively enumerable predicate.  Thus,
$fix$ is computable.  \eop
\subsection{Recap}
Now that this has been formalized, what has been accomplished?  The
major consequence of the theorems to this point is that any expression
over effectively given domains (that is effectively given types)
combined with computable constants 
using the $\lambda$-notation and the fixed point combinator is a
computable function of its free variables.  Such functions, applied to
computable arguments, yield computable values.  These functions also
have computable least fixed points.  All this gives us a
mathematical programming language for defining computable operations.  
Combining this language with the specification of types with domain
equations gives a powerful language.\\

As an example, the effectiveness of the domain $\dom{T}$ from
Example~\ref{6.1n} is studied.  The complete proof is left as an
exercise.
\begin{exm}
Recall the domain $\dom{T}$ from the previous section.  This domain
is characterized by the domain equation
\[\dom{T}=\dom{A}+(\dom{T}\times\dom{T})\]
for some domain $\dom{A}$.  If $\dom{A}$ is effectively given, we wish
to show that $\dom{T}$ is effectively given as well.  The elements
are either atomic elements from $\dom{A}$ or are pairs from $\dom{T}$.
Let $A=\{A_i\such i\in\nat\}$ be the enumeration for $\dom{A}$.  An
enumeration for $\dom{T}$ can be defined as follows:
\[\begin{array}{lcl}
T_0&=&\bottom_T\\
T_{2n+1}&=&3*A_n\\
T_{2n+2}&=&3*T_{p(n)}+1\cup 3*T_{q(n)}+2
\end{array}\]
where for $A$, a set of indices, $m*A+k=\{m*n+k\such n\in A\}$.  The
functions $p$ and $q$ here are the inverses of the pairing function
$r$ defined in Theorem~\ref{7.4}.  These functions must be defined
such that $p(n)\leq n$ and $q(n)\leq n$ so that the recursion is well
defined by taking smaller indices.  The rest of the proof is left to
the exercises.  Specifically, the claim that $\dom{T}=\{T_i\}$ should
be verified as well as the effectiveness of the enumeration.  These
proofs rely either on the effectiveness of $\dom{A}$, on the
effectiveness of elements in $\dom{T}$ with smaller indices, or are
trivial.   
\end{exm}

The final example uses the powerset construction.  We have
repeatedly used the fact that a powerset is a domain.  Its effectiveness is
now verified.
\begin{exm}\label{7.8}
Specifically, the powerset of the natural numbers, ${\cal{P}(\nat)}$ is
considered. In this domain, all elements are consistent, and there is
a top element, denoted $\omega$, which is the set of all natural
numbers.  The ordering is the subset relation.  The lub
 of two subsets is the union of the two subsets, which is
decidable. To enumerate the finite subsets, the following enumeration
is used:
\[E_n=\{k\such \exists i,j. i< 2^k\mand n=i+2^k+j*2^{k+1}\}\]
This says that $k\in E_n$ if the $k$ bit in the binary expansion of
$n$ is a $1$.  All finite subsets of $\nat$ are of the form $E_n$ for
some $n$. Various combinators for ${\cal P}(\nat)$ are presented in
Exercise~\ref{7.23}.
\end{exm}
\vspace{.25in}
\begin{center}
{\bf Exercises}
\end{center}
\begin{exer}\label{7.13}
Show that an effectively given domain can always be identified with a
relation
\[INCL(n,m)\]
on integers where the derived relations
\[\begin{array}{lcl}
CONS(n,m)&\iff&\exists k.INCL(k,n)\mand INCL(k,m)\\
MEET(n,m,k)&\iff&\forall j.[INCL(j,k)\iff INCL(j,n)\mand INCL(j,m)]
\end{array}\]
are recursively decidable and where the following axioms hold:
\begin{enumerate}
\item $\forall n.INCL(n,n)$
\item $\forall n,m,k. INCL(n,m)\mand INCL(m,k)\imp INCL(n,k)$
\item $\exists m.\forall n. INCL(n,m)$
\item $\forall n,m. CONS(n,m)\imp\exists k.MEET(n,m,k)$
\end{enumerate}
\end{exer}
\begin{exer}\label{7.15}
Finish the proof of Theorem~\ref{7.4}.
\end{exer}
\begin{exer}\label{7.16}
Complete the proof of Theorem~\ref{7.5} by defining $curry$ as a
relation and showing it computable.  Is the set recursively
enumerable or is it recursive?
\end{exer}
\begin{exer}\label{7.18}
Two effectively given domains are {\it effectively isomorphic\/} iff
$\ldots$ Complete the statement of the theorem and prove it.
\end{exer}
\begin{exer}\label{7.23}
Complete the proof about the powerset in Example~\ref{7.8}.  Show that
the combinators $fun$ and $graph$ from Exercise~\ref{5.14} are
computable. Show the same for
\begin{enumerate}
\item $\lambda x,y.x\cap y$
\item $\lambda x,y.x\cup y$
\item $\lambda x,y.x+ y$
\end{enumerate}
where for $x,y\in{\cal P}(\nat)$,
\[x+y=\{n+m\such n\in x, m\in y\}\]
What are the computable elements of ${\cal P}(\nat)$?
\end{exer}
\newpage
\section{Sub-Spaces of the Universal Domain}\label{univdom}
To have a flexible method of solving domain equations and yielding
effectively given domains as the solutions, the domains will be
embedded in a {\it universal\/} domain which is ``big'' enough to hold
all other domains as sub-domains.  This universal domain is shown
to be effectively presented, and the mappings which define the
sub-spaces are shown to be computable.  First, the correspondence
between sub-spaces and mappings called {\it retractions\/} is
investigated, leading us to the definition of mappings called {\it projections\/}.  It is then shown that these definitions can be written
out using the $\lambda$-calculus notation, demonstrating the power of
our mathematical programming language.

\subsection{Retractions and Projections}
We start with the definition of {\it retractions\/}.
\begin{mdef}{Retractions}\label{8.1}
A {\it retraction\/} of a given domain $\dome$ is an approximable
mapping $a:\bas{E}\far\bas{E}$ such that $a\circ a=a$. 
\end{mdef}
Thus, a retraction is the identity function on objects in the
range of the retraction and maps other elements into range. The next
theorem relates these sets to sub-spaces. 
\begin{thm}\label{8.2}
If $\domd\subd\dome$ and if $a:\bas{E}\far\bas{E}$ is defined such that
\[X\:a\:Z \iff \exists Y\in\domd. Z\appx Y\appx X\]
for all $X,Z\in\bas{E}$, then $a$ is a retraction and $\domd$ is
isomorphic to the fixed point set of $a$, the set $\{y\in\dome\such
a(y)=y\}$, ordered under inclusion.
\end{thm}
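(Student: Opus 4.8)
The plan is to identify $a$ with the composite of the projection pair that Theorem~\ref{6.12} already attaches to the inclusion $\domd\subd\dome$, which reduces both assertions to formal manipulations. That theorem furnishes approximable mappings $i:\domd\far\dome$ and $j:\dome\far\domd$ with $j\circ i=\ident_\domd$ and $i\circ j\appx\ident_\dome$; here $i$ re-views an element of $\domd$ inside $\dome$ and $j(y)=\{x\in\bas{D}\such x\appx y\}$ is the largest element of $\domd$ approximating $y$. First I would check directly that the relation $a$ is an approximable mapping: conditions $1$, $3$, $4$ of the definition are immediate from reflexivity and transitivity of $\appx$, and for condition $2$, given witnesses $Y,Y'\in\domd$ with $Z\appx Y\appx X$ and $Z'\appx Y'\appx X$, the pair $Y,Y'$ is bounded by $X$, so $Y\lub Y'\in\domd$ and $Y\lub Y'\appx X$, whence $Z\lub Z'\appx Y\lub Y'$ witnesses $X\,a\,(Z\lub Z')$. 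The heart of the argument is then the identity $a=i\circ j$.

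Granting $a=i\circ j$, the conclusions are automatic. The composite is approximable by Theorem~\ref{2.11}, and it is a retraction because $a\circ a=i\circ(j\circ i)\circ j=i\circ\ident_\domd\circ j=i\circ j=a$. For a retraction the fixed-point set $\{y\in\dome\such a(y)=y\}$ equals the range of $a$, and since $j$ is onto $\domd$ (as $j\circ i=\ident_\domd$) this range is $i(\domd)$. Finally $i$ maps $\domd$ one-to-one and onto $i(\domd)$ with $j$ as inverse, and $x\appx x'\iff i(x)\appx i(x')$ (forward by monotonicity, backward by applying $j$); hence $i$ is an order isomorphism of $\domd$ onto the fixed-point set with its inclusion ordering, which is exactly the claim.

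To prove $a=i\circ j$ I would compare the two as continuous functions, and since both are continuous it suffices to compare them on a principal ideal $\ideal_X$. Unwinding the definitions gives $a(\ideal_X)=\{Z\in\bas{E}\such \exists Y\in\domd.\;Z\appx Y\appx X\}$ and $(i\circ j)(\ideal_X)=\{Z\in\bas{E}\such \exists d\in\bas{D}.\;d\appx X\mand Z\appx d\}$. The inclusion $(i\circ j)(\ideal_X)\sub a(\ideal_X)$ is trivial, taking $Y=d$. For the reverse inclusion, given $Z\appx Y\appx X$ with $Y\in\domd$, I would write $Y$ as the directed least upper bound of the finite elements $d\in\bas{D}$ below it and invoke finiteness of $Z$ to extract a single $d\in\bas{D}$ with $Z\appx d\appx Y\appx X$, so that $Z\in(i\circ j)(\ideal_X)$.

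I expect the genuine obstacle to be the two structural facts that this reverse inclusion (and condition $2$ above) quietly use: that $\domd$ is closed under least upper bounds of $\dome$-consistent pairs, and that a finite element of the subdomain $\domd$ remains finite inside $\dome$, i.e. $\bas{D}\sub\bas{E}$. Both hold for a genuine subdomain rather than for an arbitrary sub-poset satisfying only the surface conditions of Definition~\ref{6.10}; indeed the second is already implicit in the statement of Theorem~\ref{6.12}, whose formula $j(y)=\{x\in\bas{D}\such x\in y\}$ presupposes $\bas{D}\sub\bas{E}$, and the continuity of $j$ together with $j\circ i=\ident_\domd$ is exactly what forces finite elements of $\domd$ to come from finite elements of $\dome$. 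I would therefore isolate these two closure facts as a short preliminary lemma, read off the projection pair, before assembling the identity $a=i\circ j$.
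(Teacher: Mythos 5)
Your proposal follows essentially the same route as the paper: both identify $a$ with the composite $i\circ j$ of the projection pair supplied by Theorem~\ref{6.12}, deduce $a\circ a=a$ from $j\circ i=\ident_\domd$, and exhibit the isomorphism by showing $i$ carries $\domd$ one-to-one onto the fixed-point set. The only difference is that you supply the verifications the paper waves through (that $a$ is approximable, and that $a$ really equals $i\circ j$), which is a welcome tightening rather than a new argument.
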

\proof That $a$ is an approximable map is a direct consequence of the
definition of  sub-space (Definition~\ref{6.10}).  By
Theorem~\ref{6.12}, a projection pair, $i$ and $j$, exist for $\domd$
and this tells us that $a=i\circ j$ (also showing $a$ approximable
since approximable mappings are closed under composition).
Theorem~\ref{6.12} also tells us that $j\circ i=\ident_D$.  To show
that $a$ is a retraction, $a\circ a=a$ must be established.  Thus,
\[a\circ a = i\circ j\circ i\circ j = i\circ \ident_D\circ j = i\circ j =
a\]
holds, showing that $a$ is a retraction.

We now need to show the isomorphism to $\domd$.  For $x\in\domd$,
$i(x)\in\dome$ and we can calculate:
\[a(i(x))=i\circ j\circ i(x) = i\circ \ident_D(x) = i(x)\]
Thus, $i(x)$ is in the fixed point set of $a$.  For the other direction,
let $a(y)=y$.  Then $i(j(y)) = y$ holds.  But, $j(y)\in\domd$, so $i$
must map $\domd$ one-to-one and onto the fixed point set of $a$.
Since $i$ and $j$ are approximable, they are certainly monotonic, and
thus the map is an isomorphism with respect to set inclusion.  \eop

Not all retractions are associated with a sub-domain relationship.
The retractions defined in the above theorem are all subsets as
relations of the identity relation.  The retractions for sub-domains
are characterized by the following definition:
\begin{mdef}{Projections}\label{8.3}
A retraction $a:\dome\far\dome$ is a {\it projection\/} if $a\subseteq
\ident_E$ as relations.  The retraction is {\it finitary\/} iff its fixed
point set is isomorphic to some domain.
\end{mdef}

An example is in order.
\begin{exm}\label{8.4}
Consider a two element system, $\bas{O}$  with objects $\Delta$ and $0$.
For any basis $\bas{D}$ that is not trivial (has more than one
element), $\bas{O}$ comes from a 
retraction on $\bas{D}$.  Define a combinator
$check:\bas{D}\far\bas{O}$ by the relation
\[x\:check\: y \iff y=\Delta~{\rm or}~x\neq \Delta_D\]
Thus, $check(x)=\bottom_O\iff x=\bottom_D$. Another combinator can be
defined, 
\[fade:\bas{O}\times\bas{D}\far\bas{D}\]
such that for $t\in\dom{O}$ and $x\in\domd$
\[\begin{array}{lcll}
fade(t,x)&=&\bottom_D&{\rm if}~t=\bottom_O\\
&=&x&otherwise
\end{array}\]
For $u\in\domd$ and $u\neq\bottom_D$, the mapping $a$ is defined as
\[a(x)=fade(check(x),u)\]
It can be seen that $a$ is a retraction, but not a projection in
general, and the range of $a$ is isomorphic to $\bas{O}$.  

These combinators can also be used to define the subset of functions
in $\bas{D}\far\bas{E}$ that are strict.  Define a combinator
$strict:(\bas{D}\far\bas{E})\far(\bas{D}\far\bas{E})$ by the equation
\[strict(f)=\lambda x.fade(check(x),f(x))\]
with $fade$ defined as $fade:\bas{O}\times\bas{E}\far\bas{E}$.  The
range of $strict$ is all the strict functions; $strict$ is a projection
whose range is a domain.
\end{exm}

The next theorem characterizes projections.
\begin{thm}\label{8.5}
For approximable mapping $a:\bas{E}\far\bas{E}$, the following are
equivalent:
\begin{enumerate}
\item $a$ is a finitary projection
\item $a(x)=\{y\in\bas{E}\such\exists x'\in I_x. x'\:a\:x'\mand
y\appx x'\}$ for all $x\in\bas{E}$.
\end{enumerate}
\end{thm}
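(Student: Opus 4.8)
The plan is to prove the two implications separately, and to lean throughout on one shared observation that I would establish first: for a basis element $x'$, the condition $x'\,a\,x'$ holds iff the principal ideal $\ideal_{x'}$ is a fixed point of the continuous function $\hat a$ determined by $a$. Indeed, $x'\,a\,x'$ together with downward closure forces $\ideal_{x'}\sub a(\ideal_{x'})$, while the projection hypothesis $a\sub\ident_E$ gives $a(\ideal_{x'})\sub\ideal_{x'}$; hence $a(\ideal_{x'})=\ideal_{x'}$. So the basis elements appearing in formula (2) are exactly those whose principal ideals are fixed points, and I would set $A=\{x'\in\bas E \such x'\,a\,x'\}$ once and for all.

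For $(2)\Rightarrow(1)$, reading the formula off directly shows $a\sub\ident_E$: if $x\,a\,y$ then $y\appx x'\appx x$ for the witness $x'$, so $y\appx x$. The retraction identity $a\circ a=a$ then splits into two one-line set inclusions: for $a\sub a\circ a$ I take the witness $w=x'$ supplied by (2), using monotonicity (axiom 4) to get $x\,a\,x'$ and downward closure (axiom 3) to get $x'\,a\,y$; for $a\circ a\sub a$ I feed the two intermediate fixed basis points into (2). The substantive part is \emph{finitariness}. I would show $A$ is a finitary basis: it is countable, and it is closed under consistent lubs because the fixed points of a projection are closed under bounded lubs, since $a(\ideal_{x'}\lub\ideal_{x''})$ dominates each of $\ideal_{x'},\ideal_{x''}$ (monotonicity) yet is dominated by their lub (deflationarity), hence equals it, so $x'\lub x''\in A$. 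Finally, formula (2) is exactly what forces every fixed point $d$ to satisfy $d=\bigcup\{\ideal_{x'}\such x'\in A,\ \ideal_{x'}\appx d\}$, so the map $d\mapsto A\cap d$ identifies the fixed-point set with the ideal completion of $A$; this completion is a domain, establishing that $a$ is finitary.

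For $(1)\Rightarrow(2)$, the inclusion $\supseteq$ is immediate from the approximable-mapping axioms (monotonicity, then downward closure), exactly as in the previous paragraph. For $\sub$, suppose $x\,a\,y$; then $d:=a(\ideal_x)$ is a fixed point, so $d$ lies in the fixed-point domain $\dom R$, and $y\in d$. The key is a finiteness-transfer lemma: a finite element $r$ of $\dom R$ is finite in $\dom E$. I would prove it by taking a directed $S\sub\dom E$ with $r=\Lub S$ and applying $a$: continuity and $r=a(r)$ give $r=\Lub a(S)$ with $a(S)$ directed in $\dom R$, so $r=a(s)$ for some $s\in S$, and then $a(s)\appx s\appx\Lub S=r$ forces $s=r\in S$. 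Granting the lemma, I represent $d$ as the directed lub of the finite elements of $\dom R$ below it (Theorem~\ref{idleqthm} applied inside $\dom R$); since $y$ is finite in $\dom E$ and the union is directed, $y\in r$ for some such $r$, which by the lemma is a principal ideal $\ideal_{x'}$. Then $x'\in A$, the relation $\ideal_{x'}\appx d\appx\ideal_x$ yields $x'\appx x$, and $y\in\ideal_{x'}$ yields $y\appx x'$, which is precisely the witness required by (2).

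The main obstacle is reconciling the abstract hypothesis ``the fixed-point set is isomorphic to some domain'' with the concrete, basis-level description in (2): concretely, the finiteness-transfer lemma in the forward direction (that finite elements of the image remain finite in the ambient domain) and, dually, the verification that the fixed-point set is genuinely the ideal completion of $A$ in the reverse direction. Everything else is routine use of the four approximable-mapping axioms and of deflationarity.
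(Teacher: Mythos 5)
Your proof is correct, and it passes through the same two waypoints as the paper's: the identification of the basis of the fixed-point set with $A=\{x'\in\bas{E}\such x'\:a\:x'\}$, and the transfer of finiteness from the fixed-point set back into $\dome$. The difference is in how those waypoints are reached. For $(1)\Rightarrow(2)$ the paper does not re-prove finiteness transfer: it takes the projection pair $i,j$ with $j\circ i=\ident$ and $i\circ j=a$, cites Theorem~\ref{6.15} for the fact that $i$ carries finite elements to finite elements (so the image $\domd'\subd\dome$ has basis exactly $A$), and then reads formula (2) off the explicit expression for $i\circ j$ coming from Theorem~\ref{6.12}. You instead prove the finiteness-transfer lemma from scratch by the continuity argument $r=\Lub a(S)$, $r=a(s)\appx s\appx r$, and derive the formula by writing $a(\ideal_x)$ as a directed union of principal ideals of fixed basis points; this is more self-contained and avoids the sub-domain machinery entirely, at the small cost of leaving implicit that directed lubs computed in the fixed-point set coincide with directed unions in $\dome$ (immediate from continuity of $a$, since a directed union of fixed points is a fixed point, but worth a line). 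For $(2)\Rightarrow(1)$ the paper verifies $a\sub\ident_E$ and $a\circ a=a$ essentially as you do and then asserts that $A$ determines a sub-domain as in Theorem~\ref{8.2}, leaving the details as ``easy to show''; your explicit check that $A$ is a finitary basis closed under bounded lubs and that $d\mapsto A\cap d$ is an isomorphism onto its ideal completion supplies exactly those omitted details. Both directions are sound.
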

\proof Assume that (2) holds.  We want to show that $a$ is a finitary
projection.  By the closure properties on ideals, we know that for all
$x\in\dome$,
\[x'\in x\mand y\appx x'\imp y\in x\]
Thus, $a(x)\subseteq x$ must hold.  In addition, the following
trivially holds:
\[x'\in x\mand x'\:a\: x'\imp x'\in a(x)\]
thus $a(x)\subseteq a(a(x))$ holds for all $x\in\dome$.  This shows
that $a$ is indeed a projection.  Let $D=\{x\in\bas{E}\such
x\:a\:x\}$.  It is easy to show that $\bas{D}\subd\bas{E}$ and that
$a$ is determined from $\bas{D}$ as required in Theorem~\ref{8.2}.
Thus, the fixed point set of $a$ is isomorphic to a domain from the
previous proofs.  Thus, (2)$\imp$(1).

For the converse, assume that $a$ is a finitary projection.  Let
$\domd$ be isomorphic to the fixed point set of $a$.  This means there
is a projection pair $i$ and $j$ such that $j\circ i=\ident_D$ and $i\circ
j = a$ and $a\subseteq \ident_E$.  From Theorem~\ref{6.15} then we have
that $\domd\iso\domd'$ and $\domd'\subd \dome$.  We want to identify
$\domd'$ as follows:
\[\domd'=\{x\in\dome\such x\:a\:x\}\]
From the proof of Theorem~\ref{6.15}, the basis elements of $\bas{D'}$
are the finite elements of $\bas{D}$.  Each of these elements is in
the fixed point set of $a$.  Thus, 
\[x\in\bas{D'}\imp a({\ideal}_x) = {\ideal}_x \imp x\:a\:x\]
Since $a$ is a projection, ${\ideal}_x$ must also be a fixed point.
Since $i(j({\ideal}_x)) = {\ideal}_x$ implies that $j({\ideal}_x)$ is
a finite element of $\domd$, $x\in\domd'$ must hold.  Thus, the
identification of $\domd'$ holds.  

Finally, using $a=i\circ j$ in the formula in Theorem~\ref{6.12}, the
formula in (2) is obtained, proving the converse. \eop

This characterization of projections provides a new and interesting
combinator.
\begin{thm}\label{8.6}
For any domain $\dome$, define
$sub:(\dome\far\dome)\far(\dome\far\dome)$ using the relation
\[x\: sub(f)\: z \iff \exists y\in\bas{E}.y\:f\:y\mand y\appx
x\mand z\appx y\]
for all $x,z\in\bas{E}$ and all $f:\bas{E}\far\bas{E}$.  Then the
range of $sub$ is exactly the set of finitary projections on $\dome$.
In addition, $sub$ is a finitary projection on $\dome\far\dome$.  If
$\dome$ is effectively given, then $sub$ is computable.
\end{thm}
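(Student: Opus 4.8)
The plan is to reduce everything to the characterization of finitary projections in Theorem~\ref{8.5}, which states that an approximable $a:\bas{E}\far\bas{E}$ is a finitary projection exactly when $a(x)=\{z\such\te y.\,y\appx x\mand y\:a\:y\mand z\appx y\}$. The point is that $sub(f)(x)$ is precisely this formula with $f$ written in place of $a$, so the whole argument turns on comparing the self-related elements $\{y\such y\:f\:y\}$ with $\{y\such y\:sub(f)\:y\}$.

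First I would record the key lemma: for any approximable $f$, the set $\{y\such y\:f\:y\}$ is closed under existing least upper bounds. Indeed, if $y\:f\:y$, $y'\:f\:y'$ and $y\lub y'$ exists, then monotonicity (condition~4 of an approximable mapping) gives $(y\lub y')\:f\:y$ and $(y\lub y')\:f\:y'$, and closure under lubs of outputs (condition~2) yields $(y\lub y')\:f\:(y\lub y')$. Using this, I would verify directly that $sub(f)$ is an approximable mapping: condition~1 uses the witness $\bot$; condition~2 takes witnesses $y,y'\appx x$, forms $y\lub y'$ (again self-related and $\appx x$ by the lemma), and observes $z\lub z'\appx y\lub y'$; conditions~3 and~4 keep the same witness. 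I would also note the antisymmetry identity $y\:sub(f)\:y\iff y\:f\:y$, since the defining $\te w$ forces $w=y$. Combining these, $sub(f)(x)=\{z\such\te y.\,y\appx x\mand y\:sub(f)\:y\mand z\appx y\}$, which is exactly condition~(2) of Theorem~\ref{8.5}; hence $sub(f)$ is a finitary projection. Conversely, if $a$ is already a finitary projection then Theorem~\ref{8.5}(2) reads $a(x)=sub(a)(x)$, so $sub(a)=a$. These two facts give that the range of $sub$ is exactly the finitary projections and that $sub\circ sub=sub$.

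Next I would show $sub$ is itself a finitary projection on $\dome\far\dome$. Approximability of $f\mapsto sub(f)$ follows from monotonicity (if $f\sub f'$ the same witnesses work) and continuity (for directed $S$, $y\:(\Lub S)\:y$ holds iff $y\:f\:y$ for some $f\in S$, since $\Lub S=\Union S$), so $sub$ corresponds to an approximable mapping by Theorem~\ref{2.7}. The projection condition $sub\sub\ident$ amounts to $sub(f)\appx f$: any $z\in sub(f)(x)$ has a witness $y\appx x$ with $y\:f\:y$, whence $y\in f(x)$ by monotonicity and $z\in f(x)$ since $f(x)$ is downward closed. The retraction law $sub\circ sub=sub$ was obtained above. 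For finitariness I would identify the fixed-point set of $sub$ with the finitary projections of $\dome$ and check that the pointwise order matches sub-domain inclusion: $a\appx a'$ iff $\{y\such y\:a\:y\}\subseteq\{y\such y\:a'\:y\}$ (using $a,a'\sub\ident$), which corresponds to nesting of the associated sub-domains; by Theorem~\ref{6.11} these form a domain, so the fixed-point set of $sub$ is isomorphic to a domain and $sub$ is finitary.

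For computability, when $\dome$ is effectively given Theorem~\ref{7.5} makes $\dome\far\dome$ effectively given, and by Definition~\ref{7.2} it suffices to show the graph of $sub$ is recursively enumerable in the presentation indices. This is immediate from the defining relation: $\appx$ is decidable, a basis element $f$ of $\dome\far\dome$ is a finite step map so the test $y\:f\:y$ is decidable, and the single existential quantifier over the enumerable basis of $\dome$ turns the defining predicate into a recursively enumerable one. The main obstacle I anticipate is the lub-closure lemma and, more subtly, making the reduction to Theorem~\ref{8.5} airtight---in particular confirming that $sub(f)$ really is approximable before invoking~\ref{8.5}, and that the pointwise order on finitary projections matches sub-domain inclusion so that Theorem~\ref{6.11} applies.
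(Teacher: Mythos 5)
Your proposal is correct and follows essentially the same route as the paper's proof: reduce to the characterization of finitary projections in Theorem~8.5 via the identity $y\:sub(f)\:y\iff y\:f\:y$, establish $sub(f)\sub f$ and idempotence, obtain finitariness of $sub$ itself from the inclusion-preserving correspondence between its fixed points and $\{\domd\such\domd\subd\dome\}$ together with Theorem~6.11, and read off computability from the recursive enumerability of the defining relation. The only substantive difference is that you supply the lub-closure lemma for $\{y\such y\:f\:y\}$ in order to verify that $sub(f)$ is approximable---a step the paper dismisses as ``clear''---which is a worthwhile addition but not a different argument.
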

\proof Clearly, $sub(f)$ is approximable.  It is obvious from the
definition that $f\mapsto sub(f)$ preserves lubs and
thus is approximable as well.  Thus,
\[y\:f\:y\mand y\appx x\mand z\appx y\imp x\:f\:z\]
obviously holds.  Thus, $sub(f)\subseteq f$ holds.  Also
\[y\:f\:y\imp y\:sub(f)\:y\]
thus, $sub(f)\subseteq sub(sub(f))$ holds as well.  Thus, $sub$ is a
projection on $\dome\far\dome$.  The definition of the relation shows
that it is computable when $\dome$ is effectively given.

Since $sub$ is a projection, its range is the same as its fixed point
set.  If $sub(a)=a$, it is easy to see that clause (2) of
Theorem~\ref{8.5} holds and {\it conversely\/}.  Thus, the range of
$sub$ is the finitary projections.  

To see that $sub$ is a finitary projection, we use Theorem~\ref{6.11}
and Theorem~\ref{8.2} to say that the fixed point set of $sub$ is in a
one-to-one inclusion preserving correspondence with the domain
$\{D\such D\subd\dome\}$. \eop

\subsection{Universal Domain $\domu$}
With these results and the universal domain to be defined next, the
theory of sub-domains is translated into the $\lambda$-calculus
notation using the $sub$ combinator.  The universal domain is defined
by first defining a domain which has the desired structure but has a
{\it top\/} element.  The top element is then removed to give the
universal domain.
\begin{mdef}{Universal Domain}\label{8.7n}
As in the section on domain equations, an inductive definition for a
domain $\domv$ is given as follows:
\begin{enumerate}
\item $\Delta,\top\in\bas{V}$
\item $\langle u,v\rangle \in\bas{V}$ whenever $u,v\in\bas{V}$
\end{enumerate}
Thus, we are starting with two objects, a bottom element and a top
element, and making two flavors of copies of these objects.
Intuitively, we end up with finite binary trees with either the top or
the bottom element as the leaves. To simplify the definitions below,
the pairs should be reduced such that:
\begin{enumerate}
\item All occurrences of $\langle \Delta,\Delta\rangle $ are replaced by $\Delta$ and
\item All occurrences of $\langle \top,\top\rangle $ are replaced by $\top$.
\end{enumerate}
These rewrite rules are easily shown to be finite
Church-Rosser.\footnote{The finitary basis should be defined as the
equivalence classes induced by the reduction.  The presentation is
simplified by considering only reduced trees.}
As an example of the reduction the pair
\[\langle \langle \langle \top,\langle \top,\top\rangle \rangle ,\langle \top,\Delta\rangle \rangle ,\langle \langle \Delta,\Delta\rangle ,\langle \top,\top\rangle \rangle \rangle \]
reduces to 
\[\langle \langle \top,\langle \top,\Delta\rangle \rangle ,\langle \Delta,\top\rangle \rangle \].
The approximation ordering is defined as follows:
\begin{enumerate}
\item $\Delta\appx v$ for all $v\in\bas{V}$
\item $v\appx \top$ for all $v\in\bas{V}$.
\item $\langle u,v\rangle \appx \langle u',v'\rangle $ iff $u\appx u'$ and $v\appx v'$
\end{enumerate}

Since the top element is approximated by everything, all finite sets
of trees are consistent.  The lub for a pair of trees is
defined as follows:
\begin{enumerate}
\item $u\llub\top=\top$ for $u\in\bas{V}$
\item $\top\llub u=\top$ for $u\in\bas{V}$
\item $u\llub\Delta=u$ for $u\in\bas{V}$
\item $\Delta\llub u=u$ for $u\in\bas{V}$
\item $\langle u,v\rangle \llub \langle u',v'\rangle =\langle u\llub u',v\llub v'\rangle $ for $u,v\in\bas{V}$
\end{enumerate}

The proof that this forms a finitary basis follows the same guidelines
as the proofs in Section~\ref{sec6}.  In addition, it should be clear
that the presentation is effective.  

To form the universal domain, the top element is simply removed.
Thus, the system $\bas{U}=\bas{V}-\{\top\}$ is the basis used to form
the universal domain.  The proof that this is still a finitary basis
with an effective presentation is also straightforward and left to the
exercises. Note that inconsistent sets can now exist since there is no
top element.  A set is inconsistent iff its lub is
$\top$. 
\end{mdef}

We shall now prove the claims made for the universal domain.
\begin{thm}\label{8.8}
The domain $\domu$ is universal, in the sense that for every domain
$\domd$ we have $\domd\subd\domu$. If $\domd$ is
effectively given, then the projection pair for the embedding is
computable. In fact, there is a correspondence between the effectively
presented domains and the computable finitary projections of $\domu$.
\end{thm}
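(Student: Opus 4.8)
The plan is to reduce the entire statement to a single embedding construction and then read off the computable and correspondence parts from Theorems~\ref{8.5} and~\ref{8.6}. First I would observe that, by Theorem~\ref{6.15}, to obtain $\domd\subd\domu$ it is enough to exhibit a projection pair $i:\domd\far\domu$, $j:\domu\far\domd$ with $j\circ i=\ident_\domd$ and $i\circ j\appx\ident_\domu$; equivalently, by Theorem~\ref{8.2}, it suffices to realize the basis $\bas{D}$ as a sub-basis of $\bas{U}$, i.e. to produce an injection $\phi:\bas{D}\far\bas{U}$ with $\phi(\bot)=\Delta$ that preserves and reflects the approximation ordering, the consistency relation, and the lub operation. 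Given such a $\phi$, the set $\phi(\bas{D})$ determines a sub-domain $\domd'\subd\domu$ with $\domd\iso\domd'$, and the retraction $a$ associated with $\domd'$ as in Theorem~\ref{8.2} is, by Theorem~\ref{8.5}, a finitary projection whose fixed-point set is isomorphic to $\domd$.

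Second, I would construct $\phi$ by induction along an enumeration $d_0=\bot,d_1,d_2,\dots$ of the countable basis $\bas{D}$, exploiting the structure of the binary-tree basis $\bas{U}$. The features of $\bas{U}$ I would use are: its ``atoms'' $e_\alpha$ (the reduced trees carrying a single $\top$ at a leaf address $\alpha\in\{0,1\}^*$ and $\Delta$ elsewhere), the fact that every basis element is a finite join of atoms with join being componentwise superposition, $\Delta$ the identity and inconsistency signalled by the deleted $\top$, and above all the self-similarity $\domu\iso\domu\times\domu$, which lets me allocate disjoint ``coordinate'' subtrees and freely recombine them. At stage $n$ I will have placed $\phi(d_0),\dots,\phi(d_{n-1})$ so that the three relations agree with those in $\bas{D}$; since $\bas{D}$ is a finitary basis, the order, consistency, and lub data of $d_n$ relative to the earlier elements are completely determined, and I place $\phi(d_n)$ using fresh coordinates so as to match them exactly.

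The main obstacle is precisely this placement (amalgamation) step: $\phi(d_n)$ must be chosen so that simultaneously (i) it lies above exactly the images of the $d_k\appx d_n$, (ii) it is consistent with $\phi(d_k)$ for exactly those $k$ with $d_k$ consistent with $d_n$, and (iii) for each such $k$ the join $\phi(d_k)\lub\phi(d_n)$ equals $\phi(d_k\lub d_n)$ --- all inside $\bas{U}$ and coherently with every later stage. I expect the heart of the argument to be a one-point-extension lemma asserting that $\bas{U}$ admits every consistent extension of a finite bounded-complete poset; this is where the self-similarity $\domu\iso\domu\times\domu$ and the genuine inconsistency supplied by the removed top element are used, the former to manufacture fresh incomparable coordinates for newly forced joins and the latter to force the required incompatibilities. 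Checking that the extension is always realizable, and that the choices cohere so that $\phi$ is a well-defined basis isomorphism onto its image, is the crux.

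Finally, for the effective and correspondence parts I would let the enumeration of $\bas{D}$ together with its recursive consistency and lub relations (Definition~\ref{7.1}) drive the induction, so that $\phi$, and hence the retraction $a$, is computed by a total recursive procedure; since consistency and lub in $\bas{U}$ are themselves recursive, $a$ is a computable map, and Theorem~\ref{6.12} then yields a computable projection pair $i,j$. For the stated correspondence I would invoke Theorem~\ref{8.6}: $sub$ is a computable finitary projection on $\domu\far\domu$ whose range is exactly the finitary projections of $\domu$, and whose fixed points stand in inclusion-preserving bijection with the sub-domains $\{\domd_0\such\domd_0\subd\domu\}$. Restricting to computable fixed points of $sub$ on one side and to effectively presented domains on the other --- using that the fixed-point set of a computable finitary projection inherits a recursive presentation, while each effectively given $\domd$ yields a computable $a$ by the construction above --- gives the desired correspondence.
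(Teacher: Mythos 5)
Your overall reduction is the same as the paper's---embed the basis $\bas{D}$ order-, consistency-, and lub-faithfully into $\bas{U}$, then obtain the projection pair from Theorems~\ref{6.12}/\ref{6.15} and the correspondence with computable finitary projections from Theorems~\ref{8.5} and \ref{8.6}---and your treatment of the effectiveness and correspondence clauses is sound. But the proof has a genuine gap exactly where you flag ``the crux'': the one-point-extension lemma that would let you place $\phi(d_n)$ so that it simultaneously (i) sits above/below precisely the right earlier images, (ii) is inconsistent with precisely the right earlier images, and (iii) realizes the right joins, coherently with all later stages, is asserted as expected but never proved. This lemma \emph{is} the theorem: without it you have only restated what universality requires. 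Note in particular that inconsistency in $\bas{U}$ means the join reduces to the deleted all-$\top$ tree, which is a global condition on every leaf position of both trees; arranging this for exactly the required pairs while keeping $\phi(d_n)$ itself off $\top$ and keeping the consistent joins equal to $\phi(d_k\lub d_n)$ is not something that ``fresh coordinates'' via $\domu\iso\domu\times\domu$ obviously delivers, especially when $d_n$ must be placed \emph{below} previously embedded elements, where no fresh coordinates are available.

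The paper fills precisely this hole with an explicit construction: for each $d_i$ it forms the sets $d_i^+$ and $d_i^-$, intersects them into regions $D_R$ for $R\in\{+,-\}^k$ (a refining sequence of partitions of $\bas{D}$ that encodes all order, consistency, and lub relationships of $d_k$ with its predecessors), and uses a location function $Loc_D$ collapsing empty regions to assign each nonempty region a leaf of a finite binary tree; the tree for $d_k$ puts $\top$ on the $Loc_D(R-)$ leaves and $\Delta$ on the $Loc_D(R+)$ leaves. Coherence across stages is then automatic because the level-$k$ partition refines the level-$j$ one for $j<k$ and $Loc_D$ is prefix-monotone, the image avoids $\top$ because some region $R+$ is always nonempty (it contains $d_k$), and inconsistent pairs get join $\top$ because their regions jointly cover everything with $\top$ labels. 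If you want to keep your amalgamation-style architecture, you must either prove your extension lemma outright or replace it with a construction of this kind; as written, the argument stops at the point where the real work begins.
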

\proof Recall that $\bas{D}$ must be countable to be a finitary basis.
Thus, we can assume that the basis has an enumeration
\[D=\{X_n\such n\in\nat\}\]
where $X_0=\Delta$.  The effective and general cases are considered
together in the proof; comments about computability are included for
the effective case as required.  Thus, if $\domd$ is effectively
given, the enumeration above is assumed to be computable.

To prove that the domain can be embedded in $\domu$, the embedding
will be shown.  To start, for each finite element $d_i$ in the basis, define
two sets, $d_i^+$ and $d_i^-$ as follows:
\[\begin{array}{lcl}
d_i^+&=&\{d\in\bas{D}\such d_i\appx d\}\\
d_i^-&=&D-d_i^+
\end{array}\]
The $d_i^+$ set contains all the elements that $d_i$ approximates,
while the $d_i^-$ set contains all the other elements, partitioning
$\bas{D}$ into two disjoint sets.  Sets for different elements can be
intersected to form finer partitions of $\bas{D}$.  For $k>0$, let
$R\in\{+,-\}^k$, let $R_i$ be the $ith$ symbol in the string $R$,  and
define a region $D_R$ as 
\[D_R=\bigcap\limits_{i=1}^k d_i^{R_i}\]
where $k$ is the length of $R$.
The set $\{D_{R}\such R\in\{+,-\}^k\}$  of regions partitions $\bas{D}$
into $2^k$ disjoint sets.  Thus, for each element $e_i$ in the
enumeration there is a corresponding partition of the basis given by
the family of sets $\{D_{R}\such R\in\{+,-\}^i\}$. For strings
$R,S\in\{+,-\}^*$ such that $R$ is a prefix of $S$, denoted $R\leq S$,
$D_S\subseteq D_R$.   It is important to realize that the composition
of these sets is dependent on the order in which the elements are enumerated.
Some of these regions are empty, but it is  
decidable if a given intersection is empty if $\domd$ is effectively
presented. It is also decidable if a given element is in a particular
region. 

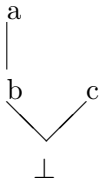
\begin{figure}
\begin{center}
\begin{picture}(100,100)
\put(30,60){a}
\put(30,30){b}
\put(60,30){c}
\put(40,0){$\bottom$}
\put(30,59){\line(0,-1){18}}
\put(30,29){\line(1,-1){15}}
\put(60,29){\line(-1,-1){15}}
\end{picture}
\end{center}
\caption{Example Finite Domain\label{findom}}
\end{figure}
To see the function these regions are serving, consider the 
finite domain in Figure~\ref{findom}.\footnote{This example is taken
from Cartwright and Demers~\cite{tt}.} Consider the enumeration with
\[d_0=\bottom, d_1=b, d_2=c, d_3=a.\]
The $d_i^+$ and $d_i^-$ sets are as follows:
\[\begin{array}{lcl}
d_1^+&=&\{a,b\}\\
d_1^-&=&\{c,\bottom\}\\
d_2^+&=&\{c\}\\
d_2^-&=&\{a,b,\bottom\}\\
d_3^+&=&\{a\}\\
d_3^-&=&\{b,c,\bottom\}
\end{array}\]
The regions are as follows:
\[\begin{array}{lclclcl}
D_+ &=&\{a,b\}&\:\:\:\:&D_{+++} &=&\{\}\\
D_- &=&\{\bottom,c\}&&D_{++-} &=&\{\}\\
D_{++} &=&\{\}&&D_{+-+} &=&\{a\}\\
D_{+-} &=&\{a,b\}&&D_{+--} &=&\{b\}\\
D_{-+} &=&\{c\}&&D_{-++} &=&\{\}\\
D_{--} &=&\{\bottom\}&&D_{-+-} &=&\{c\}\\
&&&&D_{--+} &=&\{\}\\
&&&&D_{---} &=&\{\bottom\}
\end{array}\]
The regions generated by each successive element encode the
relationships induced by the approximation ordering between the new
element and all elements previously added.  
The reader is encouraged to try this example with other enumerations
of this basis and compare the results. 

The embedding of the elements proceeds by building a tree based on
the regions corresponding to the element.  The regions are used to
find locations in the tree and to determine whether a $\top$ or a
$\Delta$ element is placed in the location.  These trees preserve the
relationships specified by the regions and thus, the tree embedding is
isomorphic to the domain in question.  Once the tree is built,
the reduction rules are applied until a non-reducible tree is reached.
This tree is the representative element in the universal domain, and
the set of these trees form the sub-space.  

The function to determine the location in the tree for a given domain,
\[Loc_D:\{+,-\}^*\far\{l,r\}^*\]
 takes strings used to generate regions
and outputs a path in a tree where $l$ stands for left sub-tree and $r$
stands for right sub-tree.  This path is computed using the following
inductive definition:
\[\begin{array}{lcll}
Loc_D(\epsilon)&=&\epsilon.\\
Loc_D(R+)&=&Loc_D(R)l&{\rm if }~D_{R+}\not=\emptyset~ {\rm and }~
D_{R-}\not=\emptyset.\\
&=&Loc_D(R)&{\rm otherwise}.\\
Loc_D(R-)&=&Loc_D(R)r&{\rm if }~D_{R+}\not=\emptyset~{\rm  and }~
D_{R-}\not=\emptyset.\\
&=&Loc_D(R)&{\rm otherwise}.
\end{array}\]

The set of locations for each non-empty region is the set of paths to
all leaves of some finite binary tree. An induction argument is used
to show the following properties of $Loc_D$ that ensure this:
\begin{enumerate}
\item If $R\leq S$ for $R,S\subseteq \{+,-\}^*$, then $Loc_D(R)\leq
Loc_D(S)$. 
\item Let $S=\{Loc_D(R)\such R\in\{+,-\}^k\mand D_R\neq \emptyset\}$ for
$k>0$ be a set of location paths for a given $k$.  For any
$p\in\{l,r\}^*$ there exists $q\in S$ such that either $p\leq q$ or
$q\leq p$.  That is, every potential path is represented by some
finite path.
\item  Finally, for all $p,q\in S$ if $p\leq q$ then $p=q$.
This means that a unique leaf is associated with each location. 
\end{enumerate}

To find the tree for a given element $d_k$ in the enumeration, apply
the following rules to each $R\in\{+,-\}^{k-1}$.
\begin{enumerate}
\item If $D_{R-}\neq \emptyset$ then the leaf for path $Loc_D(R-)$ is
labeled $\top$.
\item If $D_{R+}\neq \emptyset$ then the leaf for path $Loc_D(R+)$ is
labeled $\Delta$.
\end{enumerate}

These rules are used to assign a tree in $\bas{U}$, which is then reduced
using the reduction rules, for each element in
the enumeration of $\bas{D}$.  To see that the top element is never
assigned by these rules, note that some region of the form $R+$ for
every length $k$ 
must be non-empty since it must contain the element $e_k$ being embedded.

Returning to the example, the location function defines paths for
these elements as follows:
\[\begin{array}{lclclcl}
Loc_D(+)&=&l&\:\:\:\:&Loc_D(+-+)&=&ll\\
Loc_D(-)&=&r&&Loc_D(+--)&=&lr\\
Loc_D(+-)&=&l&&Loc_D(-+-)&=&rl\\
Loc_D(-+)&=&rl&&Loc_D(---)&=&rr\\
Loc_D(--)&=&rr
\end{array}\]

The trees generated for each of the elements are:
\[\begin{array}{lcl}
d_0&\mapsto &\Delta\\
d_1&\mapsto &\langle \Delta,\top\rangle \\
d_2&\mapsto &\langle \top,\langle \Delta,\top\rangle \rangle \\
d_3&\mapsto& \langle \langle \Delta,\top\rangle ,\langle \top,\top\rangle \rangle \\
&\mapsto&  \langle \langle \Delta,\top\rangle ,\top\rangle 
\end{array}\]

To verify that the space generated is a valid sub-space, we must
verify that the bottom element is mapped to $\bottom_U$ and that the
consistency 
and lub relations are maintained. The tree $\Delta$ is
clearly assigned to $X_0$, the bottom element for 
the basis being embedded, since there are no strings of length $-1$. 
The embedding preserves inconsistency of elements by forcing the lub
 of the embedded elements to be $\top$.  The $D_{R-}$
regions represent the elements that the element being
embedded does
not approximate.  Note that the $D_{R-}$ sets cause the $\top$ element
to be added as the leaf.  Since the $D_R$ sets are built using the
approximation ordering, it is straightforward to see that the
approximation ordering is preserved by the embedding.  Lubs
 are also maintained by the embedding, although the reduction is
required to see that this is the case.
It should be clear that, if the domain $\domd$ is effectively given,
the sub-space can be computed since the embedding procedure uses the
relationships given in the presentation.  

Finally, suppose that $a$ is a computable, finitary projection on
$\domu$.  From the proof of Theorem~\ref{8.5}, the domain of this
projection is characterized by the set
\[\{y\in\bas{U}\such y\:a\:y\}\]
If $a$ is computable, the set of pairs for $a$ is recursively
enumerable.  Thus, the set above is also recursively enumerable since
equality among basis elements is decidable.  Thus, the domain given by
the projection must also be effectively given. \eop

Thus, the domain $\domu$ is an effectively presented {\it universal\/}
domain in which all other domains can be embedded.  The
sub-domains of $\domu$ include $\domu\far\domu$, $\domu\times\domu$,
etc. These domains must be sub-domains of $\domu$ since they are
effectively presented based on our earlier theorems.
\subsection{Domain Constructors in $\domu$}
 The next step is
to see how to define the constructors commonly used.
\begin{mdef}{Domain Constructors}\label{8.9}
Let the computable projection pair, 
\[i_+:\domu +\domu\far\domu~{\rm and}~j_+:\domu\far\domu +\domu\]
be fixed.  Fix suitable projection pairs
$i_\times,j_\times,i_\far$, and $j_\far$ as well.  Define
\[\begin{array}{lcl}
a+b&=&cond\circ \langle which,i_+\circ in_0\circ a\circ out_0, i_+\circ
in_1\circ b\circ out_1\rangle \circ j_+\\
a\times b&=&i_x\circ \langle a\circ proj_0,b\circ proj_1\rangle \circ j_x\\
a\far b&=&i_\far\circ (\lambda f.b\circ f\circ a)\circ j_\far
\end{array}\]
for all $a,b:\domu\far\domu$.
\end{mdef}
From earlier theorems, we know that these combinators are all
computable over an effectively presented domain.  The next theorem
characterizes the effect these combinators have on projection
functions.
\begin{thm}\label{8.10}
If $a,b:\domu\far\domu$ are projections, then so are $a+b$, $a\times
b$, and $a\far b$.  If $a$ and $b$ are finitary, then so are the
compound projections.  
\end{thm}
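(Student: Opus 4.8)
The plan is to check, for each of the three constructors, the two defining conditions of a projection from Definition~\ref{8.3}---idempotency $p\circ p=p$ and $p\appx\ident_\domu$ as relations---and then separately to establish finitariness by identifying the fixed point set of each compound with a constructed domain. The hypotheses I will lean on throughout are $a\circ a=a$, $b\circ b=b$, $a\appx\ident_\domu$, $b\appx\ident_\domu$, together with monotonicity of composition, tupling and $cond$. From the projection pairs I will use $j_\times\circ i_\times=\ident$, $j_\far\circ i_\far=\ident$, $j_+\circ i_+=\ident$ and $i_\bullet\circ j_\bullet\appx\ident_\domu$; from the sum/product maps $out_k\circ in_k=\ident$ and $proj_k\circ\langle f_0,f_1\rangle=f_k$; and for the conditional the laws $cond(true,u,v)=u$, $cond(false,u,v)=v$, $cond(\bottom_T,u,v)=\bottom$ with $which\circ in_k$ returning the $k$-th truth value.

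The product and function-space cases are routine cancellations. For $a\times b=i_\times\circ\langle a\circ proj_0,b\circ proj_1\rangle\circ j_\times$, composing with itself and collapsing the inner $j_\times\circ i_\times=\ident$ reduces idempotency to that of $\langle a\circ proj_0,b\circ proj_1\rangle$, which holds since this map sends $[x,y]$ to $[a(x),b(y)]$ and $a,b$ are idempotent; the bound $a\times b\appx\ident_\domu$ then follows from $[a(x),b(y)]\appx[x,y]$ and $i_\times\circ j_\times\appx\ident$. For $a\far b=i_\far\circ(\lambda f.\,b\circ f\circ a)\circ j_\far$ the same cancellation reduces idempotency to $(\lambda f.\,b\circ f\circ a)\circ(\lambda f.\,b\circ f\circ a)=(\lambda f.\,b\circ f\circ a)$, which holds because $b\circ(b\circ f\circ a)\circ a=(b\circ b)\circ f\circ(a\circ a)=b\circ f\circ a$; here the contravariant occurrence of $a$ on the input side causes no trouble precisely because $a$ is idempotent. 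The inequality $a\far b\appx\ident_\domu$ comes from $b\circ f\circ a\appx\ident\circ f\circ\ident=f$ by monotonicity, composed with $i_\far\circ j_\far\appx\ident$.

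The sum case is where I expect the real work, because $cond$ and $which$ force a case analysis. Writing $L=i_+\circ in_0\circ a\circ out_0$ and $R=i_+\circ in_1\circ b\circ out_1$, so that $(a+b)(x)=cond(which(j_+x),L(j_+x),R(j_+x))$, I would argue by cases on $which(j_+x)\in\{true,false,\bottom_T\}$. When $which(j_+x)=true$ the element $j_+x$ is left-tagged, so $in_0\circ out_0$ acts as the identity on it; together with $j_+\circ i_+=\ident$ and $a\circ a=a$ this collapses $(a+b)((a+b)(x))$ back to $(a+b)(x)$, symmetrically for $false$, while $\bottom_T$ yields $\bottom$ on both sides. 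The bound $a+b\appx\ident_\domu$ uses $a(out_0(j_+x))\appx out_0(j_+x)$ and $in_0(out_0(j_+x))\appx j_+x$ (an equality on left-tagged elements), then $i_+\circ j_+\appx\ident$, so that each branch is $\appx x$ and hence so is the conditional. The main obstacle is exactly this tagging bookkeeping: since $in_k\circ out_k$ is the identity only on $k$-tagged elements, the $cond/which$ discipline must be invoked to guarantee that in each branch the argument genuinely lies in the correct summand, including its behaviour at $\bottom$.

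Finally, for finitariness I would identify the fixed point set of each compound projection---which, since a retraction is the identity on its range, coincides with that range---with a standard constructed domain. If $a$ and $b$ are finitary with image subdomains $\domd_a$ and $\domd_b$, then the range of $a\times b$ is isomorphic to $\domd_a\times\domd_b$, that of $a\far b$ to $\domd_a\far\domd_b$, and that of $a+b$ to the separated sum $\domd_a+\domd_b$, the isomorphisms being assembled from the relevant injections, projections and the apply/curry maps. Since Cartesian products, function spaces and separated sums of domains are themselves domains (Theorem~\ref{3.2}, the corollary to Theorem~\ref{3.10}, and Exercise~\ref{3.18}), each range is isomorphic to a domain, so each compound projection is finitary by Definition~\ref{8.3}.
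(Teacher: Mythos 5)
Your proposal is correct and follows essentially the same route as the paper's proof: idempotency by cancelling $j_\bullet\circ i_\bullet=\ident$ and using $a\circ a=a$, $b\circ b=b$; the bound $\appx\ident_\domu$ by monotonicity together with $i_\bullet\circ j_\bullet\appx\ident$; and finitariness by identifying each fixed point set with $\domd_a\times\domd_b$, $\domd_a\far\domd_b$, or $\domd_a+\domd_b$. The only differences are ones of emphasis — you spell out the sum case analysis that the paper dismisses with ``the other cases follow similarly,'' while the paper works out the explicit isomorphism $g\mapsto i_b\circ g\circ a$, $f\mapsto b\circ f\circ i_a$ for the function-space case that you leave as a sketch.
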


\proof Since $a$ and $b$ are retractions, $a=a\circ a$ and $b=b\circ
b$. Then for $a\times b$ using the definition of $\times$,
\[\begin{array}{lcl}
(a\times b)\circ (a\times b)&=&i_x\circ \langle a\circ proj_0,b\circ
proj_1\rangle \circ \langle a\circ proj_0,b\circ proj_1\rangle \circ j_x\\
&=&i_x\circ \langle a\circ a\circ proj_0,b\circ b\circ proj_1\rangle \circ j_x\\
&=& a\times b
\end{array}\]
Thus, $a\times b$ is a retraction.  The other cases follow similarly.

Since $a$ and $b$ are projections, $a,b\subseteq \ident_U$ (denoted simply
$\ident$ for the remainder of the proof).  Using the definition for $+$
along with the above relation and the definition of projection pairs,
we can see that 
\[a+b\subseteq \ident+\ident=i_+\circ j_+ \subseteq \ident\]
Thus, $a+b$ is a projection. The other cases follow similarly.

To show that the projections are finitary, we must show that the fixed
point sets are isomorphic to a domain.  Since $a$ and $b$ are assumed
finitary, their fixed point sets are isomorphic to 
\[\begin{array}{lcl}
D_a&=&\{x\in\bas{U}\such x\:a\: x\}\\
D_b&=&\{y\in\bas{U}\such y\:b\: y\}
\end{array}\]
 
We wish to show that $\domd_a\far\domd_b\iso\domd_{a\far b}$.  By the
definition of the $\far$ constructor, the fixed point set of $a\far b$
over $\domu$ is the same as the fixed point set of $\lambda f.b\circ
f\circ a$ on $\domu\far\domu$. (Hint: $i_\far$ and $j_\far$ set up the
isomorphism.) So, the fixed points for $f:\domu\far\domu$ are of the form:
\[f=b\circ f\circ a\]
We can think of $a$ as a function in $\domu\far\domd_a$ and define the
other half of the projection pair as $i_a:\domd_a\far\domu$ where
$i_a\circ a = a$ and $a\circ i_a=i_a$.  Define a function $i_b$ for
the projection pair for $b$ similarly.  For some
$g:\domd_a\far\domd_b$ let
\[f=i_b\circ g\circ a\]
Substituting this definition for $f$ yields
\[b\circ f\circ a = b\circ i_b\circ g\circ a\circ a = i_b\circ g \circ
a = f\]
by the definition of $i_b$ and since $a$ is a retraction by
assumption.  Conversely, for a function $f$ such that $i_b\circ g\circ a
= f$, let
\[g=b\circ f\circ i_a\]
Substituting again,
\[i_b\circ g\circ a = i_b\circ g\circ f\circ i_a\circ a = b\circ
f\circ a = f\]
Thus, there is an order preserving isomorphism between
$g:\domd_a\far\domd_b$ and the functions $f=b\circ f\circ a$.  The
proofs of the isomorphisms for the other constructs are similar. \eop

Thus, the sub-domain relationship with the universal domain has been
stated in terms of finitary projections over the universal domain.  In
addition, all the domain constructors have been shown to be computable
combinators on the domain of these finitary projections.  Recalling
that all computable maps have computable fixed points, the standard
fixed point method can be used to solve domain equations of all kinds
if they can be defined on projections.  

Returning to the $\lambda$-calculus for a moment, all objects in the
$\lambda$-calculus are considered functions.  Since $\domu\far\domu$
is a part of $\domu$, every object in the $\lambda$-calculus is also 
an object of $\domu$.  Transposing some of the familiar
notation, where the old notation appears on the left, the new
combinators are defined as follows:
\[\begin{array}{lcl}
which(z)=which(j_+(z))\\
in_i(x)=i_+(in_i(x))~{\rm where}~i=0,1\\
out_i(x)=out_i(j_+(x))~{\rm where}~i=0,1\\
\langle x,y\rangle =i_x(\langle x,y\rangle )\\
proj_i=proj_i(j_x(z))~{\rm where}~i=0,1\\
u(x) = j_\far(u)(x)\\
\lambda x.\tau=i_\far(\lambda x.\tau)
\end{array}\]

Thus, all functions, all constants, all combinators, and all
constructs are elements of $\domu$.  Indeed, {\bf everything} computable is an
element of $\domu$.  Elements in $\domu$ play multiple roles by
representing different objects under different projections.  While
this notion may be difficult to get used to, there are many
advantages, both notational and conceptual.
\vspace{.25in}
\begin{center}
{\bf Exercises}
\end{center}
\begin{exer}\label{8.14}
A retraction $a:\domd\far\domd$ is a {\it closure operator\/} iff
$\ident_D\subseteq a$ as relations.  On a domain like ${\cal P}(\nat)$, give
some examples of closure operators.  (Hint: Close up the integers
under addition.  Is this continuous on ${\cal P}(\nat)$?) Prove in
general that for any closure $a:\domd\far\domd$, the fixed point set
of $a$ is always a finitary domain.  (Hint: Show that the fixed point
set is closed as required for a domain.) What are the finite elements
of the fixed point set?
\end{exer}
\begin{exer}\label{8.15}
Give a direct proof that the domain $\{X\such X\subd \domd\}$ is
effectively presented if $\domd$ is.  (Hint: The finite elements of
the domain correspond exactly to the finite domains $X\subd \domd$.)
In the case of $\domd=\domu$, show that the computable elements of the
domain correspond exactly to the effectively presented domains (up to
effective isomorphism).
\end{exer}
\begin{exer}\label{8.16}
For finitary projections $a:\dome\far\dome$, write
\[\domd_a=\{x\in\dome\such x\:a\:x\}\]
Show that for any two such projections $a$ and $b$, that
\[a\subseteq b \iff \domd_a\subd\domd_b\]
\end{exer}
\begin{exer}\label{8.17}
Find another universal domain that is not isomorphic to $\domu$.
\end{exer}
\begin{exer}\label{8.18}
Prove the remaining cases in Theorem~\ref{8.10}.
\end{exer}
\begin{exer}\label{8.24}
Suppose $S$ and $T$ are two binary constructors on domains that can be
made into computable operators on projections over the universal
domain.  Show that we can find a pair of effectively presented domains
such that
\[D\iso S(D,E)~{\rm and}~E\iso T(D,E).\]
\end{exer}
\begin{exer}\label{8.26}
Using the translations shown after the proof of Theorem~\ref{8.10},
show how the whole {\it typed\/}-$\lambda$-calculus can be translated
into $\domu$.  (Hint: for $f:\domd_a\far\domb$, write $f=b\circ f\circ
a$ for finitary projections $a$ and $b$.  For $\lambda
x^{\domd_a}.\sigma$, write $\lambda x.b(\sigma'[a(x)/x])$ where
$\sigma'$ is the translation of $\sigma$ into the untyped
$\lambda$-calculus.  Be sure that the resulting term has the right
type.)
\end{exer}
\begin{exer}
Show that the basis presented for the universal domain $\bas{U}$ is
indeed a finitary basis and that it has an effective presentation.
\end{exer}
\begin{exer}
Work out the embedding for the other enumerations for the example
given in the proof of Theorem~\ref{8.8}.
\end{exer}
\newpage

\end{document}